\documentclass[a4paper,11pt]{amsart}
\usepackage{amsmath}
\usepackage{amsmath}
\usepackage{amsthm, amsfonts, amssymb, graphics}
\usepackage{bm}
\usepackage[foot]{amsaddr}
\usepackage{verbatim}
\usepackage{hyperref}
\usepackage{xcolor}
\usepackage[normalem]{ulem}
\usepackage{enumerate}
\usepackage{mathtools}
\usepackage{tikz}
\usepackage{geometry}
\usepackage{dsfont}
\usepackage{centernot}
\usepackage{bbm}
\usepackage[utf8]{inputenc}
\UseRawInputEncoding

\newtheorem{theorem}{Theorem}[section]
\newtheorem{lemma}[theorem]{Lemma}
\newtheorem{proposition}[theorem]{Proposition}
\newtheorem{corollary}[theorem]{Corollary}

\theoremstyle{definition}
\newtheorem{definition}[theorem]{Definition}

\newtheorem{remark}[theorem]{Remark}

\newcommand{\la}{\langle}
\newcommand{\ra}{\rangle}
\newcommand{\bi}{\bm{\mathrm{i}}}
\newcommand{\bone}{\mathbf{1}}
\newcommand{\x}{\mathbf{X}}
\numberwithin{equation}{section}
\renewcommand{\d}{\mathrm{d}}
\newcommand \id{\mathds 1}
\newcommand{\bs}{\mathbf{s}}
\newcommand{\btheta}{\bm{\theta}}
\newcommand{\harm}{\mathrm{Harm}}
\newcommand {\R} {\mathbb{R}}
\newcommand{\bmu}{\bm{\mu}}
\newcommand {\C} {\mathbb{C}}

\newcommand{\bsigma}{\bm{\sigma}}
\newcommand {\D} {\mathbb{D}}

\newcommand{\E}{\mathbb{E}}
\newcommand{\g}{\mathfrak{g}}
\renewcommand{\a}{\mathfrak{a}}
\newcommand {\N} {\mathbb{N}}
\renewcommand {\S} {\mathbb{S}}
\newcommand {\T} {\mathbb{T}}

\newcommand{\fl}{\mathrm{FL}}
\renewcommand{\H}{\mathcal{H}}
\newcommand {\Z} {\mathbb{Z}}
\renewcommand{\P} {\mathbb{P}}

\newcommand{\reg}{\mathrm{reg}}

\newcommand{\z}{\mathbf{z}}
\newcommand{\m}{\mathbf{m}}
\newcommand{\bv}{\mathbf{v}}
\renewcommand{\Im}{\mathrm{Im}}
\renewcommand{\Re}{\mathrm{Re}}

\newcommand{\bx}{\mathbf{x}}
\newcommand{\bxi}{\bm{\xi}}
\newcommand{\bzeta}{\bm{\zeta}}
\renewcommand{\v}{\mathrm{v}}
\newcommand{\bbb}{\mathbbm{b}}

\newcommand{\balpha}{\bm{\alpha}}

\newcommand{\A}{\mathcal{A}}

\newcommand{\bbg}{\mathbbm{g}}
\newcommand{\w}{\mathbf{w}}

\newcommand {\eps}{\varepsilon}

\renewcommand{\ne}{n_{\mathfrak{e}}}
\newcommand{\nm}{n_{\mathfrak{m}}}

\renewcommand{\l}{\ell}
\newcommand{\blambda}{\bm{\lambda}}

\usepackage{tikz}
\usepackage{etoolbox}

\newcommand{\norm}[1]{\left\lVert #1 \right\rVert}
\begin{document}
\title{Compactified Imaginary Toda Theory}
\author{Yi-An Yao}
\address{Institut Fourier, Universit\'e Grenoble Alpes, Grenoble, France}
\email{yi-An.yao@etu.univ-grenoble-alpes.fr}

\begin{abstract}
Following \cite{Guillarmou:2023exh}, we construct compactified imaginary Toda theory on closed Riemann surfaces, extending the rank-one construction to the higher-rank setting. This theory is expected to describe critical higher-rank models with extended symmetries, such as web models. We construct the correlation functions and prove that they satisfy the axioms of conformal field theory, as well as Segal's gluing axioms. On the Riemann sphere, we express the correlation functions as Dotsenko--Fateev type integrals. In the case $\mathfrak g=\mathfrak{sl}_n$, under a semidegenerate condition, we obtain a closed formula for the three-point structure constant.
\end{abstract}

\maketitle

\tableofcontents

\section{Introduction}\label{section 1}
\subsection{Toda conformal field theory}
Two-dimensional conformal field theory (CFT in short) provides a universal language for describing
critical phenomena in statistical physics. At a second-order phase transition, a lattice
model is expected to lose its microscopic length scale and, after suitable rescaling, to
converge to a continuum theory whose main quantities of interest, the \emph{correlation functions},
are invariant under conformal transformations.

This principle is especially powerful in two dimensions: the local conformal symmetry is
infinite-dimensional, and its infinitesimal symmetries are encoded by the
\emph{Virasoro algebra}. The seminal work of Belavin, Polyakov and Zamolodchikov
\cite{BELAVIN1984333} showed how to exploit this symmetry to obtain strong constraints on correlation functions. Such constraints lead to the conformal bootstrap, a recursive
procedure which aims to determine higher-point correlation functions from lower-point
data, in particular from three-point functions, also known as the \emph{structure constants}.

Liouville theory is a central example where this bootstrap philosophy can be carried out
explicitly. Its origin goes back to Polyakov's path-integral approach to two-dimensional
quantum geometry and non-critical string theory \cite{POLYAKOV1981207}. The resulting theory is an interacting CFT whose structure constants, described by the DOZZ formula, were later independently predicted by Dorn-Otto \cite{dorn1994two} and Zamolodchikov-Zamolodchikov \cite{zamolodchikov1996conformal}. Together with the conformal block decomposition, these structure constants determine higher-point correlation functions on
the sphere, or more generally on any Riemann surface via Segal's gluing \cite{segal1988definition}, and provide one of the basic non-rational examples of the conformal bootstrap.

In many models, Virasoro symmetry is only part of the structure. Certain critical systems
possess additional conserved quantities, leading to extended chiral symmetry algebras and
higher-spin symmetries. Among the most important examples are $W$-algebras, introduced
by Zamolodchikov \cite{zamolodchikov1995infinite}, which extend the Virasoro algebra by currents
of higher conformal spin; see also \cite{bouwknegt1993w} for a review.
Toda CFTs provide a natural family of CFTs governed by such $W$-symmetries: the Virasoro symmetry of Liouville theory is replaced by the
$W$-algebra associated with the underlying simple Lie algebra. From the point of view
of statistical physics, such extended symmetries are expected to describe continuum
limits of models with richer algebraic or geometric features, for instance, $q$-state Potts model and web models \cite{lafay2022uq,lafay2024integrability}. This provides one motivation for studying Toda CFTs.

\subsection{Compactified imaginary Toda CFT}

The bootstrap approach gives a powerful way to characterize conformal field theories from their symmetry constraints, but it does not by itself provide a direct construction of the path integral. A complementary point of view is the probabilistic construction of CFTs, where the formal path integral is interpreted as an expectation over random fields. In
Liouville theory, this approach is based on the Gaussian free field (see \cite{sheffield2007gaussian,Berestycki_Powell_2025}) and the Gaussian multiplicative chaos (see \cite{RhodesGMC}), and has led to a rigorous construction of correlation functions and their conformal covariance properties \cite{david2016liouville} and also a rigorous proof of the DOZZ formula \cite{KRVDOZZ}, thereby connecting the probabilistic and bootstrap perspectives. Moreover, the gluing axiom and the bootstrap strategy have also been established \cite{guillarmou2021segal,GKRV}.

More recently, the same probabilistic perspective has also been developed for Toda CFTs, leading to a rigorous construction of Toda correlation functions as well as to the study of their conformal covariance and
$W$-symmetry constraints \cite{Baptiste2023,cercle2022ward}. On the bootstrap side, Toda
structure constants are expected to be governed by higher-rank analogs of the DOZZ
formula. In particular, Fateev and Litvinov proposed explicit formulae for certain Toda
structure constants \cite{fateev2007correlation}, which have been proved under the probabilistic framework in the case $\g=\mathfrak{sl}_3$ \cite{Cercle2025-zi}. These results provide the higher-rank background for the compactified imaginary Toda theory considered in this paper.

The goal of this paper is to develop an analogous construction for a compactified imaginary version of Toda theory. Our construction is inspired by, and follows closely,
the framework developed in \cite{Guillarmou:2023exh}. Let $(\Sigma,g)$ be a Riemann surface, let $\mathfrak g$ be a complex simple Lie algebra of
rank $r$, with simple roots $e_1,\ldots,e_r$, and let $\mathfrak a$ be the real
Cartan subspace. For $\gamma>0$, we consider fields $\Phi:\Sigma\rightarrow\T(\gamma)$ taking values in the torus
$$
\mathbb T(\gamma)=\mathfrak a/(2\pi\Lambda),
\qquad
\Lambda=\gamma^{-1}\bigoplus_{i=1}^r\mathbb Z\omega_i^\vee .
$$
Formally, the compactified imaginary Toda action is\begin{equation}\label{eq:Toda equation}
S_{\mathfrak g}(\Phi,g)=\frac1{4\pi}\int_\Sigma\left(\langle \partial_g\Phi,\partial_g\Phi\rangle_g+K_g\langle \bi Q,\Phi\rangle+
4\pi\sum_{i=1}^r\mu_i e^{\bi\gamma\langle e_i,\Phi\rangle}
\right)\d \v_g,
\end{equation}
where $\bi:=\sqrt{-1}$, and $Q=\gamma\rho-\frac{2}{\gamma}\rho^\vee$ is the background charge with $\rho$ the Weyl vector and $\rho^\vee$ its dual.
For a functional $F$ of the field $\Phi$, we define $$\la F\ra_{\Sigma,g}:=\int_{\Phi:\Sigma\rightarrow\T(\gamma)}F(\Phi)e^{-S_\g(\Phi,g) }\mathrm D\Phi,$$ where $\mathrm{D}\Phi$ is formally the Lebesgue measure on the function space $\{\Phi:\Sigma\rightarrow\T(\gamma)\}.$

The compactification makes the interaction terms
$e^{i\gamma\langle e_i,\Phi\rangle}$ globally well defined as characters of
$\mathbb T(\gamma)$, but it also forces the theory to include topological sectors. More precisely, a smooth map $\Phi:\Sigma\rightarrow\T(\gamma)$ induces an $\a$-valued closed $1$-form $\Omega$ with periords in $2\pi\Lambda$, and the Hodge decomposition allows one to uniquely decompose the form as $\Omega=\Omega_{\mathrm{h}}+\d f,$ with $\Omega_{\mathrm{h}}$ a harmonic $\a$-valued 1-form and $f$ a smooth $\a$-valued function. This shows that any smooth $\T(\gamma)$-valued map $\Phi$ admits an orthogonal decomposition: $\Phi=f+I_{x_0}(\Omega_{\mathrm{h}}),$ where $I_{x_0}(\Omega_\mathrm{h}):=\int_{\gamma_{x_0,x}}\Omega_{\mathrm{h}}$ is a multivalued harmonic function, with $\gamma_{x_0,x}$ a path from $x_0$ to $x$, and the orthogonality reads $$\int_\Sigma |\d \Phi|_g^2\d\v_g=\int_\Sigma|\d f|_g^2\d\v_g+\int_\Sigma \Omega_{\mathrm{h}}\wedge*\Omega_{\mathrm{h}}.$$ Thus, formally, we can rewrite the path integral with both $Q$ and $\bmu$ vanishing as $$\begin{aligned}\la F\ra_{\Sigma,g}=&\int_{\Phi:\Sigma\rightarrow\T(\gamma)}F(\Phi)e^{-\frac{1}{4\pi}\int_\Sigma|\d\Phi|_g^2\d\v_g}\mathrm{D}\Phi\\
=&\int F(f+I_{x_0}(\Omega_{\mathrm{h}})+c)e^{-\frac{1}{4\pi}\int_\Sigma (|d f|^2_g+\Omega_{\mathrm{h}}\wedge*\Omega_{\mathrm{h}})\d\v_g}\mathrm{D}f\d\mu(\Omega_{\mathrm{h}})\d c ,\end{aligned}
$$
where $\mathrm{D}f$ is a formal Lebesgue measure over zero-mean $\a$-valued functions, $\d\mu(\Omega_\mathrm{h})$ is the counting measure on the $\a$-valued de Rham cohomology group, and $\d c$ is the Lebesgue measure on the torus $\T(\gamma).$ 
We note that, due to the multivaluedness of the primitive $I_{x_0}(\Omega_\mathrm{h})$, the functional $F$ has to satisfy certain periodic conditions so that the path integral makes sense.

When the background charge $Q$ is nonzero, one has to regularize the curvature term
$$
\int_\Sigma K_g\langle \bi Q,\Phi\rangle\d \v_g,
$$
to deal with the multivalued primitive. Following
the strategy of \cite{Guillarmou:2023exh}, we do so by choosing cuts and
adding explicit geodesic-curvature counterterms, in such a way that the resulting path
integral is independent of the auxiliary choices up to the expected lattice ambiguities (see Section \ref{subsec:curvature term}).

Let us now describe more explicitly the objects constructed in this paper. The basic
observables are correlation functions of electric, magnetic, and electro-magnetic operators. An electric operator with charge $\alpha$ is obtained by Wick-renormalizing
the exponential $e^{\bi\langle \alpha,\Phi(x)\rangle}$ at an insertion point $x\in\Sigma$. A magnetic
operator, on the other hand, inserts a prescribed monodromy $m\in\Lambda$ around a
marked point, and is implemented by a singular closed $1$-form
$\nu_{\mathbf z,\mathbf m}$. Electro-magnetic operators $V^g_{(\alpha,m)}(z,v)$
combine these two effects. When $m\neq0$, they depend on a choice of tangent direction
$v$ at the insertion point and carry a non-trivial spin. We prove that the corresponding
correlation functions are well defined under natural Seiberg-type assumptions and satisfy
the expected conformal anomaly, diffeomorphism covariance, and spin covariance.

The three-point functions on the Riemann sphere, or structure constants, are expected to
play the role of basic building blocks in the conformal bootstrap. In the compactified
imaginary Toda setting, their computation leads to higher-rank Coulomb gas integrals of Dotsenko--Fateev type, with screening variables indexed by the simple roots of
$\mathfrak g$. In special cases, these integrals are
closely related to the imaginary Fateev--Litvinov formula \cite{Dupic_2019} for Toda structure constants. Thus the
correlation functions constructed here provide a probabilistic realization of the
Coulomb gas picture in the compactified imaginary Toda setting.

Finally, we establish Segal's gluing axioms \cite{segal1988definition} for the theory. To do so, we construct
the path integral on surfaces with analytic boundary and view the resulting boundary amplitudes as vectors in the Hilbert space associated with the boundary
field. We then prove that these amplitudes compose under gluing of surfaces. The proof uses the Markov property of the Gaussian free field, but also requires bookkeeping for the compactified topological sectors, magnetic backgrounds, and regularized curvature terms. These features are specific to the
compactified theory and are essential for obtaining a consistent CFT on surfaces of arbitrary topology.

We summarize the above results in Theorem \ref{thm:main-results} and refer to Theorem \ref{thm:electro-magnetic operators} and Proposition \ref{prop:CITT-Segal-gluing} for precise statements.
\begin{theorem}\label{thm:main-results}
Assume that $\gamma^2<1$ and that $Q\in\Lambda^*$, where
$$
Q=\gamma\rho-\frac{2}{\gamma}\rho^\vee.
$$Let $(\Sigma,g)$ be a closed oriented Riemann surface. Let $\bv=((z_1,v_1),\ldots,(z_n,v_n))\in (T\Sigma)^n$ with pairwise distinct base points, and let $\balpha=(\alpha_1,\ldots,\alpha_n)\in (\Lambda^*)^n,
\,\m=(m_1,\ldots,m_n)\in \Lambda^n$ satisfy
$\alpha_j-Q\in\mathcal C_+,\quad 1\le j\le n$
and
$\sum_{j=1}^n m_j=0,$ where $\mathcal{C}_+$ is the open positive Weyl chamber \eqref{eq:weyl chamber}. Then the following hold true:
\begin{enumerate}
\item[\textup{(i)}]The correlation functions $\la V^g_{(\balpha,\m)}(\bv)\ra_{\Sigma,g}$ are defined as limits of regularized observables. These limits are well defined and do not depend on the auxiliary choices entering the construction, i.e., the choice of cohomology basis, the corresponding closed representatives, or the defect graph used to define the magnetic sector.
\item[\textup{(ii)}]
$\la V^g_{(\balpha,\m)}(\bv)\ra_{\Sigma,g}$ satisfy the axioms of Conformal Field Theory, with conformal weights
$$
\Delta_{(\alpha,m)}
=
\langle \frac{\alpha}{2},\frac{\alpha}{2}-Q\rangle+\frac14|m|^2
$$
and central charge $c=\operatorname{rank}(\g)-6\langle Q,Q\rangle.$ Namely, $\la V^g_{(\balpha,\m)}(\bv)\ra_{\Sigma,g}$ is covariant under the action of diffeomorphisms, the Weyl scaling $g\mapsto e^\rho g,\,\rho\in C^\infty(\Sigma)$, and rotation of spins $v_j\mapsto O_jv_j,\,O_j\in\mathrm{SO}(2).$
\item[\textup{(iii)}] $\la V^g_{(\balpha,\m)}(\bv)\ra_{\Sigma,g}$ satisfy Segal's gluing axioms for conformal field theory under cutting of the surface along analytic parametrized simple curves.
\item[\textup{(iv)}] Consider the case where $(\Sigma,g_0)=(\hat{\C},(\max\{z,1\})^{-4}|\d z|^2).$
Under the neutrality condition $$
2Q-\sum_{j=1}^n\alpha_j=\gamma\sum_{i=1}^r s_i e_i,
\quad
(s_1,\ldots,s_r)\in\N^r,
$$
the structure constant is given by
\begin{equation*}
\begin{aligned}
&C_{\gamma,\bmu}(\balpha,\m):=
\la V^{g_0}_{(\alpha_1,m_1)}(0)V^{g_0}_{(\alpha_2,m_2)}(1)V^{g_0}_{(\alpha_3,m_3)}(\infty)\ra_{\hat{\C},g_0}\\
&=\mathrm{Vol}(\T(\gamma))
\left(\frac{\v_{g_0}(\hat{\C})}{\det'(\Delta_{g_0})}\right)^{r/2}
\prod_{i=1}^r \frac{(-\mu_i)^{s_i}}{s_i!}\,
e^{\bi\pi\la\alpha_2,m_1\ra-\bi\pi\la\alpha_3,m_3\ra}\,
\mathcal I_{\bs}(\balpha,\m),
\end{aligned}
\end{equation*}
where $\mathcal{I_{\bs}}(\balpha,\m)$ is the Dotsenko--Fateev integral \eqref{eq:DF-0-1-infty}.
Assume further that $\g=\mathfrak{sl}_{r+1}$, that $\alpha_1=\kappa\omega_r$, and that $m_1=0$. The structure constant satisfies
\begin{equation}\label{eq:main theorem}
     \begin{aligned}
        C_{\gamma,\bmu}(\kappa\omega_r,\alpha_2,\alpha_3,0,m_2,m_3)^2 =&\mathrm{Vol(\T(\gamma))}^{2}
\left(\frac{\v_{g_0}(\hat{\C})}{\det'(\Delta_{g_0})}\right)^{r}
\prod_{i=1}^r (\pi\mu_i)^{2s_i}\\&\times C^{\fl}_\gamma(\kappa\omega_r,\alpha_2+m_2,\alpha_3+m_3)C^{\fl}_\gamma(\kappa\omega_r,\alpha_2-m_2,\alpha_3-m_3)\end{aligned}
\end{equation}
where $C^{\fl}_\gamma$ denotes the imaginary Fateev--Litvinov constant \eqref{eq:FL formula}.
\end{enumerate}
\end{theorem}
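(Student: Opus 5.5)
Since Theorem \ref{thm:main-results} summarizes the later sections, the proof will be assembled from four independent pieces, following the rank-one template of \cite{Guillarmou:2023exh} and replacing scalar fields by $\a$-valued ones.

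For (i), I will define the regularized correlation function with three ingredients: the $\a$-valued GFF $X_g$ (with $r$ independent components in an orthonormal basis of $\a$), the harmonic sector summed over $\omega\in 2\pi\Lambda$-periodic cohomology classes with Gaussian weight $e^{-\frac1{4\pi}\int\la\omega\wedge*\omega\ra}$, and the magnetic form $\nu_{\z,\m}$ built from a defect graph joining the $z_j$. I will take $\epsilon\to0$ in the Wick-ordered exponentials and in the imaginary GMC $e^{\bi\gamma\la e_i,X_\epsilon\ra}$. The convergence in $L^2$ of imaginary chaos needs $\gamma^2|e_i|^2/2<d$, which is where $\gamma^2<1$ enters. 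The insertion singularities $|x-z_j|^{-\gamma\la e_i,\alpha_j\ra}$ must be integrable against the chaos. I will get this by expanding $\exp(-\sum\mu_i\int e^{\bi\gamma\la e_i,\Phi\ra})$ as a power series, computing each term as an explicit Coulomb-gas integral, and bounding these integrals by Onsager-type inequalities. The Seiberg condition $\alpha_j-Q\in\mathcal C_+$ is used to control these bounds, and the terms are summable by the factorial decay. Independence of choices is proved sector by sector:
\begin{itemize}
\item changing the cohomology basis or the closed representatives shifts $\omega$ by an exact form $\d f$, which I absorb by a Girsanov/Cameron--Martin shift of $X_g$;
\item changing the defect graph changes $I_{x_0}(\nu_{\z,\m})$ by lattice periods, which the conditions $\alpha_j\in\Lambda^*$ and $Q\in\Lambda^*$ make invisible through phases $e^{2\pi\bi\la\alpha,m\ra}=1$, together with the geodesic-curvature counterterms of Section \ref{subsec:curvature term}.
\end{itemize}

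For (ii), I will handle each covariance separately.
\begin{itemize}
\item Diffeomorphism invariance is immediate from naturality of all objects.
\item For the Weyl anomaly I will combine the standard determinant anomaly $\left(\v_g/\det'\Delta_g\right)^{r/2}$, which contributes $\frac{r}{48\pi}$-type Liouville action terms, with the Girsanov shift of the curvature term $\la\bi Q,\cdot\ra$. The latter produces $-\frac{6\la Q,Q\ra}{48\pi}$-type terms, giving $c=r-6\la Q,Q\ra$ in total. The change in the Wick regularization produces the factor $\la\frac\alpha2,\frac\alpha2-Q\ra$.
\item The $\frac14|m|^2$ part of the weight comes from the renormalization of the magnetic Dirichlet energy $\int|\nu_{\z,\m}|^2$ near $z_j$, which requires an excision of radius determined by $g$.
\item Spin covariance follows because rotating $v_j$ rotates the branch cut of the regularized curvature term, producing the phase $e^{\bi\la Q,m_j\ra\theta}$-type factor.
\end{itemize}

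For (iii), I will define amplitudes on surfaces with analytic parametrized boundary using the Dirichlet GFF plus harmonic extension of boundary data $\tilde\varphi=(c,\varphi,k)$. The boundary data consist of a zero mode in $\T(\gamma)$, the nonconstant Fourier modes, and a winding $k\in\Lambda$. I will then glue using the Markov decomposition $X_g=X_D+P\tilde\varphi$. The main obstacle is the bookkeeping of the topological sectors. The cohomology of the glued surface decomposes as cohomology of the pieces, plus the winding along the cut, plus (for a non-separating curve) one extra dual cycle. The curvature counterterms along the cuts must also match, and I will track them through a Mayer--Vietoris argument as in \cite{Guillarmou:2023exh}, now summing over $\Lambda$.

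For (iv), I will expand the exponential of the potential on the sphere. Integrating the zero mode $c\in\T(\gamma)$ kills every term except the one satisfying the neutrality condition $2Q-\sum\alpha_j=\gamma\sum s_ie_i$, which yields the prefactor $\prod(-\mu_i)^{s_i}/s_i!$. Gaussian computation of the remaining term gives the Dotsenko--Fateev integral \eqref{eq:DF-0-1-infty}. The magnetic charges contribute holomorphic/antiholomorphic asymmetry, i.e. $|x-z|^{a}$ is replaced by $(x-z)^{(\alpha+m)/2\cdot}(\bar x-\bar z)^{(\alpha-m)/2\cdot}$, together with the phases $e^{\bi\pi\la\alpha_2,m_1\ra-\bi\pi\la\alpha_3,m_3\ra}$ coming from the cut conventions. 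For $\mathfrak{sl}_{r+1}$ with $\alpha_1=\kappa\omega_r$ and $m_1=0$, I will apply a KLT-type contour factorization of the complex Dotsenko--Fateev integral into a product of holomorphic and antiholomorphic Selberg-type integrals with charges $\alpha\pm m$. Comparing each factor with the same factorization applied to the purely electric integral identifies the square as the stated product of the imaginary Fateev--Litvinov formula \eqref{eq:FL formula} from \cite{Dupic_2019}. The $\pi^{2s_i}$ and sign factors are absorbed from $(-\mu_i)^{s_i}/s_i!$.
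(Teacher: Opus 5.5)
Your sketch of (i)--(iii), and of the Dotsenko--Fateev representation in (iv), is broadly in line with the paper. In (i), your power-series/Onsager bound amounts to reproving by hand the exponential-moment estimate of Proposition~\ref{prop:exp-moment-shifted}; the paper instead cites that estimate and combines it with a Girsanov shift justified by analytic continuation in the charges.

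The genuine gap is the closed formula in (iv).
\begin{itemize}
\item A KLT-type contour deformation of an $N$-fold complex integral does not by itself give one holomorphic integral times one antiholomorphic integral. It gives a bilinear form $\sum_{\sigma,\tau}H_\sigma(A)\,K_{\sigma\tau}\,H_\tau(A')$ over bases of twisted cycles, where $A,A'$ are the exponent vectors built from $\alpha_2+m_2$ and $\alpha_2-m_2$.
\item Since $A-A'$ is integral, the same kernel $K$ (up to signs) also computes the two purely electric integrals at charges $\alpha_2\pm m_2,\ \alpha_3\pm m_3$.
\item The identity you need is that $\mathcal I_{\bs}(\balpha,\m)^2$ equals the product of these two electric integrals. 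In this form it is a Cauchy--Schwarz-equality (vanishing Gram determinant) condition on the period vectors $H(A),H(A')$ relative to $K$.
\item Nothing in the KLT decomposition forces this for the multi-root $A_r$ chain. Given the electric Fateev--Litvinov evaluation, it is in fact equivalent to the formula you are trying to prove.
\item Both the contour deformations and the use of the electric formula at the shifted charges need convergence that generally fails in the relevant parameter range, and you give no continuation argument.
\end{itemize}

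The paper handles this differently. It proves a complex twin of Fateev--Litvinov's identity (A.7) with twisted exponents $\lambda|\lambda'$ (Lemma~\ref{lem:complex-twin-A7-final}). It then runs the row-by-row recursion directly on the twisted integral, so that the integral becomes a product of factors $\Gamma^{\C}(a|a')$. The identity $\Gamma^{\C}(a|a')^2=\Gamma^{\C}(a|a')\Gamma^{\C}(a'|a)=l(a)l(a')$ splits the square exactly into the $\alpha\pm m$ pieces, which are matched with $\Upsilon$ through the shift equations. Every step outside the absolute-convergence region is justified by Theorem~\ref{thm:local-complex-zeta} and Corollary~\ref{cor:global-polynomial-zeta}.

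Three smaller points also need fixing.
\begin{itemize}
\item The $L^2$ threshold for $e^{\bi\gamma\la e_i,\x_\eps\ra}$ is $\gamma^2|e_i|^2<2$, not $\gamma^2|e_i|^2/2<2$. The latter is the real-chaos threshold and would only force $\gamma^2<2$.
\item Changing the homology basis is not an exact shift of $\Omega_{\blambda}$. It reindexes the lattice sum and changes the regularized curvature term by an element of $8\pi^2\Lambda$, which is harmless only because $Q\in\Lambda^*$.
\item Since $I^{\bxi}_{x_0}(\nu^{\mathrm h}_{\z,\m})$ winds by $2\pi m_j$ around $z_j$, the electro-magnetic operator must be defined by letting the electric point $x_j(t)$ tend to $z_j$ along $v_j$. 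This limit produces the $\la\alpha_j,m_j\ra$ part of the spin $\la\alpha_j-Q,m_j\ra$; moving the cut alone gives only the $-\la Q,m_j\ra$ part.
\end{itemize}
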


We make some remarks on the proof:
\begin{remark}$\,$
    \begin{enumerate}
        \item The proof of the first three assertions is largely parallel to that of the corresponding statements in \cite{Guillarmou:2023exh}. We do not repeat all the arguments here. Let us only point out that several technical steps are obtained directly from the CILT case after passing to coordinates on $\a$. For instance, the curvature term involving the multivalued magnetic primitive is reduced, after choosing an orthonormal basis of $\a$ or a basis adapted to the coweight lattice, to a finite sum of scalar curvature terms of the type treated in \cite{Guillarmou:2023exh}. Consequently, all the required properties of this term follow componentwise from the corresponding results in the Liouville case. The free-field part of the construction does not introduce any essentially new higher-rank analytic difficulty. The only genuinely Toda-specific ingredient is the interaction term, which is expressed as a finite product of imaginary GMC factors associated with the simple-root projections of the field.
        \item For \eqref{eq:main theorem}, we follow the argument in \cite[Appendix A]{fateev2007correlation}. Since the integrand in \eqref{eq:DF-0-1-infty} is no longer the modulus of, say, $1-x^{(i)}_a$ due to the nonzero magnetic charge $m_2,$ we have to establish a complex twin of \cite[eq. (A.7)]{fateev2007correlation}, this is done by a generalization (see Lemma \ref{lem:complex-twin-A7-final}) of \cite[Lemma 2.4]{neretin2024dotsenko}. Moreover, the parameter range that ensures absolute convergence does not match our requirement, and so one has to first argue in an open subset of parameters so that all the applications of Lemma \ref{lem:complex-twin-A7-final} are valid in the proof of Proposition \ref{prop:closed formula} and use meromorphic continuation to cover the desired parameter range. Following Atiyah's method \cite{atiyah1970resolution}, namely using resolution of singularities \cite{Hironaka}, we  first prove the meromorphic continuation for complex local zeta integral (Theorem \ref{thm:local-complex-zeta}) and obtain a similar result for polynomials (Corollary \ref{cor:global-polynomial-zeta}).
    \end{enumerate}
\end{remark}

\subsection{Future directions}

We conclude by describing several directions which naturally follow from the present
construction.

\begin{itemize}

\item \textbf{Hamiltonian spectrum and conformal bootstrap.} The Segal gluing identities proved in Section~\ref{sec:Segal's axioms} suggest a
natural continuation: to study the spectrum of the Hamiltonian, namely the generator of
the semigroup of annuli, as in \cite{guillarmou2021segal}. In the compactified theory,
this spectral analysis should reflect the decomposition of the boundary field into its
compact zero mode, winding sector, and oscillator modes. It is also naturally connected
with the expected $W$-symmetry of the theory: one can use Cerclé's probabilistic
construction of $W$-algebras \cite{CercleW} to study the action of $W$-currents on
the boundary Hilbert space $\H_{\g,\gamma}$ and their compatibility with the amplitudes
constructed here. Together, the spectral analysis of the Hamiltonian and the
$W$-algebraic structure of the boundary theory form part of the analytic and
representation-theoretic input needed for a conformal bootstrap approach to compactified
imaginary Toda theory.
\item \textbf{Scaling limits of higher-rank lattice models.}
A major motivation for this work is its expected
relation with scaling limits of lattice models with extended symmetries, especially web
models. In such models \cite{lafay2022uq,lafay2024integrability}, one expects observables to be governed not only by Virasoro
symmetry but by higher-rank $W$-symmetries. This suggests that the correlation functions
and amplitudes constructed here should appear as continuum limits of suitable lattice observables.

In rank one, i.e., when $\g=\mathfrak{sl}_2$, this has been achieved via conformal loop ensembles (CLE) \cite{SheffieldCLE,Sheffield2012}, which serve as a candidate for scaling limits of critical loop models. More precisely, in \cite{ang2021integrability}, using techniques from Liouville quantum gravity \cite{SheffieldConformalWelding,duplantier}, the multi-points connectivity of CLE are expressed in terms of the imaginary DOZZ formula. For Toda theories, the corresponding random geometry should be richer than a collection of loops. For example, it is expected that the full interface of the critical 3-states Potts model enjoys $W_3$-symmetry. In particular, the interfaces are not characterized by simply conformal invariance and Markov property. One can employ the Edwards--Sokal coupling in the continuum, established in \cite{MILLER_SHEFFIELD_WERNER_2017}, to obtain the desired interface of critical Potts models, but to describe the interface cleanly requires non-trivial effort.

\end{itemize}
\medskip

\paragraph{\textbf{Organization of the paper.}}
Section~2 collects the geometric, analytic, and Lie-theoretic preliminaries used
throughout the paper. In Section~\ref{sec:igmc}, we recall the
$\mathfrak a$-valued Gaussian free field and the imaginary Gaussian multiplicative
chaos, along with their properties. Section~\ref{sec:path integral} is
devoted to the construction of the path integral and of the electric, magnetic, and
electro-magnetic correlation functions on closed surfaces, together with their conformal
covariance properties. In Section~\ref{sec:riemann sphere}, we specialize to the Riemann
sphere, derive the Dotsenko--Fateev type integral representation of the correlation
functions, and prove the closed formula for the semidegenerate three-point structure
constant in type $A$. Section~\ref{sec:Segal's axioms} proves Segal's gluing axioms by
constructing boundary amplitudes and establishing their compatibility under gluing.
Finally, the appendix contains the complex Dotsenko--Fateev integral identity and the
meromorphic-continuation argument used in the proof of the structure constant formula.
\medskip
\paragraph{\textbf{Acknowledgements.}}
The author thanks Baptiste Cercl\'e for proposing this problem and for his support throughout this work.

\section{Preliminary background and notations}
\subsection{Lie-theoretic conventions}

Let $\g$ be a complex simple Lie algebra of rank $r$ and $\mathfrak{h}$ be the Cartan subalgebra.
Let $\a\cong (\R^r,\langle\cdot,\cdot\rangle)$ be a Euclidean space such that $\mathfrak{h}^*\cong\a\oplus \bi\a$. We consider the simple roots $(e_i)_{i=1,\ldots,r}$ of $\g$ which is a basis of $\a$ and satisfies
$$2\frac{\langle e_i,e_j\rangle}{\langle e_i,e_i\rangle}=A_{ij},$$
where $A$ is the Cartan matrix of $\g$.
Following physics convention, we normalize the inner product $\langle\cdot,\cdot\rangle$
so that the longest root has norm $2$.
The renormalizing constant is given by $2h^\vee$, where $h^\vee$ is called the dual Coxeter number.

We further define by
$$
\omega_i=\sum^r_{j=1}(A^{-1})_{ij}e_j,
$$
the basis $(\omega_i)_{i=1,\ldots,r}$ of $\a^*$ (identified with $\a$) dual to $(e^\vee_i)_{i=1,\ldots,r}$
so that $\langle e_i^\vee,\omega_j\rangle=\delta_{ij}$ where
$e_i^\vee:=2\frac{e_i}{\langle e_i,e_i\rangle}$
is the coroot.
Let $\rho=\sum^r_{i=1}\omega_i$ denote the Weyl vector and its dual $\rho^\vee=\sum^r_{i=1}\omega^\vee_i $ with $\la\omega^\vee_i,e_j\ra=\delta_{ij}$.
The Weyl vector has norm \cite[eq. (47.11)]{FdV}
$$
|\rho|^2=\frac{h^\vee\dim\g}{12}.
$$

The Toda field will take values in the torus $\T=\T(\gamma):=\a/(2\pi\Lambda)$ where $$\Lambda=\Lambda_\gamma:=\gamma^{-1}\bigoplus_{i=1}^r\Z\omega_i^\vee$$ is the scaled coweight lattice. We equip $\T(\gamma)=\a/(2\pi\Lambda)$ with the Haar measure $\d c$ induced by the Lebesgue measure on $\a$. Thus
$$
\int_{\T(\gamma)}\d c
=
\operatorname{Vol}(\T(\gamma))
=
\left(\frac{2\pi}{\gamma}\right)^r\left(\det(A)\prod^r_{i=1}\frac{\la e_i,e_i\ra}{2}\right)^{-1/2}.
$$
\subsection{Riemann surfaces, parametrized boundaries, gluing, and cutting} We briefly recall the geometric setup in \cite{Guillarmou:2023exh}.
\label{subsec:surfaces-boundaries-gluing}
\paragraph{\textbf{Closed surfaces}}
We first fix the geometric conventions used throughout the paper. A closed Riemann surface $(\Sigma,[g])$
will mean a compact connected oriented smooth surface $\Sigma$, without boundary, endowed
with a conformal class of Riemannian metrics $[g]:=\{e^\rho:\rho\in C^\infty(\Sigma)\}$. Equivalently, the conformal class determines
a complex structure $J\in C^\infty(\Sigma;\mathrm{End}(T\Sigma))$, $J^2=-\mathrm{Id}$. In local
holomorphic coordinates $z=x+\bi y$, any compatible metric is of the form $g=e^\rho |\d z|^2,$
and the Hodge star on one-forms is characterized by
$*\d x=\d y,\,*\d y=\d x.$
We denote by $K_g$ the scalar curvature, by $\d \v_g$ the Riemannian volume form, and
by $\Delta_g=\d^*\d$ the non-negative Laplace--Beltrami operator. The Gauss--Bonnet formula reads
\begin{equation}\label{eq:Gauss-Bonnet-closed}
\int_\Sigma K_g\,\d \v_g
=
4\pi\chi(\Sigma),
\qquad
\chi(\Sigma)=2-2\bbg
\end{equation} with $\bbg$ the genus of $\Sigma.$ Moreover, if $\widehat g=e^\rho g$, $\rho\in C^\infty(\Sigma)$, then
\begin{equation}\label{eq:curvature-conformal-change}
K_{\widehat g}
=
e^{-\rho}\left(\Delta_g\rho+K_g\right).
\end{equation}

\paragraph{\textbf{Surfaces with analytic parametrized boundary.}}
Let $\mathbb T:=\{e^{\bi\theta}:\theta\in\mathbb R/2\pi\mathbb Z\}$ be the unit circle. A compact Riemann
surface with analytic parametrized boundary is a compact oriented smooth surface
$\Sigma$ with boundary $\partial\Sigma=\bigsqcup_{j=1}^{\bbb}\partial_j\Sigma,$
together with an atlas $(U_j,\omega_j)$ and smooth diffeomorphisms
$\zeta_j:\mathbb T\longrightarrow \partial_j\Sigma,
\, j=1,\ldots,\bbb,$
such that the following hold. The collection $U_j$ is an open cover $\Sigma$, and $U_j\cap\partial_j\Sigma\neq\emptyset$ if and only if $j\in[1,\bbb]$. There exists $\delta\in(0,1)$ such that, for $j\in[1,\bbb]$,
$$
\omega_j(U_j)=\mathbb A_\delta
:=
\{z\in\C:\delta<|z|\le 1\},
\qquad
\omega_j(\partial_j\Sigma)=\mathbb T.
$$ The transition
maps are holomorphic in the interior, and $\omega_j\circ\zeta_j:\mathbb T\to\mathbb T$
is real analytic, i.e. extends holomorphically to a neighborhood of $\mathbb T$.

The above charts define a complex structure on $\Sigma$. A Riemannian metric $g$ is
compatible with this complex structure if, in every chart,
$(\omega_j^{-1})^*g=e^{\rho_j}|\d z|^2$
for some smooth real-valued function $\rho_j$.
The orientation of $\Sigma$ induces an orientation on each boundary component. We say
that $\partial_j\Sigma$ is outgoing if the orientation induced by
$\zeta_j(e^{\bi\theta})$ agrees with the boundary orientation, and incoming otherwise. We
set
$$
\varsigma_j=
\begin{cases}
-1, & \partial_j\Sigma \text{ is outgoing},\\
+1, & \partial_j\Sigma \text{ is incoming}.
\end{cases}
$$
By replacing a boundary chart by its inverse coordinate if necessary, we may assume that
the parametrizations are adapted to the signs:
$$
\zeta_j(e^{i\theta})
=
\begin{cases}
\omega_j^{-1}(e^{\bi\theta}), & \varsigma_j=-1,\\
\omega_j^{-1}(e^{-\bi\theta}), & \varsigma_j=+1.
\end{cases}
$$

An admissible metric is a compatible metric satisfying
$$
(\omega_j^{-1})^*g=\frac{|\d z|^2}{|z|^2}
\qquad\text{on }\omega_j(U_j),\qquad j=1,\ldots,\bbb.
$$
Then every boundary component is geodesic, has length $2\pi$, and $K_g=0$ near
$\partial\Sigma$. We denote by $\nu$ the inward-pointing unit normal vector field and
by $\d\ell_g$ the induced boundary measure.
\paragraph{\textbf{Gluing and cutting}}
We now recall the gluing convention. Suppose that two parametrized boundary components
$\partial_j\Sigma$ and $\partial_k\Sigma$ have opposite signs, one outgoing and one
incoming. The glued surface is obtained by identifying
$$
\zeta_j(e^{\bi\theta})\sim \zeta_k(e^{\bi\theta}),
\qquad \theta\in\R/2\pi\Z.
$$
Near the glued circle, the complex coordinate is obtained by using $\omega_j$ on one
side and $1/\omega_k$ on the other side. Hence the result is again a Riemann surface.
If the original metric is admissible, the two metrics glue to give a smooth compatible metric on the glued surface.
The remaining boundary components keep their original parametrizations.

Conversely, if $\mathcal C\subset\Sigma^\circ$ is an analytically embedded simple closed
curve and if a neighborhood of $\mathcal C$ is identified holomorphically with an annulus
$\{z:\delta<|z|<\delta^{-1}\}$, with $\mathcal C$ corresponding to $\mathbb T$, then
cutting $\Sigma$ along $\mathcal C$ produces a bordered Riemann surface
$\Sigma_{\mathcal C}:=\Sigma\setminus\mathcal C$
completed by two new analytic boundary components. One of them is outgoing and the other
is incoming. 

\subsection{Regularized determinant of the Laplacian and Green's function}\label{subsec:Laplacian and Green}

Let $(\Sigma,g)$ be a connected compact oriented Riemann surface, with or without boundary.
We denote by $\Delta_g=\d^*\d$ the non-negative Laplace--Beltrami operator. When
$\partial\Sigma=\emptyset$, we consider $\Delta_g$ acting on functions on $\Sigma$.
When $\partial\Sigma\neq\emptyset$, we consider the Dirichlet Laplacian, namely $\Delta_g$
acting on functions vanishing on $\partial\Sigma$.

The spectrum of $\Delta_g$ is discrete:
$$
\mathrm{Sp}(\Delta_g)=\{\lambda_j\}_{j\in\mathcal J},
\qquad
0\le \lambda_0\le \lambda_1\le \lambda_2\le\cdots,
\qquad
\lambda_j\to+\infty,
$$ where $\mathcal{J}=\N_0 $ (resp. $\mathcal{J}=\N$) if $\partial\Sigma=\emptyset$ (resp. if $\partial\Sigma\neq\emptyset$).
The regularized determinant of $\Delta_g$ is defined by zeta regularization. For $\Re(s)\gg1$, we set $\zeta_g(s):=\sum_{\lambda_j>0}\lambda_j^{-s}.$
This series converges for $\Re(s)\gg1$ and admits a meromorphic continuation to $\C$, which is holomorphic at $s=0$. We then define
$\det{}'(\Delta_g):=\exp\bigl(-\zeta'_g(0)\bigr).$
If $\partial\Sigma=\emptyset$ and $\hat g=e^\rho g$ for some $\rho\in C^\infty(\Sigma)$, then the determinant satisfies the
Polyakov formula \cite[eq. (1.31)]{OSGOOD1988148}
\begin{equation}\label{eq:Polyakov-formula}
\log \frac{\det{}'(\Delta_{\hat g})}{\v_{\hat g}(\Sigma)}
=
\log \frac{\det{}'(\Delta_g)}{\v_g(\Sigma)}
-\frac{1}{48\pi}\int_\Sigma\bigl(|d\rho|_g^2+2K_g\rho\bigr)\,\d\v_g.
\end{equation}

We now recall the Green function of the Laplacian. If $\partial\Sigma=\emptyset$, let $\Pi_0:L^2(\Sigma,\d\v_g)\to \ker\Delta_g$
be the orthogonal projection onto the constants. The resolvent operator
$R_g:L^2(\Sigma,\d\v_g)\to L^2(\Sigma,\d\v_g)$
is defined by
$$
\Delta_gR_g=2\pi(\mathrm{Id}-\Pi_0),
\qquad
R_g^*=R_g,
\qquad
R_g1=0.
$$
Its integral kernel is the Green function $G_g$, characterized by
$$
R_gf(x)=\int_\Sigma G_g(x,y)f(y)\,\d\v_g(y),
\qquad f\in L^2(\Sigma,\d\v_g).
$$
Equivalently, $G_g$ is the unique symmetric function on $\Sigma\times\Sigma\setminus \{(x,x):x\in\Sigma\}$
such that, for each fixed $x\in\Sigma$,
\begin{equation}\label{eq:Green-function-equations-closed}
-\Delta_g G_g(x,\cdot)=2\pi\left(\delta_x-\frac{1}{\v_g(\Sigma)}\right),
\qquad
\int_\Sigma G_g(x,y)\,\d\v_g(y)=0.
\end{equation}

If $\partial\Sigma\neq\emptyset$, we denote by $R_{g,D}$ the inverse of the Dirichlet Laplacian: $R_{g,D}:=(2\pi)\Delta_g^{-1}.$
Its integral kernel is the Dirichlet Green function $G_{g,D}$, characterized by
$$
R_{g,D}f(x)=\int_\Sigma G_{g,D}(x,y)f(y)\,\d\v_g(y),
\qquad f\in L^2(\Sigma,\d\v_g),
$$
or equivalently by the conditions that, for each fixed $x\in\Sigma$,
\begin{equation}\label{eq:Green-function-equations-boundary}
\begin{cases}
-\Delta_g G_{g,D}(x,\cdot)=2\pi\delta_x & \text{in }\Sigma,\\
G_{g,D}(x,\cdot)=0 & \text{on }\partial\Sigma.
\end{cases}
\end{equation}
The function $G_{g,\Sigma}$ is symmetric on $\Sigma\times\Sigma\setminus\{(x,x):x\in\Sigma\}$.

In both cases, the Green function has the same local logarithmic singularity along the diagonal:
$$
G(x,y)=\log \frac{1}{d_g(x,y)}+O(1)
\qquad\text{as }y\to x,
$$
where $G$ stands either for $G_g$ or for $G_{g,D}$, depending on the situation.

\subsection{Homology and cohomology}

\paragraph{\textbf{Closed surfaces}}
Consider a closed Riemann surface $(\Sigma,g)$ of genus $\bbg$.
Let $H_1(\Sigma)$ denote the first homology group of $\Sigma$ with value in $\Z$. The algebraic intersection number endows $H_1(\Sigma)$ with a symplectic structure. We will call a basis $([a_i],[b_i])_{i=1,\ldots\bbg}$ of $H_1(\Sigma)$ a \emph{geometric symplectic basis} if the basis $([a_i],[b_i])_{i=1,\ldots\bbg}$ is represented by simple closed curves $a_1,b_1,\ldots,a_{\bbg},b_{\bbg}$ such that $a_j$ intersects $b_j$ transversely in one point, while all other intersections vanish. 

We shall write $\bsigma=(a_1,b_1,\dots,a_\bbg,b_\bbg)$ for such a basis, and also use the notation
$\bsigma=(\sigma_1,\dots,\sigma_{2\bbg})$
when no distinction between $a$- and $b$-cycles is needed.

Let $H^1(\Sigma)$ denote the first de Rham cohomology space and $\harm^1(\Sigma)$ be the
space of harmonic $1$-forms.
The space $H^1(\Sigma)$ is dual to the space $H_1(\Sigma)$ via the pairing
$$
H^1(\Sigma)\times H_1(\Sigma)\rightarrow\R,
\qquad
(\omega,\sigma)\mapsto \la \omega,\sigma\ra=\int_\sigma\omega.
$$
We denote by $H^1(\Sigma;\a)$ the first de Rham cohomology space associated to $\a$-valued
$1$-forms and define the space $H^1_\Lambda(\Sigma;\a)$ of cohomology classes with periods
in $2\pi\Lambda$:
$$
H^1_\Lambda(\Sigma;\a):=
\left\{
\Omega\in H^1(\Sigma;\a):
\int_\sigma\Omega\in2\pi\Lambda
\quad
\forall\sigma\in H_1(\Sigma)
\right\}.
$$

\begin{lemma}\label{geometric symplectic basis}
Let $\bm{\sigma}=(\sigma_1,\ldots,\sigma_{2\bbg})$ be a basis of $H_1(\Sigma)$.
Then there exist $2\bbg$ independent closed smooth real-valued $1$-forms
$\eta_1,\ldots,\eta_{2\bbg}$ such that, for every $1\leq j,k\leq2\bbg,$
$$
\int_{\sigma_j}\eta_k=2\pi\delta_{jk}.
$$
Moreover, we can identify $\Lambda^{2\bbg}$ with $\H^1_\Lambda(\Sigma;\a)$ by sending
$(\lambda_1,\ldots,\lambda_{2\bbg})\in\Lambda^{2\bbg}$ to
$$
\sum^{2\bbg}_{k=1}\eta_k\otimes\lambda_k\in H^1_\Lambda(\Sigma;\a).
$$
\end{lemma}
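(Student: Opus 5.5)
The plan is to obtain the forms $\eta_k$ from de Rham duality and then identify $H^1_\Lambda(\Sigma;\a)$ by tensoring with $\a$. The key input is that $H_1(\Sigma)$ is free of rank $2\bbg$, and that the de Rham pairing $H^1(\Sigma)\times H_1(\Sigma)\to\R$ induces an isomorphism $H^1(\Sigma)\cong \mathrm{Hom}(H_1(\Sigma),\R)$. This is de Rham's theorem together with the universal coefficient theorem; since $H_0$ is free, there is no Ext term.

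Given the basis $\bsigma$, I will consider the functionals $\phi_k\in\mathrm{Hom}(H_1(\Sigma),\R)$ determined by $\phi_k(\sigma_j)=2\pi\delta_{jk}$. By the isomorphism above, each $\phi_k$ equals the period map of a unique class $[\eta_k]\in H^1(\Sigma)$. I then pick any smooth closed representative $\eta_k$; harmonic representatives would also do, by Hodge theory. Independence is automatic: if $\sum c_k\eta_k$ were exact, integrating it over $\sigma_j$ would give $2\pi c_j=0$ for every $j$. Integrals over the cycles depend only on their homology classes by Stokes, so $\int_{\sigma_j}\eta_k=2\pi\delta_{jk}$ holds for the chosen representatives.

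For the identification, I will use $H^1(\Sigma;\a)\cong H^1(\Sigma)\otimes\a$, which follows by taking coordinates in a basis of $\a$. Under this isomorphism, the classes $\eta_k\otimes\lambda$ with $\lambda\in\a$ span, and the map
\[
(\lambda_1,\ldots,\lambda_{2\bbg})\mapsto \sum_{k}\eta_k\otimes\lambda_k
\]
is a linear isomorphism $\a^{2\bbg}\to H^1(\Sigma;\a)$. The periods of the image over the basis cycles are $\int_{\sigma_j}\sum_k\eta_k\otimes\lambda_k=2\pi\lambda_j$. Since every $\sigma\in H_1(\Sigma)$ is an integer combination of the $\sigma_j$ and $\Lambda$ is a lattice (closed under $\Z$-combinations), the class lies in $H^1_\Lambda$ if and only if $\lambda_j\in\Lambda$ for all $j$. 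Injectivity is inherited from the linear isomorphism, and surjectivity follows because any $\Omega\in H^1_\Lambda$ equals $\sum_k\eta_k\otimes\frac{1}{2\pi}\int_{\sigma_k}\Omega$. Indeed, both sides have the same periods, so they agree by de Rham injectivity.

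The only point requiring care is that $\bsigma$ is a $\Z$-basis of $H_1(\Sigma)$, not merely an $\R$-basis of $H_1(\Sigma;\R)$. This is exactly what makes the lattice condition on the basis periods equivalent to the condition on all cycles. I expect no genuine obstacle, as the rest is standard de Rham theory.
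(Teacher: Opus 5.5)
Your proposal is correct and takes the paper's approach. The paper's proof is a one-liner: the de Rham pairing between $H^1(\Sigma)$ and $H_1(\Sigma)$ gives the dual forms $\eta_k$, and the identification with $H^1_\Lambda(\Sigma;\a)$ follows from the definition. You fill in the details, including the correct observation that $\bsigma$ must be a $\Z$-basis, so that the lattice condition on the basis periods controls the periods over all cycles.
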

\begin{proof}
The first claim is the pairing between $H_1(\Sigma)$ and $H^1(\Sigma)$, and the second claim follows directly from the definition of $H^1_\Lambda(\Sigma;\a)$.
\end{proof}

\paragraph{\textbf{Surfaces with boundary}}
We now turn to surfaces with boundary. Let $(\Sigma,[g])$ be a compact connected Riemann surface of genus $\bbg$, with non-empty analytic boundary $\partial\Sigma=\bigsqcup_{j=1}^{\bbb}\partial_j\Sigma$. Assume that the boundary components are oriented positively with respect to the orientation of $\Sigma$, and we write $c_j:=\partial_j\Sigma,\, j=1,\ldots,\bbb$ . As elements of $H_1(\Sigma)$, they satisfy $[c_1]+\cdots+[c_{\bbb}]=[\partial\Sigma]=0.$

The homology group $H_1(\Sigma)$ is represented by oriented closed curves in $\Sigma$ and is isomorphic to $\Z^{2\bbg+\bbb-1}$. We call a basis $$(a_1,b_1,\ldots,a_{\bbg},b_{\bbg},c_1,\ldots,c_{\bbb-1})$$  a \emph{canonical geometric basis} of $H_1(\Sigma)$ if $a_i,b_i$ are simple closed curves contained in $\Sigma^\circ$, chosen as in the closed case, and $c_1,\ldots,c_{\bbb-1}$ are boundary cycles, after a possible renumbering of the boundary components.

The relative homology group $H_1(\Sigma,\partial\Sigma)$ is represented by oriented closed curves in $\Sigma^\circ$, together with oriented curves whose endpoints lie on $\partial\Sigma$. It is also isomorphic to $\Z^{2\bbg+\bbb-1}$. We call a basis $$
(a_1,b_1,\ldots,a_{\bbg},b_{\bbg},d_1,\ldots,d_{\bbb-1})$$ a \emph{canonical geometric basis} of $H_1(\Sigma,\partial\Sigma)$ 
if $a_i,b_i$ are as above, and $d_1,\ldots,d_{\bbb-1}$ are pairwise disjoint oriented simple arcs with endpoints on $\partial\Sigma$, disjoint from the interior cycles $(a_i,b_i)$, such that the graph whose vertices are the boundary components and whose edges are the arcs $d_j$ is connected and has no cycle. 

We denote by $H^1(\Sigma)$ and $H^1(\Sigma,\partial\Sigma)$ respectively the absolute and relative de Rham cohomology spaces. Absolute cohomology is dual to absolute homology, and relative cohomology is dual to relative homology, through the pairings
\begin{equation}\label{eq:absolute-cohomology-duality}
H^1(\Sigma)\times H_1(\Sigma)\to\R,
\qquad
(\omega,\sigma)\mapsto\int_\sigma\omega,    
\end{equation}
and
\begin{equation}\label{eq:relative-cohomology-duality}
    H^1(\Sigma,\partial\Sigma)\times H_1(\Sigma,\partial\Sigma)\to\R,
\qquad
(\omega,\sigma)\mapsto\int_\sigma\omega.
\end{equation}
Equivalently, a relative cohomology class may be represented by a closed $1$-form whose pullback to $\partial\Sigma$ vanishes. In what follows, we often choose relative representatives compactly supported in $\Sigma^\circ$.

For $\a$-valued forms, we define 
$$
H^1_\Lambda(\Sigma;\a)
:=
\left\{
\Omega\in H^1(\Sigma;\a):
\int_\sigma\Omega\in2\pi\Lambda
\quad
\forall\sigma\in H_1(\Sigma)
\right\},
$$
and
$$
H^1_\Lambda(\Sigma,\partial\Sigma;\a)
:=
\left\{
\Omega\in H^1(\Sigma,\partial\Sigma;\a):
\int_\sigma\Omega\in2\pi\Lambda
\quad
\forall\sigma\in H_1(\Sigma,\partial\Sigma)
\right\}.
$$

We shall use the following $\a$-valued versions of the basic homological lemmas from \cite[Section 3.8]{Guillarmou:2023exh}. whose proofs can be obtained componentwise: after
fixing an orthonormal basis $(\varepsilon_1,\ldots,\varepsilon_r)$ of $\a$, one writes
each $\a$-valued form as a sum of scalar forms and applies the corresponding scalar
statement in \cite{Guillarmou:2023exh} to each component.
\begin{lemma}
\label{lem:a-valued-absolute-relative-exactness}
$\,$
\begin{enumerate}
\item Fix a basis of $H_1(\Sigma)$. Let
$\Omega_1,\Omega_2\in C^\infty_{\mathrm{abs}}(\Sigma,\Lambda^1\Sigma\otimes\a)$
be closed $\a$-valued absolute $1$-forms such that
$\int_\sigma \Omega_1=\int_\sigma \Omega_2$
for every element $\sigma$ in the basis of $H_1(\Sigma)$. Then there exists $f\in C^\infty(\Sigma;\a),
\,\partial_\nu f|_{\partial\Sigma}=0,$
such that
$$
\Omega_1=\Omega_2+\d f.
$$

\item Fix a basis of $H_1(\Sigma,\partial\Sigma)$. Let
$\Omega_1,\Omega_2\in C^\infty_{\mathrm{rel}}(\Sigma,\Lambda^1\Sigma\otimes\a)$
be closed $\a$-valued relative $1$-forms such that
$\int_\sigma \Omega_1=\int_\sigma \Omega_2$
for every element $\sigma$in the basis of $H_1(\Sigma,\partial\Sigma)$. Then
there exists
$f\in C^\infty(\Sigma;\a),
\,f|_{\partial\Sigma}=0,$
such that
$$
\Omega_1=\Omega_2+\d f.
$$
\end{enumerate}
\end{lemma}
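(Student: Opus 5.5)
The plan is to reduce both statements to the scalar case, as the text preceding the lemma indicates, and then recall why the scalar statements hold. Fix an orthonormal basis $(\varepsilon_1,\ldots,\varepsilon_r)$ of $\a$. Write $\Omega_\ell=\sum_{k=1}^r\Omega_\ell^{(k)}\varepsilon_k$ for $\ell=1,2$, where $\Omega_\ell^{(k)}:=\la\Omega_\ell,\varepsilon_k\ra$ are real-valued $1$-forms.

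The boundary conditions are linear and act componentwise, so each $\Omega_\ell^{(k)}$ inherits them. Closedness passes to components. So does the absolute condition, which in \cite{Guillarmou:2023exh} is the vanishing of the normal component $\iota_\nu\Omega$ on $\partial\Sigma$, compatible with Neumann data. The relative condition, vanishing of the pullback to $\partial\Sigma$, also passes to components. The period hypothesis gives $\int_\sigma\Omega_1^{(k)}=\int_\sigma\Omega_2^{(k)}$ for every basis element $\sigma$ and every $k$. Once we have scalar functions $f_k$ with $\Omega_1^{(k)}=\Omega_2^{(k)}+\d f_k$ and the right boundary behaviour, we set $f=\sum_k f_k\varepsilon_k$ and the lemma follows.

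For the scalar statements, which are \cite[Section 3.8]{Guillarmou:2023exh}, the argument runs as follows. Let $\omega=\Omega_1^{(k)}-\Omega_2^{(k)}$. It is closed, and its periods on a basis vanish, hence on all of $H_1(\Sigma)$ in case (1), respectively $H_1(\Sigma,\partial\Sigma)$ in case (2).

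\textbf{Absolute case.} By de Rham's theorem on the compact manifold with boundary, $\omega=\d f$ with $f(x)=\int_{x_0}^x\omega$, which is well defined and smooth. The absolute condition gives $\partial_\nu f=\omega(\nu)=0$ on $\partial\Sigma$.

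\textbf{Relative case.} Vanishing periods on the closed cycles again give a primitive $f$. The pullback of $\omega$ to $\partial\Sigma$ vanishes, so $f$ is constant on each boundary component $\partial_j\Sigma$. The arcs $d_j$ connect the boundary components in a spanning tree. The condition $\int_{d_j}\omega=0$ then forces all these constants to agree. Subtracting the common constant gives $f|_{\partial\Sigma}=0$.

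The only point needing care is the relative case, specifically checking that the tree property of the $d_j$ is what makes the boundary constants coincide. This is exactly where the canonical geometric basis is used.
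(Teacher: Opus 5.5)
Your proposal is correct and follows the paper's route. The paper proves this lemma exactly this way: it expands in an orthonormal basis of $\a$ and applies the scalar statements of \cite[Section 3.8]{Guillarmou:2023exh} to each component. Your sketch of those scalar statements is also sound: the path-integral primitive, the Neumann condition coming from $\iota_\nu\omega=0$, and the matching of boundary constants along connecting arcs.

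One minor remark. The lemma allows an arbitrary basis of $H_1(\Sigma,\partial\Sigma)$, and you have already shown that the periods vanish on all of $H_1(\Sigma,\partial\Sigma)$. So any family of arcs connecting the boundary components works, and the canonical geometric basis (with its acyclicity) is a convenience rather than a necessity.
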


\begin{lemma}
\label{lem:relative-period-basis}
Let $\Sigma$ be a compact connected Riemann surface of genus $\bbg$ with
$\bbb\ge1$ boundary components, and let $\bsigma=(\sigma_1,\ldots,\sigma_{2\bbg+\bbb-1})$
be a basis of $H_1(\Sigma,\partial\Sigma)$. Then, for every
$\blambda^c=(\lambda^c_1,\ldots,\lambda^c_{2\bbg+\bbb-1})
\in\Lambda^{2\bbg+\bbb-1},$
there exists a closed smooth $\a$-valued $1$-form
$\Omega^c_{\blambda^c},$
compactly supported in $\Sigma^\circ$, such that
$$
\int_{\sigma_k}\Omega^c_{\blambda^c}
=
2\pi\lambda^c_k,
\qquad
k=1,\ldots,2\bbg+\bbb-1.
$$
Moreover, the assignment $\blambda^c\mapsto [\Omega^c_{\blambda^c}]$
identifies $\Lambda^{2\bbg+\bbb-1}$ with $H^1_\Lambda(\Sigma,\partial\Sigma;\a)$

If, in addition, $\bsigma=(a_1,b_1,\ldots,a_{\bbg},b_{\bbg},d_1,\ldots,d_{\bbb-1})$
is a canonical geometric basis, then the representatives may be chosen so that, for every $(0,\ldots,0,\blambda^c_\partial)
\in \Lambda^{2\bbg}\times\Lambda^{\bbb-1},$
there exists $f_{\blambda^c_\partial}\in C^\infty(\Sigma;\a),$
locally constant near $\partial\Sigma$ such that
$\Omega^c_{(0,\ldots,0,\blambda^c_\partial)}=
\d f_{\blambda^c_\partial}.$

\end{lemma}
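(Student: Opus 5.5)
The plan is to reduce to the scalar case componentwise, as the paper does for Lemma \ref{lem:a-valued-absolute-relative-exactness}, and then to treat the lattice constraint and the boundary refinement separately.

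\emph{Step 1: a dual basis of compactly supported forms.} By Lefschetz duality, $H^1_c(\Sigma^\circ)\cong H^1(\Sigma,\partial\Sigma)$ is dual to $H_1(\Sigma,\partial\Sigma)$ through the pairing \eqref{eq:relative-cohomology-duality}. Hence there exist closed real-valued $1$-forms $\eta^c_1,\ldots,\eta^c_{2\bbg+\bbb-1}$, compactly supported in $\Sigma^\circ$, such that
$$\int_{\sigma_j}\eta^c_k=2\pi\delta_{jk}.$$
Concretely, for an interior cycle one may take a Poincaré-dual bump form supported in a collar of the dual curve. For an arc $d_j$, one may take $\d\chi_j$, where $\chi_j$ is a smooth function equal to $1$ near one boundary component (or group of components, cut out by the tree structure) and $0$ near the others.

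\emph{Step 2: the lattice identification.} Set
$$\Omega^c_{\blambda^c}:=\sum_k \eta^c_k\otimes\lambda^c_k.$$
It is closed, compactly supported, and has periods $2\pi\lambda^c_k$ on the basis; by linearity it then has periods in $2\pi\Lambda$ on all of $H_1(\Sigma,\partial\Sigma)$, since periods on a $\Z$-basis determine all integer combinations. The map $\blambda^c\mapsto[\Omega^c_{\blambda^c}]$ is injective because the periods recover $\blambda^c$. It is surjective onto $H^1_\Lambda(\Sigma,\partial\Sigma;\a)$ by Lemma \ref{lem:a-valued-absolute-relative-exactness}(2): a class with periods $2\pi\lambda_k$ differs from $\Omega^c_{\blambda^c}$ by $\d f$ with $f|_{\partial\Sigma}=0$, hence defines the same relative class. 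These statements are checked componentwise in an orthonormal basis of $\a$.

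\emph{Step 3: the boundary refinement for a canonical geometric basis.} I will construct the $\eta^c_k$ for the arcs as exact forms. Because the arc graph is a tree, removing the edge $d_j$ splits the boundary components into two groups. Let $\chi_j$ be smooth, constant equal to $2\pi$ (up to sign, according to the orientation of $d_j$) near the components on one side and $0$ near the others, with the transition supported in a collar of the boundary components on one side. Then $\d\chi_j$ is compactly supported in $\Sigma^\circ$. Its integral along $d_i$ is the difference of the endpoint values of $\chi_j$, which is $2\pi\delta_{ij}$ exactly because of the tree splitting.

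The integrals along $a_i,b_i$ need a short argument. They vanish since $\d\chi_j$ is exact. One should also arrange the transition collars to avoid these cycles, which is possible since the cycles and arcs lie in the interior away from thin collars; in any case exactness alone suffices.

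Then
$$f_{\blambda^c_\partial}:=\sum_j \chi_j\lambda^c_j$$
is locally constant near $\partial\Sigma$ and satisfies $\Omega^c_{(0,\blambda^c_\partial)}=\d f_{\blambda^c_\partial}$.

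\emph{Main obstacle.} The main point is the compatibility in Step 1: the interior dual forms $\eta^c$ for $a_i,b_i$ must have zero integral along the arcs $d_j$. I would secure this by supporting them in thin collars of the $b_i,a_i$, which the hypotheses make disjoint from the $d_j$; the dual pairing matrix is then exactly the identity.
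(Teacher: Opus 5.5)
Your proposal is correct and is essentially the paper's argument: the paper reduces componentwise to the scalar lemma of \cite[Section 3.8]{Guillarmou:2023exh}, and your dual basis is that scalar input made explicit, namely bump forms in thin collars of the interior curves (which miss the arcs $d_j$) together with the exact forms $\d\chi_j$ coming from the tree splitting. Tensoring directly, $\Omega^c_{\blambda^c}=\sum_k\eta^c_k\otimes\lambda^c_k$, is if anything slightly cleaner than decomposing in an orthonormal basis of $\a$, since the components of lattice vectors in such a basis need not lie in a scalar lattice.
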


\begin{remark}[Glued homology bases]
\label{rem:glued-homology-bases}
For later use in the gluing arguments, we adopt the convention of
\cite[Lemmas 3.7 \& 3.8]{Guillarmou:2023exh} for gluing absolute and relative homology bases. More precisely, if two parametrized boundary components of
$\Sigma_1$ and $\Sigma_2$ are identified with opposite orientations and $\Sigma=\Sigma_1\#\Sigma_2,$
then, given canonical geometric bases $\bsigma_i^{\mathrm{abs}}\subset H_1(\Sigma_i),$ and $\bsigma_i^{\mathrm{rel}}\subset H_1(\Sigma_i,\partial\Sigma_i),$
for $i=1,2$,
we denote by $\bsigma^{\mathrm{abs}}=
\bsigma_1^{\mathrm{abs}}\#\bsigma_2^{\mathrm{abs}},$
$\bsigma^{\mathrm{rel}}=
\bsigma_1^{\mathrm{rel}}\#\bsigma_2^{\mathrm{rel}}$
the corresponding glued bases on $\Sigma$.
If $\Omega^{i,c}_{\blambda_i^c}
\in H^1_\Lambda(\Sigma_i,\partial\Sigma_i;\a),\, i=1,2,$
are compactly supported relative representatives, then after extending them by zero
outside $\Sigma_i$, we write $\Omega^c_{\blambda^c}
=
\Omega^{1,c}_{\blambda_1^c}
+
\Omega^{2,c}_{\blambda_2^c}$
for the corresponding representative on the glued surface $\Sigma$, with
$\blambda^c$ denoting the reindexed lattice variable determined by the glued basis. A similar discussion applies to self-gluing. We refer to \cite[Section 3.9]{Guillarmou:2023exh} for further details.
\end{remark}
\subsection{Equivariant \texorpdfstring{$\a$}{a}-valued functions and distributions}\label{sec:equivariant}
Let $\Sigma $ be a surface with or without boundary, and let $\beta_1=\dim H_1(\Sigma)$ denote its first Betti number.
Fix a base point $x_0\in \Sigma$, and let $\pi:\widetilde{\Sigma}_{x_0}\to \Sigma$
be the universal cover of $\Sigma$, with distinguished lift $\widetilde x_0$ of $x_0$.
We view $\widetilde{\Sigma}_{x_0}$ as the space of homotopy classes, with fixed endpoints,
of continuous paths starting at $x_0$. The fundamental group $\pi_1(\Sigma,x_0)$ acts on
$\widetilde{\Sigma}_{x_0}$ by deck transformations.

Let $\Omega$ be a smooth closed $\a$-valued $1$-form on $\Sigma$. For each
$\widetilde x\in \widetilde{\Sigma}_{x_0}$, define
\begin{equation}\label{eq:primitive-cover}
I_{x_0}(\Omega)(\widetilde x)
:=
\int_{\alpha_{x_0,x}}\Omega,
\qquad x=\pi(\widetilde x),
\end{equation}
where $\alpha_{x_0,x}$ is any $C^1$ path from $x_0$ to $x$ whose lift joins
$\widetilde x_0$ to $\widetilde x$. Since $\Omega$ is closed, the quantity
$I_{x_0}(\Omega)(\widetilde x)$ depends only on $\widetilde x$, and therefore defines
a smooth $\a$-valued function on $\widetilde{\Sigma}_{x_0}$. Moreover, $\d I_{x_0}(\Omega)=\pi^*\Omega.$

When $\Omega\in H^1_\Lambda(\Sigma;\a)$, the primitive $I_{x_0}(\Omega)$ is not
in general invariant under deck transformations, but its variation is controlled by the
period lattice $2\pi\Lambda$.

\begin{lemma}\label{lem:equivariant-primitive}
Let $\Omega\in H^1_\Lambda(\Sigma;\a)$. Then for every
$\gamma\in\pi_1(\Sigma,x_0)$ and every $\widetilde x\in \widetilde{\Sigma}_{x_0}$,
$$
I_{x_0}(\Omega)(\gamma\cdot\widetilde x)-I_{x_0}(\Omega)(\widetilde x)\in 2\pi\Lambda.
$$
Consequently, $I_{x_0}(\Omega)$ defines a smooth map from $\Sigma$ to the torus $\T=\a/(2\pi\Lambda).$

\end{lemma}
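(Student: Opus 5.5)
The argument is a direct computation with the path definition \eqref{eq:primitive-cover}. I will not need any earlier result beyond the definition of $H^1_\Lambda(\Sigma;\a)$ and the homotopy invariance of integrals of closed forms.

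Fix $\gamma\in\pi_1(\Sigma,x_0)$, represented by a loop $c$ based at $x_0$, and fix $\widetilde x\in\widetilde\Sigma_{x_0}$, represented by a path $\alpha$ from $x_0$ to $x=\pi(\widetilde x)$. By the convention that $\widetilde\Sigma_{x_0}$ consists of homotopy classes of paths from $x_0$, the deck transformation acts by precomposition: $\gamma\cdot\widetilde x$ is the class of the concatenation $c\ast\alpha$. This path starts at $x_0$ and its lift joins $\widetilde x_0$ to $\gamma\cdot\widetilde x$. Since $\Omega$ is closed, the integral depends only on the homotopy class with fixed endpoints, so this representative may be used to compute $I_{x_0}(\Omega)(\gamma\cdot\widetilde x)$. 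Additivity of the line integral then gives
$$
I_{x_0}(\Omega)(\gamma\cdot\widetilde x)-I_{x_0}(\Omega)(\widetilde x)=\int_{c\ast\alpha}\Omega-\int_\alpha\Omega=\int_c\Omega .
$$
The loop $c$ defines a class $[c]\in H_1(\Sigma)$ through the Hurewicz map, and $\int_c\Omega$ depends only on this class. By the definition of $H^1_\Lambda(\Sigma;\a)$, this integral lies in $2\pi\Lambda$. If $\Omega$ is only a representative of a class in $H^1_\Lambda$, changing it by an exact form $\d f$ does not change its periods, so the statement is unaffected. This proves the first claim, and shows moreover that the difference is independent of $\widetilde x$.

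For the consequence, compose $I_{x_0}(\Omega)$ with the projection $p:\a\to\T=\a/(2\pi\Lambda)$. The first claim shows that $p\circ I_{x_0}(\Omega)$ is invariant under deck transformations. Since $\pi:\widetilde\Sigma_{x_0}\to\Sigma$ is a covering whose fibers are exactly the deck orbits, this function descends to a well-defined map $\Sigma\to\T$.

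Smoothness is local. Over an evenly covered neighborhood $U$ of $x$, choose a local section $s$ of $\pi$; then the descended map equals $p\circ I_{x_0}(\Omega)\circ s$ on $U$, which is a composition of smooth maps. In the same way, its differential is $\Omega$ under the identification $T\T\cong\a$.

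The only point requiring care is the convention for the deck action, namely whether $\gamma$ acts by pre- or post-concatenation. With the other convention one obtains $\int_{\alpha c'\alpha^{-1}}\Omega=\int_{c'}\Omega$ for a conjugate loop $c'$, which is still a period. So the conclusion holds either way, and I do not expect any genuine obstacle.
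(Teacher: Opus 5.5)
Your proof is correct and takes essentially the same route as the paper: both reduce $I_{x_0}(\Omega)(\gamma\cdot\widetilde x)-I_{x_0}(\Omega)(\widetilde x)$ to the integral of $\Omega$ over a closed loop in $\Sigma$, which lies in $2\pi\Lambda$ by definition of $H^1_\Lambda(\Sigma;\a)$. The paper integrates $\d I_{x_0}(\Omega)=\pi^*\Omega$ along an arbitrary path in $\widetilde\Sigma_{x_0}$ from $\widetilde x$ to $\gamma\cdot\widetilde x$, so it never needs to fix a convention for the deck action. Your descent and smoothness paragraph supplies the ``consequently'' part, which the paper leaves implicit.
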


\begin{proof}
Let $\widetilde x\in \widetilde{\Sigma}_{x_0}$ and $h\in\pi_1(\Sigma,x_0)$.
Choose any piecewise $C^1$ path $\beta$ in $\widetilde{\Sigma}_{x_0}$ from
$\widetilde x$ to $h\cdot \widetilde x$. Then
$$
I_{x_0}(\Omega)(h\cdot\widetilde x)-I_{x_0}(\Omega)(\widetilde x)
=
\int_\beta \pi^*\Omega
=
\int_{\pi\circ\beta}\Omega.
$$
Since $\pi\circ\beta$ is a closed loop in $\Sigma$ and
$\Omega\in H^1_\Lambda(\Sigma;\a)$, the right-hand side belongs to $2\pi\Lambda$.

\end{proof}

The previous lemma shows that the primitive $I_{x_0}(\Omega)$ is not invariant under deck
transformations in general, but its increments are constrained to lie in $2\pi\Lambda$. This
naturally leads to considering $\a$-valued functions on $\widetilde\Sigma_{x_0}$ whose monodromy
is prescribed by the lattice $2\pi\Lambda$.

We let $\Gamma:=\pi_1(\Sigma,x_0)$ and define
$$
C^\infty_\Gamma(\widetilde\Sigma_{x_0};\a)
:=
\left\{
u\in C^\infty(\widetilde\Sigma_{x_0};\a):\forall h\in \Gamma,
h^*u-u\in 2\pi\Lambda
\right\}.
$$
We note that each $u\in C^\infty_\Gamma(\widetilde\Sigma_{x_0};\a)$ induces a group morphism $$\chi_u:\Gamma\rightarrow2\pi\Lambda,\quad \chi_u(h):=h^*u-u.$$ Moreover, each morphism corresponds to an element $\Omega_\chi$ in $H^1_\Lambda(\Sigma;\a)$, obtained from first fixing a basis $(\Omega_1,\ldots\Omega_{\beta_1})$ of $H^1_\Lambda(\Sigma;\a)$, then choosing $\blambda=(\lambda_1,\ldots,\lambda_{\beta_1})\in\Lambda^{\beta_1}$ so that $\chi(h)=\int_h\Omega_{\blambda}$ for all $h\in\Gamma$, and finally setting $\Omega_\chi:=\Omega_{\blambda}$. We write $\chi_{\blambda}$ for the morphism associated to $\Omega_{\blambda}.$

For each group morphism $\chi:\Gamma\to 2\pi\Lambda,$
we define
$$
C^\infty_\chi(\widetilde\Sigma_{x_0};\a)
:=
\left\{
u\in C^\infty_\Gamma(\widetilde\Sigma_{x_0};\a):\forall h\in \Gamma,\;
h^*u-u=\chi(h)
\right\}.
$$
Then for $u\in C^\infty_{\chi_{\blambda}}(\widetilde\Sigma_{x_0};\a)$, $u-I_{x_0}(\Omega_{\blambda})$ is $\Gamma$-invariant and so descends to a smooth function on $\Sigma.$ This implies that we can rewrite the space $C^\infty_{\chi_{\blambda}}(\widetilde\Sigma_{x_0};\a)$ as $$C^\infty_{\chi_{\blambda}}(\widetilde\Sigma_{x_0};\a)=\{\pi^* f+I_{x_0}(\Omega_{\blambda}):f\in C^\infty(\Sigma;\a)\}$$ 
and also the space $C^\infty_\Gamma(\widetilde\Sigma_{x_0};\a)$ as
$$C^\infty_\Gamma(\widetilde\Sigma_{x_0};\a)=\cup_{\blambda\in\Lambda^{\beta_1}}C^\infty_{\chi_{\blambda}}(\widetilde\Sigma_{x_0};\a).$$ That is, each $u\in C^\infty_\Gamma(\widetilde\Sigma_{x_0};\a)$ can be uniquely represented by $\pi^*f+I_{x_0}(\Omega_{\blambda})$ for some $f\in C^\infty(\Sigma;\a)$ and $\blambda\in\Lambda^{\beta_1}.$

Next we define the equivariant Sobolev spaces similarly. For $s\in\R$, we define the $\Z$-module $$H^s_{\Gamma}(\widetilde\Sigma_{x_0};\a):=\{f\in H^s_{\mathrm{loc}}(\widetilde\Sigma_{x_0};\a):\forall h\in\Gamma, h^*f-f\in 2\pi\Lambda\},$$ where $H^s_{\mathrm{loc}}(\widetilde\Sigma_{x_0};\a)$ denotes the space of $\a$-valued distributions
which belong to $H^s(U;\a)$ on every relatively compact open set
$U\subset \widetilde\Sigma_{x_0}$. Analogous to smooth $\a$-valued functions, each $u\in H^s_\Gamma(\widetilde\Sigma_{x_0};\a)$ can be uniquely represented by $u=\pi^* f+I_{x_0}(\Omega_{\blambda})$ for some $f\in H^s(\Sigma;\a)$ and some $\blambda\in \Lambda^{\beta_1}$.

\subsection{Magnetic backgrounds and boundary windings}\label{subsec:magnetic-backgrounds}
Let $\Sigma$ be a Riemann surface of genus $\bbg$ with or without boundary. If $\partial\Sigma\neq\emptyset$, we write
$\partial\Sigma=\bigsqcup_{\ell=1}^{\bbb}\partial_\ell\Sigma$
for its boundary components and assign to each boundary component a sign
$\varsigma_\ell\in\{\pm1\}$, with the convention that
$\varsigma_\ell=-1$ for an outgoing boundary component and
$\varsigma_\ell=+1$ for an incoming boundary component. Let $\z=(z_1,\ldots,z_{\nm})$ be $\nm$ pairwise distinct points in the interior of $\Sigma$, to which we attach \emph{magnetic charges} $\m=(m_1,\ldots,m_{\nm})\in \Lambda^{\nm}$ such that $\sum_{i=1}^{\nm}m_i=0.$
We denote by $U_1,\ldots,U_{\nm}$ neighborhoods of $z_1,\ldots,z_{\nm}$ and by $\psi_j:\D\rightarrow U_j$ biholomorphisms such that $\psi_j(0)=z_j$. In the unit disk $\D$, we use the coordinate $z=re^{i\theta}$ and denote by $\d\theta$ the closed and coclosed 1-form in the pointed disk $\D\setminus\{0\}$. If $\partial\Sigma\neq\emptyset$, we also choose collar neighborhoods $V_\ell$ of $\partial_\ell\Sigma$ and biholomorphisms $\psi_{\nm+\ell}:\{z\in\C:\delta<|z|\le 1\}\to V_\ell$ with $ \psi_{\nm+\ell}(\mathbb T)=\partial_\ell\Sigma$ for some $\delta<1$. Let $\beta_1:=\dim H_1(\Sigma)$ be the first Betti number.

We now construct the magnetic background associated with the marked-point charges and, in the boundary case, with the boundary winding data.
\begin{proposition}\label{prop:magnetic-background}
    Let $\bsigma=(\sigma_1,\ldots,\sigma_{\beta_1})\subset H_1(\Sigma)$ be a basis realized by closed curves avoiding the disks $U_j$ such that $\sigma_\l\cap\partial\Sigma=\emptyset$ for $\l=1,\ldots,2\bbg$ and such that $\sigma_{2\bbg+\l}=\partial\Sigma_\l$ for $\l=1,\ldots,\bbb-1$. Let $\blambda=(\lambda_1,\ldots,\lambda_{\bbb})\in\Lambda^{\bbb}$ and $\m=(m_1,\ldots, m_{n_{\m}})\in\Lambda^{n_{\m}}$ satisft $\sum_{\l=1}^{\bbb}\varsigma_\l\lambda_\l+\sum_{j=1}^{n_\m}m_j=0$. Then there exists a smooth $\a$-valued closed 1-form on $\Sigma_\z:=\Sigma\setminus\{\z\}$, denoted by $\nu_{\z,\m}$ if $\partial\Sigma=\emptyset$, resp. $\nu_{\z,\m,\blambda}$ if $\partial\Sigma\neq\emptyset$, such that $\iota_{\partial\Sigma}(i_\nu\nu_{\z,\m,\blambda})=0$ with $\nu$ the inward-pointing unit normal vector field along
$\partial\Sigma$, \begin{equation*}
    \begin{cases}
        \psi^*_j(\nu_{\z,\m}|_{U_j})=m_j\d\theta, &\partial\Sigma=\emptyset,\\
        \psi^*_j(\nu_{\z,\m,\blambda}|_{U_j})=m_j\d\theta,
        &\partial\Sigma\neq\emptyset
    \end{cases}
\end{equation*} and for $j=1,\ldots,2\bbg$ and $\l=1,\ldots,\bbb$, $$\begin{cases}
    \int_{\sigma_j}\nu_{\z,\m}=0, &\partial\Sigma=\emptyset,\\ \int_{\sigma_j}\nu_{\z,\m,\blambda}=0 \quad and \quad \int_{\partial\Sigma_\l}\nu_{\z,\m,\blambda}=2\pi\varsigma_\l\lambda_\l, 
    &\partial\Sigma\neq\emptyset.\\ 
\end{cases}$$ 
The form $\nu_{\z,\m,\blambda}$ can be chosen so that $\psi^*_{n+\l}\nu_{\z,\m,\blambda}=-\lambda_\l\d\theta$ near $\T.$ The form $\nu_{\z,\m}$ (resp. $\nu_{\z,\m,\blambda}$) satisfies $\d^*\nu_{\z,\m} \in C^\infty(\Sigma;\a)\cap C^\infty_{\mathrm{c}}(\Sigma_\z;\a)$ (resp. $\d^*\nu_{\z,\m,\blambda})\in C^\infty(\Sigma;\a)\cap C^\infty_{\mathrm{c}}(\Sigma_\z;\a)$). Moreover, in the distributional sense, 
\begin{equation*}
    \begin{cases}
        \d\nu_{\z,\m}=-2\pi\sum^{n_\m}_{j=1}m_j\delta_{z_j}, &\partial\Sigma=\emptyset,\\\d\nu_{\z,\m,\blambda}=-2\pi\sum^{n_\m}_{j=1}m_j\delta_{z_j}&\partial\Sigma\neq\emptyset,
    \end{cases}
\end{equation*} with $\delta_z$ the Dirac mass at $z.$ There is a unique closed and coclosed $\a$-valued $1$-form $\nu^{\mathrm{h}}_{\z,\m}$ if $\partial\Sigma=\emptyset$ and $\nu^{\mathrm{h}}_{\z,\m,\blambda}$ if $\partial\Sigma\neq\emptyset$, on $\Sigma_\z$ such that
\begin{equation}\label{eq:harmonic-magnetic-winding}
    \begin{cases}
        \nu^{\mathrm{h}}_{\z,\m}-\nu_{\z,\m}=\d f_\m,&\partial\Sigma=\emptyset,\\\nu^{\mathrm{h}}_{\z,\m,\blambda}-\nu_{\z,\m,\blambda}=\d f_{\m,\blambda}, &\partial\Sigma\neq\emptyset
    \end{cases}
\end{equation}
for some $f_\m\in C^\infty(\Sigma;\a)$ if $\partial\Sigma=\emptyset$ and $f_{\m,\blambda}\in C^\infty(\Sigma;\a)$ with $f_{\m,\blambda}|_{\partial\Sigma}=0$ if $\partial\Sigma\neq\emptyset.$

\end{proposition}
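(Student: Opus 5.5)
We reduce to the scalar case componentwise. Fix an orthonormal basis $(\varepsilon_1,\ldots,\varepsilon_r)$ of $\a$ and write $m_j=\sum_k m_j^k\varepsilon_k$ and $\lambda_\l=\sum_k\lambda_\l^k\varepsilon_k$. The linear constraint $\sum_\l\varsigma_\l\lambda_\l+\sum_j m_j=0$ then holds for each coordinate $k$ separately. Since all requested properties are linear in $(\m,\blambda)$ (prescribed local form, prescribed periods, vanishing normal component, compact support of $\d^*$, distributional $\d$, and harmonic correction), it suffices to build scalar real-valued forms $\nu^k$ and set $\nu=\sum_k\nu^k\otimes\varepsilon_k$. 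The lattice condition plays no role in the construction, since the periods are prescribed explicitly. Nevertheless I would carry out the scalar construction directly, following \cite{Guillarmou:2023exh}, so that the steps are transparent.

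\textbf{Construction of $\nu$.} Choose a defect graph: smooth simple arcs $\gamma_j$, pairwise disjoint away from a common endpoint. In the closed case, these arcs run from $z_j$ to $z_{\nm}$. In the boundary case, additional arcs run from a fixed point to each $\partial_\l\Sigma$. All arcs avoid the basis curves $\sigma_1,\ldots,\sigma_{2\bbg}$ (possible after isotopy, since the complement of the $a,b$-curves is connected). On each $U_j$, put $m_j\,\d\theta$ with $\theta$ the angle function, cut along $\gamma_j$. Extend this to a neighborhood of the arc as $m_j\,\d(\chi_j\theta_j)$, where $\theta_j$ is a branch defined off $\gamma_j$ and $\chi_j$ is a cutoff. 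Summing and using the constraint, the jumps across the arcs cancel at the common vertex; in the boundary case the boundary weights $\varsigma_\l\lambda_\l$ enter the same cancellation. The result is a closed smooth form on $\Sigma_\z$ with $\d\nu=-2\pi\sum m_j\delta_{z_j}$ distributionally (orientation fixes the sign), since $\d(\d\theta)=2\pi\delta_0$ in the plane. Near each boundary collar we arrange $\psi_{\nm+\l}^*\nu=-\lambda_\l\,\d\theta$. Then $i_\nu\nu=0$ on $\partial\Sigma$, because $\d\theta$ has no radial component, and $\int_{\partial_\l\Sigma}\nu=2\pi\varsigma_\l\lambda_\l$ once the orientation conventions for $\zeta_\l$ are accounted for. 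The periods along $\sigma_1,\ldots,\sigma_{2\bbg}$ are then corrected to zero by subtracting $\sum_k \frac{1}{2\pi}(\int_{\sigma_k}\nu)\,\eta_k$, where the $\eta_k$ come from Lemma \ref{geometric symplectic basis}. In the boundary case we use compactly supported representatives as in Lemma \ref{lem:relative-period-basis}, chosen to vanish near $\z$ and near $\partial\Sigma$, so that the local forms are unaffected. Since $\d\theta$ is coclosed on the punctured disk and on the annulus, $\d^*\nu$ vanishes near $\z$ and near $\partial\Sigma$; hence $\d^*\nu\in C^\infty_c(\Sigma_\z)$, which extends smoothly to $\Sigma$.

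\textbf{Harmonic representative.} We seek $f$ with $\d^*(\nu+\d f)=0$, that is, $\Delta_g f=-\d^*\nu$. In the closed case this is solvable because $\int_\Sigma\d^*\nu\,\d\v_g=0$. To see this, integrate by parts against $1$ on $\Sigma$ minus small disks: the boundary flux is $\int_{\partial B_\eps(z_j)}*\nu$, which vanishes because $m_j\,\d\theta$ has zero normal component. In the boundary case we solve the Dirichlet problem $f|_{\partial\Sigma}=0$, which is uniquely solvable. In both cases $f$ is smooth by elliptic regularity, and $\nu^{\mathrm h}:=\nu+\d f$ is closed and coclosed.

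\textbf{Uniqueness.} For uniqueness among forms $\nu+\d f$, take the difference $\d(f_1-f_2)$. It is coclosed, so $f_1-f_2$ is harmonic on $\Sigma$. In the closed case this forces $f_1-f_2$ to be constant, so $\d(f_1-f_2)=0$. In the boundary case $f_1-f_2$ is harmonic with zero Dirichlet data, hence zero.

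The main obstacle is the bookkeeping in the construction step: making the arc cutoffs compatible so that $\nu$ is smooth away from $\z$, and getting the orientation signs right so that the boundary periods come out as $2\pi\varsigma_\l\lambda_\l$ while the local form near $\partial_\l\Sigma$ is $-\lambda_\l\,\d\theta$. I would first settle the signs on an annulus and then on the sphere with two punctures, and only then pass to general $\Sigma$.
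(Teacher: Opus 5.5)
Your proposal is correct and follows the paper's route: expand all the data along a basis of $\a$, construct the scalar forms, and reassemble the $\a$-valued form from its components. The one difference is the choice of basis. The paper expands in the coweight basis $(\omega_q^\vee)$, so the scalar charges and windings are integers and \cite[Proposition 3.10]{Guillarmou:2023exh} can be cited verbatim. Your orthonormal basis gives real coefficients, which is why you rightly note that integrality is never used and then carry out the scalar construction yourself; there, choose the period-correcting forms to vanish on the $U_j$ in the closed case too.
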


\begin{proof}
Writing the lattice elements in the coweight basis $$m_j=\gamma^{-1}\sum_{q=1}^r m_{j,q}\omega_q^\vee,
\quad\lambda_\ell=\gamma^{-1}\sum_{q=1}^r \lambda_{\ell,q}\omega_q^\vee,$$with $m_{j,q},\lambda_{\ell,q}\in\mathbb Z$, we apply \cite[Proposition 3.10]{Guillarmou:2023exh} with
$R=\gamma^{-1}$, scalar magnetic charges
$(m_{1,q},\ldots,m_{\nm,q})$, and scalar boundary windings
$(\lambda_{1,q},\ldots,\lambda_{\bbb,q})$. This gives scalar forms
$\nu^{(q)}_{\z,\m}$, respectively $\nu^{(q)}_{\z,\m,\blambda}$. Now setting $\nu_{\z,\m}:=
\sum_{q=1}^r \nu^{(q)}_{\z,\m}\,\omega_q^\vee,
\,\nu_{\z,\m,\blambda}
:=
\sum_{q=1}^r \nu^{(q)}_{\z,\m,\blambda}\,\omega_q^\vee $ proves all the assertions.
\end{proof}

We note that the form $\nu_{\z,\m}\in L^1(\Sigma)$ (resp. $\nu_{\z,\m,\blambda}\in L^1(\Sigma)$) does not belong to $L^2(\Sigma)$, but we can still define an $L^2$-norm via regularization: for $\omega=\sum_{\l=1}^r\omega_\l\eps_\l\in C^\infty(\Sigma\setminus\{\z\},T^*\Sigma\otimes\a)$ with $(\eps_\l)_{\l=1,\ldots,r}$ an orthonormal basis of $\a$, let $$\norm{\omega}^2_{g,\eps}:=\sum^r_{\l=1}\int_{\Sigma_{\z,\eps,g}}\omega_\l\wedge*\omega_\l,$$ where $\Sigma_{\z,\eps,g}:=\Sigma\setminus\bigcup_{j=1}^{\nm}B_g(z_j,\eps)$ with $B_g(z_j,\eps)$ the geodesic ball centered at $z_j$ with radius $\eps$ with respect to the metric $g.$
\begin{lemma}\label{lem:toda-reg-norm}
Let $\omega=\nu_{\z,\m}$ if $\partial\Sigma=\emptyset$ and $\omega=\nu_{\z,\m,\blambda}$ if $\partial\Sigma\neq\emptyset$. Then as $\eps\rightarrow0$, the following limit exists \begin{equation}\label{eq:magnetic-reg-norm}
    \|\omega\|_{g,0}^2
:=
\lim_{\eps\to0}
(\|\omega\|^2_{g,\eps}+
2\pi\log\eps\sum_{j=1}^{\nm}|m_j|^2
).
\end{equation}
Moreover, if $g'=e^\rho g$, with $\rho\in C^\infty(\Sigma)$, then
$$
\|\omega\|_{g',0}^2
=
\|\omega\|_{g,0}^2
+
\pi\sum_{j=1}^{\nm}|m_j|^2\rho(z_j).
$$
The same statements hold for the harmonic representatives
$\nu^{\mathrm h}_{\z,\m}$ and $\nu^{\mathrm h}_{\z,\m,\blambda}$.
\end{lemma}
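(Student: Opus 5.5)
The plan is to reduce to a local computation near each puncture $z_j$, where the form has the explicit shape $m_j\,\d\theta$ by Proposition \ref{prop:magnetic-background}, and to treat everything away from the punctures as a convergent integral. Fix an orthonormal basis $(\eps_\l)$ of $\a$ and write $\omega=\sum_\l\omega_\l\eps_\l$. The norm $\|\omega\|^2_{g,\eps}$ is additive over components and, on $U_j$, $\psi_j^*\omega_\l=m_{j,\l}\,\d\theta$ with $m_{j,\l}:=\la m_j,\eps_\l\ra$, so $\sum_\l m_{j,\l}^2=|m_j|^2$. Alternatively, one can invoke the scalar statement of \cite{Guillarmou:2023exh} componentwise and sum; the direct argument is as follows.

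First split $\Sigma_{\z,\eps,g}=\bigl(\Sigma\setminus\bigcup_j U_j\bigr)\cup\bigcup_j\bigl(U_j\setminus B_g(z_j,\eps)\bigr)$. On the first piece $\omega$ is smooth, so its contribution is a finite constant independent of $\eps$. On $U_j$ the integrand $\d\theta\wedge*\d\theta$ is conformally invariant, and in the coordinate $z$ it equals $r^{-1}\d r\,\d\theta$. Hence
\[
\int_{U_j\setminus B_g(z_j,\eps)}\omega_\l\wedge*\omega_\l=m_{j,\l}^2\int_{\psi_j^{-1}(U_j\setminus B_g(z_j,\eps))}\frac{\d r\,\d\theta}{r}.
\]
Write $(\psi_j^{-1})^*g=e^{\varphi_j}|\d z|^2$ near $0$. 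The geodesic ball $B_g(z_j,\eps)$ then corresponds to a region $\{r<\eps e^{-\varphi_j(0)/2}(1+O(\eps))\}$, or more precisely to a domain whose boundary is $r=\eps e^{-\varphi_j(0)/2}+O(\eps^2)$ uniformly in $\theta$. Integrating in $r$ from this boundary to $1$ gives
\[
2\pi\bigl(-\log\eps+\tfrac12\varphi_j(0)\bigr)+o(1).
\]
Summing over $\l$ and $j$, the divergence $-2\pi\log\eps\sum_j|m_j|^2$ cancels the counterterm, so the limit exists. It equals the smooth part plus $\pi\sum_j|m_j|^2\varphi_j(0)$ plus finite integrals over the annuli; note that the crucial constant is $\pi$, coming from $2\pi\cdot\frac12$.

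For the Weyl change $g'=e^\rho g$, the conformal invariance of $\omega_\l\wedge*\omega_\l$ means only the excised balls change. The new conformal factor is $\varphi_j'=\varphi_j+\rho\circ\psi_j$, so the difference of the limits is $\pi\sum_j|m_j|^2\rho(z_j)$, as claimed.

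For the harmonic representatives, the same argument applies. Proposition \ref{prop:magnetic-background} gives $\nu^{\mathrm h}-\nu=\d f$ with $f$ smooth on $\Sigma$. Near $z_j$ we then have $\psi_j^*\nu^{\mathrm h}=m_j\d\theta+\d(f\circ\psi_j)$, and the cross term $2\int\d\theta\wedge*\d f$ together with $|\d f|^2$ are integrable on the punctured disk, since $|\d\theta|\sim r^{-1}\in L^1$. These terms therefore converge and are independent of the excised ball in the limit. The only mild obstacle is controlling the shape of geodesic balls in the chart to within $o(1)$ in the logarithm, which follows from $d_g(z_j,\psi_j(z))=e^{\varphi_j(0)/2}|z|(1+O(|z|))$.
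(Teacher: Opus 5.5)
Your proof is correct and follows the same route as the paper: decompose $\omega$ in an orthonormal basis of $\a$ and use only the local form $m_{j,\l}\,\d\theta$ of each component near $z_j$. The only difference is that the paper cites the scalar computation of \cite[Lemma 3.11]{Guillarmou:2023exh}, noting that it works verbatim for non-integer $m_{j,\l}$, whereas you carry it out explicitly, and you also supply the argument for the harmonic representatives via $\nu^{\mathrm h}-\nu=\d f$ with $f$ smooth, which the paper only asserts. One minor notational point: since $\psi_j:\D\to U_j$, the metric on $\D$ is $\psi_j^*g$, not $(\psi_j^{-1})^*g$.
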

\begin{proof}
Fix an orthonormal basis $(\varepsilon_1,\ldots,\varepsilon_r)$ of $\a$, and write
$\omega=\sum_{\l=1}^r \omega^{(\l)}\varepsilon_\l,$ and
$m_j=\sum_{\l=1}^r m_{j,\l}\varepsilon_\l .$
Although the coefficients $m_{j,\l}$ need not be integers, the proof of
\cite[Lemma 3.11]{Guillarmou:2023exh} applies verbatim to real coefficients, since it only uses the local expansion $\nu^{(\l)}=m_{j,\l}\d\theta+O(1)$ near $z_j$. Hence each scalar component admits a finite regularized norm and satisfies
$$
\|\nu^{(\l)}\|_{g',0}^2
=
\|\nu^{(\l)}\|_{g,0}^2
+
\pi\sum_{j=1}^{\nm}m_{j,\l}^2\rho(z_j).
$$
Summing over $\l$ and using orthonormality gives the claimed existence and conformal covariance formula.
\end{proof}

\subsection{Equivariant \texorpdfstring{$\a$}{a}-valued functions and distributions with marked points}

Let $(\Sigma,g)$ be a closed Riemann surface, let $\z=(z_1,\ldots,z_{\nm})$ be pairwise distinct marked points, and set $\Sigma_{\z}:=\Sigma\setminus\{z_1,\ldots,z_{\nm}\}.$
Fix a base point $x_0\in\Sigma_{\z}$, and let $\pi:\widetilde{\Sigma}_{\z,x_0}\to \Sigma_{\z}$ be the universal cover, with distinguished lift $\widetilde x_0$ of $x_0$.
We denote by $\Gamma_{\z}:=\pi_1(\Sigma_{\z},x_0)$ the first fundamental group.

We first introduce the equivariant $L^2$ space on the universal cover. We set
$$
L^2_{\Gamma_{\z}}(\widetilde{\Sigma}_{\z,x_0};\a)
:=
\left\{
u\in L^2_{\mathrm{loc}}(\widetilde{\Sigma}_{\z,x_0};\a):
\forall \gamma\in\Gamma_{\z},\;
\gamma^*u-u\in 2\pi\Lambda
\right\}.
$$
For each group morphism $\chi:\Gamma_{\z}\to 2\pi\Lambda,$
we also write
$$
L^2_{\chi}(\widetilde{\Sigma}_{\z,x_0};\a)
:=
\left\{
u\in L^2_{\mathrm{loc}}(\widetilde{\Sigma}_{\z,x_0};\a):
\forall \gamma\in\Gamma_{\z},\;
\gamma^*u-u=\chi(\gamma)
\right\}.
$$
Thus $L^2_{\Gamma_{\z}}(\widetilde{\Sigma}_{\z,x_0};\a)=\cup_{\chi}L^2_{\chi}(\widetilde{\Sigma}_{\z,x_0};\a).$

Now let $\m=(m_1,\ldots,m_{\nm})\in\Lambda^{\nm},$ with $\sum_{j=1}^{\nm}m_j=0,$
and let $\nu_{\z,\m}$ be the magnetic background $1$-form introduced in Proposition \ref{prop:magnetic-background}.
Since $\Omega_{\blambda}+\nu_{\z,\m}$, $\blambda\in\Lambda^{2\bbg}$ is closed on $\Sigma_{\z}$, we can define its primitive on
the universal cover by
$$
I_{x_0}(\Omega_{\blambda}+\nu_{\z,\m})(\widetilde x)
:=
\int_{\alpha_{x_0,x}}(\Omega_{\blambda}+\nu_{\z,\m}),
\qquad x=\pi(\widetilde x),
$$
where $\alpha_{x_0,x}$ is any piecewise $C^1$ path in $\Sigma_{\z}$ from $x_0$ to $x$
whose lift joins $\widetilde x_0$ to $\widetilde x$.

By the period condition, this primitive is equivariant modulo $2\pi\Lambda$, and it defines a group morphism
$$
\chi_{\blambda,\m}:\Gamma_{\z}\to 2\pi\Lambda,
\qquad
\chi_{\blambda,\m}(\gamma)
:=
\int_{\gamma}(\Omega_{\blambda}+\nu_{\z,\m}).
$$
Moreover, if $u\in L^2_{\chi_{\blambda,\m}}(\widetilde{\Sigma}_{\z,x_0};\a),$
then $u-I_{x_0}(\Omega_{\blambda}+\nu_{\z,\m})$
is $\Gamma_{\z}$-invariant, hence it descends to a function $f$ on $\Sigma_{\z}$. Since the primitive
$I_{x_0}(\Omega_{\blambda}+\nu_{\z,\m})$ is smooth on the universal cover, this gives $u=\pi^*f+I_{x_0}(\Omega_{\blambda}+\nu_{\z,\m}),$ for some $f\in L^2(\Sigma;\a)$.
Conversely, every function of this form belongs to $L^2_{\chi_{\blambda,\m}}(\widetilde{\Sigma}_{\z,x_0};\a).$
Therefore
$$
L^2_{\chi_{\blambda,\m}}(\widetilde{\Sigma}_{\z,x_0};\a)
=
\left\{
\pi^*f+I_{x_0}(\Omega_{\blambda}+\nu_{\z,\m}):\;
f\in L^2(\Sigma;\a)
\right\},
$$
and
$$
L^2_{\Gamma_{\z}}(\widetilde{\Sigma}_{\z,x_0};\a)
=
\bigcup_{\substack{\blambda\in\Lambda^{2\bbg}\\
\m\in\Lambda^{\nm},\ \sum_{j=1}^{\nm}m_j=0}}
\left\{
\pi^*f+I_{x_0}(\Omega_{\blambda}+\nu_{\z,\m}):\;
f\in L^2(\Sigma;\a)
\right\}.
$$

We now define the corresponding Sobolev space. For $s\in(-\frac{1}{2},0)$, we set
$$
H^s_{\Gamma_{\z}}(\widetilde{\Sigma}_{\z,x_0};\a)
:=
L^2_{\Gamma_{\z}}(\widetilde{\Sigma}_{\z,x_0};\a)+\pi^*H^s(\Sigma;\a).
$$
Equivalently,
$$
H^s_{\Gamma_{\z}}(\widetilde{\Sigma}_{\z,x_0};\a)
=
\bigcup_{\substack{\blambda\in\Lambda^{2\bbg}\\
\m\in\Lambda^{\nm},\ \sum_{j=1}^{\nm}m_j=0}}
\left\{
\pi^*f+I_{x_0}(\Omega_{\blambda}+\nu_{\z,\m}):\;
f\in H^s(\Sigma;\a)
\right\}.
$$

In other words, an element of $H^s_{\Gamma_{\z}}(\widetilde{\Sigma}_{\z,x_0};\a)$
is obtained by taking a Sobolev distribution $f\in H^s(\Sigma;\a)$ on the closed surface,
pulling it back to the universal cover, and then adding the fixed multivalued smooth background
$I_{x_0}(\Omega_{\blambda}+\nu_{\z,\m})$ carrying the topological and magnetic monodromy.

\section{Imaginary Gaussian multiplicative chaos}\label{sec:igmc}

In this section we recall the $\a$-valued Gaussian free field on a compact Riemann surface $(\Sigma,g)$,
with or without boundary, introduce its regularizations, and define the associated imaginary
Gaussian multiplicative chaos. This is the probabilistic object that will later be used to make
sense of the Toda interaction term.

We recall the notations from Section \ref{subsec:Laplacian and Green}.
\subsection{The \texorpdfstring{$\a$}{a}-valued Gaussian free field}

Assume first that $\partial\Sigma=\emptyset$. Let $(\varphi_j)_{j\in\N_0}$ be an orthonormal basis of
real-valued eigenfunctions of $\Delta_g$ in $L^2(\Sigma,\d\v_g)$, with eigenvalues
$$
0=\lambda_0<\lambda_1\le \lambda_2\le \cdots,
$$
and with
$\varphi_0=\v_g(\Sigma)^{-1/2}.$
Let $(\varepsilon_1,\ldots,\varepsilon_r)$ be an orthonormal basis of $\a$, and let
$(a_{j,\ell})_{j\ge1,\,1\le \ell\le r}$ be i.i.d.\ standard real Gaussian random variables.
We define the $\a$-valued Gaussian free field with vanishing mean in the metric $g$ by
\begin{equation}\label{eq:aGFF-series}
\x_g
:=
\sqrt{2\pi}\sum_{j\ge1}\sum_{\ell=1}^r
\frac{a_{j,\ell}}{\sqrt{\lambda_j}}\,\varphi_j\,\varepsilon_\ell.
\end{equation}
This series converges almost surely in $H^s(\Sigma;\a)$ for every $s<0$
(see \cite{sheffield2007gaussian,Berestycki_Powell_2025}), where
$$
H^s(\Sigma;\a)
=
\left\{
u=\sum_{j\ge0}u_j\varphi_j:\;
u_j\in \a,\quad
\|u\|_{H^s(\Sigma;\a)}^2
:=
|u_0|^2+\sum_{j\ge1}\lambda_j^s|u_j|^2<\infty
\right\}.
$$
Equivalently, for every $u,v\in \a$ and distinct points $x,y\in\Sigma$, one has
$$
\mathbb E\big[\langle u,\x_g(x)\rangle\langle v,\x_g(y)\rangle\big]
=
\langle u,v\rangle\,G_g(x,y).
$$

If now $\partial\Sigma\neq\emptyset$, one defines analogously the $\a$-valued Dirichlet Gaussian free
field, denoted by $\x_{g,D}$, by using the orthonormal basis of eigenfunctions of the Dirichlet
Laplacian on $\Sigma$. In that case, for every $u,v\in \a$ and distinct points $x,y\in\Sigma$,
$$
\mathbb E\big[\langle u,\x_{g,D}(x)\rangle\langle v,\x_{g,D}(y)\rangle\big]
=
\langle u,v\rangle\,G_{g,D}(x,y),
$$
where $G_{g,D}$ is the Dirichlet Green function introduced in
Section~\ref{subsec:Laplacian and Green}.

We recall the Markov property of GFF, which is an essential ingredient in the proof of Segal's gluing axioms (see Section \ref{sec:Segal's axioms}).
\begin{proposition}\label{prop:Markov-property}
    Let $(\Sigma,g)$ be a Riemann surface with smooth boundary $\partial\Sigma$. Let $\mathcal{C}$ be a union of smooth non-overlapping closed simple curves separating $\Sigma$ into two connected components $\Sigma_1$ and $\Sigma_2$.
    \begin{enumerate}
        \item If $\partial\Sigma\neq\emptyset,$ then the Dirichlet GFF $\x_{g,D}$ admits a decomposition in law as a sum of independent processes $$\x_{g,D}\overset{\mathrm{law}}{=}\x_1+\x_2+P$$ with $\x_i$ a Dirichlet GFF on $\Sigma_i$ for $i=1,\,2$ and $P$ the harmonic extension on $\Sigma\setminus\mathcal{C}$ of the restriction of $\x_{g,D}$ to $\mathcal{C}$ with zero boundary on $\partial\Sigma.$
        \item If $\partial\Sigma=\emptyset,$ then the GFF $\x$ admits a decomposition in law as a sum of independent processes $$\x_{g}\overset{\mathrm{law}}{=}\x_1+\x_2+P-c_g$$ with $\x_i$ a Dirichlet GFF on $\Sigma_i$ for $i=1,\,2$ and $P$ the harmonic extension on $\Sigma\setminus\mathcal{C}$ of the restriction of $\x_{g}$ to $\mathcal{C}$ and $c_g:=\frac{1}{\v_g(\Sigma)}\int_\Sigma(\x_1+\x_2+P)\d\v_g.$
    \end{enumerate}
\end{proposition}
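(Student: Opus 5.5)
Since $\x_{g,D}$ and $\x_g$ are built from $r$ independent scalar fields, I will reduce the statement to the scalar Markov property. Fix the orthonormal basis $(\varepsilon_1,\ldots,\varepsilon_r)$ of $\a$ used in \eqref{eq:aGFF-series} and write $\x_{g,D}=\sum_{\ell}X^{(\ell)}\varepsilon_\ell$. Since the coefficients $a_{j,\ell}$ are i.i.d., the scalar components $X^{(\ell)}$ are independent scalar Dirichlet GFFs (respectively zero-mean GFFs in the closed case) with covariance $G_{g,D}$ (respectively $G_g$). Every operation in the statement is linear and acts componentwise: restriction to $\mathcal C$, harmonic extension, and averaging against $\d\v_g$. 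So it suffices to prove the decomposition for a single scalar field and then reassemble it, using independence across $\ell$.

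\textbf{Boundary case.} Here the claim is a statement about Gaussian fields, so I check covariances. Define $P$ as the harmonic extension to $\Sigma\setminus\mathcal C$ of $\x_{g,D}|_{\mathcal C}$, with zero boundary values on $\partial\Sigma$, and set $\x_i:=(\x_{g,D}-P)|_{\Sigma_i}$. The key identity is the decomposition of the Green function
$$G_{g,D}(x,y)=\mathbf 1_{x,y\in\Sigma_i}G_{\Sigma_i,D}(x,y)+H(x,y),$$
where $H(x,\cdot)$ is the harmonic extension from $\mathcal C$ of $G_{g,D}(x,\cdot)|_{\mathcal C}$, with zero data on $\partial\Sigma$. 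This follows from uniqueness for the Dirichlet problem \eqref{eq:Green-function-equations-boundary} on each $\Sigma_i$, whose boundary is $\mathcal C$ together with the part of $\partial\Sigma$ lying in $\Sigma_i$.

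From this identity one reads off three facts. First, $\mathbb E[P(x)P(y)]=H(x,y)$. Second, $\mathbb E[(\x_{g,D}-P)(x)\,P(y)]=0$, because $P(y)$ is a limit of linear combinations of traces on $\mathcal C$, and $\mathbb E[(\x_{g,D}-P)(x)\x_{g,D}(c)]=0$ for $c\in\mathcal C$. Third, $\x_1$ and $\x_2$ are uncorrelated, since $G_{g,D}(x,y)=H(x,y)$ whenever $x,y$ lie in different components.

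Because everything is jointly Gaussian, uncorrelated implies independent, and the covariance of each $\x_i$ is $G_{\Sigma_i,D}$. The $\a$-valued covariance is $\langle u,v\rangle$ times the scalar one.

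\textbf{Closed case.} I first reduce to a field with a Dirichlet-type pinning. Take the scalar field $X$ with covariance $G_g$, and let $\tilde X$ be the field obtained by adding an independent variable to ``free'' the zero mode; concretely, $\tilde X=\x_1+\x_2+P$, where $P$ is now the harmonic extension of the trace of a field on $\mathcal C$ whose law is built from $X|_{\mathcal C}$. Subtracting the mean gives $\tilde X-c_g$. One then checks that its covariance equals $G_g$ by verifying that it satisfies \eqref{eq:Green-function-equations-closed}: it has Laplacian $2\pi(\delta_x-1/\v_g(\Sigma))$ away from the diagonal structure, and it has zero mean. It is cleanest to argue in the reverse direction. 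Start from $X$, define $P$ from $X|_{\mathcal C}$, and set $\x_i=X-P$ on $\Sigma_i$. Independence and the Dirichlet law of $\x_i$ follow as before, from the Green identity $G_g(x,y)-\bigl(\text{harmonic extension in }y\text{ of }G_g(x,\cdot)|_{\mathcal C}\bigr)=G_{\Sigma_i,D}(x,y)$. This holds because the constant $-2\pi/\v_g(\Sigma)$ in the Laplacian of $G_g(x,\cdot)$ is cancelled on each side by the same term: both $G_g(x,\cdot)$ and $G_{\Sigma_i,D}(x,\cdot)$ plus a fixed smooth correction solve the same Poisson problem. Finally, $X=\x_1+\x_2+P$ has zero mean, so $c_g=0$ for this realisation; in distributional form this is the stated identity $X=\x_1+\x_2+P-c_g$ with $c_g$ its mean.

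\textbf{Main obstacle.} The step I expect to be hardest is the closed case. The mean-zero constraint breaks the naive Dirichlet decomposition, and the Green identity must be written carefully with the constant source term $-2\pi/\v_g(\Sigma)$. The regularity needed to restrict $X$ to $\mathcal C$ is handled, as in \cite{Guillarmou:2023exh}, by working with $H^{s}(\mathcal C)$ traces for $s<0$ and regularisations.
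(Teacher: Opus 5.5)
Your reduction to scalar components is exactly the paper's proof, which then simply invokes the scalar Markov property \cite[Proposition C.1]{guillarmou2021segal}. Your covariance argument for $\partial\Sigma\neq\emptyset$ is fine. The closed case, however, has a genuine gap.

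\textbf{The Green identity you use is false.} Take $x,y\in\Sigma_i$ and let $h_x$ be the harmonic extension to $\Sigma_i$ of $G_g(x,\cdot)|_{\partial\Sigma_i}$. Then
\[
G_g(x,y)-h_x(y)=G_{\Sigma_i,D}(x,y)+F_i(y),
\]
where $F_i$ solves $-\Delta_g F_i=-2\pi/\v_g(\Sigma)$ in $\Sigma_i$ with $F_i|_{\partial\Sigma_i}=0$, so $F_i\not\equiv 0$. The constant source appears in the equation for $G_g$ but not in the one for $G_{\Sigma_i,D}$, so nothing cancels it.

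\textbf{Consequence for independence.} Build $P$ from the trace of $X$ itself and set $\x_i:=(X-P)|_{\Sigma_i}$. Then $\mathbb E[\x_i(x)X(c)]=F_i(x)\neq 0$ for $c\in\mathcal C$. So your ``second fact'' fails, and $\x_1,\x_2$ are \emph{not} independent of $P$. Their covariance is still $G_{\Sigma_i,D}$ and $\x_1\perp\x_2$, but only because $F_i$ cancels in the symmetric combination $G_g(x,y)-h_x(y)-h_y(x)+\mathbb E[P(x)P(y)]$, not for the reason you give. The dependence is forced: $m_g(X)=0$ gives $m_g(\x_1+\x_2)=-m_g(P)$. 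Hence your closing sentence, ``$c_g=0$ for this realisation'', does not prove the statement. The statement asserts a decomposition into \emph{independent} pieces with a genuinely random correction $c_g$, and there $P$ must be read as the harmonic extension of a copy of the trace independent of $\x_1,\x_2$.

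\textbf{The missing idea.} On a closed surface the Markov property holds only modulo constants. Your construction can be repaired in four steps.
\begin{enumerate}
\item Since $F_i(x)$ does not depend on $c$, $\x_i(x)$ is uncorrelated with $\int_{\mathcal C}X\,\d\mu$ for every measure $\mu$ on $\mathcal C$ of total mass zero. Hence $(\x_1,\x_2)$ is independent of the trace modulo constants.
\item Pathwise, $X=\x_1+\x_2+P-m_g(\x_1+\x_2+P)$, because $m_g(X)=0$.
\item The right-hand side is unchanged if a constant is added to the trace, since for $\partial\Sigma=\emptyset$ the map $P$ sends constants to the same constants.
\item You may therefore replace the trace by an independent copy with the same law, which yields the statement.
\end{enumerate}

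Equivalently, run your boundary-case argument for the field pinned on $\mathcal C$, namely $X-m_{\mathcal C}(X)$, where $m_{\mathcal C}$ is the $\d\ell_g$-average over $\mathcal C$. Its covariance $\hat G$ satisfies $-\Delta_g\hat G(x,\cdot)=2\pi(\delta_x-\mu_{\mathcal C})$, with $\mu_{\mathcal C}$ the normalized arc-length measure on $\mathcal C$. Because the source is now supported on $\mathcal C$, the identity with $G_{\Sigma_i,D}$ becomes exact and the three pieces are independent. Then subtract the $\v_g$-mean.
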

\begin{proof}
    The claim follows from writing the GFF in the orthonormal basis $(\eps_j)_{j=1,\ldots,r}$ and applying the Markov property of scalar GFF \cite[Proposition C.1]{guillarmou2021segal}.
\end{proof}

\subsection{\texorpdfstring{$g$}{g}-regularization}

Since the Gaussian free field is only a random distribution, we need to regularize it to define the exponential interaction terms. We do this by averaging over small geodesic circles.

Let $h$ be a random distribution on $\Sigma$. For $x\in \Sigma\setminus\partial\Sigma$ and
$\eps>0$ small enough so that the geodesic circle is well defined and contained in the interior of
$\Sigma$, let $\mathcal C_g(x,\eps)$ be the geodesic circle of center $x$ and radius $\eps$, and let
$\mu_{x,\eps}$ denote the uniform probability measure on $\mathcal C_g(x,\eps)$ induced by the metric $g$.
Choose a sequence $(f^n_{x,\eps})_{n\ge1}\subset C^\infty(\Sigma)$ such that
$$
\|f^n_{x,\eps}\|_{L^1(\Sigma,d\v_g)}=1,
\qquad
f^n_{x,\eps}(y)=\theta_n\!\left(\frac{d_g(x,y)}{\eps}\right),
$$
where each $\theta_n\in C_c^\infty((0,2))$ is non-negative, supported near $1$, and such that
the measures $f^n_{x,\eps}\,d\v_g$ converge in $\mathcal D'(\Sigma)$ to $\mu_{x,\eps}$ as $n\to\infty$.

If the pairings $\langle h,f^n_{x,\eps}\rangle$ converge almost surely, as $n\to\infty$, to a random
variable $h_\eps(x)$, and if the resulting field $(x,\eps)\mapsto h_\eps(x)$ admits a continuous
modification, then we say that $h$ admits a $g$-regularization, and we call $h_\eps(x)$ the
geodesic circle average regularization of $h$. In particular, $\x_g$ and $\x_{g,D}$ admit such regularization.

For any compact set $K\subset \Sigma\setminus\partial\Sigma$, one has \cite[Lemma 3.2]{guillarmou2019polyakov}
$$
\mathbb E\big[\langle u,\x_{\eps}(x)\rangle^2\big]=|u|^2(\log \eps^{-1}+W(x))+o(1)
$$
uniformly in $x\in K$, where $\x=\x_g$ or $\x_{g,D}$ and 
$$W(x):=\lim_{y\to x}\left(G(x,y)-\log\frac{1}{d_g(x,y)}\right),
$$with the corresponding choice of the Green's function, is called the Robin mass and is smooth on $\Sigma$. 
\subsection{Imaginary Gaussian multiplicative chaos}

Let $\gamma\in\R$. If $h$ is a random distribution admitting a $g$-regularization
$(h_\eps)_\eps$, we define
\begin{equation}\label{eq:IGMC-eps}
M^{g,\eps}_{\gamma}(h,\d x)
:=
\eps^{-\gamma^2/2}e^{i\gamma h_\eps(x)}\,\d\v_g(x).
\end{equation}

In the Toda theory, the interaction is built from the simple-root projections of the field.
Thus, for each simple root $e_i$, $i=1,\ldots,r$, we define for $
\x=\x_g$ or $\x=\x_{g,D}$
\begin{equation}\label{eq:IGMC-root-eps}
M^{g,\eps}_{\gamma e_i}(\x,\d x)
:=
\eps^{-\frac{\gamma^2}{2}\langle e_i,e_i\rangle}
e^{\bi\gamma\langle e_i,\x_{\eps}(x)\rangle}\,\d\v_g(x).
\end{equation}
If $\gamma^2\langle e_i,e_i\rangle<2,$
 as $\eps\to0$, this measure converges in $L^2(\Omega)$, weakly in the space of
distributions, to a non-trivial random distribution of order $2$ \cite[Theorem 3.1]{lacoin2015complex}, denoted by $M^g_{\gamma e_i}(\x_g,\d x)$.
More precisely, for each $i=1,\ldots,r$, there exists
$D_{\Sigma,i}\in L^2(\Omega)$ such that for every $\varphi\in C^\infty(\Sigma)$
\begin{equation}\label{eq:order-two-bound}
\left|
\int_\Sigma \varphi(x)\,M^g_{\gamma e_i}(\d x)
\right|
\le
D_{\Sigma,i}\bigl(\|\varphi\|_\infty+\|\Delta_g\varphi\|_\infty\bigr)
\qquad\text{a.s.}
\end{equation}
Moreover, if $g'=e^\rho g$, then
\begin{equation}\label{eq:weyl IGMC}
M_{\gamma e_i}^{g'}(\x,\d x)
=
e^{(1-\gamma^2\langle e_i,e_i\rangle/4)\rho(x)}\,M_{\gamma e_i}^g(\x,\d x).
\end{equation}

We note that under our normalization, the longest root has norm $2$, and so the condition $\gamma^2<1$
implies that $\gamma^2\langle e_i,e_i\rangle<2, \,i=1,\ldots,r.$

\subsection{Exponential moments}
The key integrability input for the Toda interaction is that imaginary GMC has exponential moments. 

\begin{proposition}\label{prop:exp-moment-shifted} Assume that $\gamma^2<1$. Let $\Sigma$ be a compact Riemann surface with or without boundary. We set $D'=\Sigma$ if $\partial\Sigma\neq\emptyset$ and set $D'\subset \Sigma$ be open with closure $\overline{D'}\neq \Sigma$ if $\partial\Sigma=\emptyset.$ Let $\x_{g,D}$ be a Dirichlet $\a$-valued GFF on $D'$. Let $D\subset D'$ be an open subset and let $Z:D\to \a$ be an  $\a$-valued random variable. Finally let $f_1,\dots,f_r:D\to\C$ be measurable functions such that $$ V:=\sum_{i=1}^r\int_D |f_i(x)|\d\v_g(x)<\infty $$ and $$ U^2:= \sum_{i,j=1}^r \iint_{D^2} |f_i(x)|\,|f_j(y)|\, e^{\gamma^2\la e_i,e_j\ra G_{g,D}(x,y)} \d\v_g(x)\d\v_g(y) <\infty. $$ Then there exists a constant $C=C(\gamma,r)>0$ such that, for every $\mu\ge0$, \begin{align*} \mathbb E\Bigg[ \exp\Bigg( \mu\Bigg| \sum_{i=1}^r \mu_i \int_D f_i(x)e^{i\gamma\la e_i,Z(x)\ra}\,M^g_{\gamma e_i}(Z+\x_{g,D},\d x) \Bigg| \Bigg) \Bigg] \le e^{C\mu V}\left(1+C\mu U e^{C\mu^2U^2}\right). \end{align*} \end{proposition}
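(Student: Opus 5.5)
Since $Z$ enters only through the unimodular factors $e^{\bi\gamma\la e_i,Z(x)\ra}$ and the shift $Z+\x_{g,D}$, I would first reduce to a deterministic weight. I condition on $Z$ (or work pathwise) and absorb the phase into $f_i$. By the definition of the regularized measures \eqref{eq:IGMC-root-eps}, the shift by $Z$ multiplies $M^g_{\gamma e_i}(\x_{g,D},\d x)$ by exactly the same phase. So it suffices to prove the bound for
$$I:=\sum_{i=1}^r\mu_i\int_D \tilde f_i(x)\,M^g_{\gamma e_i}(\x_{g,D},\d x),\qquad |\tilde f_i|=|f_i|,$$
uniformly in the phases. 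The quantities $V,U$ depend only on $|f_i|$, and I normalize $\max_i|\mu_i|\le 1$ into $C$.

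\textbf{Moment formula.} I would bound even moments, since $e^{\mu|I|}\le \sum_k \mu^k\E|I|^k/k!$ and odd moments are controlled by Cauchy--Schwarz from even ones. At the $\eps$-regularized level, Gaussian computation gives $\E|I_\eps|^{2n}$ as a sum over index choices $i_1,\dots,i_n,j_1,\dots,j_n$ of integrals of $\prod_a\tilde f_{i_a}(x_a)\overline{\tilde f_{j_b}(y_b)}$ against
$$\exp\Big(-\tfrac{\gamma^2}{2}\sum_{a\neq a'}\la e_{i_a},e_{i_{a'}}\ra G_\eps(x_a,x_{a'})-\tfrac{\gamma^2}{2}\sum_{b\ne b'}\la e_{j_b},e_{j_{b'}}\ra G_\eps(y_b,y_{b'})+\gamma^2\sum_{a,b}\la e_{i_a},e_{j_b}\ra G_\eps(x_a,y_b)\Big),$$
times diagonal factors $e^{-\frac{\gamma^2}{2}|e|^2 W}$, which are bounded on $D$. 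This is a neutral Coulomb gas of $2n$ vector charges $\pm\gamma e_{i}$. The limit $\eps\to0$ is justified by $L^2$ convergence (Lacoin et al.) together with the uniform bound below.

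\textbf{Onsager inequality (main obstacle).} The key step is an $\a$-valued electrostatic inequality for the Dirichlet Green function on $D'$. For charges $q_1,\dots,q_{2n}\in\a$ at points $z_1,\dots,z_{2n}$,
$$-\sum_{k\ne l}\la q_k,q_l\ra G(z_k,z_l)\le \sum_k |q_k|^2\big(G(z_k,z_{k^*})+C\big),$$
where $z_{k^*}$ is the nearest neighbour of $z_k$. I would prove it by the scalar method: replace each point charge by the uniform measure on a circle of radius $\frac12 d(z_k,z_{k^*})$. The energy of the smeared configuration is nonnegative because $G$ is positive definite; apply this to $\a$-valued measures componentwise in an orthonormal basis of $\a$. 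The smearing does not change mutual interactions of disjoint disks, by harmonicity, up to $O(1)$ boundary corrections; the self-energy is $|q_k|^2(\log\frac{1}{d_k}+O(1))$. This is where $\overline{D'}\ne\Sigma$ (or Dirichlet boundary) is needed, so that $G$ is a genuine positive definite kernel; near $\partial D'$ the Robin mass blows down and only helps.

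\textbf{Combinatorics and resummation.} Plugging in, the integrand is bounded by $C^{2n}\prod_k|f_{i_k}(z_k)|\,e^{\gamma^2|e_{i_k}|^2 G(z_k,z_{k^*})/2}$. I then decompose according to which points have their nearest neighbour at distance $\ge \delta_0$, with $\delta_0$ fixed, where $G\le C$. Points with a far nearest neighbour are "free" and contribute a factor $CV$ each. The others form nearest-neighbour clusters of size $\ge2$, and a two-point cluster contributes, via $e^{\gamma^2\la e_i,e_i\ra G}\le$ the $U^2$ integrand after Cauchy--Schwarz/AM--GM on $e^{G/2}e^{G/2}$, a factor $CU^2$. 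Larger clusters reduce to pairs by the standard nearest-neighbour graph argument (each component of the nearest-neighbour graph contains a mutual pair and each point is charged to at most one edge). Counting partitions gives
$$\E|I|^{2n}\le \sum_{k\le n}\binom{2n}{2k}C^{2n}V^{2n-2k}\,\frac{(2k)!}{k!}\,U^{2k}.$$
Summing against $\mu^{2n}/(2n)!$ factorizes into $e^{C\mu V}\sum_k (C\mu U)^{2k}/k!$. Handling the odd terms and the $k=0$ term separately yields the stated $e^{C\mu V}(1+C\mu Ue^{C\mu^2U^2})$. I expect the Onsager inequality with vector charges and the cluster counting to be the delicate part; everything else is as in the rank-one case of \cite{Guillarmou:2023exh}.
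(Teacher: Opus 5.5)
The vector-valued Onsager inequality and the final resummation are fine. The gap is the moment bound
\[
\mathbb E|I|^{2n}\le \sum_{k\le n}\binom{2n}{2k}C^{2n}V^{2n-2k}\frac{(2k)!}{k!}U^{2k},
\]
which is the entire content of the proposition. You assert it but do not prove it, and the mechanism you sketch does not give it.

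After Onsager, every point whose nearest neighbour lies within $\delta_0$ keeps the singular weight $e^{\frac{\gamma^2}{2}|e_{i_k}|^2G(z_k,z_{k^*})}$. In a nearest-neighbour component with three or more points, only the mutual pair can be matched. The remaining points are not ``free'': integrating $|f_{i_k}(z_k)|\,d(z_k,z_{k^*})^{-\gamma^2|e_{i_k}|^2/2}$ against a weight that is merely $L^1$ is a Riesz potential, which $V$ does not control. For example, take one root type, $\beta^2=\gamma^2|e_i|^2$ and $f_i=A\mathbf 1_{B_\delta}$. A three-point cluster then contributes $\asymp A^3\delta^{6-3\beta^2/2}$, while $VU^2\asymp A^3\delta^{6-\beta^2}$, so ``one pair plus one free point'' is off by a factor $\delta^{-\beta^2/2}\to\infty$.

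The counting is also wrong. To integrate you must sum over nearest-neighbour maps, and on $N$ points there are at least of order $N!$ of them; the chains $z_{\sigma(1)}\to\cdots\to z_{\sigma(N-1)}\leftrightarrow z_{\sigma(N)}$ already occur on open sets of configurations. What you need is $(2k)!/k!$, which is roughly $\sqrt{N!}$ up to exponential factors. Dropping the constraints gives at best $\mathbb E|I|^{2n}\lesssim C^{2n}(2n)!\,U^{2n}$. That yields exponential moments only for small $\mu U$, never the factor $e^{C\mu^2U^2}$. Closing this needs a genuine multilinear estimate that keeps the geometric constraints: bound $m$-clusters by $U^m$, then use AM--GM between odd clusters. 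None of this is in the sketch.

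The pair step is also incomplete in the Toda setting. Onsager only sees $|q_k|$, so a mutual pair of types $i\neq j$ carries $e^{\frac{\gamma^2}{2}(|e_i|^2+|e_j|^2)G}$. The $(i,j)$ entry of $U^2$ carries $e^{\gamma^2\langle e_i,e_j\rangle G}$ with $\langle e_i,e_j\rangle\le 0$. If $\mathfrak g$ is not simply laced and $e_i$ is short, kernel Cauchy--Schwarz leaves $\iint|f_i||f_i|e^{\frac{\gamma^2}{2}(|e_i|^2+|e_j|^2)G}$, which the $e^{\gamma^2|e_i|^2G}$ term of $U^2$ does not dominate. You would need an extra interpolation inequality.

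The paper avoids all of this. After the same reduction to $Z\equiv 0$, write $\langle e_i,\x_{g,D}\rangle=|e_i|X^{(i)}$ with $X^{(i)}$ a scalar Dirichlet GFF on $D'$. Then $I_i:=\int_D f_i\,M^g_{\gamma e_i}(\x_{g,D},\d x)$ is a rank-one imaginary chaos integral with $\beta_i^2=\gamma^2|e_i|^2<2$. So \cite[Proposition 5.2]{Guillarmou:2023exh} bounds its exponential moments in terms of $V_i=\int_D|f_i|\le V$ and $U_i^2=\iint|f_i(x)||f_i(y)|e^{\gamma^2|e_i|^2G_{g,D}(x,y)}\le U^2$; the latter holds because all terms of $U^2$ are non-negative. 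The generalized H\"older inequality
\[
\mathbb E\Big[e^{\mu|\sum_i\mu_iI_i|}\Big]\le\prod_{i=1}^r\mathbb E\Big[e^{r\mu|\mu_i||I_i|}\Big]^{1/r}
\]
then finishes the proof, with $r$ and $\max_i|\mu_i|$ absorbed into $C$. Correlations between different root directions never enter, and these are the source of both your cross-type and cluster difficulties. The only hard estimate is the scalar one, which is imported rather than re-proved.
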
 \begin{proof} We observe that the field $Z$ gives rise to a factor of modulus 1, and so we can consider the case where $Z\equiv0.$ The claim then readily follows from \cite[Proposition 5.2]{Guillarmou:2023exh} and H\"older's inequailty. \end{proof}

\section{Path integrals and correlation functions}\label{sec:path integral}
We now construct the path integral on a closed Riemann surface. We begin with the
regularized curvature terms appearing in the action \eqref{eq:Toda equation}, then define
the path integral itself. We finally introduce correlation functions with electric and
magnetic insertions and prove their basic covariance properties.

\subsection{Curvature term}\label{subsec:curvature term}
Let $(\Sigma,g)$ be a closed oriented Riemann surface. Since $I_{x_0}(\Omega_{\blambda})$ is a multivalued function on $\Sigma$ and $I_{x_0}(\nu_{\z,\m})$ is a multivalued function on $\Sigma\setminus\{\z\}$, we need to make sense of the curvature terms in the Toda action \eqref{eq:Toda equation}, namely
$$
\int_\Sigma K_g I_{x_0}(\Omega_{\blambda})\d\v_g,
\qquad
\int_{\Sigma\setminus\{\z\}} K_g I_{x_0}(\nu_{\z,\m})\d\v_g.
$$
We regularize these terms by choosing branch cuts so that the corresponding primitives become single-valued on open dense subsets, and by adding correction terms supported on the cuts in order to recover the required invariance properties.
\subsubsection{Curvature term associated to $H^1_\Lambda(\Sigma;\a)$}
For $\Omega\in H^1_\Lambda(\Sigma;\a)$, we define the period morphism $\chi_\Omega: H_1(\Sigma)\rightarrow2\pi\Lambda$
by
$$
\chi_\Omega(\gamma):=\int_\gamma\Omega.
$$
If $\bsigma=(a_j,b_j)_{j=1,\ldots,\bbg}$ is a geometric symplectic basis of $H_1(\Sigma)$,
we set $\Sigma_{\bsigma}:=\Sigma\setminus\bigcup^{\bbg}_{j=1}(a_j\cup b_j).$
For $x_0\in\Sigma_{\bsigma}$ a fixed base point and any closed $\a$-valued $1$-form $\Omega$,
we define
\begin{equation}\label{primitive}
I^{\bsigma}_{x_0}(\Omega)(x):=\int_{\alpha_{x_0,x}}\Omega,\qquad x\in\Sigma_{\bsigma},
\end{equation}
where $\alpha_{x_0,x}\subset\Sigma_{\bsigma}$ is any smooth path from $x_0$ to $x$.
This defines a smooth $\a$-valued function on $\Sigma_{\bsigma}$ with
$\d I^{\bsigma}_{x_0}(\Omega)=\Omega.$

\begin{definition}
Let $\bsigma=(a_j,b_j)_{j=1,\ldots,\bbg}$ be a geometric symplectic basis of $H_1(\Sigma)$,
let $x_0\in\Sigma_{\bsigma}$, and let $\Omega\in H^1_\Lambda(\Sigma;\a)$.
We define the regularized integral
$$
\int^\reg_{\Sigma_{\bsigma}}K_gI^{\bsigma}_{x_0}(\Omega)\d\v_g
:=
\int_{\Sigma_{\bsigma}}K_gI^{\bsigma}_{x_0}(\Omega)\d\v_g
+
2\sum^{\bbg}_{j=1}\left(
\chi_\Omega(a_j)\int_{b_j}k_g\d \ell_g
-
\chi_\Omega(b_j)\int_{a_j}k_g\d \ell_g
\right),
$$
where $k_g$ is the geodesic curvature.
\end{definition}

\begin{lemma}[Invariance under change of geometric symplectic basis]\label{lem:local invariance}
Let $\bsigma'=(a'_j,b'_j)_{j=1,\ldots,\bbg}$ be another geometric symplectic basis
representing the same symplectic homology basis
$[\bsigma]=([a_j],[b_j])_{j=1,\ldots,\bbg}.$
Then for every $\Omega\in H^1_\Lambda(\Sigma;\a),$
$$
\int^\reg_{\Sigma_{\bsigma}}K_gI^{\bsigma}_{x_0}(\Omega)\d\v_g
-
\int^\reg_{\Sigma_{\bsigma'}}K_gI^{\bsigma'}_{x_0}(\Omega)\d\v_g\in8\pi^2\Lambda.$$

\end{lemma}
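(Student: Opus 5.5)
The plan is to reduce to the scalar case treated in \cite{Guillarmou:2023exh}, following the same componentwise strategy already used for Lemma~\ref{lem:a-valued-absolute-relative-exactness} and Proposition~\ref{prop:magnetic-background}. Expand $\Omega$ in the scaled coweight basis:
$$
\Omega=\gamma^{-1}\sum_{q=1}^r\Omega^{(q)}\omega_q^\vee ,
$$
where each $\Omega^{(q)}$ is a real closed $1$-form. The condition $\Omega\in H^1_\Lambda(\Sigma;\a)$ means exactly that every period of $\Omega^{(q)}$ lies in $2\pi\Z$.

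Both the primitive $I^{\bsigma}_{x_0}$ and the period morphism $\chi_\Omega$ are linear in $\Omega$. Hence the regularized integral decomposes as
$$
\int^\reg_{\Sigma_{\bsigma}}K_gI^{\bsigma}_{x_0}(\Omega)\,\d\v_g=\gamma^{-1}\sum_{q=1}^r\Big(\int^\reg_{\Sigma_{\bsigma}}K_gI^{\bsigma}_{x_0}(\Omega^{(q)})\,\d\v_g\Big)\,\omega_q^\vee ,
$$
and the same decomposition holds for $\bsigma'$. So it suffices to prove the scalar statement: for a real closed form $\omega$ with periods in $2\pi\Z$, the difference of the two scalar regularized integrals lies in $8\pi^2\Z$. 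Multiplying by $\gamma^{-1}\omega_q^\vee$ and summing then gives a difference in $8\pi^2\gamma^{-1}\bigoplus_q\Z\omega_q^\vee=8\pi^2\Lambda$.

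The scalar statement is the invariance lemma of \cite[Section 4]{Guillarmou:2023exh} with $R=1$; for general $R$ it gives $8\pi^2R\Z$, and in our normalization the $R$ is absorbed into $\gamma^{-1}$. I would invoke it directly, but I also want to record its mechanism, in case the normalization needs checking.

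First, $\bsigma$ and $\bsigma'$ represent the same homology classes. By standard surface topology one can pass from one to the other through a finite sequence of isotopies together with moves that push a single curve across a region, or change it by a Dehn twist-type handle slide that preserves its homology class. It therefore suffices to treat one elementary move, say $a_j$ replaced by a homologous $a_j'$ with $a_j-a_j'=\partial D$ for a region $D$.

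For such a move, the primitives $I^{\bsigma}_{x_0}$ and $I^{\bsigma'}_{x_0}$ agree off $D$. On $D$ they differ by the jump across the cut, which is $\pm\chi_\omega(b_j)$ (this is constant). The curvature integrals therefore differ by $\pm\chi_\omega(b_j)\int_DK_g\,\d\v_g$. By Gauss--Bonnet with boundary,
$$
\int_DK_g\,\d\v_g+2\int_{\partial D}k_g\,\d\ell_g=2\pi\chi(D)\cdot 2
$$
(here the paper uses $K_g$ for scalar curvature, which is twice the Gaussian curvature). This difference is compensated by the change $-2\chi_\omega(b_j)\big(\int_{a_j}k_g\,\d\ell_g-\int_{a_j'}k_g\,\d\ell_g\big)$ in the counterterm, up to $4\pi\chi(D)\chi_\omega(b_j)$. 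Since $\chi_\omega(b_j)\in2\pi\Z$, the residual lies in $8\pi^2\Z$.

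The main obstacle is exactly this bookkeeping: getting the signs, the orientation of $\partial D$, and the corner contributions at $a_j\cap b_j$ (exterior angles in Gauss--Bonnet) consistent. In particular the angle terms must combine into integer multiples of $2\pi$. For this I would rely on the scalar lemma rather than redo it.
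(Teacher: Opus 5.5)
Your proposal is correct and follows the paper's proof: expand $\Omega$ in the coweight basis $(\omega_q^\vee)$, use linearity of $I^{\bsigma}_{x_0}$ and $\chi_\Omega$ to split the regularized integral into scalar pieces, and apply the scalar invariance lemma of \cite{Guillarmou:2023exh} (their Lemma 4.2) componentwise. The only difference is cosmetic. The paper keeps the components $\la\Omega,e_q\ra$, whose periods lie in $2\pi\gamma^{-1}\Z$ (radius $\gamma^{-1}$), whereas you normalize to radius $1$ and reinsert $\gamma^{-1}$ afterwards. Your Gauss--Bonnet sketch is not needed, since you rely on the cited lemma.
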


\begin{lemma}[Conformal change of metrics]\label{lem:conformal change of metrics}
Let $\rho\in C^\infty(\Sigma)$ and $\hat{g}=e^\rho g.$ Then for every
$\Omega\in H^1_\Lambda(\Sigma;\a),$
$$
\int^\reg_{\Sigma_{\bsigma}}K_{\hat{g}}I^{\bsigma}_{x_0}(\Omega)\d\v_{\hat{g}}
=
\int^\reg_{\Sigma_{\bsigma}}K_gI^{\bsigma}_{x_0}(\Omega)\d\v_g
+
\la\d\rho,\Omega\ra_{2}.
$$
\end{lemma}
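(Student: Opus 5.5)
We prove Lemma \ref{lem:conformal change of metrics} by reducing it to the scalar statement in \cite{Guillarmou:2023exh}, as is done throughout the paper. Fix an orthonormal basis $(\varepsilon_1,\ldots,\varepsilon_r)$ of $\a$ and write $\Omega=\sum_{\l}\Omega^{(\l)}\varepsilon_\l$ with scalar closed forms $\Omega^{(\l)}$. Then $I^{\bsigma}_{x_0}(\Omega)=\sum_\l I^{\bsigma}_{x_0}(\Omega^{(\l)})\varepsilon_\l$ and $\chi_\Omega=\sum_\l\chi_{\Omega^{(\l)}}\varepsilon_\l$. Both sides of the claimed identity are linear in $\Omega$, with $\la\d\rho,\Omega\ra_2=\sum_\l\la\d\rho,\Omega^{(\l)}\ra_2\,\varepsilon_\l$. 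It therefore suffices to prove the identity for a scalar closed form $\omega$. We will not need integer periods, since the argument below never uses them.

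For the scalar identity we argue directly. By \eqref{eq:curvature-conformal-change}, $K_{\hat g}\d\v_{\hat g}=(K_g+\Delta_g\rho)\d\v_g$. The bulk term therefore changes by $\int_{\Sigma_{\bsigma}}(\Delta_g\rho)\,I\,\d\v_g$, where $I=I^{\bsigma}_{x_0}(\omega)$.

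Cutting $\Sigma$ along the curves $a_j,b_j$ gives a fundamental domain $F$ whose boundary consists of two copies $a_j^\pm,b_j^\pm$ of each curve. We apply Green's formula on $F$, using $\Delta_g=\d^*\d$:
$$\int_F(\Delta_g\rho)I\,\d\v_g=\int_F\la\d\rho,\d I\ra\d\v_g-\int_{\partial F}I\,{*\d\rho}.$$
The first term is $\la\d\rho,\omega\ra_2$. On $\partial F$, the values of $I$ on the two copies of $a_j$ differ by the constant $\pm\chi_\omega(b_j)$, and on the two copies of $b_j$ by $\mp\chi_\omega(a_j)$. The boundary term thus collapses to
$$\sum_j\Big(\pm\chi_\omega(b_j)\int_{a_j}\partial_\nu\rho\,\d\ell_g\mp\chi_\omega(a_j)\int_{b_j}\partial_\nu\rho\,\d\ell_g\Big).$$

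It remains to handle the counterterms. Under $\hat g=e^\rho g$ the geodesic curvature transforms as $k_{\hat g}\d\ell_{\hat g}=(k_g+\tfrac12\partial_\nu\rho)\d\ell_g$, with $\nu$ the unit normal. The counterterm $2\sum_j(\chi(a_j)\int_{b_j}k\,\d\ell-\chi(b_j)\int_{a_j}k\,\d\ell)$ therefore changes by exactly $\sum_j(\chi(a_j)\int_{b_j}\partial_\nu\rho-\chi(b_j)\int_{a_j}\partial_\nu\rho)$. The factor $2$ in the definition was chosen for precisely this cancellation against the boundary term of Green's formula.

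The main obstacle is sign bookkeeping. One must fix the orientation of $\partial F$, the side on which each copy of $a_j$ or $b_j$ lies, the direction of $\nu$ in the geodesic curvature, and the sign convention for the jumps of $I$ across the cuts. These conventions must be chosen consistently with those of \cite{Guillarmou:2023exh}, so that the counterterm variation cancels the boundary term rather than doubling it. We would check this against the local model near a single intersection point of $a_j$ and $b_j$. Alternatively, we can simply cite the scalar lemma of \cite{Guillarmou:2023exh} and sum over $\l$.
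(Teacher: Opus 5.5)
Your proposal is correct and is essentially the paper's proof: the paper likewise expands $\Omega$ in an orthonormal basis of $\a$ and applies the scalar conformal-change lemma of \cite{Guillarmou:2023exh} componentwise, and your Green's-formula computation is exactly that scalar argument. The signs do close up; for instance, with $k_g$ measured against the left normal $\nu_L$ and $a_j\cdot b_j=+1$, one has $k_{\hat g}\d\ell_{\hat g}=(k_g-\tfrac12\partial_{\nu_L}\rho)\d\ell_g$, and the counterterm variation cancels the Green boundary term.

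Your remark that integrality of the periods is never used is worth keeping. The orthonormal-basis components of $\Omega$ need not have periods in $2\pi\gamma^{-1}\Z$, a point the paper passes over. Alternatively, expand in the coweight basis $(\omega_\ell^\vee)$, where the components do have periods in $2\pi\gamma^{-1}\Z$ and the scalar lemma applies verbatim.
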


\begin{lemma}[Invariance under diffeomorphisms]\label{lem:diffeomorphism invariance}
Let $\psi:\Sigma\rightarrow\Sigma$ be an orientation-preserving diffeomorphism and let
$\psi(\bsigma):=(\psi(a_j),\psi(b_j))_{j=1,\ldots,\bbg}.$
Then for every $\Omega\in H^1_\Lambda(\Sigma;\a),$
$$
\int^\reg_{\Sigma_{\bsigma}}K_gI^{\bsigma}_{x_0}(\Omega)\d\v_g
=
\int^\reg_{\Sigma_{\psi(\bsigma)}}K_{\psi_*g}
I^{\psi(\bsigma)}_{\psi(x_0)}(\psi_*\Omega)\d\v_{\psi_*g}.
$$
\end{lemma}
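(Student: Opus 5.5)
The identity is linear in $\Omega$, and every ingredient is defined componentwise in $\a$, so I reduce to the scalar statement of \cite{Guillarmou:2023exh} (the diffeomorphism-invariance lemma for the regularized curvature term in CILT). Fix an orthonormal basis $(\varepsilon_1,\ldots,\varepsilon_r)$ of $\a$ and write $\Omega=\sum_{\ell}\Omega^{(\ell)}\varepsilon_\ell$ with $\Omega^{(\ell)}$ real closed $1$-forms. Then $I^{\bsigma}_{x_0}(\Omega)=\sum_\ell I^{\bsigma}_{x_0}(\Omega^{(\ell)})\varepsilon_\ell$ and $\chi_\Omega=\sum_\ell\chi_{\Omega^{(\ell)}}\varepsilon_\ell$. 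Hence both sides of the claimed identity split as $\sum_\ell(\cdot)\varepsilon_\ell$, and it suffices to prove the identity for a real closed form $\omega$ with arbitrary real periods. No lattice condition is actually used, since the claim is an exact equality and not an equality modulo $8\pi^2\Lambda$.

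For a scalar form, the proof is a direct change of variables, which I would carry out explicitly rather than only cite it, as a check that real (non-integer) periods cause no trouble.

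\textbf{Step 1: bulk term.} Since $\psi$ maps $\Sigma_{\bsigma}$ diffeomorphically onto $\Sigma_{\psi(\bsigma)}$, a path $\alpha_{x_0,x}\subset\Sigma_{\bsigma}$ is sent to a path $\psi\circ\alpha_{x_0,x}\subset\Sigma_{\psi(\bsigma)}$ from $\psi(x_0)$ to $\psi(x)$. Therefore
\[
I^{\psi(\bsigma)}_{\psi(x_0)}(\psi_*\omega)\circ\psi=I^{\bsigma}_{x_0}(\omega).
\]
Naturality of curvature and volume under isometries gives $K_{\psi_*g}\circ\psi=K_g$ and $\psi^*\d\v_{\psi_*g}=\d\v_g$. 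The bulk integrals then agree by change of variables.

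\textbf{Step 2: boundary counterterms.} Because $\psi$ is orientation preserving, the curves $\psi(a_j),\psi(b_j)$ form a geometric symplectic basis with the induced orientations. Periods are transported: $\chi_{\psi_*\omega}(\psi(a_j))=\int_{\psi(a_j)}\psi_*\omega=\chi_\omega(a_j)$, and likewise for $b_j$. The geodesic curvature of $\psi(b_j)$ in the metric $\psi_*g$ pulls back to that of $b_j$ in $g$, and the sign is preserved because both the orientation of the curve and that of the surface are preserved. Arclength is also preserved, so $\int_{\psi(b_j)}k_{\psi_*g}\,\d\ell_{\psi_*g}=\int_{b_j}k_g\,\d\ell_g$. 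Summing the bulk and boundary terms gives the equality for each component, and reassembling the components proves the lemma.

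\textbf{Main obstacle.} The only delicate point is the sign convention for $k_g$. It depends on a choice of normal, namely the rotation of the tangent vector by the complex structure. I would verify that $\psi$, being orientation preserving and an isometry from $g$ to $\psi_*g$, intertwines the two rotated normals, so that the sign does not flip. If one prefers to avoid this check, the step can instead be quoted componentwise from the scalar lemma of \cite{Guillarmou:2023exh}, whose proof uses only the local structure and applies verbatim to real periods.
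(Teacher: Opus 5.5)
Your proposal is correct and follows the paper's proof. The paper also writes $\Omega=\sum_\ell\Omega_\ell\varepsilon_\ell$ in an orthonormal basis of $\a$ and applies the scalar diffeomorphism-invariance lemma of \cite{Guillarmou:2023exh} componentwise. Your explicit change-of-variables check is a useful addition: you transport the primitive, $K_g\,\d\v_g$, the periods and the signed geodesic curvature under the orientation-preserving isometry $\psi:(\Sigma,g)\to(\Sigma,\psi_*g)$. This confirms that the scalar argument never uses integrality of the periods, a point the paper's one-line citation leaves implicit.
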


\begin{lemma}[Invariance under change of symplectic basis of $H_1(\Sigma)$]\label{lem:Invariance under change of symplectic basis}
Let $\bsigma$ and $\bsigma'$ be two geometric symplectic bases of $ H_1(\Sigma).$
Then for every $\Omega\in H^1_\Lambda(\Sigma;\a),$
$$
\int^\reg_{\Sigma_{\bsigma}}K_gI^{\bsigma}_{x_0}(\Omega)\d\v_g
-
\int^\reg_{\Sigma_{\bsigma'}}K_gI^{\bsigma'}_{x_0}(\Omega)\d\v_g
\in8\pi^2\Lambda.
$$
\end{lemma}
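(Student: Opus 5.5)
The plan is to reduce to the scalar case, as the paper does for the other homological statements. Fix the coweight basis and write every $\Omega\in H^1_\Lambda(\Sigma;\a)$ as
$$
\Omega=\sum_{q=1}^r\Omega^{(q)}\,\omega_q^\vee,
$$
with each $\Omega^{(q)}$ a real closed $1$-form whose periods lie in $2\pi\gamma^{-1}\Z$. The primitive $I^{\bsigma}_{x_0}(\Omega)$ and the period morphism $\chi_\Omega$ both decompose componentwise in the same way. Hence the regularized integral $\int^\reg_{\Sigma_{\bsigma}}K_gI^{\bsigma}_{x_0}(\Omega)\d\v_g$ equals $\sum_q\omega_q^\vee$ times the scalar regularized integrals of \cite{Guillarmou:2023exh} with compactification radius $R=\gamma^{-1}$. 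The statement therefore follows once we know the scalar result: for $\omega$ with periods in $2\pi R\Z$, changing the geometric symplectic basis shifts the regularized integral by an element of $8\pi^2R\Z$. Then $\sum_q 8\pi^2\gamma^{-1}k_q\omega_q^\vee\in 8\pi^2\Lambda$ by definition of $\Lambda$. I would cite the corresponding scalar lemma of \cite[Section 3]{Guillarmou:2023exh}, but I also want to indicate why it holds.

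For the scalar argument, first use Lemma \ref{lem:local invariance}, which handles different geometric representatives of the same homology basis. With it, it suffices to treat a set of generators of $\mathrm{Sp}(2\bbg,\Z)$, the mapping class group acting on symplectic bases. Each such generator should be realized by a Dehn twist, using Lemma \ref{lem:diffeomorphism invariance} to transport the configuration. Concretely, for $\bsigma'=\psi(\bsigma)$ with $\psi$ a Dehn twist along a simple curve $c$, I compare $\int^\reg_{\Sigma_{\bsigma'}}K_gI^{\bsigma'}_{x_0}(\Omega)\d\v_g$ and $\int^\reg_{\Sigma_{\bsigma}}K_gI^{\bsigma}_{x_0}(\Omega)\d\v_g$ directly, with the same $g$ and $\Omega$.

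The two primitives differ by a locally constant function taking values in the period lattice on the regions swept between the old and new cuts. Integrating $K_g$ over such a region $D$ and applying Gauss--Bonnet with boundary gives
$$
\int_D K_g\,\d\v_g=4\pi\chi(D)-2\int_{\partial D}k_g\,\d\ell_g\pm(\text{corner angles}).
$$
The topological parts, $4\pi\chi(D)$ and the corner angles (which sum to multiples of $2\pi$), multiplied by a period in $2\pi R\Z$, produce the $8\pi^2R\Z$ ambiguity. The geodesic-curvature boundary parts are designed to cancel against the changes in the counterterms $2\sum_j(\chi_\Omega(a_j)\int_{b_j}k_g-\chi_\Omega(b_j)\int_{a_j}k_g)$. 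This uses that $\chi_\Omega$ transforms linearly under the symplectic matrix and that $\int_{a_j'}k_g\d\ell_g$ decomposes along the twisted curves.

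The main obstacle is this bookkeeping. The difficulty is matching the boundary geodesic-curvature terms, with their correct orientations, against the changes in the counterterms, and checking that the corner contributions at the intersection points of the old and new cuts are genuinely multiples of $2\pi$ rather than arbitrary angles. To organize it, I would first smooth the corners by approximation, so that the only residue is $2\pi$ times a turning number. Since the scalar statement is already proved in \cite{Guillarmou:2023exh}, in the paper I would present the componentwise reduction in full and only recall this generator-by-generator verification.
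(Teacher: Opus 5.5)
Your proposal is correct and is essentially the paper's proof: the paper writes $\Omega=\sum_{\ell=1}^r\Omega^{(\ell)}\omega_\ell^\vee$ in the coweight basis, so that each component has periods in $2\pi\gamma^{-1}\mathbb Z$, and applies the scalar statements \cite[Lemmas 4.2 \& 4.5]{Guillarmou:2023exh} componentwise, exactly as you do. Your additional Dehn-twist and Gauss--Bonnet sketch of the scalar case is not carried out in the paper, which simply cites the scalar result; note also that those lemmas are in Section 4 of that reference, not Section 3.
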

\begin{proof}[Proof of Lemmas \ref{lem:local invariance}--\ref{lem:Invariance under change of symplectic basis}]
For the conformal-change and diffeomorphism formulas, the assertions are linear in $\Omega$. Hence, after writing $\Omega=\sum_{\ell=1}^r\Omega_\ell\varepsilon_\ell$
in an orthonormal basis of $\a$, they follow componentwise from the corresponding scalar statements (\cite[Lemmas 4.3 \& 4.4]{Guillarmou:2023exh}).

For the two basis-independence statements, writing  
$\Omega=\sum_{\l=1}^r \Omega^{(\l)}\,\omega_\l^\vee$ and applying \cite[Lemmas 4.2 \& 4.5]{Guillarmou:2023exh} proves the assertion.
\end{proof}
\subsubsection{Curvature terms associated to magnetic points}
Let $\z=(z_1,\ldots,z_{n_{\mathfrak{m}}})\in\Sigma^{n_{\mathfrak{m}}}$ be pairwise distinct marked points and let $\m=(m_1,\ldots,m_{n_{\mathfrak{m}}})\in\Lambda^{n_{\mathfrak{m}}}$ be magnetic charges satisfying $\sum_{j=1}^{n_{\mathfrak{m}}} m_j=0$. Moreover, we let
$v_j\in T_{z_j}\Sigma,\, j=1,\ldots,n_{\mathfrak{m}},$
be unit tangent vectors and denote
$$\bv=((z_1,v_1),\ldots,(z_{n_{\mathfrak{m}}},v_{n_{\mathfrak{m}}}))\in (T\Sigma)^{n_{\mathfrak{m}}}.$$
We shall use a system of branch cuts joining the marked points to make the magnetic primitive single-valued. We endow the lattice $\Lambda$ with the lexicographic order induced by the ordered basis $(\omega^\vee_1,\ldots,\omega_r^\vee)$ and adapt \cite[Definition 4.6]{Guillarmou:2023exh} to our setting.

\begin{definition}[Defect graph]
We consider a collection of $n_{\mathfrak{m}}-1$ smooth arcs $\xi_p:[0,1]\rightarrow\Sigma$ with endpoints in $\z$ such that:
\begin{itemize}
    \item every arc is simple, and any two arcs do not intersect except possibly at their endpoints;
    \item if $\xi_p(0)=z_j$ and $\xi_p(1)=z_{j'}$, then
    $$
    \xi'_p(0)=\lambda_{p,j}v_j,\qquad \xi'_p(1)=\lambda_{p,j'}v_{j'}
    $$
    for some $\lambda_{p,j},\lambda_{p,j'}>0$;
    \item if $\xi_p(0)=z_j$ and $\xi_p(1)=z_{j'}$, then $m_j\leq m_{j'}$;
    \item the induced graph with vertices $\z$ is connected, acyclic, and when viewed as a subset of $\Sigma$, the set $$
\mathcal{D}_{\bv,\bxi}:=\bigcup_{p=1}^{n_{\mathfrak{m}}-1}\xi_p([0,1])
$$ is homotopic to a point.
\end{itemize}
We call $\mathcal{D}_{\bv,\bxi}$ the \emph{defect graph} associated to $\bv$ and $\bxi$.
\end{definition}

\begin{lemma}\label{exactness-defect-graph}
On $\Sigma\setminus\mathcal{D}_{\bv,\bxi}$, the $\a$-valued $1$-form $\nu_{\z,\m}$ is exact.
\end{lemma}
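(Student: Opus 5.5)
To prove that $\nu_{\z,\m}$ is exact on $\Sigma\setminus\mathcal{D}_{\bv,\bxi}$, I will show that all its periods vanish there and then build the primitive directly. Since $\nu_{\z,\m}$ is a smooth closed $\a$-valued $1$-form on $\Sigma_\z\supset\Sigma\setminus\mathcal D_{\bv,\bxi}$, exactness on the connected open set $U:=\Sigma\setminus\mathcal{D}_{\bv,\bxi}$ is equivalent to $\int_c\nu_{\z,\m}=0$ for every closed piecewise smooth loop $c\subset U$. The primitive is then $x\mapsto\int_{\alpha_{x_0,x}}\nu_{\z,\m}$ along any path in $U$. Connectedness of $U$ holds because $\mathcal D_{\bv,\bxi}$ is a tree, which does not disconnect a surface. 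Working componentwise in the coweight basis, $\nu_{\z,\m}=\sum_q\nu^{(q)}_{\z,\m}\omega_q^\vee$ as in the proof of Proposition \ref{prop:magnetic-background}, so it suffices to treat the scalar forms. Alternatively one can simply cite the scalar statement in \cite{Guillarmou:2023exh}, but I would still spell out the argument.

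The key step is to describe $H_1(U)$. Because $\mathcal D_{\bv,\bxi}$ is a contractible embedded finite tree, a small closed regular neighborhood $N$ of it is a closed disk whose interior contains all the $z_j$. The inclusion $\Sigma\setminus N\hookrightarrow U$ is a homotopy equivalence, since $N\setminus\mathcal D$ deformation retracts onto $\partial N$. Hence $H_1(U)\cong H_1(\Sigma\setminus \mathring N)$, which is generated by the classes of the basis curves $a_j,b_j$ (isotoped off $N$) together with the boundary circle $\partial N$. Concretely, by Mayer--Vietoris for $\Sigma=(\Sigma\setminus\mathring N)\cup N$, the map $H_1(\Sigma\setminus \mathring N)\to H_1(\Sigma)$ is surjective with kernel generated by $[\partial N]$.

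With these generators, I will evaluate the periods one by one.
\begin{itemize}
\item[(a)] For $[\partial N]$: by Stokes and the distributional identity $\d\nu_{\z,\m}=-2\pi\sum_j m_j\delta_{z_j}$, or equivalently summing the local residues $\psi_j^*\nu=m_j\d\theta$ over small circles, we get $\int_{\partial N}\nu_{\z,\m}=\pm2\pi\sum_j m_j=0$ by neutrality.
\item[(b)] For the symplectic curves: by Proposition \ref{prop:magnetic-background}, $\int_{\sigma_j}\nu_{\z,\m}=0$ for the chosen representatives avoiding the disks $U_j$.
\end{itemize}

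The main obstacle is that the $\sigma_j$ in Proposition \ref{prop:magnetic-background} need not avoid the defect graph. Moreover, a curve in $U$ that is homologous to $\sigma_j$ in $\Sigma$ is generally only homologous to it in $\Sigma_\z$ up to small loops around individual $z_k$. Such loops carry the period $2\pi m_k\neq 0$, so the argument must avoid them. I would handle this as follows. Take any loop $c\subset U$ and write $[c]=\sum n_j[\sigma'_j]+k[\partial N]$ in $H_1(\Sigma\setminus\mathring N)$, where the $\sigma'_j$ are curves in $\Sigma\setminus N$ representing the basis. Then compare $\sigma_j'$ with $\sigma_j$ inside $\Sigma_\z$. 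Isotoping $\sigma_j$ off the disk $N$ sweeps across some of the points $z_k$, which changes the period by $2\pi\sum_{k\in S}m_k$ for some subset $S$. This defect is harmless if we choose the $\sigma_j$ from the outset disjoint from a disk containing all $z_j$. That choice is legitimate because the construction of $\nu$ in Proposition \ref{prop:magnetic-background} is insensitive to moving basis curves within a fixed region avoiding the marked points. I would also remark that $\mathcal D$ can be chosen inside that disk up to isotopy fixing $\z$. Once every generator has zero period, each loop in $U$ has zero period, and $\nu_{\z,\m}$ is exact on $U$ with primitive $I_{x_0}(\nu_{\z,\m})$ computed along paths in $U$.
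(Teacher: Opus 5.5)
Your componentwise reduction in the coweight basis is what the paper does before invoking \cite[Lemma 4.7]{Guillarmou:2023exh}. Your description of $H_1(\Sigma\setminus\mathcal D_{\bv,\bxi})$ through a regular neighbourhood $N$ is correct, and so is step (a). In fact $[\partial N]=0$ already in $H_1(\Sigma\setminus\mathring N)$, so that group maps isomorphically onto $H_1(\Sigma)$.

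The gap is in your last paragraph, at the point you flagged yourself. You assert that $\mathcal D_{\bv,\bxi}$ can be moved, by an isotopy fixing $\z$, into a disk chosen in advance and avoided by the normalizing curves. This is false. It cannot be rescued, because the conclusion itself fails when the normalization is not adapted to the graph.

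Here is a counterexample. On $\R^2/\Z^2$ take $z_1=(\frac12,0.4)$ and $z_2=(\frac12,0.6)$ with charges $m,-m$, $m\neq0$. Normalize $\nu_{\z,\m}$ by $a=\{y=0.1\}$ and $b=\{x=0.1\}$. Let the single edge of the defect graph join $z_1$ and $z_2$ the long way around in the $y$-direction, crossing $a$; the tangency conditions can be met by small modifications near the endpoints. Then $a'=\{y=\frac12\}$ lies in the complement of the graph. The difference $a'-a$ bounds the annulus $\{0.1<y<0.5\}$, which contains only $z_1$, so $\int_{a'}\nu_{\z,\m}=\pm2\pi m\neq0$.

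An isotopy fixing $\z$ preserves the periods of $\nu_{\z,\m}$ along loops in $\Sigma_\z$. Hence this arc is not isotopic rel $\z$ into a small disk around the segment $[z_1,z_2]$.

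Your justification that the construction is ``insensitive'' to moving the basis curves is also inaccurate. The form $\nu_{\z,\m}$ depends on the classes of the $\sigma_j$ in $H_1(\Sigma_\z)$, not in $H_1(\Sigma)$. Sweeping $\sigma_j$ across $z_k$ changes it by a closed form with period $2\pi m_k$.

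The fix is to let the normalization depend on the given graph, not the reverse.
\begin{enumerate}
\item Take $N$ a regular neighbourhood of $\mathcal D_{\bv,\bxi}$, with the disks $U_j$ inside $N$.
\item Realize the basis $\bsigma$ in Proposition~\ref{prop:magnetic-background} by closed curves in $\Sigma\setminus N$. This is possible because $H_1(\Sigma\setminus\mathring N)\to H_1(\Sigma)$ is an isomorphism.
\item Your generators $\sigma'_j$ may then be taken to be the $\sigma_j$ themselves, and (a)--(b) finish the proof.
\end{enumerate}
This compatibility between $\bsigma$ and $\mathcal D_{\bv,\bxi}$ is the convention implicitly used when the paper applies \cite[Lemma 4.7]{Guillarmou:2023exh} to each component. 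Your proof should state it as a hypothesis, since it cannot be obtained by moving the defect graph.
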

\begin{proof}
Write each $m_j$ in the lattice basis
$m_j=\gamma^{-1}\sum_{\l=1}^r m_{j,\l}\omega_\l^\vee,
\, m_{j,\l}\in\mathbb Z.$
Applying \cite[Lemma 4.7]{Guillarmou:2023exh} to each scalar charge family
$(m_{1,\l},\ldots,m_{n_{\mathfrak m},\l})$ gives exactness of each component of
$\nu_{\z,\m}$ on $\Sigma\setminus\mathcal D_{\bv,\bxi}$. Summing over $\l$ gives the
claim.
\end{proof}
Let $x_0\in \Sigma\setminus \mathcal D_{\bv,\bxi}$. By Lemma~\ref{exactness-defect-graph}, we may define on $\Sigma\setminus\mathcal D_{\bv,\bxi}$
$$
I^{\bxi}_{x_0}(\nu_{\z,\m})(x):=\int_{\alpha_{x_0,x}}\nu_{\z,\m},
$$
where $\alpha_{x_0,x}$ is any smooth curve in $\Sigma\setminus\mathcal D_{\bv,\bxi}$ from $x_0$ to $x$.

\begin{definition}[Regularized magnetic curvature term]\label{def:regmag}
Fix an edge $\xi_p$ of the defect graph and a point $x=\xi_p(t)$ with $t\in(0,1)$. Let $\tau_p$ be the positive unit tangent vector to $\xi_p$ at $x$, and let $\nu_p:=J\tau_p$
be the left unit normal, where $J$ denotes rotation by $+\pi/2$ in the oriented tangent bundle.
Choose a positively oriented smooth simple closed curve $\alpha_x$ such that:
\begin{itemize}
    \item $\alpha_x(0)=x$ and $\dot\alpha_x(0)=\nu_p$;
    \item $\alpha_x$ meets $\mathcal D_{\bv,\bxi}$ only at $x$;
    \item $\alpha_x$ bounds a topological disk.
\end{itemize}
We define
$$
\kappa(\xi_p):=\int_{\alpha_x}\nu_{\z,\m}.
$$
This $\a$-valued quantity is well defined, i.e. it does not depend on the choices of $x$ and $\alpha_x$. Moreover, it can be equivalently defined as
$$
\kappa(\xi_p)=2\pi\sum_{z_j\in D_{\alpha_x}}m_j,
$$
where $D_{\alpha_x}$ is the disk bounded by $\alpha_x$.

We then define the regularized magnetic curvature term by
$$
\int_\Sigma^{\reg} I^{\bxi}_{x_0}(\nu_{\z,\m})\,K_g\,\d\v_g
:=
\int_{\Sigma\setminus\mathcal D_{\bv,\bxi}} I^{\bxi}_{x_0}(\nu_{\z,\m})\,K_g\,\d\v_g
-
2\sum_{p=1}^{n_{\mathfrak m}-1}\kappa(\xi_p)\int_{\xi_p}k_g\,\d\ell_g.
$$
\end{definition}

\begin{lemma}[Invariance under change of defect graph]\label{lem:defect-graph-independence}
For any two defect graphs $\bxi,\bxi'$, we have 
$$
\int_\Sigma^{\reg} I^{\bxi}_{x_0}(\nu_{\z,\m})\,K_g\,\d\v_g-\int_\Sigma^{\reg} I^{\bxi'}_{x_0}(\nu_{\z,\m})\,K_g\,\d\v_g\in8\pi^2\Lambda.
$$
\end{lemma}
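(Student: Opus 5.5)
The plan is to reduce the statement to the scalar result \cite[Lemma 4.8]{Guillarmou:2023exh} (invariance of the regularized magnetic curvature term under change of defect graph, modulo $8\pi^2 R\Z$ with $R$ the compactification radius). This follows the pattern used for Lemmas \ref{exactness-defect-graph} and \ref{lem:local invariance}--\ref{lem:Invariance under change of symplectic basis}. We expand in the coweight basis, writing
\[
m_j=\gamma^{-1}\sum_{\l=1}^r m_{j,\l}\,\omega_\l^\vee,\qquad m_{j,\l}\in\Z .
\]
By the construction in the proof of Proposition \ref{prop:magnetic-background}, we have $\nu_{\z,\m}=\sum_{\l}\nu^{(\l)}_{\z,\m}\,\omega^\vee_\l$. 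Here each $\nu^{(\l)}_{\z,\m}$ is the scalar magnetic form of \cite[Proposition 3.10]{Guillarmou:2023exh} with radius $R=\gamma^{-1}$ and integer charges $(m_{1,\l},\ldots,m_{\nm,\l})$, which sum to zero.

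The regularized term of Definition \ref{def:regmag} is linear in $\nu_{\z,\m}$. The primitive $I^{\bxi}_{x_0}$ is linear in the form, and so is $\kappa(\xi_p)=\int_{\alpha_x}\nu_{\z,\m}$. Both are defined using only the geometry of $\bxi$, $x_0$ and $g$. Hence
\[
\int_\Sigma^{\reg} I^{\bxi}_{x_0}(\nu_{\z,\m})K_g\,\d\v_g=\sum_{\l=1}^r\Big(\int_\Sigma^{\reg} I^{\bxi}_{x_0}(\nu^{(\l)}_{\z,\m})K_g\,\d\v_g\Big)\omega_\l^\vee ,
\]
where the scalar regularized integrals are exactly those of \cite[Definition 4.6 ff.]{Guillarmou:2023exh}, built with the same arcs $\xi_p$.

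One compatibility issue has to be checked. The defect graph condition ``$m_j\le m_{j'}$'' is imposed here for the lexicographic order on $\Lambda$. The scalar lemma imposes the corresponding ordering on each scalar charge family, and a graph admissible for the vector charges need not be admissible for each coordinate family. I would first verify that the ordering condition plays no role in the proof of the scalar invariance statement. As far as I recall it is only a normalization of orientation: reversing an arc $\xi_p$ changes the sign of $\kappa(\xi_p)$ and of $\int_{\xi_p}k_g\,\d\ell_g$ simultaneously, so the product is unchanged. If the ordering is needed after all, I would redo the scalar argument directly. Its structure is the following. Two defect graphs are related by a finite sequence of elementary moves, namely isotopies of arcs and sliding an arc endpoint across a vertex. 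In each move, the jump of the primitive across the swept region is an integer combination of $2\pi m_{j,\l}/\gamma$. The boundary contributions are then controlled by Gauss--Bonnet on the swept disk, $\int_D K_g\,\d\v_g+2\int_{\partial D}k_g\,\d\ell_g=4\pi$ up to corner terms. Together these give a difference of the form $4\pi\cdot 2\pi\cdot(\text{lattice element})$.

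Applying the scalar result componentwise gives, for each $\l$, a difference in $8\pi^2\gamma^{-1}\Z$. Summing with the weights $\omega_\l^\vee$ gives a difference in $8\pi^2\gamma^{-1}\bigoplus_\l\Z\omega_\l^\vee=8\pi^2\Lambda$, as claimed. I expect the main obstacle to be the ordering and orientation compatibility above, together with the corner contributions of the tangent conditions $\xi_p'(0)=\lambda_{p,j}v_j$ at the vertices. These are fixed by the prescribed directions $v_j$, so they cancel between the two graphs. I expect that, once this is checked, everything else is formal linearity.
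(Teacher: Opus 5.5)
Your proposal is correct and is essentially the paper's argument. You expand $\nu_{\z,\m}=\sum_\l\nu^{(\l)}_{\z,\m}\,\omega_\l^\vee$ in the coweight basis, use linearity of the regularized term, and run the scalar defect-graph argument of \cite{Guillarmou:2023exh} (the paper cites Lemma 4.9 there) on each integer charge family; this gives $8\pi^2\gamma^{-1}\Z$ per component and hence $8\pi^2\Lambda$.

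The paper applies the scalar \emph{proof} (the elementary S-, R-, A-, D-moves) componentwise rather than the scalar lemma as a black box. That is exactly your fallback, and it is the safer route for the ordering issue you rightly flag, which the paper does not discuss. Under the stated convention $\xi_p'(0)\in\R_{>0}v_j$, $\xi_p'(1)\in\R_{>0}v_{j'}$, a reversed arc is no longer admissible. So reorienting arcs does not by itself turn a lexicographically admissible graph into one admissible for every component, even though, as you note, the product $\kappa(\xi_p)\int_{\xi_p}k_g\,\d\ell_g$ is orientation-independent.
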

\begin{lemma}[Conformal change of metrics]\label{lem:conformal-change-magnetic-curvature}
Let $g'=e^\rho g$ be a conformal change of metric.
Then
$$
\int_{\Sigma}^{\reg} I^{\bxi}_{x_0}(\nu_{\z,\m})K_{g'}\,\d\v_{g'}
=
\int_{\Sigma}^{\reg} I^{\bxi}_{x_0}(\nu_{\z,\m})K_g\,\d\v_g
+
\langle \d\rho,\nu_{\z,\m}\rangle_2.
$$
In particular, if $\nu_{\z,\m}=\nu_{\z,\m}^{\mathrm h}$ is harmonic, then
$$
\int_{\Sigma}^{\reg} I^{\bxi}_{x_0}(\nu_{\z,\m}^{\mathrm h})K_{g'}\,\d\v_{g'}
=
\int_{\Sigma}^{\reg} I^{\bxi}_{x_0}(\nu_{\z,\m}^{\mathrm h})K_g\,\d\v_g.
$$
\end{lemma}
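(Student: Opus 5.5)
The statement to prove is Lemma \ref{lem:conformal-change-magnetic-curvature}. Both sides are linear in $\nu_{\z,\m}$, so I would reduce to the scalar case exactly as in the proof of Lemmas \ref{lem:local invariance}--\ref{lem:Invariance under change of symplectic basis}.

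\textbf{Reduction to scalars.} Fix an orthonormal basis $(\varepsilon_1,\ldots,\varepsilon_r)$ of $\a$ and write $\nu_{\z,\m}=\sum_\ell \nu^{(\ell)}\varepsilon_\ell$. Then the primitive $I^{\bxi}_{x_0}$, the constants $\kappa(\xi_p)$ and the pairing $\la \d\rho,\cdot\ra_2$ all split componentwise. The charges $m_{j,\ell}$ are now real rather than integer, but the conformal-change argument never uses integrality; only the lattice-valued invariance statements do. So it suffices to prove the identity for a scalar closed form $\nu$ on $\Sigma\setminus\{\z\}$ with local behaviour $m_j\d\theta+O(1)$ and jumps $\kappa(\xi_p)$ across the cuts. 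This is \cite[Lemma 4.9]{Guillarmou:2023exh}, or its analogue there. For completeness I would redo the computation directly, as follows.

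\textbf{Direct computation.} By \eqref{eq:curvature-conformal-change}, $K_{g'}\d\v_{g'}=(K_g+\Delta_g\rho)\d\v_g$, and on each arc $k_{g'}\d\ell_{g'}=(k_g+\tfrac12\partial_{\nu_p}\rho)\d\ell_g$, with the normal sign fixed by the orientation convention. The difference of the two regularized terms is therefore
\[
\int_{\Sigma\setminus\mathcal D_{\bv,\bxi}} I\,\Delta_g\rho\,\d\v_g-\sum_p\kappa(\xi_p)\int_{\xi_p}\partial_{\nu_p}\rho\,\d\ell_g .
\]
To handle the first integral, excise $\eps$-balls around the $z_j$ and a thin neighbourhood of the cut, then apply Green's formula: $\int I\Delta_g\rho=\int\la \d I,\d\rho\ra+\text{boundary terms}$, where $\d I=\nu$. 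Across each edge $\xi_p$ the primitive jumps by $\kappa(\xi_p)$. Hence the boundary contributions on the two sides of $\xi_p$ combine to $\kappa(\xi_p)\int_{\xi_p}\partial_{\nu_p}\rho$, up to sign, and cancel the geodesic-curvature counterterm. This cancellation is precisely the reason for the factor $2$ in Definition \ref{def:regmag}.

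\textbf{Main obstacle.} The delicate point is the small circles around the $z_j$. There $I$ is bounded (indeed $O(1)$, with at most a bounded jump), and the circle has length $O(\eps)$ while $\partial_r\rho$ is bounded, so these terms are $O(\eps)$ and vanish in the limit. The endpoint effects where the arcs meet the circles are also $O(\eps)$. What remains is $\int_\Sigma\la\d\rho,\nu\ra\,\d\v_g=\la\d\rho,\nu\ra_2$. This is absolutely convergent because $\nu\in L^1$ and $\d\rho$ is smooth. The signs of the jump versus the normal $\nu_p=J\tau_p$ must be matched carefully with the definition of $\kappa$ via the loop $\alpha_x$.

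\textbf{Harmonic case.} If $\nu=\nu^{\mathrm h}_{\z,\m}$ is coclosed, write $\la\d\rho,\nu\ra_2=\int\rho\,\d^*\nu$ plus residues at the $z_j$. Near $z_j$ we have $\nu=m_j\d\theta+$ (smooth), and $\int_{|z-z_j|=\eps}\rho\,{*\d\theta}\to0$ because $*\d\theta=-\d r/r$ restricts to zero on circles. Since $\d^*\nu^{\mathrm h}=0$, the pairing vanishes, which gives the second statement.
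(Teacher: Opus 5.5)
Your proof is correct and follows the paper's route: you reduce componentwise in an orthonormal basis of $\a$ (real, non-integer charges are harmless) to the scalar conformal-change lemma of \cite{Guillarmou:2023exh}, and you get the harmonic case from the coclosedness of $\nu^{\mathrm h}_{\z,\m}$; note the paper cites Lemma 4.10 there, since 4.9 is the defect-graph lemma. Your added Green's-formula computation also checks out: the jump of the primitive across each $\xi_p$ gives $\pm\kappa(\xi_p)\int_{\xi_p}\partial_{\nu_p}\rho\,\d\ell_g$, which cancels the change of the counterterm (the factor $2$ against the $\tfrac12$), and the small circles contribute only $O(\eps)$, so this supplies the details the paper leaves to that reference.
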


\begin{proof}[Proof of Lemma \ref{lem:defect-graph-independence}]
Writing $$m_j=\gamma^{-1}\sum_{\l=1}^r m_{j,\l}\omega_\l^\vee,
\, m_{j,\l}\in\mathbb Z,\qquad\nu_{\z,\m}=
\sum_{\l=1}^r \nu^{(\l)}_{\z,\m}\,\omega_\l^\vee,$$
the scalar proof of \cite[Lemma 4.9]{Guillarmou:2023exh} applies to each component
$\nu^{(\l)}_{\z,\m}$. For each elementary $S$-, $R$-, $A$-, or $D$-move, the
corresponding change of the scalar regularized curvature term belongs to
$8\pi^2\gamma^{-1}\mathbb Z$. Therefore, the $\a$-valued change belongs to $8\pi^2\Lambda.$ Since any two defect graphs are connected by a finite sequence of such moves, the result
follows.
\end{proof}

\begin{proof}[Proof of Lemma \ref{lem:conformal-change-magnetic-curvature}]
Fix an orthonormal basis $(\varepsilon_1,\ldots,\varepsilon_r)$ of $\a$, and write
$\nu_{\z,\m}=\sum_{\ell=1}^r \nu^{(\ell)}_{\z,\m}\varepsilon_\ell$ and
$m_j=\sum_{\ell=1}^r m_{j,\ell}\varepsilon_\ell .$
Although the coefficients $m_{j,\ell}$ are real numbers, not necessarily integers, this causes no
difficulty here: the proof of \cite[Lemma 4.10]{Guillarmou:2023exh} applies verbatim to
real magnetic coefficients, since it only uses the local form
$\nu^{(\ell)}_{\z,\m}=m_{j,\ell}\d\theta+O(1)$
near each marked point and the identity \eqref{eq:curvature-conformal-change} with the corresponding transformation of geodesic curvature along the cuts. Applying the
scalar argument to each component and summing over $\ell$ proves the first assertion. The second assertion then follows from the coclosedness of $\nu_{\z,\m}^{\mathrm{h}}$.
\end{proof}

\subsection{Path integrals}

Throughout the rest of this section, we work under the assumptions that $\gamma^2<1$ and that $Q\in\Lambda^*$ with $\gamma>0$ so that both the Toda interaction characters and the curvature phase are compatible with the compactification lattice $\Lambda$, where we recall that the background charge is given by $$Q=\gamma\rho-\frac{2}{\gamma}\rho^\vee.$$

We fix the following data:
\begin{itemize}
\item a geometric symplectic basis
$\bsigma=(\sigma_1,\ldots,\sigma_{2\bbg})$ of $H_1(\Sigma)$ and closed smooth real-valued $1$-forms $\eta_1,\ldots,\eta_{2\bbg}$ dual to $\bsigma$ as in Lemma \ref{geometric symplectic basis};
\item a base point $x_0\in \Sigma_{\bsigma}=\Sigma\setminus\cup^{2\bbg}_{j=1}\sigma_j$ and $I_{x_0}^{\bsigma}(\Omega)$ the smooth function on $\Sigma_{\bsigma}$ defined in \eqref{primitive}

\end{itemize}

We first introduce the space of test functions on which the path integral is defined. Let $(e_j)_{j\geq 0}$ be an orthonormal basis of $L^2(\Sigma,\d\v_g)$ consisting of eigenfunctions of
$\Delta_g$. We shall write elements of
$H^{s}(\Sigma;\a)$, $s<0$ in the form
$$
f=f_0+\sqrt{2\pi}\sum_{j\geq 1} f_j e_j,
$$ and equip the Sobolev space $H^{s}(\Sigma;\a)$ with the pushforward of the measure $\d c\otimes \P$ on $\T(\gamma)\times\Omega$ via the map $(c,\omega)\mapsto c+\x_g(\omega).$

We recall that from Section \ref{sec:equivariant},  any equivariant distribution $u\in H_\Gamma^{s}(\widetilde\Sigma;\a)$ can be uniquely written as $$u=\pi^* (f_0+\sqrt{2\pi}\sum_{j\geq 1} f_j e_j)+I_{x_0}(\Omega_{\blambda})$$
for some $f\in H^s(\Sigma;\a)$ and $\blambda\in \Lambda^{2\bbg}\simeq H_1(\Sigma)$. The space $\mathcal{E}_\Lambda(\Sigma;\a)$ of test functions consists of all functionals $F:H^s_\Gamma(\widetilde\Sigma;\a)\rightarrow\C$ of the form
\begin{equation}\label{eq:test-functions-elementary}
F(u)
=
\sum_{\l\in A}
e^{\bi\langle \ell,f_0\rangle}
P_{\ell}(f-f_0)\,
G_{\ell}
\left(
e^{\bi\langle q,I^{\bsigma}_{x_0}(\Omega_{\blambda})\rangle}
\right)
\end{equation} for some finite subset $A\subset\Lambda^*$ and $q\in\Lambda^*$ 
where each $P_{\ell}$, depending on $\blambda,$ is a polynomial in the non-zero mode
$f-f_0$, namely
$P_{\ell}(f-f_0)
=
\mathcal P_{\ell}
\left(
\langle f-f_0,h_1\rangle,\ldots,
\langle  f-f_0,h_{m_\l}\rangle
\right),$
for some polynomial $\mathcal P_{\ell,\blambda}$ and
$h_1,\ldots,h_m\in H^{s}_0(\Sigma;\a)$,
and each $G_{\ell}$, also depending on $\blambda$, is a bounded continuous function on
$C^0(\Sigma_{\bsigma};\S^1).$ One can observe that the space is independent of the choice of the representative of $[\Omega_{\blambda}]\in H^1(\Sigma;\a)$.

The condition $\ell,q \in\Lambda^*$ ensures that $F$ is
$2\pi\Lambda$-periodic in the compactified field. In particular,
$F(u+2\pi\lambda)=F(u),
\, \lambda\in\Lambda.$ The polynomial dependence on the non-zero mode is important: it is the class on which
one can later perform analytic continuation arguments for correlation functions. 

We now introduce the normed space obtained by completing these elementary test
functions. For $\blambda\in\Lambda^{2\bbg}$, let $\Omega_{\blambda}^{\mathrm h}
:=
\Pi_1\Omega_{\blambda}$
be the harmonic representative of the cohomology class $[\Omega_{\blambda}]$, and
write
$\Omega_{\blambda}
=
\Omega_{\blambda}^{\mathrm h}
+
\d f_{\blambda},$
where $f_{\blambda}$ is a smooth $\a$-valued function.
For $p\geq 1$, define
\begin{equation}\label{eq:L-infty-p-norm}
\|F\|_{\mathcal L^{\infty,p}_{\Lambda}}
:=
\sup_{\blambda\in\Lambda^{2\bbg}}
\left(
\int_{\a/(2\pi\Lambda)}
\E\left[
e^{-\frac{1}{2\pi}\langle \d\x_g,\Omega_{\blambda}\rangle_2
-\frac{1}{4\pi}\|\d f_{\blambda}\|_2^2}
\left|
F\left(c+\x_g+I^{\bsigma}_{x_0}(\Omega_{\blambda})\right)
\right|^p
\right]\d c
\right)^{1/p}.
\end{equation}
We denote by $\mathcal L^{\infty,p}_{\Lambda}(\Sigma;\a)$
the completion of $\mathcal E_{\Lambda}(\Sigma;\a)$ with respect to this norm, after
identifying two test functions whose distance is zero. 

\begin{lemma}\label{lem:L-infty-p-independent-representatives}
The norm $\|\cdot\|_{\mathcal L^{\infty,p}_{\Lambda}}$
does not depend on the choice of representatives $\Omega_{\blambda}\in [\Omega_{\blambda}]\in H^1_{\Lambda}(\Sigma;\a)$.
\end{lemma}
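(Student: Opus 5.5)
Fix $\blambda\in\Lambda^{2\bbg}$ and two closed representatives $\Omega_{\blambda}$ and $\Omega'_{\blambda}=\Omega_{\blambda}+\d u$ of the same class, with $u\in C^\infty(\Sigma;\a)$ (Lemma \ref{lem:a-valued-absolute-relative-exactness}). The plan is to show that the $\blambda$-th term inside the supremum in \eqref{eq:L-infty-p-norm} is the same for both representatives. Taking the supremum over $\blambda$ then gives the claim.

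\textbf{Step 1: rewrite the test function.} Since both forms have the same periods, the primitives $I^{\bsigma}_{x_0}(\Omega'_{\blambda})$ and $I^{\bsigma}_{x_0}(\Omega_{\blambda})$ on $\Sigma_{\bsigma}$ differ by $u-u(x_0)$. Hence
$$
F\big(c+\x_g+I^{\bsigma}_{x_0}(\Omega'_{\blambda})\big)=F\big(c'+\x_g+(u-\bar u)+I^{\bsigma}_{x_0}(\Omega_{\blambda})\big),
$$
where $\bar u$ is the $\d\v_g$-mean of $u$ and $c':=c+\bar u-u(x_0)$. The change of variables $c\mapsto c'$ is a translation on $\T(\gamma)$, so it preserves the Haar measure $\d c$. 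We may therefore drop it and are reduced to a shift of the zero-mean field $\x_g$ by the smooth zero-mean function $h:=u-\bar u$.

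\textbf{Step 2: Cameron--Martin.} By Girsanov/Cameron--Martin for the $\a$-valued GFF (applied componentwise in the orthonormal basis $(\eps_\ell)$), whose Cameron--Martin norm is $\frac1{4\pi}\|\d h\|_2^2$, we have for nonnegative measurable $\Phi$
$$
\E[\Phi(\x_g+h)]=\E\Big[e^{-\frac1{2\pi}\la \d\x_g,\d h\ra_2-\frac1{4\pi}\|\d h\|_2^2}\,\Phi(\x_g)\Big].
$$
We apply this to
$$
\Phi(\x_g)=e^{-\frac1{2\pi}\la\d\x_g,\Omega'_{\blambda}\ra_2-\frac1{4\pi}\|\d f'_{\blambda}\|_2^2}\big|F(\cdots)\big|^p .
$$
After substituting $\x_g\mapsto \x_g-h$, i.e.\ applying the formula with $-h$, the weight transforms as follows. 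The pairing term $\la \d\x_g,\Omega'_{\blambda}\ra_2$ becomes
$$
\la\d\x_g,\Omega_{\blambda}\ra_2+\la \d\x_g,\d h\ra_2-\la\d h,\Omega_{\blambda}\ra_2-\|\d h\|_2^2 .
$$
We also use $f'_{\blambda}=f_{\blambda}+u$, so that
$$
\|\d f'_{\blambda}\|_2^2=\|\d f_{\blambda}\|_2^2+2\la\d f_{\blambda},\d h\ra_2+\|\d h\|_2^2,
$$
and $\la\d h,\Omega_{\blambda}\ra_2=\la \d h,\d f_{\blambda}\ra_2$ because $\Omega^{\mathrm h}_{\blambda}$ is coclosed and so $L^2$-orthogonal to exact forms.

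\textbf{Step 3: bookkeeping and justification.} Collecting exponents, all terms involving $h$ cancel. We then recover exactly the integrand built from $\Omega_{\blambda}$, so the two norms agree.

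The main obstacle is to justify Cameron--Martin for the unbounded weight $e^{-\frac1{2\pi}\la \d\x_g,\Omega\ra_2}$. Here $\la\d\x_g,\Omega\ra_2=\la\x_g,\d^*\Omega\ra$ is a Gaussian variable paired with a smooth function, and $F$ is polynomial times bounded, so all quantities are integrable. I would apply the formula first to truncated bounded functionals and pass to the limit by monotone convergence.

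The sign conventions in the Cameron--Martin exponent also need care, and I would double-check the cross term $\la\d f_{\blambda},\d h\ra_2$ cancellation with a precise convention for $\la\cdot,\cdot\ra_2$. This mirrors the scalar argument in \cite{Guillarmou:2023exh}.
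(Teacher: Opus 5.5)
Your argument is correct and follows essentially the paper's proof: a Girsanov shift of $\x_g$ by the exact part, absorption of the leftover constant into the zero mode $c$, and $L^2$-orthogonality of $\Omega^{\mathrm h}_{\blambda}$ to exact forms; the only difference is that the paper compares every representative with the harmonic one (your computation with $\Omega_{\blambda}=\Omega^{\mathrm h}_{\blambda}$). One sign needs fixing in Step 2: since $\Delta_gR_g=2\pi(\mathrm{Id}-\Pi_0)$ gives $\E[\x_g(x)\la\d\x_g,\d h\ra_2]=2\pi h(x)$, the correct identity is $\E[\Phi(\x_g+h)]=\E\bigl[e^{\frac1{2\pi}\la\d\x_g,\d h\ra_2-\frac1{4\pi}\|\d h\|_2^2}\Phi(\x_g)\bigr]$, and it is precisely this plus sign that cancels the term $-\frac1{2\pi}\la\d\x_g,\d h\ra_2$ produced by substituting $\x_g\mapsto\x_g-h$ in the weight, after which your Step 3 bookkeeping goes through as claimed.
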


\begin{proof}
Let $\blambda\in\Lambda^{2\bbg}$. We write $\Omega_{\blambda}^{\mathrm h}:=\Pi_1\Omega_{\blambda}$
for the harmonic representative of the cohomology class of $\Omega_{\blambda}$ so that
$\Omega_{\blambda}
=
\Omega_{\blambda}^{\mathrm h}+\d f_{\blambda},$
where $f_{\blambda}$ is a smooth $\a$-valued function. By the Girsanov transform, we have
$$
\begin{aligned}
&\int_{\a/(2\pi\Lambda)}
\E\left[
e^{-\frac{1}{2\pi}\langle \d\x_g,\Omega_{\blambda}\rangle_2
-\frac{1}{4\pi}\|\d f_{\blambda}\|_2^2}
\left|
F\left(
c+\x_g+I^{\bsigma}_{x_0}(\Omega_{\blambda})
\right)
\right|^p
\right]\d c                                                \\
&\quad =
\int_{\a/(2\pi\Lambda)}
\E\left[
e^{-\frac{1}{2\pi}
\langle \d\x_g,\Omega_{\blambda}^{\mathrm h}\rangle_2}
\left|
F\left(
c+\x_g+I^{\bsigma}_{x_0}(\Omega_{\blambda}^{\mathrm h})
+m_g(f_{\blambda})-f_{\blambda}(x_0)
\right)
\right|^p
\right]\d c .
                                           \\
&\quad=
\int_{\a/(2\pi\Lambda)}
\E\left[
\left|
F\left(
c+\x_g+I^{\bsigma}_{x_0}(\Omega_{\blambda}^{\mathrm h})
\right)
\right|^p
\right]\d c 
\end{aligned}
$$
where $m_g(f_{\blambda})$ denotes the average with respect to $\d \v_g$, the associated shift is absorbed by translating the zero mode $c\in \a/(2\pi\Lambda)$, and we have used 
$\langle \d\x_g,\Omega_{\blambda}^{\mathrm h}\rangle_2
=\langle \x_g,\d^*\Omega_{\blambda}^{\mathrm h}\rangle_2
=0$ since $\Omega^{\mathrm{h}}_{\blambda}$ is harmonic. This proves the claim.
\end{proof}

On a closed surface $\Sigma$, and for a topological sector
$\blambda\in\Lambda^{2\bbg}$, we write the Toda field as
$$
\Phi^{\blambda}_g
:=
c+\x_g+I^{\bsigma}_{x_0}(\Omega_{\blambda}).
$$
Here $c\in\a/(2\pi\Lambda)$, $\x_g$ is the $\a$-valued GFF with zero average, and
$I^{\bsigma}_{x_0}(\Omega_{\blambda})$ is the primitive of the closed $1$-form
$\Omega_{\blambda}$ on the cut surface
$\Sigma_{\bsigma}$. Thus $\Phi^{\blambda}_g$ is first defined as an element of
$H^s(\Sigma_{\bsigma};\a)$, for $s<0$. However, because the jumps of
$I^{\bsigma}_{x_0}(\Omega_{\blambda})$ across the cuts belong to $2\pi\Lambda$, the
lift of $\Phi^{\blambda}_g$ to the universal cover extends uniquely to an equivariant
distribution on $\widetilde\Sigma$. In other words, $\Phi^{\blambda}_g\in H^s_\Gamma(\widetilde\Sigma;\a),$
with monodromy determined by $\blambda$.

In what follows we shall not distinguish notationally between the representative of
$\Phi^{\blambda}_g$ on the cut surface and its equivariant lift to
$\widetilde\Sigma$. Thus, whenever $F$ is a test function on
$H^s_\Gamma(\widetilde\Sigma;\a)$, the expression
$F(\Phi^{\blambda}_g)$
means that $F$ is evaluated on this equivariant extension. The field
$\Phi^{\blambda}_g$ depends on the zero mode $c$, on the GFF $\x_g$, on the
topological sector $\blambda$, and on the auxiliary choices
$(\bsigma,x_0)$. The dependence on these auxiliary choices will disappear from the
path integral after summing over $\blambda$ and using the regularized curvature term.

\begin{definition}\label{def:path integrals}
For all $F\in\mathcal{E}_\Lambda(\Sigma;\a)$, we define
\begin{equation}\label{path integral}
\begin{aligned}
\la F\ra_{\Sigma,g}
:=&
\left(\frac{\v_g(\Sigma)}{\det'(\Delta_g)}\right)^{r/2}
\sum_{\blambda\in\Lambda^{2\bbg}}
e^{-\frac{1}{4\pi}\norm{\Omega_{\blambda}}^2_2} \\
&\times
\int_{\a/(2\pi\Lambda)}
\E\left[
e^{-\frac{1}{2\pi}\la\d\x_g,\Omega_{\blambda}\ra_2}
F(\Phi_g^{\blambda})
e^{-\frac{i}{4\pi}\la QK_g,\Phi_g^{\blambda}\ra^\reg_g
-\sum^r_{i=1}\mu_iM^g_{\gamma e_i}(\Phi^{\blambda}_g,\Sigma)}
\right]\d c,
\end{aligned}
\end{equation}
where the regularized curvature term is
$$
\la QK_g,\Phi_g^{\blambda}\ra^\reg_g
:=
\int_\Sigma\la Q,c+\x_g\ra K_g\d\v_g
+
\int^\reg_{\Sigma_{\bsigma}}
\la Q, I^{\bsigma}_{x_0}(\Omega_{\blambda})\ra K_g\d\v_g.
$$
\end{definition}

\begin{proposition}\label{prop:path-integral-well-defined}The map
$F\longmapsto \la F\ra_{\Sigma,g},\, F\in\mathcal E_\Lambda(\Sigma;\a),$ is well-defined and finite and
extends uniquely to a continuous linear functional on
$\mathcal L^{\infty,p}_{\Lambda}(\Sigma;\a)$ for $p>1$. Moreover,
$\la F\ra_{\Sigma,g}$ does not depend on the base point $x_0\in\Sigma$, on the choice
of the homology basis $\bsigma$, or on the choice of the closed forms representing the
cohomology basis dual to $\bsigma$.
\end{proposition}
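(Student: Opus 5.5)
The plan is to prove finiteness and continuity through a single Hölder estimate that is uniform in the topological sector $\blambda$, and then to derive the three invariance statements from the lemmas on the regularized curvature term combined with Girsanov shifts.

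\textbf{Reduction to harmonic representatives.} Fix $\blambda\in\Lambda^{2\bbg}$ and write $\Omega_{\blambda}=\Omega^{\mathrm h}_{\blambda}+\d f_{\blambda}$. As in the proof of Lemma \ref{lem:L-infty-p-independent-representatives}, apply the Girsanov transform to the weight $e^{-\frac1{2\pi}\la\d\x_g,\Omega_{\blambda}\ra_2}$. This shifts $\x_g$ by $-f_{\blambda}+m_g(f_{\blambda})$, and the deterministic shift is absorbed into the zero mode $c$ using translation invariance of $\d c$ on $\a/(2\pi\Lambda)$.

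The prefactor $e^{-\frac1{4\pi}\|\Omega_{\blambda}\|_2^2}$ combines with the Girsanov normalisation $e^{\frac1{4\pi}\|\d f_{\blambda}\|_2^2}$ to give $e^{-\frac1{4\pi}\|\Omega^{\mathrm h}_{\blambda}\|_2^2}$. One must also check how the curvature term changes when $I^{\bsigma}_{x_0}(\Omega_{\blambda})$ is replaced by $I^{\bsigma}_{x_0}(\Omega^{\mathrm h}_{\blambda})+f_{\blambda}-f_{\blambda}(x_0)$. The correction terms on the cuts depend only on the periods $\chi_\Omega$, so they are unchanged. The smooth part $f_{\blambda}$ enters only through $\int_\Sigma\la Q,f_{\blambda}\ra K_g\,\d\v_g$, which is exactly cancelled by the shift of $\x_g$ in the bulk curvature term.

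\textbf{Uniform bound.} The curvature factor $e^{-\frac{\bi}{4\pi}\la QK_g,\Phi\ra^{\reg}_g}$ has modulus one, since $Q$ and $\Phi$ are real. Likewise $M^g_{\gamma e_i}(\Phi^{\blambda}_g,\d x)=e^{\bi\gamma\la e_i,c+I(\Omega^{\mathrm h}_{\blambda})\ra}M^g_{\gamma e_i}(\x_g,\d x)$, where the phase has modulus one. Hölder's inequality with $1/p+1/q=1$ then gives
\begin{equation*}
|\la F\ra_{\Sigma,g}|\le C_g\sum_{\blambda}e^{-\frac1{4\pi}\|\Omega^{\mathrm h}_{\blambda}\|_2^2}\,\|F\|_{\mathcal L^{\infty,p}_\Lambda}\,\Big(\mathrm{Vol}(\T(\gamma))\,\E\Big[e^{q\sum_i|\mu_i|\,|M^g_{\gamma e_i}(\x_g,\Sigma)|}\Big]\Big)^{1/q}.
\end{equation*}
The lattice sum is finite because $\blambda\mapsto\|\Omega^{\mathrm h}_{\blambda}\|_2^2$ is a positive definite quadratic form on $\Lambda^{2\bbg}\simeq H^1_\Lambda(\Sigma;\a)$.

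The main obstacle is the exponential moment of imaginary GMC on a closed surface, since Proposition \ref{prop:exp-moment-shifted} is stated for Dirichlet fields on a domain $D'$ with $\overline{D'}\neq\Sigma$. I would proceed as follows:
\begin{enumerate}
\item Cover $\Sigma$ by two domains $D_1,D_2$, each compactly contained in a proper open subset $D'_k$.
\item Apply the Markov property (Proposition \ref{prop:Markov-property}) on $D'_k$ to write $\x_g=\x_{D'_k}+Z_k$, with $Z_k$ independent of $\x_{D'_k}$ and smooth inside $D'_k$.
\item Since $\gamma^2\la e_i,e_j\ra<2$, the integral $U^2$ is finite for bounded $f_i=\mathbf 1_{D_k}$.
\item Condition on $Z_k$ and apply Proposition \ref{prop:exp-moment-shifted} with $Z=Z_k$ and $\mu=2q\max_i|\mu_i|$.
\item Combine the two pieces with Cauchy--Schwarz.
\end{enumerate}
This gives a finite bound independent of $\blambda$. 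Continuity and unique extension to $\mathcal L^{\infty,p}_\Lambda(\Sigma;\a)$ then follow by density.

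\textbf{Invariance.}
\begin{itemize}
\item \emph{Closed representatives.} The Girsanov computation above shows that only the harmonic representative matters. This uses that $F\in\mathcal E_\Lambda$ is insensitive to the choice of representative, as already noted after \eqref{eq:test-functions-elementary}.
\item \emph{Base point.} Changing $x_0$ shifts $I^{\bsigma}_{x_0}(\Omega_{\blambda})$ by a constant $a\in\a$. Substituting $c\mapsto c-a$ leaves $\Phi^{\blambda}_g$ unchanged, and the curvature term is unchanged because both of its pieces move by the same amount $\la Q,a\ra\int K_g\,\d\v_g$.
\item \emph{Homology basis.} Changing $\bsigma$ reindexes the sum over $\blambda$. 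By Lemmas \ref{lem:local invariance} and \ref{lem:Invariance under change of symplectic basis}, the regularized curvature term changes by an element $8\pi^2\lambda$ with $\lambda\in\Lambda$. The phase therefore changes by $e^{-2\pi\bi\la Q,\lambda\ra}=1$, because $Q\in\Lambda^*$. The GMC terms and $F$ are unchanged, since the jumps of the primitive lie in $2\pi\Lambda$ and $e_i,\ell,q\in\Lambda^*$.
\end{itemize}
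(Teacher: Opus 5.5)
Your architecture is the paper's: Girsanov reduction of each sector to the harmonic representative, H\"older's inequality with the unimodular curvature phase, a $\blambda$-uniform exponential moment from Proposition~\ref{prop:exp-moment-shifted}, summability from positive-definiteness of $\blambda\mapsto\|\Omega^{\mathrm h}_{\blambda}\|_2^2$, density, and the three invariances (zero-mode translation, Girsanov, and Lemma~\ref{lem:Invariance under change of symplectic basis} with $Q\in\Lambda^*$). Your check that the regularized curvature term is compatible with the Girsanov shift is correct, and it fills in a detail the paper omits. Your covering argument is what the paper outsources to \cite[Proposition 6.4]{Guillarmou:2023exh}. However, the step meant to give uniformity in $\blambda$ is wrong as written.

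\textbf{The gap.} The identity $M^g_{\gamma e_i}(\Phi^{\blambda}_g,\d x)=e^{\bi\gamma\la e_i,c+I^{\bsigma}_{x_0}(\Omega^{\mathrm h}_{\blambda})(x)\ra}M^g_{\gamma e_i}(\x_g,\d x)$ is fine. But only the constant factor $e^{\bi\gamma\la e_i,c\ra}$ can be taken out of the integral over $\Sigma$.
\begin{itemize}
\item The remaining factor $e^{\bi\gamma\la e_i,I^{\bsigma}_{x_0}(\Omega^{\mathrm h}_{\blambda})(x)\ra}$ depends on $x$. As soon as $\blambda\neq0$ it winds nontrivially around the cycles for some $i$.
\item It is paired against a complex random distribution of order $2$ (cf.~\eqref{eq:order-two-bound}), not a positive measure. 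So there is no inequality $\bigl|\int_\Sigma\phi(x)\,M^g_{\gamma e_i}(\x_g,\d x)\bigr|\le\bigl|M^g_{\gamma e_i}(\x_g,\Sigma)\bigr|$ for unimodular $\phi$: the phase changes the cancellations.
\item Hence your displayed bound in terms of $\E\bigl[e^{q\sum_i|\mu_i|\,|M^g_{\gamma e_i}(\x_g,\Sigma)|}\bigr]$ does not follow. Neither does the conclusion that the second H\"older factor is free of $\blambda$.
\end{itemize}

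\textbf{The fix.} Use the tool you already invoke, but keep the phase and put it into the weights of Proposition~\ref{prop:exp-moment-shifted}. On each piece take $f_i=\chi_k\,e^{\bi\gamma\la e_i,c+I^{\bsigma}_{x_0}(\Omega^{\mathrm h}_{\blambda})\ra}$, with $(\chi_k)$ a partition of unity subordinate to $D_1,D_2$. This is single-valued on $\Sigma$ because $\gamma e_i\in\Lambda^*$. That proposition sees the $f_i$ only through $V$ and $U$, hence only through $|f_i|\le\chi_k$. This, rather than removing the phase, is what makes the bound uniform in $\blambda$ and $c$. It is exactly how you already treat the phase $e^{\bi\gamma\la e_i,Z_k\ra}$ in your fourth step.

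\textbf{A smaller imprecision.} On a closed surface, Proposition~\ref{prop:Markov-property} gives $\x_g\overset{\mathrm{law}}{=}\x_1+\x_2+P-c_g$, where $c_g$ depends on $\x_1,\x_2$. So your $Z_k$ is not independent of $\x_{D'_k}$ and cannot be conditioned on directly. Since $c_g$ is a spatial constant, remove it first by translating the zero mode $c$ over $\T(\gamma)$. This is legitimate because the integrand is $2\pi\Lambda$-periodic in $c$, and it is how the paper handles $c_g$ in the gluing proof.
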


\begin{proof}
We first prove the estimate for
$F\in\mathcal E_\Lambda(\Sigma;\a)$. Fix
$\blambda\in\Lambda^{2\bbg}$, and write
$\Omega_{\blambda}
=
\Omega_{\blambda}^{\mathrm h}
+
\d f_{\blambda}$ as before. By the Girsanov transform, together with the translation invariance of the zero mode $c\in\a/(2\pi\Lambda)$, the $\blambda$-summand in
\eqref{path integral} can be rewritten as
$$
e^{-\frac{1}{4\pi}\norm{\Omega_{\blambda}^{\mathrm h}}_2^2}
\int_{\a/(2\pi\Lambda)}
\E\left[
F(\Phi_g^{\blambda,\mathrm h})
e^{-\frac{\bi}{4\pi}
\la QK_g,\Phi_g^{\blambda,\mathrm h}\ra_g^{\reg}
-\sum_{i=1}^r
\mu_i M^g_{\gamma e_i}
(\Phi_g^{\blambda,\mathrm h},\Sigma)}
\right]\d c,
$$
where $\Phi_g^{\blambda,\mathrm h}
:=
c+\x_g+I_{x_0}^{\bsigma}(\Omega_{\blambda}^{\mathrm h}).$

Since the curvature term is purely imaginary, its exponential has modulus one, and hence
H\"older's inequality with $\frac1p+\frac1q=1.$ gives
$$
\begin{aligned}
&\left|\int_{\a/(2\pi\Lambda)}\E\left[F(\Phi_g^{\blambda,\mathrm h})e^{-\frac{\bi}{4\pi}\la QK_g,\Phi_g^{\blambda,\mathrm h}\ra_g^{\reg}-\sum_{i=1}^r\mu_iM^g_{\gamma e_i}(\Phi_g^{\blambda,\mathrm h},\Sigma)}
\right]\d c\right|\\
&\qquad\leq\left(\int_{\a/(2\pi\Lambda)}\E\left[\left|F(\Phi_g^{\blambda,\mathrm h})\right|^p\right]\d c\right)^{1/p}             
\left(\int_{\a/(2\pi\Lambda)}\E\left[
e^{q\left|\sum_{i=1}^r\mu_iM^g_{\gamma e_i}(\Phi_g^{\blambda,\mathrm h},\Sigma)\right|}\right]\d c\right)^{1/q}.
\end{aligned}
$$
The second factor is bounded uniformly in $\blambda$ by the exponential moment
estimate for shifted imaginary GMC, namely Proposition
\ref{prop:exp-moment-shifted} (see the proof of \cite[Proposition 6.4]{Guillarmou:2023exh}). For the first factor, Lemma
\ref{lem:L-infty-p-independent-representatives} gives
$$
\left(
\int_{\a/(2\pi\Lambda)}
\E\left[
\left|F(\Phi_g^{\blambda,\mathrm h})\right|^p
\right]\d c
\right)^{1/p}
\leq
\|F\|_{\mathcal L^{\infty,p}_{\Lambda}}.
$$
Therefore there exists a constant $C>0$, independent of $\blambda$, such that
$$
|\la F\ra_{\Sigma,g}|
\leq
C
\|F\|_{\mathcal L^{\infty,p}_{\Lambda}}
\left(\frac{\v_g(\Sigma)}{\det'(\Delta_g)}\right)^{r/2}
\sum_{\blambda\in\Lambda^{2\bbg}}
e^{-\frac{1}{4\pi}
\norm{\Omega_{\blambda}^{\mathrm h}}_2^2}.
$$
The map $\blambda\longmapsto
\norm{\Omega_{\blambda}^{\mathrm h}}_2^2$
is a positive-definite quadratic form on the lattice $\Lambda^{2\bbg}$, and thus the above series converges. Thus $\la F\ra_{\Sigma,g}$ is well-defined for $F\in\mathcal E_\Lambda(\Sigma;\a)$ and the path integral extends uniquely to every
$F\in\mathcal L^{\infty,p}_{\Lambda}(\Sigma;\a)$ by the above estimate.

It remains to prove the independence statements. We first prove them for
$F\in\mathcal E_\Lambda(\Sigma;\a)$, and the general case follows immediately by the continuity. If the base point $x_0$ is changed, then the primitive
$I^{\bsigma}_{x_0}(\Omega_{\blambda})$ changes by an additive constant in $\a$.
This constant is absorbed by translating the zero mode $c\in\a/(2\pi\Lambda).$ Since the Haar measure $\d c$ is translation invariant and the test functions are
$2\pi\Lambda$-periodic in the compactified field, the value of the path integral is unchanged.

Next, suppose that the representative of the cohomology class
$[\Omega_{\blambda}]$ is changed, for example, replacing $\Omega_{\blambda}$ by $\Omega_{\blambda}+\d h_{\blambda}$ for some smooth $\a$-valued function $h_{\blambda}$. The corresponding primitive
changes by
$$
I^{\bsigma}_{x_0}(\Omega_{\blambda}+\d h_{\blambda})
=
I^{\bsigma}_{x_0}(\Omega_{\blambda})
+
h_{\blambda}-h_{\blambda}(x_0).
$$
The Girsanov transform for the shift $\x_g\longmapsto \x_g+h_{\blambda}$
shows that the change in the linear Gaussian factor $e^{-\frac{1}{2\pi}\la\d\x_g,\Omega_{\blambda}\ra_2}$
is exactly compensated by that in the quadratic factor $e^{-\frac{1}{4\pi}\norm{\Omega_{\blambda}}_2^2}.$
The remaining additive constant $h_{\blambda}(x_0)$ is again absorbed by translating
the zero mode $c$. Equivalently, both choices of representative reduce, by the same
Girsanov argument, to the expression involving the harmonic representative
$\Omega_{\blambda}^{\mathrm h}$. Therefore the path integral is independent of the
choice of closed representative.

Finally, changing the geometric symplectic basis $\bsigma$ changes the
parametrization of the lattice of topological sectors. Hence the sum over
$\blambda\in\Lambda^{2\bbg}$ is merely reindexed. By Lemma
\ref{lem:Invariance under change of symplectic basis}, the regularized curvature term changes by an element whose pairing with $Q$ gives a trivial phase under the standing assumption $Q\in\Lambda^*.$ Thus the value of $\la F\ra_{\Sigma,g}$ is independent of $\bsigma$.

This completes the proof.
\end{proof}
\subsection{Correlation functions}
\paragraph{\emph{Magnetic operators}}
We now extend the preceding construction to correlation functions with magnetic
operators. Let $\z=(z_1,\ldots,z_n)\in \Sigma^n$
be a family of distinct marked points, to which we associate magnetic charges $\m=(m_1,\ldots,m_n)\in\Lambda^n$ satisfying the neutrality condition $\sum_{j=1}^n m_j=0.$ We also fix tangent vectors $\bv=((z_1,v_1),\ldots,(z_n,v_n))\in (T\Sigma)^n.$

Let $\bm\xi$ be a defect graph associated with the data $(\bv,\m)$ and we recall from Proposition \ref{prop:magnetic-background} the harmonic representative of the magnetic $1$-form $\nu^{\mathrm{h}}_{\z,\m}$ with monodromy
$2\pi m_j$ around $z_j$. We denote by $I^{\bm\xi}_{x_0}(\nu_{\z,\m}^{\mathrm h})$
the corresponding primitive on the cut surface determined by the defect graph.

For $\blambda\in\Lambda^{2\bbg}$, we consider the Toda field
$$
\Phi^{\blambda,\m}_g
:=
c+\x_g
+I^{\bsigma}_{x_0}(\Omega_{\blambda})
+I^{\bm\xi}_{x_0}(\nu_{\z,\m}^{\mathrm h}).
$$
This field is defined on the surface cut along both the homology cuts $\bsigma$ and the
defect graph $\bxi$, but can be equivalently viewed as an equivariant distribution on the universal
cover of $\Sigma_{\z}=\Sigma\setminus\{z_1,\ldots,z_n\}$, with monodromy determined both by
$\blambda$ and by $\m$.

We introduce the corresponding elementary test functions. The space $\mathcal E_{\Lambda}^{\m}(\Sigma;\a)$ consists of all functionals $F:H^s_{\Gamma,\m}(\widetilde{\Sigma}_{\z};\a)\rightarrow \C$
which, in the sector $\blambda$, can be written as
\begin{equation}\label{eq:magnetic-test-functions-elementary}
F(u)
=
\sum_{\ell\in A}
e^{\bi\langle \ell,f_0\rangle}
P_{\ell}(f-f_0)\,
G_{\ell}
\left(
e^{\bi\langle q,
I^{\bsigma}_{x_0}(\Omega_{\blambda})
+
I^{\bm\xi}_{x_0}(\nu_{\z,\m}^{\mathrm h})
\rangle}
\right),
\end{equation}
where, as before,  $A\subset\Lambda^*$ is finite, $q\in\Lambda^*$, each
$P_{\ell}$, depending on $(\blambda,\m)$, is a polynomial in the non-zero mode $f-f_0$, and
each $G_{\ell}$, depending on $(\blambda,\m)$ is a bounded continuous function on $C^0(\Sigma_{\bsigma,\bm\xi};\S^1).$

The conditions $\ell,q\in\Lambda^*$ ensure that the functional is well-defined on the compactified field and is invariant under addition of $2\pi\Lambda$. Moreover, the space $\mathcal{E}_\Lambda^{\m}(\Sigma;\a)$ does not depend on the choice of the representative $\Omega_{\blambda}+\nu^{\mathrm{h}}_{\z,\m}$ of its cohomology class. 

As before, we define the corresponding $\mathcal L^{\infty,p}$-norm for $p\geq 1$ by 
\begin{equation}\label{eq:L-infty-p-magnetic-norm}
\begin{aligned}
\|F\|_{\mathcal L^{\infty,p}_{\Lambda,\m}}:=\sup_{\blambda\in\Lambda^{2\bbg}}
\Bigg(\int_{\a/(2\pi\Lambda)}\E\Big[
&e^{-\frac{1}{2\pi}\langle \d\x_g,\Omega_{\blambda}+\nu_{\z,\m}^{\mathrm h}\rangle_2
-\frac{1}{4\pi}\|\d f_{\blambda}\|_2^2}
\\&\times
\left|F\left(c+\x_g
+I^{\bsigma}_{x_0}(\Omega_{\blambda})
+I^{\bm\xi}_{x_0}(\nu_{\z,\m}^{\mathrm h})
\right)\right|^p\Big]\d c\Bigg)^{1/p}.
\end{aligned}
\end{equation}
Here, as in the previous subsection,
$\Omega_{\blambda}
=
\Omega_{\blambda}^{\mathrm h}
+
\d f_{\blambda}$ and we denote by $\mathcal L^{\infty,p}_{\Lambda,\m}(\Sigma;\a)$ the completion of $\mathcal E_{\Lambda,\m}(\Sigma;\a)$ with respect to this norm.

Following the same lines as in the proof of Lemma \ref{lem:L-infty-p-independent-representatives}, we have:
\begin{lemma}\label{L-infty-p-independent-representatives-magnetic}
    The norm $\norm{\cdot}_{\mathcal{L}^{\infty,p}_{\Lambda,\m}}$ does not depend on the choice of representatives $\Omega_{\blambda}\in [\Omega_{\blambda}]\in H^1_{\Lambda}(\Sigma;\a)$.
 
\end{lemma}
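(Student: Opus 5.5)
We follow the proof of Lemma \ref{lem:L-infty-p-independent-representatives}, now carrying the fixed magnetic background along. Fix $\blambda\in\Lambda^{2\bbg}$ and take two closed representatives $\Omega_{\blambda}$ and $\Omega'_{\blambda}=\Omega_{\blambda}+\d h_{\blambda}$ of the same class, where $h_{\blambda}\in C^\infty(\Sigma;\a)$ is provided by Lemma \ref{lem:a-valued-absolute-relative-exactness}. The aim is to show that each $\blambda$-summand inside the supremum in \eqref{eq:L-infty-p-magnetic-norm} equals a canonical expression built only from the harmonic representative $\Omega_{\blambda}^{\mathrm h}$ and $\nu^{\mathrm h}_{\z,\m}$. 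Then the summands agree for any two choices, and so does the supremum.

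\textbf{Step 1: splitting off the exact part.} Write $\Omega_{\blambda}=\Omega_{\blambda}^{\mathrm h}+\d f_{\blambda}$. On the cut surface $\Sigma_{\bsigma,\bxi}$ the primitive decomposes as
\[
I^{\bsigma}_{x_0}(\Omega_{\blambda})=I^{\bsigma}_{x_0}(\Omega^{\mathrm h}_{\blambda})+f_{\blambda}-f_{\blambda}(x_0),
\]
while the magnetic primitive $I^{\bxi}_{x_0}(\nu^{\mathrm h}_{\z,\m})$ is unchanged. The linear factor splits as
\[
\langle \d\x_g,\Omega_{\blambda}+\nu^{\mathrm h}_{\z,\m}\rangle_2=\langle \d\x_g,\d f_{\blambda}\rangle_2+\langle \d\x_g,\Omega^{\mathrm h}_{\blambda}+\nu^{\mathrm h}_{\z,\m}\rangle_2 .
\]

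\textbf{Step 2: Girsanov on the exact part.} The term $-\frac1{2\pi}\langle\d\x_g,\d f_{\blambda}\rangle_2$ is a Gaussian variable with variance $\frac1{2\pi}\|\d f_{\blambda}\|_2^2$, since the covariance of $\x_g$ is $G_g$, the kernel of $2\pi\Delta_g^{-1}$. So Girsanov's theorem with the weight $e^{-\frac1{2\pi}\langle\d\x_g,\d f_{\blambda}\rangle_2-\frac1{4\pi}\|\d f_{\blambda}\|_2^2}$ shifts $\x_g$ to $\x_g-(f_{\blambda}-m_g(f_{\blambda}))$. This cancels $f_{\blambda}$ in the argument of $F$ up to the constant $m_g(f_{\blambda})-f_{\blambda}(x_0)$. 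That constant is absorbed by translating the zero mode $c\in\a/(2\pi\Lambda)$, which leaves $\d c$ invariant.

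\textbf{Step 3: the harmonic factor.} After the shift, the remaining linear factor $\langle\d\x_g,\Omega^{\mathrm h}_{\blambda}+\nu^{\mathrm h}_{\z,\m}\rangle_2$ acquires the deterministic extra term $-\langle \d f_{\blambda},\Omega^{\mathrm h}_{\blambda}+\nu^{\mathrm h}_{\z,\m}\rangle_2$. Its $\Omega^{\mathrm h}_{\blambda}$ part vanishes because $\d^*\Omega^{\mathrm h}_{\blambda}=0$.

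\textbf{Main obstacle.} The delicate point is the term $\langle\d f_{\blambda},\nu^{\mathrm h}_{\z,\m}\rangle_2$. Here $\nu^{\mathrm h}_{\z,\m}$ is singular at $\z$ (like $m_j\d\theta$), is not in $L^2$, and $\d^*$ must be read distributionally. We integrate by parts on $\Sigma_{\z,\eps,g}$ and let $\eps\to0$. The interior term vanishes since $\nu^{\mathrm h}_{\z,\m}$ is coclosed on $\Sigma_\z$. The boundary terms are $\int_{\partial B(z_j,\eps)} f_{\blambda}\,{*\nu^{\mathrm h}_{\z,\m}}$, and these tend to $0$ because $*(m_j\d\theta)=m_j\,\d r/r$ has vanishing pullback to the circles, leaving only $O(\eps)$ contributions from the bounded remainder of $\nu^{\mathrm h}_{\z,\m}$.

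The same reasoning gives the Gaussian part: $\langle\d\x_g,\nu^{\mathrm h}_{\z,\m}\rangle_2=\langle \x_g,\d^*\nu^{\mathrm h}_{\z,\m}\rangle_2=0$ almost surely, justified by the $g$-regularization of $\x_g$ and the same cut-off argument. Hence the whole $\blambda$-summand reduces to
\[
\int_{\a/(2\pi\Lambda)}\E\Bigl[\bigl|F\bigl(c+\x_g+I^{\bsigma}_{x_0}(\Omega^{\mathrm h}_{\blambda})+I^{\bxi}_{x_0}(\nu^{\mathrm h}_{\z,\m})\bigr)\bigr|^p\Bigr]\d c ,
\]
which does not depend on the representative. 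Taking the supremum over $\blambda$ concludes the proof.
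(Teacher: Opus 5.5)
Your argument is correct and is essentially the paper's: the paper proves this lemma by repeating the proof of Lemma~\ref{lem:L-infty-p-independent-representatives}, with the magnetic background carried along unchanged. That proof consists of the Girsanov shift $\x_g\mapsto\x_g-(f_{\blambda}-m_g(f_{\blambda}))$, absorption of the resulting constant into the zero mode, and the vanishing of pairings against coclosed forms. Your explicit cut-off integration by parts near $\z$, showing that $\langle \d f_{\blambda},\nu^{\mathrm h}_{\z,\m}\rangle_2$ and $\langle \d\x_g,\nu^{\mathrm h}_{\z,\m}\rangle_2$ vanish, supplies exactly the detail the paper leaves implicit.
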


\begin{definition}\label{def:magnetic-correlation}
For $F\in\mathcal E_{\Lambda}^{\m}(\Sigma;\a)$, we define the path integral with
magnetic insertions by
\begin{equation}\label{eq:magnetic-path-integral}
\begin{aligned}
&\la F V^g_{(0,\m)}(\bv)\ra_{\Sigma,g}
\\&:=
\left(\frac{\v_g(\Sigma)}{\det'(\Delta_g)}\right)^{r/2}
\sum_{\blambda\in\Lambda^{2\bbg}}
e^{
-\frac{1}{4\pi}\norm{\Omega_{\blambda}}^2_2
-\frac{1}{4\pi}\norm{\nu_{\z,\m}^{\mathrm h}}^2_{g,0}
-\frac{1}{2\pi}
\la\Omega_{\blambda},\nu_{\z,\m}^{\mathrm h}\ra_2
}
\\&\quad
\times
\int_{\a/(2\pi\Lambda)}
\E\Bigg[
e^{
-\frac{1}{2\pi}
\la \d\x_g,\Omega_{\blambda}+\nu_{\z,\m}^{\mathrm h}\ra_2
}
F(\Phi_g^{\blambda,\m})
e^{
-\frac{\bi}{4\pi}
\la QK_g,\Phi_g^{\blambda,\m}\ra^\reg_g
-\sum_{i=1}^r
\mu_i M^g_{\gamma e_i}(\Phi^{\blambda,\m}_g,\Sigma)
}
\Bigg]\d c,
\end{aligned}
\end{equation}
where the regularized curvature term is
$$
\begin{aligned}
\la QK_g,\Phi_g^{\blambda,\m}\ra^\reg_g
:=
&
\int_\Sigma
\la Q,c+\x_g\ra K_g\d\v_g
+
\int^\reg_{\Sigma_{\bsigma}}
\la Q,I^{\bsigma}_{x_0}(\Omega_{\blambda})\ra
K_g\d\v_g
\\
&+
\int^\reg_{\Sigma}
\la Q,I^{\bm\xi}_{x_0}(\nu_{\z,\m}^{\mathrm h})\ra
K_g\d\v_g.
\end{aligned}
$$
\end{definition}

\begin{proposition}\label{prop:magnetic-path-integral-well-defined}
The map $F\mapsto
\la F V^g_{(0,\m)}(\bv)\ra_{\Sigma,g},
\,
F\in\mathcal E_{\Lambda}^{\m}(\Sigma;\a),$
is well-defined and finite and
extends uniquely to a continuous linear functional on
$ \mathcal L^{\infty,p}_{\Lambda,\m}(\Sigma;\a)$ for $p>1.$
Moreover, the extended functional does not depend on the base point
$x_0\in\Sigma$, on the choice of the closed representatives
$\Omega_{\blambda}\in[\Omega_{\blambda}]$, on the choice of the geometric
symplectic basis $\bsigma$, or on the auxiliary defect graph $\bm\xi$, once the
marked tangent vectors $\bv=((z_1,v_1),\ldots,(z_n,v_n))$
are fixed.
\end{proposition}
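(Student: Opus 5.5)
The argument follows the proof of Proposition \ref{prop:path-integral-well-defined}, with the magnetic background $\nu^{\mathrm h}_{\z,\m}$ added to the topological sector. We first treat $F\in\mathcal E^{\m}_\Lambda(\Sigma;\a)$ and fix $\blambda\in\Lambda^{2\bbg}$. We write $\Omega_{\blambda}=\Omega^{\mathrm h}_{\blambda}+\d f_{\blambda}$ and apply the Girsanov transform to the shift $\x_g\mapsto\x_g+f_{\blambda}-m_g(f_{\blambda})$. This removes the $\d f_{\blambda}$ part of the linear factor $e^{-\frac1{2\pi}\la\d\x_g,\Omega_{\blambda}\ra_2}$ against $e^{-\frac1{4\pi}\|\d f_{\blambda}\|_2^2}$, and the constant is absorbed into $c$. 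The linear term $\la\d\x_g,\nu^{\mathrm h}_{\z,\m}+\Omega^{\mathrm h}_{\blambda}\ra_2=\la\x_g,\d^*(\cdot)\ra_2$ vanishes, because both forms are coclosed. Here one has to check that the distributional integration by parts is legitimate near $\z$, where $\nu^{\mathrm h}=m_j\d\theta+O(1)$ is in $L^1$ and $\d\theta$ is coclosed. The prefactors then combine into
\[
e^{-\frac1{4\pi}\|\Omega^{\mathrm h}_{\blambda}\|_2^2-\frac1{4\pi}\|\nu^{\mathrm h}_{\z,\m}\|_{g,0}^2-\frac1{2\pi}\la\Omega^{\mathrm h}_{\blambda},\nu^{\mathrm h}_{\z,\m}\ra_2}.
\]
The cross term $\la\Omega^{\mathrm h}_{\blambda},\nu^{\mathrm h}_{\z,\m}\ra_2$ is linear in $\blambda$ and finite, since $\nu^{\mathrm h}\in L^1$ and $\Omega^{\mathrm h}$ is smooth. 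The regularized norm $\|\nu^{\mathrm h}_{\z,\m}\|_{g,0}^2$ is finite by Lemma \ref{lem:toda-reg-norm}.

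Next we bound each sector by Hölder, exactly as before. The curvature phase has modulus one, since $Q$ is real and all the regularized terms are real. The $F$ factor is controlled by $\|F\|_{\mathcal L^{\infty,p}_{\Lambda,\m}}$ through Lemma \ref{L-infty-p-independent-representatives-magnetic}. For the exponential moment of $\sum_i\mu_iM^g_{\gamma e_i}(\Phi^{\blambda,\m}_g,\Sigma)$ we use Proposition \ref{prop:exp-moment-shifted}. The deterministic shift $I^{\bsigma}_{x_0}(\Omega^{\mathrm h}_{\blambda})+I^{\bxi}_{x_0}(\nu^{\mathrm h}_{\z,\m})$ is single-valued modulo $2\pi\Lambda$, and $\la e_i,2\pi\Lambda\ra\subset 2\pi\gamma^{-1}\Z$, so $e^{\bi\gamma\la e_i,\cdot\ra}$ is a well-defined unimodular factor. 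Following the proof of \cite[Proposition 6.4]{Guillarmou:2023exh}, we cover $\Sigma$ by finitely many discs $D'$ with $\overline{D'}\neq\Sigma$, decompose $\x_g$ on each by the Markov property (Proposition \ref{prop:Markov-property}), and treat the harmonic part as the random shift $Z$. This gives a bound uniform in $\blambda$. The remaining sum $\sum_{\blambda}e^{-\frac1{4\pi}(\|\Omega^{\mathrm h}_{\blambda}\|^2+2\la\Omega^{\mathrm h}_{\blambda},\nu^{\mathrm h}\ra)}$ converges because the form is a positive-definite quadratic plus a linear term on the lattice. Together these give $|\la FV^g_{(0,\m)}(\bv)\ra|\le C\|F\|_{\mathcal L^{\infty,p}_{\Lambda,\m}}$, and hence the continuous extension to the completion.

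For the independence statements:
\begin{itemize}
\item A change of $x_0$ shifts the primitives by a constant, which is absorbed into $c$; test functions are $2\pi\Lambda$-periodic and $\d c$ is translation invariant.
\item A change of representative of $[\Omega_{\blambda}]$ is handled by the Girsanov argument above, which reduces every choice to the harmonic one.
\item A change of $\bsigma$ reindexes the $\blambda$-sum. By Lemma \ref{lem:Invariance under change of symplectic basis} the curvature term moves by $8\pi^2\Lambda$, giving the phase $e^{-\frac{\bi}{4\pi}\la Q,8\pi^2\lambda\ra}=e^{-2\pi\bi\la Q,\lambda\ra}=1$ because $Q\in\Lambda^*$. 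The cross term and the cut-dependence of $I^{\bsigma}$ near the defect graph also need checking: the periods of $\nu^{\mathrm h}$ on $\bsigma$ vanish by Proposition \ref{prop:magnetic-background}, so the fields $\Phi^{\blambda,\m}_g$ agree modulo $2\pi\Lambda$.
\item A change of defect graph $\bxi$ with $\bv$ fixed changes $I^{\bxi}_{x_0}(\nu^{\mathrm h})$ only by jumps in $2\pi\Lambda$ across cuts, which leaves $F$ and the GMC unchanged. The regularized magnetic curvature term changes by $8\pi^2\Lambda$ (Lemma \ref{lem:defect-graph-independence}), again a trivial phase.
\end{itemize}

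\textbf{Main obstacle.} I expect the hardest step to be the uniform-in-$\blambda$ exponential moment bound near the magnetic points. There the shift is singular, the primitive $I^{\bxi}(\nu^{\mathrm h})$ winds around $z_j$, and it is discontinuous across the defect graph. I would handle this by noting that only $e^{\bi\gamma\la e_i,\cdot\ra}$ of the shift ever enters, and this is a measurable unimodular function $f_i$. The bound of Proposition \ref{prop:exp-moment-shifted} needs only $|f_i|\le1$ and the integrability of $U^2$, which holds since $\gamma^2\la e_i,e_j\ra<2$ and the regularity of the shift is irrelevant. The independence of $\bxi$ also depends on $\bv$ being fixed, since the spin dependence enters through the tangency condition on the arcs.
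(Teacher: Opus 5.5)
Your proposal is correct and follows essentially the same route as the paper: reduce each sector to the harmonic representative by Girsanov and a zero-mode translation, bound it by H\"older together with the exponential-moment estimate of Proposition~\ref{prop:exp-moment-shifted}, sum over $\blambda$ by completing the square, and obtain the independence statements from zero-mode translation, Girsanov, Lemmas~\ref{lem:Invariance under change of symplectic basis} and~\ref{lem:defect-graph-independence}, and $Q\in\Lambda^*$. You in fact give more detail than the paper, which only indicates the changes relative to Proposition~\ref{prop:path-integral-well-defined}; the one small slip is the sign of the Cameron--Martin shift paired with $e^{-\frac{1}{2\pi}\la\d\x_g,\d f_{\blambda}\ra_2-\frac{1}{4\pi}\|\d f_{\blambda}\|_2^2}$, which should be $\x_g\mapsto\x_g-f_{\blambda}+m_g(f_{\blambda})$ so that it cancels the $f_{\blambda}-f_{\blambda}(x_0)$ coming from $I^{\bsigma}_{x_0}(\d f_{\blambda})$.
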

\begin{proof}
The proof follows the same lines as Proposition
\ref{prop:path-integral-well-defined}, and so we only indicate the changes caused by the magnetic background.

Write $\Omega_{\blambda}=\Omega_{\blambda}^{\mathrm h}+\d f_{\blambda}.$
By the Girsanov transform and by translating the zero mode
$c\in\a/(2\pi\Lambda)$, the $\blambda$-summand in
\eqref{eq:magnetic-path-integral} is equal to the same expression with
$\Omega_{\blambda}$ replaced by $\Omega_{\blambda}^{\mathrm h}$, namely it is bounded by
$$
C\,
\|F\|_{\mathcal L^{\infty,p}_{\Lambda,\m}}
e^{
-\frac{1}{4\pi}\|\Omega_{\blambda}^{\mathrm h}\|_2^2
-\frac{1}{2\pi}
\langle \Omega_{\blambda}^{\mathrm h},\nu_{\z,\m}^{\mathrm h}\rangle_2
-\frac{1}{4\pi}\|\nu_{\z,\m}^{\mathrm h}\|_{g,0}^2
},
$$
where $C$ is independent of $\blambda$ and $F$. 
The map $\blambda\longmapsto \|\Omega_{\blambda}^{\mathrm h}\|_2^2$ is a positive-definite quadratic form on the lattice $\Lambda^{2\bbg}$, while
$\blambda\mapsto\langle \Omega_{\blambda}^{\mathrm h},\nu_{\z,\m}^{\mathrm h}\rangle_2$
is linear. Hence the series converges after completing the square.

The independence of the auxiliary defect graph follows from Lemma \ref{lem:defect-graph-independence}.
\end{proof}
\begin{corollary}\label{cor:rotation-covariance-magnetic}
Let
$$
r_{\btheta}\bv
:=
\bigl((z_1,r_{\theta_1}v_1),\ldots,(z_{\nm},r_{\theta_{\nm}}v_{\nm})\bigr),
\qquad
\btheta=(\theta_1,\ldots,\theta_{\nm})\in\R^{\nm},
$$
where $r_{\theta_j}$ denotes rotation by angle $\theta_j$ in the oriented tangent space
$T_{z_j}\Sigma$.
Then, for every $F\in\mathcal{L}^{\infty,p}_{\Lambda,\m}(\Sigma;\a)$, we have
$$
\la FV^g_{(0,\m)}(r_{\btheta}\bv)\ra_{\Sigma,g}
=
e^{-i\sum_{j=1}^{\nm}\la Q,m_j\ra\,\theta_j}
\la FV^g_{(0,\m)}(\bv)\ra_{\Sigma,g}.
$$
\end{corollary}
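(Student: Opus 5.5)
We track where the tangent vectors $v_j$ enter the definition \eqref{eq:magnetic-path-integral}. They appear only through the defect graph: the arcs $\xi_p$ must leave $z_j$ in direction $v_j$. Everything else is independent of $\bv$. This includes the harmonic form $\nu^{\mathrm h}_{\z,\m}$, its regularized norm, the Gaussian factors, and the GMC term, since $M^g_{\gamma e_i}$ only involves $e^{\bi\gamma\la e_i,\cdot\ra}$ and the jumps of the primitive across the cuts lie in $2\pi\Lambda$.

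Given a defect graph $\bxi$ adapted to $\bv$, we modify each arc in a small disk $\psi_j(\D)$ around $z_j$ to obtain a graph $\bxi^{\btheta}$ adapted to $r_{\btheta}\bv$. The modified arc enters $z_j$ along the direction $r_{\theta_j}v_j$ and spirals by total angle $\theta_j$ before rejoining the original arc. By Proposition \ref{prop:magnetic-path-integral-well-defined}, the value does not depend on the defect graph once $\bv$ is fixed, so we may compute with $\bxi^{\btheta}$. Since rotations by $2\pi$ bring us back to the same $\bv$, it suffices to treat small $\theta_j$ and to work one point at a time.

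The core step is to compare the regularized magnetic curvature terms for $\bxi$ and $\bxi^{\btheta}$. By linearity, and after writing $\nu^{\mathrm h}_{\z,\m}$ componentwise in the basis $\omega^\vee_\l$, this reduces to the scalar computation of \cite[Proposition/Lemma on spin]{Guillarmou:2023exh}. Moving the cut changes the primitive $I^{\bxi}_{x_0}$ by $2\pi\sum(\text{charges})$ on the thin region swept between the old and new arcs, which is an $O(\theta_j)$ sector near $z_j$. It also changes the geodesic curvature integrals $\int_{\xi_p}k_g\,\d\ell_g$. By Gauss--Bonnet on the swept region, the total change is
\[
-2\kappa\,\theta_j\;+\;(\text{area term})\;+\;(\text{curvature correction}),
\]
where the area term is the curvature integral over the swept region times the jump. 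Gauss--Bonnet converts the area term plus the difference of geodesic curvatures into the turning angle $\theta_j$ at $z_j$. The jump of the primitive across the arcs at $z_j$ sums to $2\pi m_j$, because around $z_j$ the total monodromy is $2\pi m_j$. Taken together, the regularized term $\int^{\reg}\la Q,I^{\bxi}_{x_0}(\nu^{\mathrm h}_{\z,\m})\ra K_g\,\d\v_g$ changes by $4\pi\la Q,m_j\ra\theta_j$, up to an element of $8\pi^2\la Q,\Lambda\ra$. Multiplying by $-\frac{\bi}{4\pi}$ gives the phase $e^{-\bi\la Q,m_j\ra\theta_j}$. The lattice ambiguity is trivial because $Q\in\Lambda^*$.

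Finally, we extend from $F\in\mathcal E^{\m}_\Lambda$ to $\mathcal L^{\infty,p}_{\Lambda,\m}$ by continuity. There is one technicality: $F$ may depend on $G_\ell(e^{\bi\la q,I^{\bxi}\ra})$, which changes when the cut moves. Since $q\in\Lambda^*$, that exponential is insensitive to the $2\pi\Lambda$ jumps, so $F$ is unchanged.

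The main obstacle is the sign and normalization bookkeeping in the Gauss--Bonnet step. This means getting exactly $\theta_j$, rather than $2\pi-\theta_j$ or $-\theta_j$, with the orientation conventions for $\kappa(\xi_p)$ and the left normal. I would fix these by checking against the scalar case of \cite{Guillarmou:2023exh}, where the spin is $\la Q,m\ra$ with the same sign convention.
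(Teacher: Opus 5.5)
Your proposal is correct and follows essentially the paper's route: use defect-graph independence to switch to arcs that leave $z_j$ along $r_{\theta_j}v_j$, then apply Gauss--Bonnet to the region swept between the old and new arcs. That region has a corner of angle $\theta_j$ at $z_j$ and a cusp where the arcs rejoin, so the jump times $\int K_g\,\d\v_g$ plus the change in the $-2\kappa\int k_g\,\d\ell_g$ counterterms equals $2\kappa\theta_j=4\pi m_j\theta_j$, up to the orientation sign; the separate ``$-2\kappa\theta_j$'' term in your display should not be there. The only difference is that the paper takes the chain graph $z_1\to\cdots\to z_{\nm}$, so the rotated point is a leaf with a single arc of jump $2\pi m_1$ and one whole edge is replaced, whereas you modify all arcs at $z_j$ locally and sum their jumps to the monodromy $2\pi m_j$, which also spells out the non-leaf case that the paper covers with ``the same argument applies''.
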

\begin{proof}
It is enough to prove the claim when only one tangent vector is rotated and $F\in\mathcal{E}_{\Lambda}^{\m}(\Sigma;\a)$, since the
general case then follows by continuity and applying the one-point statement successively.

Up to relabelling the marked points, we may assume $m_1\le \cdots \le m_{\nm}.$ We choose the canonical defect graph $z_1\to z_2\to \cdots \to z_{\nm}.$ We first rotate the first tangent vector. Let
$$
\bv'=\bigl((z_1,r_{\theta_1}v_1),(z_2,v_2),\ldots,(z_{\nm},v_{\nm})\bigr).
$$
We shall show that
\begin{equation}\label{eq:one-point-rotation-canonical-proof}
\la FV^g_{(0,\m)}(\bv')\ra_{\Sigma,g}
=
e^{-i\la Q,m_1\ra\theta_1}\,
\la FV^g_{(0,\m)}(\bv)\ra_{\Sigma,g}.
\end{equation}

Let $\xi_1$ be the first edge of the canonical defect graph, joining $z_1$ to $z_2$.
Choose another smooth arc $\widetilde\xi$ from $z_1$ to $z_2$ such that $\widetilde\xi'(0)=\widetilde\lambda_1\,r_{\theta_1}v_1$ and $\widetilde\xi'(1)=\widetilde\lambda_2\,v_2$
for some $\widetilde\lambda_1,\widetilde\lambda_2>0$, and such that $\widetilde\xi$ does not
meet the other edges of the defect graph. Replacing $\xi_1$ by $\widetilde\xi$ gives a
defect graph adapted to $\bv'$.

Since the correlation functions do not depend on the defect graph (Lemma \ref{lem:defect-graph-independence}), it suffices to compare
the corresponding regularized magnetic curvature terms. Let $D$ be the domain bounded
by $\xi_1$ and $\widetilde\xi$. We orient $\partial D$ so that $\widetilde\xi$ is positively
oriented and $\xi_1$ negatively oriented.

On the domain
$D$, these two primitives differ by the jump across the first defect line. For the
canonical defect graph, this jump is exactly $2\pi m_1$, because $z_1$ is a leaf of the
tree. Thus
$$
I_{x_0}^{\widetilde\xi}(\nu^{\mathrm h}_{\z,\m})
=
I_{x_0}^{\xi}(\nu^{\mathrm h}_{\z,\m})-2\pi m_1
\qquad\text{on }D.
$$
Therefore, by definition of the regularized magnetic curvature term,
\begin{align*}
&\int_\Sigma^{\reg}
\la Q,I_{x_0}^{\widetilde\xi}(\nu^{\mathrm h}_{\z,\m})\ra K_g\,\d v_g
-
\int_\Sigma^{\reg}
\la Q,I_{x_0}^{\xi}(\nu^{\mathrm h}_{\z,\m})\ra K_g\,\d v_g \\
&=
\int_D \la Q,-2\pi m_1\ra K_g\,\d v_g
+
4\pi\la Q,m_1\ra
\left(
\int_{\xi_1}k_g\,\d\ell_g-\int_{\widetilde\xi}k_g\,\d\ell_g
\right).
\end{align*}

Applying the Gauss--Bonnet theorem, we obtain
$$
-\int_D K_g\,\d v_g
=
2\int_{\widetilde\xi}k_g\,\d\ell_g
-
2\int_{\xi_1}k_g\,\d\ell_g
+
2\theta_1.
$$
Multiplying by $2\pi\la Q,m_1\ra$ and substituting this into the previous identity, the geodesic-curvature terms cancel and we get
$$
\int_\Sigma^{\reg}
\la Q,I_{x_0}^{\widetilde\xi}(\nu^{\mathrm h}_{\z,\m})\ra K_g\,\d v_g
-
\int_\Sigma^{\reg}
\la Q,I_{x_0}^{\xi}(\nu^{\mathrm h}_{\z,\m})\ra K_g\,\d v_g
=
4\pi\la Q,m_1\ra\theta_1.
$$

This proves \eqref{eq:one-point-rotation-canonical-proof}. The same argument applies when rotating any one tangent vector, and rotating the tangent vectors one after another and applying the one-point formula successively, we obtain
$$
\la FV^g_{(0,\m)}(r_{\btheta}\bv)\ra_{\Sigma,g}
=
e^{-i\sum_{j=1}^{\nm}\la Q,m_j\ra\theta_j}
\la FV^g_{(0,\m)}(\bv)\ra_{\Sigma,g}.
$$
This proves the corollary.
\end{proof}
\paragraph{\emph{Electric operators}}

We now construct electric operators in the presence of the magnetic background
introduced above. The purely electric case is obtained by taking
$\m=0$.

Let $x\in\Sigma_{\z}:=\Sigma\setminus\{z_1,\ldots,z_{\nm}\}$ and let
$\alpha\in\a$. For an equivariant field
$u\in H^s_{\Gamma,\m}(\widetilde\Sigma_{\z};\a)$, we define the regularized electric
operator by
$$
V^g_{\alpha,\epsilon}(u,x)
:=
\epsilon^{-\frac{|\alpha|^2}{2}}
e^{\bi\left\langle \alpha,u_{g,\epsilon}(x)\right\rangle},
$$
where $u_{g,\epsilon}$ denotes a $g$-regularization of $u$ at scale $\epsilon$.
When $u=\Phi_g^{\blambda,\m}$, we simply write
$$
V^g_{\alpha,\epsilon}(x)
:=
V^g_{\alpha,\epsilon}(\Phi_g^{\blambda,\m},x).
$$

Let $\mathbf x=(x_1,\ldots,x_{\ne})\in \Sigma_{\z}^{\ne}$
be distinct points, also distinct from the magnetic insertion points $\z$, and let $\balpha=(\alpha_1,\ldots,\alpha_{\ne})\in(\Lambda^*)^{\ne}$
be electric charges. We set
$$
V^{g,\epsilon}_{(\balpha,0)}(u,\mathbf x)
:=
\prod_{j=1}^{\ne}
V^g_{\alpha_j,\epsilon}(u,x_j),
$$
and again write $V^{g,\epsilon}_{(\balpha,0)}(\mathbf x)$ when the field is
$\Phi_g^{\blambda,\m}$.
The condition $\alpha_j\in\Lambda^*$ ensures that the operator is well-defined on
the compactified field, namely it is invariant under the addition of $2\pi\Lambda$ to the zero mode. 

We introduce the electric shift
$$
u_{\mathbf x}(y):=\bi\sum_{j=1}^{\ne}\alpha_j G_g(y,x_j),\qquad y\in\Sigma,
$$
where $G_g$ is the Green function of $\Delta_g$ with zero average. Notice that
$u_{\mathbf x}\in H^s(\Sigma;\a_{\C})$ for every $s<1$. This will be the shift that appears
after applying the Girsanov transform to the regularized electric insertions.

We now define the corresponding electric-magnetic norm. For $p\geq1$, set
\begin{equation}\label{eq:L-infty-p-electric-magnetic-norm}
\begin{aligned}
\|F\|_{\mathcal L^{\infty,p}_{\Lambda,\balpha,\m}}
:=
\sup_{\blambda\in\Lambda^{2\bbg}}
\Bigg(
\int_{\a/(2\pi\Lambda)}
\E\Big[
&
e^{-\frac{1}{2\pi}
\langle \d\x_g,\Omega_{\blambda}+\nu_{\z,\m}^{\mathrm h}\rangle_2
-\frac{1}{4\pi}\|\d f_{\blambda}\|_2^2}
\\
&\times
\left|
F\left(
c+\x_g
+u_{\mathbf x}
+I^{\bsigma}_{x_0}(\Omega_{\blambda})
+I^{\bm\xi}_{x_0}(\nu_{\z,\m}^{\mathrm h})
\right)
\right|^p
\Big]
\d c
\Bigg)^{1/p}.
\end{aligned}
\end{equation}
Here
$\Omega_{\blambda}
=
\Omega_{\blambda}^{\mathrm h}
+
\d f_{\blambda}$
as before.
We denote by $\mathcal L^{\infty,p}_{\Lambda,\balpha,\m}(\Sigma;\a)$
the completion of $\mathcal E^\m_\Lambda(\Sigma;\a)$ with respect to this norm.

As in Lemma \ref{L-infty-p-independent-representatives-magnetic}, the norm is
independent of the choice of representative of the cohomology class of
$\Omega_{\blambda}$:
\begin{lemma}\label{lem:L-infty-p-electric-magnetic-independent-representatives}
The norm $\|\cdot\|_{\mathcal L^{\infty,p}_{\Lambda,\balpha,\m}}$ does not depend on the choice of representatives $\Omega_{\blambda}\in[\Omega_{\blambda}]\in H^1_\Lambda(\Sigma;\a).$

\end{lemma}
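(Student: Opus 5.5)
The argument follows the proof of Lemma \ref{lem:L-infty-p-independent-representatives}. The only difference is that the argument of $F$ now also contains the fixed magnetic primitive $I^{\bm\xi}_{x_0}(\nu_{\z,\m}^{\mathrm h})$ and the deterministic electric shift $u_{\mathbf x}$, and both are unaffected by a change of representative of $[\Omega_{\blambda}]$.

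Fix $\blambda\in\Lambda^{2\bbg}$ and two representatives $\Omega_{\blambda}$ and $\Omega'_{\blambda}=\Omega_{\blambda}+\d h_{\blambda}$ with $h_{\blambda}\in C^\infty(\Sigma;\a)$. Write $\Omega_{\blambda}=\Omega^{\mathrm h}_{\blambda}+\d f_{\blambda}$; then $\Omega'_{\blambda}=\Omega^{\mathrm h}_{\blambda}+\d(f_{\blambda}+h_{\blambda})$. It suffices to show that the $\blambda$-term in \eqref{eq:L-infty-p-electric-magnetic-norm} equals the same expression with $\Omega_{\blambda}$ replaced by $\Omega^{\mathrm h}_{\blambda}$ (and $f_{\blambda}=0$), since this expression depends only on the class.

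\textbf{Step 1: split the Gaussian weight.} Since $\nu^{\mathrm h}_{\z,\m}$ and $\Omega^{\mathrm h}_{\blambda}$ are coclosed, $\langle \d\x_g,\nu^{\mathrm h}_{\z,\m}\rangle_2$ and $\langle \d\x_g,\Omega^{\mathrm h}_{\blambda}\rangle_2$ vanish. For $\nu^{\mathrm h}_{\z,\m}$ this needs a short justification, because it is singular at $\z$: near each $z_j$ it equals $m_j\d\theta$, which is coclosed, and $\d^*\nu^{\mathrm h}_{\z,\m}=0$ on $\Sigma_{\z}$. Pairing against the regularized field and letting the regularization vanish gives zero, or one can treat this factor as a fixed common factor on both sides. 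Hence the exponential weight reduces to $e^{-\frac1{2\pi}\langle \d\x_g,\d f_{\blambda}\rangle_2-\frac1{4\pi}\|\d f_{\blambda}\|_2^2}$. Up to the normalization $\langle \d\x_g,\d f\rangle_2=\langle \x_g,\Delta_g f\rangle_2$, this is exactly the Radon--Nikodym density of the shift $\x_g\mapsto \x_g-f_{\blambda}+m_g(f_{\blambda})$, whose covariance is $\E[\langle \x_g,\Delta_g f\rangle^2]=2\pi\|\d f\|_2^2$.

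\textbf{Step 2: apply the Girsanov transform.} Under this shift, the argument of $F$ becomes
$$
c+\x_g-f_{\blambda}+m_g(f_{\blambda})+u_{\mathbf x}+I^{\bsigma}_{x_0}(\Omega^{\mathrm h}_{\blambda})+f_{\blambda}-f_{\blambda}(x_0)+I^{\bm\xi}_{x_0}(\nu^{\mathrm h}_{\z,\m}),
$$
because $I^{\bsigma}_{x_0}(\Omega_{\blambda})=I^{\bsigma}_{x_0}(\Omega^{\mathrm h}_{\blambda})+f_{\blambda}-f_{\blambda}(x_0)$. The $f_{\blambda}$ terms cancel, leaving the constant $m_g(f_{\blambda})-f_{\blambda}(x_0)\in\a$. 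The translation invariance of the Haar measure $\d c$ on $\a/(2\pi\Lambda)$ absorbs this constant. The shift $u_{\mathbf x}$ is deterministic and independent of $\Omega_{\blambda}$, so it passes through unchanged.

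\textbf{Step 3: conclude.} After Steps 1 and 2, each $\blambda$-term depends only on $\Omega^{\mathrm h}_{\blambda}$, which is determined by the class $[\Omega_{\blambda}]$. Taking the supremum over $\blambda$ proves the lemma, first on $\mathcal E^{\m}_\Lambda$ and hence on its completion.

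The only step needing care is the vanishing of the cross term $\langle \d\x_g,\nu^{\mathrm h}_{\z,\m}\rangle_2$, owing to the non-$L^2$ singularity at $\z$. This is handled as in Step 1, or sidestepped entirely since this factor does not depend on the representative of $[\Omega_{\blambda}]$.
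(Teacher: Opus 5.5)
Your proposal is correct and is essentially the paper's argument. The paper proves this lemma by referring back to Lemma \ref{lem:L-infty-p-independent-representatives}: a Girsanov reduction to the harmonic representative $\Omega^{\mathrm h}_{\blambda}$, with the constant $m_g(f_{\blambda})-f_{\blambda}(x_0)$ absorbed into the zero mode $c$, and with $u_{\mathbf x}$ and $I^{\bm\xi}_{x_0}(\nu^{\mathrm h}_{\z,\m})$ carried along unchanged.

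One small correction to your closing remark: treating $e^{-\frac{1}{2\pi}\la \d\x_g,\nu^{\mathrm h}_{\z,\m}\ra_2}$ as a common factor does not actually sidestep anything. The shift $\x_g\mapsto\x_g-f_{\blambda}+m_g(f_{\blambda})$ multiplies that factor by $e^{\frac{1}{2\pi}\la \d f_{\blambda},\nu^{\mathrm h}_{\z,\m}\ra_2}$, so you still need $\d^*\nu^{\mathrm h}_{\z,\m}=0$ in the sense of currents across the punctures. That is exactly your first justification, and it does hold: $*\d\theta$ is a multiple of $\d r/r$, which vanishes on small circles around each $z_j$, so integration by parts produces no boundary term.
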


For $F\in\mathcal E^\m_\Lambda(\Sigma;\a)$, we define the regularized electric-magnetic
correlation by
$$
\left\langle
F V^{g,\epsilon}_{(\balpha,0)}(\mathbf x)
V^g_{(0,\m)}(\bv)
\right\rangle_{\Sigma,g}
:=
\left\langle
F\,V^{g,\epsilon}_{(\balpha,0)}(\Phi_g^{\blambda,\m},\mathbf x)
V^g_{(0,\m)}(\bv)
\right\rangle_{\Sigma,g},
$$
where the right-hand side is understood through Definition
\ref{def:magnetic-correlation}, with the regularized electric factor inserted in the
expectation.

The path integral with electric and magnetic insertions is then defined, if the limit
exists, by
\begin{equation}\label{eq:electric-magnetic-path-integral-limit}
\left\langle
F V^g_{(\balpha,0)}(\mathbf x)
V^g_{(0,\m)}(\bv)
\right\rangle_{\Sigma,g}
:=
\lim_{\epsilon\to0}
\left\langle
F V^{g,\epsilon}_{(\balpha,0)}(\mathbf x)
V^g_{(0,\m)}(\bv)
\right\rangle_{\Sigma,g}.
\end{equation}
\begin{theorem}\label{thm:mixed operators}
Let $\bx=(x_1,\ldots,x_{n_{\mathfrak e}})\in \Sigma^{n_{\mathfrak e}}$
and $\bv=((z_1,v_1),\ldots,(z_{n_{\mathfrak m}},v_{n_{\mathfrak m}}))
\in (T\Sigma)^{n_{\mathfrak m}},$ with all points $\bx,\z$
pairwise distinct. Let $\balpha=(\alpha_1,\ldots,\alpha_{n_{\mathfrak e}})
\in (\Lambda^*)^{n_{\mathfrak e}},
\,
\m=(m_1,\ldots,m_{n_{\mathfrak m}})
\in \Lambda^{n_{\mathfrak m}},$
and assume the magnetic neutrality condition
\begin{equation}\label{eq:mixed-neutrality}
\sum_{j=1}^{n_{\mathfrak m}}m_j=0,
\end{equation}
as well as the electric charge condition
\begin{equation}\label{eq:mixed-charge-cond}
\alpha_j-Q\in {\mathcal C}_+,
\qquad
j=1,\ldots,n_{\mathfrak e},
\end{equation}
where $\mathcal{C}_+ $ is the open positive Weyl chamber
\begin{equation}\label{eq:weyl chamber}
{\mathcal C}_+
=
\{u\in\mathfrak a:\langle u,e_i\rangle>0,\ i=1,\ldots,r\}.
\end{equation}

Then, for every $F\in\mathcal E^\m_\Lambda(\Sigma;\a),$
the limit
$$
\la\,
F V^g_{(\balpha,0)}(\bx)V^g_{(0,\m)}(\bv)
\ra_{\Sigma,g}
:=
\lim_{\epsilon\to0}
\la\,
F V^{g,\epsilon}_{(\balpha,0)}(\bx)V^g_{(0,\m)}(\bv)
\ra_{\Sigma,g}
$$
exists and is finite. Moreover, for every $p>1$, the map
$F\mapsto
\la\,
F V^g_{(\balpha,0)}(\bx)V^g_{(0,\m)}(\bv)
\ra_{\Sigma,g},
\,F\in\mathcal E^\m_\Lambda(\Sigma;\a),$
extends uniquely to a continuous linear functional on
$\mathcal L^{\infty,p}_{\Lambda,\balpha,\m}(\Sigma;\a).$
Moreover, for every $F\in\mathcal E_{\Lambda}^{\m}(\Sigma;\a),$
the following hold true:
\begin{itemize}
    \item[\textbf{(i)}] \textbf{Conformal anomaly.}
    If $g'=e^\rho g$, with $\rho\in C^\infty(\Sigma)$, then
    $$
    \frac{
    \la F V^{g'}_{(\balpha,0)}(\bx)V^{g'}_{(0,\m)}(\bv)\ra_{\Sigma,g'}
    }{
    \la F(\cdot-\frac{\bi}{2}Q\rho) V^{g}_{(\balpha,0)}(\bx)V^{g}_{(0,\m)}(\bv)\ra_{\Sigma,g}
    }
    =
    e^{
    \frac{c}{96\pi}\int_\Sigma \bigl(|d\rho|_g^2+2K_g\rho\bigr)\,\d v_g
    -\sum_{j=1}^{n_{\mathfrak e}}\Delta_{(\alpha_j,0)}\rho(x_j)
    -\sum_{j=1}^{n_{\mathfrak m}}\Delta_{(0,m_j)}\rho(z_j)
    },
    $$
    where
    $$
    c=\mathrm{rank}(\mathfrak g)-6\langle Q,Q\rangle,
    \qquad
    \Delta_{(\alpha,m)} =
    \langle\frac{\alpha}{2},\frac{\alpha}{2}-Q\rangle
    +\frac14 |m|^2.
    $$

    \item[\textbf{(ii)}] \textbf{Diffeomorphism invariance.}
    If $\psi:\Sigma'\to\Sigma$ is an orientation-preserving diffeomorphism, then
    $$
    \la
    F(\Phi_{\psi^*g}) V^{\psi^*g}_{(\balpha,0)}(\bx)
    V^{\psi^*g}_{(0,\m)}(\bv)
    \ra_{\Sigma',\psi^*g}
    =
    \la
    F(\Phi_g\circ\psi)
    V^g_{(\balpha,0)}(\psi(\bx))
    V^g_{(0,\m)}(\psi_*\bv)
    \ra_{\Sigma,g},
    $$
    where
    $$
    \psi(\bx):=(\psi(x_1),\ldots,\psi(x_{n_{\mathfrak e}})),
\quad    \psi_*\bv
    :=
    ((\psi(z_1),\d\psi_{z_1}v_1),\ldots,
    (\psi(z_{n_{\mathfrak m}}),\d\psi_{z_{n_{\mathfrak m}}}v_{n_{\mathfrak m}})).
    $$

    \item[\textbf{(iii)}] \textbf{Spin covariance.}
    Let
    $$r_{\btheta}\bv:=((z_1,r_{\theta_1}v_1),\ldots,(z_{n_{\mathfrak m}},r_{\theta_{n_{\mathfrak m}}}v_{n_{\mathfrak m}})),\qquad\btheta=(\theta_1,\ldots,\theta_{n_{\mathfrak m}}).
    $$
    Then
    $$
    \la F V^{g}_{(\balpha,0)}(\bx)V^{g}_{(0,\m)}(r_{\btheta}\bv)
    \ra_{\Sigma,g}= e^{-i\sum_{j=1}^{n_{\mathfrak m}}\langle Q,m_j\rangle\theta_j}\la F V^{g}_{(\balpha,0)}(\bx)V^{g}_{(0,\m)}(\bv)\ra_{\Sigma,g}.
    $$
\end{itemize}
\end{theorem}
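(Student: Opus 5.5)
The plan follows the scalar argument of \cite{Guillarmou:2023exh} (Proposition 6.4 and the mixed-operator theorem there), working sector by sector in $\blambda$ and then summing.

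\textbf{Existence of the limit.} Fix $\blambda$ and $\epsilon>0$. Reduce $\Omega_{\blambda}$ to its harmonic representative as in Proposition \ref{prop:magnetic-path-integral-well-defined}. Then apply the Girsanov transform to the Gaussian factor $\prod_j e^{\bi\la\alpha_j,\x_{g,\epsilon}(x_j)\ra}$. Since $\E[\la\alpha,\x_{g,\epsilon}(x)\ra^2]=|\alpha|^2(\log\epsilon^{-1}+W(x))+o(1)$, the prefactor $\epsilon^{-|\alpha_j|^2/2}$ cancels the divergence. The result is a deterministic factor $\prod_j e^{-\frac{|\alpha_j|^2}{2}W(x_j)}\prod_{j<k}e^{-\la\alpha_j,\alpha_k\ra G_g(x_j,x_k)}$ (up to $o(1)$), together with the shift $\x_g\mapsto\x_g+u_{\bx,\epsilon}$, where $u_{\bx,\epsilon}$ is the regularized version of $u_\bx$. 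The zero mode $c$ yields the phase $e^{\bi\la\sum_j\alpha_j,c\ra}$. The curvature term shifts by $\frac{\bi}{4\pi}\la QK_g,u_{\bx}\ra$, which produces explicit factors $e^{\la Q,\alpha_j\ra(\cdots)}$ involving $\int K_g G_g(x_j,\cdot)\,\d\v_g$. Each GMC $M^g_{\gamma e_i}$ acquires the weight $e^{-\gamma\sum_j\la e_i,\alpha_j\ra G_g(x,x_j)}$, i.e. a factor behaving like $d_g(x,x_j)^{\gamma\la e_i,\alpha_j\ra}$ near $x_j$. The magnetic primitive contributes only a modulus-one phase, so it does not affect convergence.

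\textbf{Main obstacle: uniform control of the shifted GMC.} The key step is controlling the shifted interaction $\sum_i\mu_i\int f_i(x)\,M^g_{\gamma e_i}(\x_g,\d x)$ with $f_i(x)=\prod_j e^{-\gamma\la e_i,\alpha_j\ra G_g(x,x_j)}$. I would use Proposition \ref{prop:exp-moment-shifted}, localizing with a partition of unity so that a Dirichlet GFF appears on a domain $D'\neq\Sigma$; the decomposition of the GFF via the Markov property (Proposition \ref{prop:Markov-property}) supplies the independent smooth field $Z$. The following quantities must be finite:
\begin{itemize}
\item $V$: this requires $\gamma\la e_i,\alpha_j\ra>-2$.
\item $U^2$: this requires roughly $\gamma^2\la e_i,e_i\ra-2\gamma\la e_i,\alpha_j\ra<2$ near $x_j$.
\end{itemize}
I expect both to follow from $\alpha_j-Q\in\mathcal C_+$. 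Indeed, $\la e_i,Q\ra=\gamma\la e_i,e_i\ra/2-2\la e_i,\rho^\vee\ra/\gamma$, and $\la e_i,\rho^\vee\ra=1$. Hence $\la e_i,\alpha_j\ra>\gamma\la e_i,e_i\ra/2-2/\gamma$, and therefore $\gamma^2\la e_i,e_i\ra-2\gamma\la e_i,\alpha_j\ra<4-\gamma^2\la e_i,e_i\ra$. This may not be $<2$, in which case the crude $L^2$ bound fails. If so, I would instead use the Dirichlet $\epsilon$-regularized version and dominated convergence. The fallback is to argue, as in \cite{Guillarmou:2023exh}, that the singular weight paired with the imaginary chaos still defines a distribution via the order-two bound \eqref{eq:order-two-bound} after integrating by parts near $x_j$.

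Given that control, the $\epsilon\to0$ limit of each $\blambda$-summand follows from $L^2$ convergence of the regularized chaos, using H\"older as in Proposition \ref{prop:path-integral-well-defined} with the $\mathcal L^{\infty,p}_{\Lambda,\balpha,\m}$ norm (which already contains $u_\bx$). The bounds are uniform in $\blambda$, so dominated convergence applies to the lattice sum, which is Gaussian after completing the square. This also yields the continuous extension.

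\textbf{Covariance properties.}
\begin{itemize}
\item \textbf{Conformal anomaly (i).} Combine four ingredients: the Polyakov formula \eqref{eq:Polyakov-formula}, which gives $r$ copies of the $\frac{1}{96\pi}$ anomaly; the Weyl rule \eqref{eq:weyl IGMC}, absorbed by the shift $c\mapsto c-\frac{\bi}{2}Q\rho$ since $\gamma\la e_i,Q\ra/2$ reproduces $1-\gamma^2\la e_i,e_i\ra/4$; Lemmas \ref{lem:conformal change of metrics} and \ref{lem:conformal-change-magnetic-curvature} for the curvature term; and Lemma \ref{lem:toda-reg-norm}, which contributes $\frac14|m_j|^2\rho(z_j)$. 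The electric weights come from the change of the Robin mass $W_{g'}=W_g+\rho/2$ under $g'=e^\rho g$. The curvature shift $-\frac{\bi}{2}Q\rho$ gives the term $-6|Q|^2$ in the central charge and $-\la\alpha/2,Q\ra$ in the weights, exactly as in the rank-one bookkeeping, done componentwise.
\item \textbf{Diffeomorphism invariance (ii).} This is a direct transport of all objects using Lemma \ref{lem:diffeomorphism invariance}.
\item \textbf{Spin covariance (iii).} This follows from Corollary \ref{cor:rotation-covariance-magnetic}, since electric insertions do not depend on $\bv$ and the argument commutes with the limit $\epsilon\to0$.
\end{itemize}
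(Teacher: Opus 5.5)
Your overall architecture matches the paper's. You use Girsanov for the electric insertions, a singularly weighted imaginary chaos controlled by Proposition~\ref{prop:exp-moment-shifted}, and $L^2$ convergence uniform in $\blambda$. For (i) you use the Polyakov formula, the Weyl rule and the curvature lemmas; for (ii) transport of all objects; for (iii) Corollary~\ref{cor:rotation-covariance-magnetic}. However, the existence step is not closed, for two reasons.

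\textbf{The integrability estimate.} You leave it open because of two slips. Near $x_j$, the diagonal part of $U^2$ is
\[
\iint_{|x|,|y|\le 1}\frac{|x|^{a}\,|y|^{a}}{|x-y|^{b}}\,\d x\,\d y,\qquad a=\gamma\la e_i,\alpha_j\ra,\quad b=\gamma^2\la e_i,e_i\ra<2 .
\]
The same integral is the second moment that makes $M^g_{\gamma e_i}(\Phi_g+u_{0,\eps},\Sigma)$ Cauchy in $L^2$. It lives in $\R^2\times\R^2$, so scaling gives $\int_0 t^{2a-b+3}\,\d t$. The condition is therefore $b-2a<4$, which also forces $a>-2$; it is not $b-2a<2$.

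Moreover, $\gamma\la e_i,Q\ra=\frac{\gamma^2}{2}\la e_i,e_i\ra-2$. Hence the hypothesis $\la e_i,\alpha_j-Q\ra>0$ is \emph{equivalent} to $b-2a<4$. Your bound ``$<4-\gamma^2\la e_i,e_i\ra$'' is an algebra error; the correct right-hand side is $4$. So the Weyl-chamber condition is exactly the sharp condition, and the direct route works. This is how the paper obtains \eqref{eq:uniform-exp-bound-mixed} and the $L^2$ convergence.

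You should drop the fallbacks. In particular, the order-two bound \eqref{eq:order-two-bound} cannot be applied to a weight like $|x-x_j|^{\gamma\la e_i,\alpha_j\ra}$. That weight is not $C^2$ and may be unbounded, since $a<0$ is allowed here because $\gamma\la e_i,Q\ra<0$.

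\textbf{The Girsanov step.} The transform cannot be applied directly to $e^{\bi\la\alpha_j,\x_{g,\eps}(x_j)\ra}$, because Cameron--Martin handles real shifts only. Here the shift $u_{0,\eps}=\bi\sum_j\alpha_jG_{\eps,0}(\cdot,x_j)$ is imaginary. It must be inserted into $F$, the curvature phase and the imaginary chaos, where holomorphy in the shift is not automatic.

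The paper handles this as follows:
\begin{enumerate}
\item It introduces $A(\w)$, $\w\in\C^{n_{\mathfrak e}}$, with insertions $\eps^{w_j^2|\alpha_j|^2/2}e^{w_j\la\alpha_j,\Phi_{g,\eps}(x_j)\ra}$.
\item It applies Cameron--Martin for real $\w$.
\item It proves the shifted expectation is holomorphic in $\w$. To do so it approximates $u_{0,\eps}$ by smooth shifts $u_{0,\eps,\delta}$ holomorphic in $\w$. Lemma~\ref{lem:toda-order-2-distribution} and Proposition~\ref{prop:exp-moment-shifted} then give locally uniform bounds and the limit $\delta\to0$.
\item It invokes the identity theorem at $\w=(\bi,\ldots,\bi)$.
\end{enumerate}
Your existence proof needs this analytic-continuation step or an equivalent justification. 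The same remark applies to the imaginary shift $-\frac{\bi}{2}Q\rho$ in the anomaly computation.
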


\begin{proof}
$\,$
\paragraph{\textbf{Step 1: Existence}} We first note that the functional $u\mapsto F(u)V^{g,\varepsilon}_{(\balpha,0)}(u,\bx)$, $F\in\mathcal{E}_\Lambda^{\m}(\Sigma;\a)$  lies in $\mathcal{E}_\Lambda^{\m}(\Sigma;\a)$, and so Proposition \ref{prop:magnetic-path-integral-well-defined} ensures the existence of $\la \,FV^{g,\epsilon}_{(\balpha,0)}(\bx)V^g_{(0,\m)}(\bv)\ra_{\Sigma,g}$. As argued in the proof of \cite[Theorem 6.11]{Guillarmou:2023exh}, the Girsanov transform only works for real-valued Gaussians, and so one has to go through an analytic continuation argument to apply the transform.

For $\w=(w_1,\ldots,w_{n_{\mathfrak e}})\in\mathbb C^{n_{\mathfrak e}}$,  we define a map $\w\in \C^{\ne}\mapsto A(\w)$ by
\begin{align*}
 A(\w)
:=
\mathbb E\Big[
e^{-\frac1{2\pi}\langle d\x_g,\Omega_{\blambda}\rangle_2}
F(\Phi_g)W^{g,\epsilon}_{\w}(\Phi_g,\bx)
e^{-\frac{i}{4\pi}\langle QK_g,\Phi_g\rangle_g^{\reg}}
e^{-\sum_{i=1}^r\mu_iM^g_{\gamma e_i}(\Phi_g,\Sigma)}
\Big],
\end{align*}
where $$W^{g,\varepsilon}_{\w}(u,\bx):=\prod_{j=1}^{\ne}\varepsilon^{w_j^2|\alpha_j|^2/2}e^{w_j\la \alpha_j,u_{g,\varepsilon}(x_j)\ra},$$
and $\Phi_g=c+\x_g+I^{\bsigma}_{x_0}(\Omega_{\blambda})+I^{\bxi}_{x_0}(\nu^{\mathrm{h}}_{\z,\m})$ is the Toda field. 
We observe that $A(i,\ldots,i)$ recovers the integrand of $\la F V^{g,\epsilon}_{(\balpha,0)}V^{g}_{(0,\m)}(\bv)\ra$ and that the function $A$ is holomorphic on $\C^{\ne}$. For $\w\in\R^{\ne}$, the Cameron--Martin theorem implies that \begin{align}\label{eq:expection RHS}
    A(\w)=&e^{-\frac{1}{2}\E\left[\left(\sum_{j=1}^{\ne}w_j\la \alpha_j,\x_{g,\epsilon}(x_j)\ra\right)^2\right]}\prod_{j=1}^{\ne}\eps^{w_j^2|\alpha_j|^2/2}e^{\sum^{\ne}_{j=1}w_j\la \alpha_j,c\ra}\\&\notag\times\E\Big[
e^{-\frac1{2\pi}\langle d\x_g+\d u_{0,\epsilon},\Omega_{\blambda}\rangle_2}F(\Phi_g+u_{0,\eps})
e^{-\frac{\bi}{4\pi}\langle QK_g,\Phi_g+u_{0,\epsilon}\rangle_g^{\reg}}
e^{-\sum_{i=1}^r\mu_iM^g_{\gamma e_i}(\Phi_g+u_{0,\epsilon},\Sigma)}
\Big],
\end{align}
where $G_{\epsilon,\epsilon'}(x,x'):=\E[X_{g,\epsilon}(x)X_{g,\epsilon'}(x')]$ (with the convention that $X_{g,0}=X_g$) and $u_{\epsilon,\epsilon'}(x):=\sum^{\ne}_{j=1}w_j\alpha_jG_{\epsilon,\epsilon'}(x,x_j)$. 

We claim that the right-hand side of the above identity is holomorphic in $\w$. The difficulty is that the map 
$\w \mapsto M_{\gamma e_i}^g(\Phi_g+u_{0,\epsilon},\Sigma)$ is not clearly a.s. holomorphic in $\w$. We therefore regularize the singular shift. For fixed $\eps>0$, since $x\mapsto u_{0,\eps}(x)$ is continuous on $\Sigma$ and  holomorphic in
$\w$, we can choose a family $(u_{0,\eps,\delta})_{\delta>0}\subset C^\infty(\Sigma;\a)$ such that

\begin{itemize}
\item for each $\delta>0$, the map $(x,\w)\mapsto u_{0,\eps,\delta}(x)$ is smooth in $x$ and
holomorphic in $\w$,
\item for every compact $K\subset\C^{\ne}$,
$$
\sup_{\w\in K}\sup_{x\in\Sigma}|u_{0,\eps,\delta}(x)-u_{0,\eps}(x)|\rightarrow 0
\qquad\text{as }\delta\to0.
$$
\end{itemize}

We first show that the expectation, for fixed $\delta>0,$ \begin{equation}\label{eq:expectation delta}\E\Big[
e^{-\frac1{2\pi}\langle d\x_g+\d u_{0,\epsilon},\Omega_{\blambda}\rangle_2}
e^{-\frac{i}{4\pi}\langle QK_g,\Phi_g+u_{0,\epsilon}\rangle_g^{\reg}}
e^{-\sum_{i=1}^r\mu_iM^g_{\gamma e_i}(\Phi_g+u_{0,\epsilon,\delta},\Sigma)}
\Big]\end{equation} is holomorphic in $\w$.

\begin{lemma}[{\cite[Lemma 6.12]{Guillarmou:2023exh}}]\label{lem:toda-order-2-distribution}The random variable $M_{\gamma e_i}^g(\Phi_g
=
X_g+I_{x_0}^{\bsigma}(\Omega_{\blambda})
+I_{x_0}^{\bxi}(\nu_{\z,\m}^{\mathrm h}),\d x)$ is almost surely a random
distribution of order $2$ on $\Sigma$. More precisely, there exists an $L^2$ random variable
$D_{\Sigma,i}(\blambda,\m)$ such that for every $f\in C^\infty(\Sigma)$,
$$
\left|
\int_\Sigma f(x)\,M_{\gamma e_i}^g(\Phi_g,dx)
\right|
\le
D_{\Sigma,i}(\blambda,\m)\,
\bigl(\|f\|_\infty+\|\Delta_g f\|_\infty\bigr)
$$
almost surely.
\end{lemma}

Lemma \ref{lem:toda-order-2-distribution} implies that the integrand 
$$e^{-\frac1{2\pi}\la d\x_g+du_{0,\eps},\Omega_{\blambda}\ra_2}
F(\Phi_g+u_{0,\eps})
e^{-\frac{\bi}{4\pi}\la QK_g,\Phi_g+u_{0,\eps}\ra_g^{\reg}}
e^{-\sum_{i=1}^r\mu_iM_{\gamma e_i}^g(\Phi_g+u_{0,\eps,\delta},\Sigma)}
$$
is holomorphic in $\w$, and hence it remains to show that, for any compact subset $K\subset\C^{\ne}$, 
$$
\sup_{\w\in K}\E\left[\left|e^{-\frac1{2\pi}\la d\x_g+du_{0,\eps},\Omega_{\blambda}\ra_2}F(\Phi_g+u_{0,\eps})
e^{-\frac{\bi}{4\pi}\la QK_g,\Phi_g+u_{0,\eps}\ra_g^{\reg}}
e^{-\sum_{i=1}^r\mu_iM_{\gamma e_i}^g(\Phi_g+u_{0,\eps,\delta},\Sigma)}\right|\right]<\infty.
$$ This readily follows from H\"older's inequality and Proposition \ref{prop:exp-moment-shifted}.

It suffices to show that the integral \eqref{eq:expectation delta} converges locally uniformly with respect to $\w$ to the right-hand side of \eqref{eq:expection RHS}. Using the local uniform convergence of $u_{0,\eps,\delta}(x)$ and H\"older's inequality, we only need to control the convergence of the GMC term, i.e., show that locally uniformly in $\w$, $$\E\left[\left|e^{\sum_{i=1}^r\mu_i(M^g_{\gamma e_i}(\Phi_g+u_{0,\eps,\delta},\Sigma)-M^g_{\gamma e_i}(\Phi_g+u_{0,\eps},\Sigma))}-1\right|^2\right]\longrightarrow0,$$ as $\delta\rightarrow0$.

By Proposition \ref{prop:exp-moment-shifted} with $$f_{i,\delta,\w}(x)
:=
e^{i\gamma\la e_i,u_{0,\eps,\delta}(x)\ra}
-
e^{i\gamma\la e_i,u_{0,\eps}(x)\ra},$$ we obtain that 
for every
$\alpha\ge0$,
$$
\E\left[
e^{\alpha
\left|
\sum_{i=1}^r \mu_i\Big(
M^g_{\gamma e_i}(\Phi_g+u_{0,\eps,\delta},\Sigma)
-
M^g_{\gamma e_i}(\Phi_g+u_{0,\eps},\Sigma)
\Big)
\right|
}
\right]
\le
\exp\!\Big(
C\alpha V_{\delta,\w}\bigl(1+C\alpha U_{\delta,\w}e^{C\alpha^2U_{\delta,\w}^2}\bigr)
\Big),
$$
where
$$
V_{\delta,\w}:=\sum_{i=1}^r\int_\Sigma |f_{i,\delta,\w}(x)|\,d\v_g(x)
$$
and
$$
U_{\delta,\w}^2
:=
\sum_{i,j=1}^r
\iint_{\Sigma^2}
|f_{i,\delta,\w}(x)|\,|f_{j,\delta,\w}(y)|\,
e^{\gamma^2\la e_i,e_j\ra G_g(x,y)}
\,d\v_g(x)\,d\v_g(y)
$$ locally converge to 0.
Therefore
$$
\sup_{\w\in K}
\E\left[
e^{
\left|
\sum_{i=1}^r \mu_i\left(
M^g_{\gamma e_i}(\Phi_g+u_{0,\eps,\delta},\Sigma)
-
M^g_{\gamma e_i}(\Phi_g+u_{0,\eps},\Sigma)
\right)
\right|
}
\right]\xrightarrow[\delta\rightarrow0]{}1.
$$The inequality $$|e^z-1|^2\leq (e^{|z|}-1)^2\leq e^{2|z|}-1,\quad z\in\C,$$ together with the above estimate proves the claim.
Hence the right-hand side of \eqref{eq:expection RHS} is holomorphic on $\C^{\ne}$, and since
it coincides with the holomorphic function $A(\w)$ on $\R^{\ne}$, the identity theorem implies
that \eqref{eq:expection RHS} remains valid for all $\w\in\C^{\ne}$. In particular, we may evaluate
it at $\w=(\bi,\ldots,\bi).$

For every $\eps>0$, this amounts to
\begin{align}
&\la FV^{g,\eps}_{(\balpha,0)}(\bx)V^g_{(0,\m)}(\bv)\ra_{\Sigma,g}\notag\\
&=
\left(\frac{\v_g(\Sigma)}{\det'(\Delta_g)}\right)^{r/2}
\sum_{\blambda\in\Lambda^{2\bbg}}
e^{-\frac{1}{4\pi}\|\Omega_{\blambda}\|_2^2-\frac{1}{4\pi}\|\nu^{\mathrm h}_{\z,\m}\|_{g,0}^2-\frac{1}{2\pi}\la\Omega_{\blambda},\nu^{\mathrm h}_{\z,\m}\ra_2}
\int_{\a/(2\pi\Lambda)}
\mathcal I_{\eps}(\blambda,c)\,dc,
\label{eq:eps-representation-mixed}
\end{align}
where
\begin{align*}
\mathcal I_{\eps}(\blambda,c)
:={}&
e^{
-\frac12\E\Big[\Big(\sum_{j=1}^{\ne}\bi\la\alpha_j,\x_{g,\eps}(x_j)\ra\Big)^2\Big]}
\prod_{j=1}^{\ne}\eps^{-|\alpha_j|^2/2}
e^{\bi\sum_{j=1}^{\ne}\la\alpha_j,c\ra}
\\
&\times
\E\Big[
e^{-\frac1{2\pi}\la d\x_g+du_{0,\eps},\Omega_{\blambda}\ra_2}
F(\Phi_g+u_{0,\eps})
e^{-\frac{\bi}{4\pi}\la QK_g,\Phi_g^{\blambda}+u_{0,\eps}\ra_g^{\reg}}
e^{-\sum_{i=1}^r\mu_iM^g_{\gamma e_i}(\Phi_g^{\blambda}+u_{0,\eps},\Sigma)}
\Big],
\end{align*}
and $u_{0,\eps}(x):=\bi\sum_{j=1}^{\ne}\alpha_j\,G_{\eps,0}(x,x_j).$

We use the shorthand $u_0:=u_{0,0}$. The deterministic prefactor in $\mathcal I_\eps(\blambda,c)$
converges, by the defining properties of a $g$-regularization, to the usual Wick-renormalized factor
$$
\mathcal{W}_g(\balpha,\bx):=e^{-\frac{1}{2}\sum_{j=1}^{\ne}|\alpha_j|^2W_g(x_j)-\sum_{1\le j<k\le \ne}\la\alpha_j,\alpha_k\ra G_g(x_j,x_k)}.
$$
Thus the only nontrivial point is the convergence of the Toda potential term. For $i=1,\ldots,r$, the candidate limit 
$M^g_{\gamma e_i}(\Phi_g^{\blambda}+u_0,\Sigma)$ has second moment given by an integral of the form
\begin{align}
\iint_{\Sigma^2}
&e^{
i\gamma\la e_i,u_0(x)-u_0(y)\ra
+\gamma^2\la e_i,e_i\ra G_g(x,y)-\frac{\gamma^2\la e_i,e_i\ra}{2}(W_g(x)+W_g(y))
}
\nonumber\\
&\qquad \times
e^{
i\gamma\la e_i,
I_{x_0}^{\bsigma}(\Omega_{\blambda})(x)-I_{x_0}^{\bsigma}(\Omega_{\blambda})(y)
+I_{x_0}^{\bxi}(\nu_{\z,\m}^{\mathrm h})(x)-I_{x_0}^{\bxi}(\nu_{\z,\m}^{\mathrm h})(y)
\ra
}\,d\v_g(x)d\v_g(y).
\label{eq:l2-moment-toda}
\end{align}
where the integrability is determined by the singularities
near the electric insertions $x_j$. In local coordinates centered at $x_j$, using
$
G(x,x_j)=-\log|x|+O(1)
$
and
$
\gamma\la e_i,Q\ra=\frac{\gamma^2}{2}|e_i|^2-2,
$
we reduce to the finiteness of
\begin{equation}
\iint_{|x|,|y|\le 1}
\frac{|x|^{\gamma\la e_i,\alpha_j\ra}\,|y|^{\gamma\la e_i,\alpha_j\ra}}
{|x-y|^{\gamma^2|e_i|^2}}
\,dx\,dy.
\label{eq:local-toda-integrability}
\end{equation}
This integral is finite as soon as
$$
\gamma\la e_i,\alpha_j\ra>-2+\frac{\gamma^2}{2}|e_i|^2
=\gamma\la e_i,Q\ra,
$$
that is, $\la e_i,\alpha_j-Q\ra>0.$ Because $\alpha_j-Q\in\mathcal C_+$ by assumption, this holds for every simple root $e_i$ and every
$j$. Moreover, by Proposition \ref{prop:exp-moment-shifted}, there exists a constant $C>0$,
independent of $\eps$, $\blambda$, and $c$, such that
\begin{equation}\label{eq:uniform-exp-bound-mixed}
\E\Bigg[
\exp\Bigg(
\Bigg|
\sum_{i=1}^r \mu_i M^g_{\gamma e_i}(\Phi_g^{\blambda}+u_{0,\eps},\Sigma)
\Bigg|
\Bigg)
\Bigg]
\le C,
\end{equation}
 and the family $\Big(M^g_{\gamma e_i}(\Phi_g^{\blambda}+u_{0,\eps},\Sigma)\Big)_{\eps>0}$ is Cauchy in $L^2$. This shows that $M^g_{\gamma e_i}(\Phi_g^{\blambda}+u_{0,\eps},\Sigma)
$ converges in $L^2$ to $M^g_{\gamma e_i}(\Phi_g^{\blambda}+u_0,\Sigma)$. 

Combining this convergence with the uniform exponential bound
\eqref{eq:uniform-exp-bound-mixed}, we obtain that, as $\epsilon\rightarrow0,$
$$
\E\Big[
e^{-\frac1{2\pi}\la d\x_g+du_{0,\eps},\Omega_{\blambda}\ra_2}F(\Phi_g+u_{0,\eps})
e^{-\frac{i}{4\pi}\la QK_g,\Phi_g^{\blambda}+u_{0,\eps}\ra_g^{\reg}}
e^{-\sum_{i=1}^r\mu_iM^g_{\gamma e_i}(\Phi_g^{\blambda}+u_{0,\eps},\Sigma)}
\Big]
$$
$$
\longrightarrow
\E\Big[
e^{-\frac1{2\pi}\la d\x_g+du_0,\Omega_{\blambda}\ra_2}F(\Phi_g+u_{0})
e^{-\frac{i}{4\pi}\la QK_g,\Phi_g^{\blambda}+u_0\ra_g^{\reg}}
e^{-\sum_{i=1}^r\mu_iM^g_{\gamma e_i}(\Phi_g^{\blambda}+u_0,\Sigma)}
\Big].
$$
Consequently, for every fixed $(\blambda,c)$, $\mathcal I_\eps(\blambda,c)$ converges to $ \mathcal I(\blambda,c),$
where
\begin{align*}
\mathcal I(\blambda,c)
:={}&
e^{
-\frac12\sum_{j=1}^{\ne}|\alpha_j|^2W_g(x_j)
-\sum_{1\le j<k\le \ne}\la\alpha_j,\alpha_k\ra G_g(x_j,x_k)
}
e^{\bi\sum_{j=1}^{\ne}\la\alpha_j,c\ra}
\\
&\times
\E\Big[
e^{-\frac1{2\pi}\la d\x_g+du_0,\Omega_{\blambda}\ra_2}F(\Phi_g+u_0)
e^{-\frac{\bi}{4\pi}\la QK_g,\Phi_g^{\blambda}+u_0\ra_g^{\reg}}
e^{-\sum_{i=1}^r\mu_iM^g_{\gamma e_i}(\Phi_g^{\blambda}+u_0,\Sigma)}
\Big].
\end{align*}

The justification of the passage to the limit in \eqref{eq:eps-representation-mixed} can be done similarly to the proof of \cite[Theorem 6.11]{Guillarmou:2023exh}. This shows that the limit
$$\la F V^g_{(\balpha,0)}(\bx)V^g_{(0,\m)}(\bv)\ra_{\Sigma,g}=\lim_{\epsilon\to0}\la FV^{g,\epsilon}_{(\balpha,0)}(\bx)V^g_{(0,\m)}(\bv)\ra_{\Sigma,g}
$$
exists and is finite, with
\begin{equation}\label{eq:final-mixed-representation}
\begin{aligned}
&\la FV^g_{(\balpha,0)}(\bx)V^g_{(0,\m)}(\bv)\ra_{\Sigma,g}\\&
=\left(\frac{\v_g(\Sigma)}{\det'(\Delta_g)}\right)^{r/2}
\sum_{\blambda\in\Lambda^{2\bbg}}
e^{-\frac{1}{4\pi}\|\Omega_{\blambda}\|_2^2
-\frac{1}{4\pi}\|\nu^{\mathrm h}_{\z,\m}\|_{g,0}^2
-\frac{1}{2\pi}\la\Omega_{\blambda},\nu^{\mathrm h}_{\z,\m}\ra_2}
\int_{\a/(2\pi\Lambda)}
\mathcal I(\blambda,c)\,\d c.
\end{aligned}
\end{equation}
Following the same estimates, one also obtains
$$\left|\la F V^g_{(\balpha,0)}(\bx)V^g_{(0,\m)}(\bv)\ra_{\Sigma,g}\right|\le C\|F\|_{\mathcal L^{\infty,p}_{\Lambda,\balpha,\m}},$$ and thus the map $F\mapsto
\la
F V^g_{(\balpha,0)}(\bx)V^g_{(0,\m)}(\bv)
\ra_{\Sigma,g}$
extends uniquely to a continuous linear functional on
$\mathcal L^{\infty,p}_{\Lambda,\balpha,\m}(\Sigma;\a).$

\paragraph{\textbf{Step 2: Conformal anomaly}} We compare the representation \eqref{eq:final-mixed-representation} for the metrics
$g$ and $g'=e^\rho g$. First, the Gaussian normalization factor satisfies the Polyakov--Alvarez formula \eqref{eq:Polyakov-formula}
$$
\left(\frac{\v_{g'}(\Sigma)}{\det'(\Delta_{g'})}\right)^{r/2}
=
\left(\frac{\v_g(\Sigma)}{\det'(\Delta_g)}\right)^{r/2}
e^{
\frac{r}{96\pi}\int_\Sigma \bigl(|d\rho|_g^2+2K_g\rho\bigr)\d\v_g
}.
$$
Next, the electric Wick-renormalization factor transforms as
\begin{align*}
e^{
-\frac12\sum_{j=1}^{n_{\mathfrak e}}|\alpha_j|^2W_{g'}(x_j)
-\sum_{1\le j<k\le n_{\mathfrak e}}\la\alpha_j,\alpha_k\ra G_{g'}(x_j,x_k)
}
=&
e^{
-\frac12\sum_{j=1}^{n_{\mathfrak e}}|\alpha_j|^2W_{g}(x_j)
-\sum_{1\le j<k\le n_{\mathfrak e}}\la\alpha_j,\alpha_k\ra G_{g}(x_j,x_k)
}\\
&\times
e^{
-\sum_{j=1}^{n_{\mathfrak e}}
\la \frac{\alpha_j}{2},Q-\frac{\alpha_j}{2}\ra \rho(x_j)
}.
\end{align*}
For the magnetic deterministic factor, Lemma~\ref{lem:toda-reg-norm} gives
$$e^{-\frac1{4\pi}\|\nu^{\mathrm h}_{\z,\m}\|_{g',0}^2}
=
e^{-\frac1{4\pi}\|\nu^{\mathrm h}_{\z,\m}\|_{g,0}^2
-\sum_{j=1}^{n_{\mathfrak m}}\frac{|m_j|^2}{4}\rho(z_j)
}.$$
For the curvature term, the conformal-change formula (Lemmas \ref{lem:conformal change of metrics} and \ref{lem:conformal-change-magnetic-curvature}) gives
\begin{equation}\label{eq:curvature-background-weyl-proof}
\begin{aligned}
-\frac{\bi}{4\pi}
\langle QK_{g'},
\Phi_{g'}^{\blambda,\m}+u_{\bx}^{g'}\rangle_{g'}^{\reg}
 =&
-\frac{\bi}{4\pi}
\langle QK_g,
\Phi_g^{\blambda,\m}
+u_{\bx}^{g}
-\frac{\bi}{2}Q\rho
\rangle_{g}^{\reg}
\\&-
\frac{6\langle Q,Q\rangle}{96\pi}
\int_\Sigma
\left(|d\rho|_g^2+2K_g\rho\right)\d \v_g.
\end{aligned}
\end{equation}
Next, we turn to Toda interaction terms. For each simple root
$e_i$,
\begin{equation}\label{eq:toda-weyl-proof}
M^{g'}_{\gamma e_i}
\left(
\Phi_{g'}^{\blambda,\m}
+u_{\bx}^{g'},\Sigma
\right)
=
M^{g}_{\gamma e_i}
\left(
\Phi_g^{\blambda,\m}
+u_{\bx}^{g}
-\frac{\bi}{2}Q\rho,\Sigma
\right).
\end{equation}
Indeed, the Weyl change of the volume form, the Wick renormalization, and the shift
$-\frac{\bi}{2}Q\rho$ cancel exactly since $\gamma\langle e_i,Q\rangle
=
\frac{\gamma^2}{2}|e_i|^2-2.$
Combining the above identities, we obtain that, for each fixed
$(\blambda,c)$-summand:
$$
\begin{aligned}
\mathcal S_{g'}^{\blambda,c}
\left(F;\balpha,\bx,\m,\bv\right)=&
e^{
\frac{r-6\langle Q,Q\rangle}{96\pi}
\int_\Sigma
\left(|d\rho|_g^2+2K_g\rho\right)\d \v_g
-\sum_{j=1}^{n_{\mathfrak e}}
\langle
\frac{\alpha_j}{2},
\frac{\alpha_j}{2}-Q
\rangle
\rho(x_j)
-\sum_{j=1}^{n_{\mathfrak m}}
\frac{|m_j|^2}{4}\rho(z_j)}\\
&\times
\mathcal S_{g}^{\blambda,c}
(
F(\cdot-\frac{\bi}{2}Q\rho);
\balpha,\bx,\m,\bv
).
\end{aligned}
$$
Here $\mathcal S_g^{\blambda,c}$ denotes the $(\blambda,c)$-summand in the limiting
representation \eqref{eq:final-mixed-representation} in the metric $g$. The multiplicative factor is independent of
$\blambda$ and $c$, so it factors out of the zero-mode integral and the instanton sum. This proves the desired conformal anomaly formula.

\paragraph{\textbf{Step 3: Diffeomorphism invariance}}
Let $\psi:\Sigma'\to\Sigma$ be an orientation-preserving diffeomorphism. We choose on
$\Sigma'$ the transported geometric data: the symplectic basis
$\psi^{-1}(\bsigma)$, the dual cohomology basis $\psi^*\eta_1,\ldots,\psi^*\eta_{2\bbg}$,
the base point $\psi^{-1}(x_0)$, and the transported defect graph $\psi^{-1}(\bxi)$.

Then for each $\blambda\in\Lambda^{2\bbg}$,
$$
\psi^*\Omega_{\blambda}
=
\sum_{j=1}^{2\bbg}\lambda_j\,\psi^*\eta_j,
\qquad
I^{\psi^{-1}(\bsigma)}_{\psi^{-1}(x_0)}(\psi^*\Omega_{\blambda})
=
I^{\bsigma}_{x_0}(\Omega_{\blambda})\circ\psi,
$$
and similarly
$$
I^{\psi^{-1}(\bxi)}_{\psi^{-1}(x_0)}(\psi^*\nu^{\mathrm h}_{\psi(\z),\m})
=
I^{\bxi}_{x_0}(\nu^{\mathrm h}_{\psi(\z),\m})\circ\psi.
$$
Moreover,
$$
G_{\psi^*g}(x,y)=G_g(\psi(x),\psi(y)),
\qquad
W_{\psi^*g}(x)=W_g(\psi(x)),
\qquad
\x_{\psi^*g}\stackrel{\mathrm{law}}{=}\x_g\circ\psi.
$$
The quantities $\|\Omega_{\blambda}\|_2^2$, $\|\nu^{\mathrm h}_{\z,\m}\|_{g,0}^2$, and
$\la\Omega_{\blambda},\nu^{\mathrm h}_{\z,\m}\ra_2$ are preserved under pullback by $\psi$,
and the regularized curvature terms are preserved by the diffeomorphism (Lemma \ref{lem:diffeomorphism invariance}). This concludes the diffeomorphism invariance.

\paragraph{\textbf{Step 4: Spin covariance}}
The only
dependence on $\bv$ comes from the magnetic sector,
and Corollary~\ref{cor:rotation-covariance-magnetic} applies verbatim and gives
$$
\la FV^g_{(\balpha,0)}(\bx)V^g_{(0,\m)}(r_\theta\bv)\ra_{\Sigma,g}
=
e^{-\bi\sum_{j=1}^{\nm}\la Q,m_j\ra\theta_j}
\la FV^g_{(\balpha,0)}(\bx)V^g_{(0,\m)}(\bv)\ra_{\Sigma,g}.
$$

This completes the proof.

\end{proof}

\paragraph{\emph{Electro-magnetic operators}} We set $n_{\mathfrak{e}}=n_{\mathfrak{m}}$ and define the path integral with electro-magnetic operators by $$\la FV^g_{(\balpha,\m)}(\bv)\ra_{\Sigma,g}:=\lim_{t\rightarrow1}\la FV^g_{(\balpha,0)}(\bx(t))V^g_{(0,\m)}(\bv)\ra_{\Sigma,g}$$ for $F\in\mathcal{E}_{\Lambda}^{\m}(\Sigma;\a)$, where $\bx(t)=(x_1(t),\ldots,x_{n_{\mathfrak{m}}}(t))$ with $x_j:[0,1]\rightarrow \Sigma$ any $C^1$ curve such that $x_j(1)=z_j$ and $\dot{x}_j(t)=v_j.$
\begin{theorem}\label{thm:electro-magnetic operators}
    Under the same assumptions as in Theorem \ref{thm:mixed operators}, the map $F\in\mathcal{E}^{\m}_{\Lambda}(\Sigma;\a)\mapsto\la FV^g_{(\balpha,\m)}(\bv)\ra_{\Sigma,g}$ satisfies the following properties:
    \begin{itemize}
        \item{\textbf{Existence:}} It is well-defined and extends to $F\in\mathcal{L}^{\infty,p}_{\Lambda,\balpha,\m}$.
        \item{\textbf{Conformal anomaly:}} If $g'=e^\rho g$ for some $\rho\in C^\infty(\Sigma),$ then  
        $$\frac{\la FV^{g'}_{(\balpha,\m)}(\bv)\ra_{\Sigma,g'}}{
    \la F(\cdot-\frac{\bi}{2}Q\rho)V^{g}_{(\balpha,\m)}(\bv)\ra_{\Sigma,g}}=
e^{\frac{c}{96\pi}\int_\Sigma (|\d\rho|_g^2+2K_g\rho)\d \v_g
    -\sum_{j=1}^{n_{\mathfrak m}}{\Delta_{(\alpha_j,m_j)}}\rho(z_j)}.$$
   
         \item{\textbf{Diffeomorphism invariance:}}  If $\psi:\Sigma'\to\Sigma$ is an orientation-preserving diffeomorphism, then
    $$
    \la F(\Phi_{\psi^* g})V^{\psi^*g}_{(\balpha,\m)}(\bv)\ra_{\Sigma',\psi^*g}
    =
    \la F(\Phi_g\circ\psi )V^g_{(\balpha,\m)}(\psi_*\bv)\ra_{\Sigma,g}.
    $$
        \item{\textbf{Spins:}} Let $r_\theta\bv:=\bigl((z_1,r_{\theta_1}v_1),\ldots,(z_{n_{\mathfrak m}},r_{\theta_{n_{\mathfrak m}}}v_{n_{\mathfrak m}})\bigr)$.
    Then
    $$
    \la FV^{g}_{(\balpha,\m)}(r_\theta\bv)\ra_{\Sigma,g}
    =
    e^{i\sum_{j=1}^{n_{\mathfrak m}}\la \alpha_j-Q,m_j\ra\,\theta_j}
    \la FV^{g}_{(\balpha,\m)}(\bv)\ra_{\Sigma,g}.
    $$
    \end{itemize}
\end{theorem}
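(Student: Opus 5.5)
We will deduce Theorem \ref{thm:electro-magnetic operators} from Theorem \ref{thm:mixed operators} by analysing the limit $t\to1$ in the representation \eqref{eq:final-mixed-representation}, with $x_j=x_j(t)$ approaching $z_j$ along the direction $v_j$. The only quantities depending on $\bx(t)$ are three. The first is the Wick factor $\mathcal W_g(\balpha,\bx(t))$, whose cross terms $G_g(x_j(t),x_k(t))$ stay bounded since $z_j\neq z_k$. The second is the phase $e^{\bi\la\alpha_j,I^{\bxi}_{x_0}(\nu^{\mathrm h}_{\z,\m})(x_j(t))\ra}$, which is produced by evaluating $e^{\bi\la\alpha_j,\Phi_g\ra}$ at $x_j(t)$. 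The third is the shift $u_0$, which enters through the expectation, the GMC term and the curvature term.

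\textbf{Existence.} The phase factor is discontinuous across the defect graph, but since $x_j(t)$ approaches $z_j$ tangentially to $v_j$, which is the tangent direction of the defect edge at $z_j$, we choose the side of the cut consistently. Near $z_j$, the primitive equals $m_j\theta+(\text{smooth})$ in the chart $\psi_j$, so the phase converges to a limit whose angular part is $e^{\bi\la\alpha_j,m_j\ra\arg v_j}$ (up to the constant fixed by the cut convention). For the expectation, the shifts $u_0^{(t)}=\bi\sum_j\alpha_jG_g(\cdot,x_j(t))$ converge to $\bi\sum_j\alpha_jG_g(\cdot,z_j)$ in $H^s$ for $s<1$. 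We then repeat the $L^2$-Cauchy argument of Step 1 of Theorem \ref{thm:mixed operators}. The second moment \eqref{eq:l2-moment-toda} now carries the additional bounded phase from $\nu^{\mathrm h}$, and its integrability near $z_j$ reduces to \eqref{eq:local-toda-integrability}, which holds uniformly in $t$ because $\alpha_j-Q\in\mathcal C_+$; dominated convergence then gives uniform integrability. Combining this with Proposition \ref{prop:exp-moment-shifted} for uniform exponential moments, and with the Gaussian sum over $\blambda$ being uniform in $t$, yields the limit together with the bound by $\|F\|_{\mathcal L^{\infty,p}_{\Lambda,\balpha,\m}}$. Independence of the choice of curves $x_j(\cdot)$ follows because the limit depends only on $z_j$ and the direction $v_j$.

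\textbf{Covariances.} The conformal anomaly is obtained by passing to the limit $t\to1$ in Theorem \ref{thm:mixed operators}(i). The factors $\rho(x_j(t))$ converge to $\rho(z_j)$, and the electric weight $\la\frac{\alpha_j}{2},\frac{\alpha_j}{2}-Q\ra$ combines with the magnetic weight $\frac14|m_j|^2$ to give $\Delta_{(\alpha_j,m_j)}$. One caveat is that the metric-dependent regularization of $V_\alpha$ and of $\|\nu\|_{g,0}$ must be compatible along the approach; this holds because both Weyl factors were computed pointwise. Diffeomorphism invariance passes to the limit directly, since $\psi\circ x_j$ is an admissible curve for $\psi_*\bv$.

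\textbf{Spin covariance.} This step is the main one. Rotating $v_j$ by $\theta_j$ has two effects. First, it produces the magnetic factor $e^{-\bi\la Q,m_j\ra\theta_j}$ from Theorem \ref{thm:mixed operators}(iii). Second, it changes the direction of approach of $x_j(t)$ and rotates the attached defect edge, so the limiting value of the primitive $I^{\bxi}_{x_0}(\nu^{\mathrm h}_{\z,\m})$ shifts by $m_j\theta_j$; through the electric phase this gives $e^{\bi\la\alpha_j,m_j\ra\theta_j}$. The product is $e^{\bi\la\alpha_j-Q,m_j\ra\theta_j}$. The delicate bookkeeping is to confirm that the $2\pi m_j$ jump across the rotated cut, used in Corollary \ref{cor:rotation-covariance-magnetic}, is accounted for exactly once. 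Because $\alpha_j\in\Lambda^*$ and $m_j\in\Lambda$, the jump $e^{2\pi\bi\la\alpha_j,m_j\ra}$ is trivial, so after following the continuous rotation of both the edge and the approach direction the net factor is as stated.
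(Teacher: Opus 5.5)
Your proposal is correct and follows the paper's route. Existence comes from letting $t\to1$ in the mixed representation \eqref{eq:final-mixed-representation}: the collision $x_j(t)\to z_j$ is controlled by the local integrability \eqref{eq:local-toda-integrability}, which is uniform in $t$ because $\alpha_j-Q\in\mathcal C_+$, together with the exponential moments of Proposition~\ref{prop:exp-moment-shifted}, and the covariance laws are inherited from Theorem~\ref{thm:mixed operators}.

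On spins you are more careful than the paper, which says they follow ``immediately'' from Theorem~\ref{thm:mixed operators}(iii), although that result only gives $e^{-\bi\la Q,m_j\ra\theta_j}$. The missing factor $e^{\bi\la\alpha_j,m_j\ra\theta_j}$ does come from rotating the direction in which $x_j(t)$ reaches $z_j$, as you say. Your opening claim that the electric phase is discontinuous across the defect graph is inaccurate: $e^{\bi\la\alpha_j,I^{\bxi}_{x_0}(\nu^{\mathrm h}_{\z,\m})\ra}$ is continuous there because $\alpha_j\in\Lambda^*$ and the jumps lie in $2\pi\Lambda$, so no choice of side of the cut is needed, as you note yourself later.
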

\begin{proof}
For $t\in[0,1)$,
the quantity $\la FV^g_{(\balpha,0)}(\bx(t))V^g_{(0,\m)}(\bv)\ra_{\Sigma,g}$ is well-defined by Theorem \ref{thm:mixed operators}. It therefore remains to prove the existence of the limit, as the conformal anomaly, diffeomorphism invariance, and spin covariance follow immediately from
Theorem~\ref{thm:mixed operators} by passing to the limit $t\to1$.

\medskip

We observe that the prefactors in \eqref{eq:final-mixed-representation} converge in the limit $t\rightarrow1.$ since the harmonic magnetic form
$\nu^{\mathrm h}_{\z,\m}$ is of the form $m_j\,\d\theta$ in local polar coordinates near
$z_j$, the quantity $e^{\bi\langle \alpha_j, I^{\bxi}_{x_0}(\nu^{\mathrm h}_{\z,\m})(x_j(t))\rangle}$ has a limit as $x_j(t)\to z_j$ with prescribed tangent $v_j$. It remains to focus on the expectation term. Set
$M_t:=\sum_{i=1}^r \mu_i M^g_{\gamma e_i}(\Phi_g^{\blambda}+u_t,\Sigma),$
with
$u_t(x)=\bi\sum_{j=1}^{n_{\mathfrak m}}\alpha_j G_g(x,x_j(t)),$
Using H\"older's inequality it is enough to prove that, as $t\to 1,$
$$
\sup_{\blambda}\sup_{c\in \a/(2\pi\Lambda)}
\E\Big[\big|e^{-M_t}-e^{-M_1}\big|^q\Big]\longrightarrow 0
$$
for some $q>1$.
Using H\"older's inequality once again, together with Proposition~\ref{prop:exp-moment-shifted},
it is enough to show that for every $A>0$, as $t\to1$
$$
\sup_{\blambda}\sup_{c\in \a/(2\pi\Lambda)}
\E\Big[e^{A|M_t-M_1|}-1\Big]\longrightarrow 0.
$$
Applying Proposition~\ref{prop:exp-moment-shifted}, this follows provided the associated quantities
$$
V_t:=\sum_{i=1}^r \int_\Sigma
\left|
e^{i\gamma\langle e_i,u_t(x)\rangle}
-
e^{i\gamma\langle e_i,u_0(x)\rangle}
\right|\,\d\v_g(x)
$$
and
$$\begin{aligned}
U_t^2:=\sum_{i,j=1}^r \iint_{\Sigma^2}
\left|
e^{i\gamma\langle e_i,u_t(x)\rangle}
-
e^{i\gamma\langle e_i,u_0(x)\rangle}
\right|
&\left|
e^{i\gamma\langle e_j,u_t(y)\rangle}
-
e^{i\gamma\langle e_j,u_0(y)\rangle}
\right|
\\&\times e^{\gamma^2\langle e_i,e_j\rangle G_g(x,y)}
\,\d\v_g(x)\d\v_g(y)
\end{aligned}$$
tend to $0$ as $t\to1$.

This is exactly the local integrability statement already used in the proof of
Theorem~\ref{thm:mixed operators}: near each point $z_j$, the singularity is controlled by
$$
\frac{|x-a|^{\gamma\langle e_i,\alpha_j\rangle}
      |y-a|^{\gamma\langle e_i,\alpha_j\rangle}}
{|x-y|^{\gamma^2|e_i|^2}},
$$
which is integrable under the assumption \eqref{eq:mixed-charge-cond}.
This gives $V_t\to0$ and $U_t\to0$ as $t\to1$, and we may pass to the limit $t\to1$ in \eqref{eq:final-mixed-representation},
which proves that
$$
\lim_{t\to1}
\la V^g_{(\balpha,0)}(\bx(t))V^g_{(0,\m)}(\bv)\ra_{\Sigma,g}
$$
exists and is finite. This proves the existence of
$\la V^g_{(\balpha,\m)}(\bv)\ra_{\Sigma,g}$ and completes the proof.

\end{proof}

\section{Correlation functions on the Riemann sphere}\label{sec:riemann sphere}
In this section, we apply Theorem \ref{thm:electro-magnetic operators} to write $n$-point correlation functions on the Riemann sphere $\hat{\C}$ as a Dotsenko--Fateev type integral. We consider the metric
$$
g_0=|z|_+^{-4}|\d z|^2,
\qquad |z|_+=\max\{|z|,1\}.
$$
Let $\z=(z_1,\ldots,z_n)\in\hat{\C}$ be $n$ pairwise distinct points and $v_1,\ldots,v_n$ be the associated unit tangent vectors. Let $\m=(m_1,\ldots,m_n)\in \Lambda$ be magnetic charges associated to $\z$ satisfying $\sum_{j=1}^nm_j=0.$

As in the rank one case, the zero mode integral shows that the correlation functions vanish unless
\begin{equation}\label{eq:sphere-neutrality}
\sum_{j=1}^n\alpha_j-2Q\in -\gamma\sum_{i=1}^r \N e_i.
\end{equation}
In particular, if $\alpha_j-Q\in\mathcal C_+$ for all $j$, then non-trivial correlation functions on the sphere can only start at $n\ge 3$. Whenever \eqref{eq:sphere-neutrality} holds, we write
\begin{equation}\label{eq:sphere-screening-vector-n}
2Q-\sum_{j=1}^n\alpha_j=\gamma\sum_{i=1}^r s_i e_i,
\qquad
\bs=(s_1,\ldots,s_r)\in\N^r.
\end{equation}

Fix a defect graph $\bxi$ associated with $(\bv,\m)$, and let
$I^{\bxi}_0(\nu^{\mathrm h}_{\z,\m})$
be the corresponding magnetic primitive, normalized to vanish near infinity.
Using the formula
$$
G_{g_0}(x,y)=\log\frac1{|x-y|}+\log|x|_+ +\log|y|_+,
$$
performing the series expansion of the exponential in the expectation, and computing
the Fourier coefficients, we obtain
\begin{equation}\label{eq:sphere-n-point-screened}
\begin{aligned}
&\la V^{g_0}_{(\alpha_1,m_1)}(z_1)\cdots V^{g_0}_{(\alpha_n,m_n)}(z_n)\ra_{\hat{\C},g_0}
\\
&=
\mathrm{Vol}(\T(\gamma))
\left(\frac{\v_{g_0}(\hat{\C})}{\det'(\Delta_{g_0})}\right)^{r/2}
\prod_{i=1}^r \frac{(-\mu_i)^{s_i}}{s_i!}
\prod_{1\le j<k\le n}|z_j-z_k|^{\la\alpha_j,\alpha_k\ra}
\prod_{j=1}^n |z_j|_+^{4\Delta_{(\alpha_j,0)}}
\\
&\quad\times
e^{-\frac{1}{4\pi}\|\nu^{\mathrm h}_{\z,\m}\|_{g_0,0}^2
+\bi\sum_{j=1}^n\la\alpha_j,I^{\bxi}_0(\nu^{\mathrm h}_{\z,\m})(z_j)\ra}
\mathbb E\Bigg[
\prod_{i=1}^r
\left(
\int_{\C} F_i(x,\z)\,M^{g_0}_{\gamma e_i}(\x_{g_0},\d x)
\right)^{s_i}
\Bigg].
\end{aligned}
\end{equation}
where
\begin{equation}\label{eq:Fi-general-n}
F_i(x,\z)
:=
\prod_{j=1}^n
\left(
\frac{|x-z_j|}{|x|_+}
\right)^{\gamma\la e_i,\alpha_j\ra}
e^{\bi\gamma\la e_i,I^{\bxi}_0(\nu^{\mathrm h}_{\z,\m})(x)\ra}.
\end{equation}

We now rewrite the expectation term in \eqref{eq:sphere-n-point-screened} as a multiple integral.
\begin{lemma}\label{lem:n-point-DF}
Assume that
$$
2Q-\sum_{j=1}^n \alpha_j=\gamma\sum_{i=1}^r s_i e_i,
\qquad
\bs=(s_1,\ldots,s_r)\in\N^r.
$$
Then
\begin{equation}\label{eq:GMC-to-integral}
\begin{aligned}
&\mathbb E\Bigg[
\prod_{i=1}^r
\left(
\int_{\C} F_i(x,\z)\,M^{g_0}_{\gamma e_i}(\x_{g_0},\d x)
\right)^{s_i}
\Bigg]
\\
&=
\int_{\C^N}
\prod_{i=1}^r\prod_{a=1}^{s_i}
\Bigg(
\prod_{j=1}^n
|x_a^{(i)}-z_j|^{\gamma\la e_i,\alpha_j\ra}
\,e^{\bi\gamma\la e_i,I^{\bxi}_0(\nu^{\mathrm h}_{\z,\m})(x_a^{(i)})\ra}
\Bigg)
\\
&\qquad\qquad\times
\prod_{i=1}^r\prod_{1\le a<b\le s_i}
|x_a^{(i)}-x_b^{(i)}|^{\gamma^2\la e_i,e_i\ra}
\prod_{1\le i<j\le r}\prod_{a=1}^{s_i}\prod_{b=1}^{s_j}
|x_a^{(i)}-x_b^{(j)}|^{\gamma^2\la e_i,e_j\ra}
\prod_{i=1}^r\prod_{a=1}^{s_i}\d x_a^{(i)},
\end{aligned}
\end{equation}
where $N:=\sum_{i=1}^r s_i$.
\end{lemma}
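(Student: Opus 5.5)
The plan is to compute the mixed moment at the level of the regularized imaginary chaos, where everything reduces to Gaussian characteristic functions, and then pass to the limit $\eps\to0$. By multilinearity, for fixed $\eps>0$ we expand
$$
\prod_{i=1}^r\Big(\int_\C F_i(x,\z)\,M^{g_0,\eps}_{\gamma e_i}(\x_{g_0},\d x)\Big)^{s_i}
$$
as an integral over $\C^N$ in the variables $x^{(i)}_a$, $1\le a\le s_i$, $1\le i\le r$. The integrand is $\prod F_i(x^{(i)}_a,\z)$ times the product of $\eps^{-\gamma^2|e_i|^2/2}e^{\bi\gamma\langle e_i,\x_{g_0,\eps}(x^{(i)}_a)\rangle}$ against $\prod\d\v_{g_0}(x^{(i)}_a)$. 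Since the $F_i$ are bounded (their modulus is a product of powers of $|x-z_j|/|x|_+$ with exponents making them locally integrable, and the magnetic phase has modulus one), Fubini applies for fixed $\eps$. The expectation of the exponential is the Gaussian characteristic function
$$
\exp\Big(-\tfrac{\gamma^2}{2}\sum_{(i,a),(j,b)}\langle e_i,e_j\rangle\,G_{\eps,\eps}(x^{(i)}_a,x^{(j)}_b)\Big).
$$
The diagonal terms combine with the prefactor $\eps^{-\gamma^2|e_i|^2/2}$ to give $e^{-\frac{\gamma^2}{2}|e_i|^2W_{g_0}(x^{(i)}_a)}+o(1)$, by the variance asymptotics recalled in Section~\ref{sec:igmc}. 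The off-diagonal terms converge to $e^{-\gamma^2\langle e_i,e_j\rangle G_{g_0}(x,y)}$ away from the diagonals.

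Next is the bookkeeping of the factors $|x|_+$, which is where the neutrality condition enters. We use $G_{g_0}(x,y)=\log\frac1{|x-y|}+\log|x|_++\log|y|_+$ and $d_{g_0}(x,y)\sim|x-y|\,|x|_+^{-2}$, which give $W_{g_0}\equiv0$, together with $\d\v_{g_0}=|x|_+^{-4}\d x$. Each pair then contributes $|x-y|^{\gamma^2\langle e_i,e_j\rangle}$ together with $|x|_+^{-\gamma^2\langle e_i,e_j\rangle}|y|_+^{-\gamma^2\langle e_i,e_j\rangle}$. Collecting the total power of $|x^{(i)}_a|_+$ from $F_i$, from the pair terms and from the volume form gives
$$
-\gamma\Big\langle e_i,\sum_j\alpha_j+\gamma\sum_k s_ke_k\Big\rangle+\gamma^2|e_i|^2-4
=-2\gamma\langle e_i,Q\rangle+\gamma^2|e_i|^2-4=0,
$$
using \eqref{eq:sphere-screening-vector-n} and $\gamma\langle e_i,Q\rangle=\frac{\gamma^2}{2}|e_i|^2-2$. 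All $|x|_+$ factors therefore cancel, and the limiting integrand is exactly the one displayed in \eqref{eq:GMC-to-integral}.

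The main obstacle is justifying the limit $\eps\to0$ on both sides. On the integral side, we need absolute integrability of the limiting integrand and a dominating bound uniform in $\eps$. The singularities are of three kinds. Near the insertions $z_j$ we have $\gamma\langle e_i,\alpha_j\rangle>\gamma\langle e_i,Q\rangle>-2$ by $\alpha_j-Q\in\mathcal C_+$. Between distinct roots the exponents satisfy $\gamma^2\langle e_i,e_j\rangle\ge-\gamma^2>-2$ since $\gamma^2<1$. Among the same root the exponents $\gamma^2|e_i|^2$ are positive. At infinity the cancellation above handles everything. To dominate uniformly in $\eps$, I would first try the standard comparison $|G_{\eps,\eps}(x,y)-G_{g_0}(x,y)|\le C$ for $|x-y|\ge\eps$, together with $G_{\eps,\eps}\le\log\frac1{\max(|x-y|,\eps)}+C$, and then apply dominated convergence. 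The delicate point is that negative exponents for adjacent roots turn $e^{-\gamma^2\langle e_i,e_j\rangle G_\eps}$ into a factor that is large on the diagonal, but it is bounded by $C|x-y|^{\gamma^2\langle e_i,e_j\rangle}$, which is locally integrable.

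On the probabilistic side, we need $\E[\prod(\int F_i\,\d M^{\eps}_{\gamma e_i})^{s_i}]\to\E[\prod(\int F_i\,\d M_{\gamma e_i})^{s_i}]$. The route is to show that $\int F_i\,\d M^{\eps}_{\gamma e_i}$ is Cauchy in every $L^p$. This follows from the $L^2$ convergence recalled in Section~\ref{sec:igmc} (with the weight $F_i$, as in the proof of Theorem~\ref{thm:mixed operators}), combined with the uniform exponential moments of Proposition~\ref{prop:exp-moment-shifted}, which give uniform $L^p$ bounds for all $p$. Hölder's inequality then yields convergence of the mixed moment, which identifies the two limits and proves the lemma.
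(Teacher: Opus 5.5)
Your algebraic core is the same as the paper's proof. You expand the $s_i$-th powers, compute the joint moments through the Gaussian characteristic function (pair factors $e^{-\gamma^2\la e_i,e_j\ra G_{g_0}}$ and, since $W_{g_0}\equiv 0$, single-point factors $|x|_+^{-4}\d x$), and cancel all powers of $|x|_+$ using the neutrality condition and $\gamma\la e_i,Q\ra=\frac{\gamma^2}{2}|e_i|^2-2$. The paper does this formally on the limiting chaos and never justifies exchanging expectation and integral; the problems are in the justification layer you add.

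A small point first: the $F_i$ are not bounded. The hypothesis $\alpha_j-Q\in\mathcal C_+$ only gives $\gamma\la e_i,\alpha_j\ra>\frac{\gamma^2}{2}|e_i|^2-2$, which may be negative. Fubini at fixed $\eps$ still holds, because $F_i\in L^1(\d\v_{g_0})$ and the regularized chaos density has deterministic modulus $\eps^{-\gamma^2|e_i|^2/2}$.

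The genuine gap is the domination uniform in $\eps$. Pairwise local integrability is not the right criterion. When a set $S$ of screening points collapses onto $z_j$, the limiting integrand is integrable because $2|S|+\gamma\la E_S,\alpha_j\ra+\gamma^2\sum_{s<t}\la e_{i_s},e_{i_t}\ra>\frac{\gamma^2}{2}|E_S|^2>0$, where $E_S:=\sum_{s\in S}e_{i_s}$. This bound relies on the \emph{positive} same-root exponents.

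Those repulsive factors cannot be dominated uniformly in $\eps$ by their limits, since $e^{-\gamma^2|e_i|^2G_{\eps,\eps}(x,y)}\asymp\max(|x-y|,\eps)^{\gamma^2|e_i|^2}$. An $\eps$-independent dominator built factor by factor from your two bounds therefore keeps the adjacent-root attraction $|x-y|^{\gamma^2\la e_i,e_j\ra}$ but replaces the repulsion by a constant near the diagonal. Such a dominator need not be integrable. Take $\mathfrak{sl}_3$, $\gamma^2=1/10$, $\alpha_1=\alpha_2=Q+\gamma(e_1+e_2)$ and $\alpha_3=-7\gamma(e_1+e_2)$. Then $s_1=s_2=5$ and all standing hypotheses hold, including $Q,\alpha_j\in\Lambda^*$ and $\alpha_j-Q\in\mathcal C_+$. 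On the set where all ten screening points lie within $\rho$ of $z_1$, the factorwise dominator has integral $\gtrsim\rho^{20}\rho^{10(2\gamma^2-2)}\rho^{-25\gamma^2}=\rho^{-5\gamma^2}$, which is unbounded as $\rho\to0$.

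The remedy is to bound the regularized Gaussian factor globally, with the Onsager-type inequality used for moments of imaginary chaos. Uniformly in $\eps$,
\[
\prod_k\eps^{-\frac{\gamma^2}{2}|e_{i_k}|^2}\exp\Bigl(-\tfrac12\mathrm{Var}\Bigl(\textstyle\sum_k\gamma\la e_{i_k},\x_{g_0,\eps}(x_k)\ra\Bigr)\Bigr)\le C\prod_k d_k^{-\frac{\gamma^2}{2}|e_{i_k}|^2},
\]
where $d_k$ is the distance from $x_k$ to the nearest other screening point. Power counting over nested clusters for this dominator times $\prod|F_i|$ gives degree $\gamma\la E_S,\alpha_j-Q\ra>0$ at each $z_j$, which is exactly the Seiberg-type condition. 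Away from the insertions it gives $2(|S|-1)-\frac{\gamma^2}{2}\sum_{s\in S}|e_{i_s}|^2>0$, and $\infty$ is a regular point for $g_0$. Dominated convergence then applies, and together with your $L^p$ argument on the probabilistic side the proof closes.
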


\begin{proof}
For each $i\in\{1,\ldots,r\}$, set
$$
Y_i:=\int_{\C}F_i(x,\z)\,M^{g_0}_{\gamma e_i}(\x_{g_0},\d x).
$$
Since $s_i\in\N$, we may write
$$
Y_i^{s_i}
=
\int_{\C^{s_i}}
\prod_{a=1}^{s_i}F_i(x_a^{(i)},\z)
\prod_{a=1}^{s_i}M^{g_0}_{\gamma e_i}(\x_{g_0},\d x_a^{(i)}).
$$
Hence
\begin{equation}\label{eq:moment-expand-proof}
\begin{aligned}
\mathbb E\Bigg[\prod_{i=1}^rY_i^{s_i}\Bigg]
=
\int_{\C^N}
\Bigg(\prod_{i=1}^r\prod_{a=1}^{s_i}F_i(x_a^{(i)},\z)\Bigg)
\mathbb E\Bigg[
\prod_{i=1}^r\prod_{a=1}^{s_i}
M^{g_0}_{\gamma e_i}(\x_{g_0},\d x_a^{(i)})
\Bigg].
\end{aligned}
\end{equation}

We now compute the joint moment measure. Fix a $g_0$-regularization
$(\x_{g_0,\eps})_{\eps>0}$. By definition,
$$
M^{g_0}_{\gamma e_i}(\x_{g_0},\d x)
=
\lim_{\eps\to0}
\eps^{-\frac{\gamma^2}{2}\la e_i,e_i\ra}
e^{\bi\gamma\la e_i,\x_{g_0,\eps}(x)\ra}\,\d\v_{g_0}(x),
$$
and
$$
\d\v_{g_0}(x)=|x|_+^{-4}\,\d x.
$$
Therefore, for pairwise distinct screening variables,
\begin{equation}\label{eq:joint-moment-measure-proof}
\begin{aligned}
&\mathbb E\Bigg[
\prod_{i=1}^r\prod_{a=1}^{s_i}
M^{g_0}_{\gamma e_i}(\x_{g_0},\d x_a^{(i)})
\Bigg]
\\
&=
\prod_{i=1}^r\prod_{1\le a<b\le s_i}
e^{-\gamma^2\la e_i,e_i\ra G_{g_0}(x_a^{(i)},x_b^{(i)})}
\prod_{1\le i<j\le r}\prod_{a=1}^{s_i}\prod_{b=1}^{s_j}
e^{-\gamma^2\la e_i,e_j\ra G_{g_0}(x_a^{(i)},x_b^{(j)})}
\prod_{i=1}^r\prod_{a=1}^{s_i}|x_a^{(i)}|_+^{-4}\,\d x_a^{(i)}.
\end{aligned}
\end{equation}

Since
$$
G_{g_0}(x,y)=\log\frac{1}{|x-y|}+\log|x|_+ +\log|y|_+,
$$
we get
$$
e^{-\gamma^2\la e_i,e_j\ra G_{g_0}(x,y)}
=
|x-y|^{\gamma^2\la e_i,e_j\ra}
|x|_+^{-\gamma^2\la e_i,e_j\ra}
|y|_+^{-\gamma^2\la e_i,e_j\ra}.
$$
Substituting this into \eqref{eq:joint-moment-measure-proof}, we obtain
\begin{equation}\label{eq:joint-moment-measure-2-proof}
\begin{aligned}
\mathbb E\Bigg[
\prod_{i=1}^r\prod_{a=1}^{s_i}
M^{g_0}_{\gamma e_i}(\x_{g_0},\d x_a^{(i)})
\Bigg]
=
&\prod_{i=1}^r\prod_{1\le a<b\le s_i}
|x_a^{(i)}-x_b^{(i)}|^{\gamma^2\la e_i,e_i\ra}
\\
&\times
\prod_{1\le i<j\le r}\prod_{a=1}^{s_i}\prod_{b=1}^{s_j}
|x_a^{(i)}-x_b^{(j)}|^{\gamma^2\la e_i,e_j\ra}
\\
&\times
\prod_{i=1}^r\prod_{a=1}^{s_i}
|x_a^{(i)}|_+^{-4-\gamma^2\la e_i,\sum_{k=1}^rs_ke_k-e_i\ra}\,\d x_a^{(i)}.
\end{aligned}
\end{equation}

On the other hand, by \eqref{eq:Fi-general-n},
$$
\prod_{i=1}^r\prod_{a=1}^{s_i}F_i(x_a^{(i)},\z)
=
\prod_{i=1}^r\prod_{a=1}^{s_i}
\Bigg(
\prod_{j=1}^n
|x_a^{(i)}-z_j|^{\gamma\la e_i,\alpha_j\ra}
\Bigg)
|x_a^{(i)}|_+^{-\gamma\la e_i,\sum_{j=1}^n\alpha_j\ra}
e^{\bi\gamma\la e_i,I^{\bxi}_0(\nu^{\mathrm h}_{\z,\m})(x_a^{(i)})\ra}.
$$
Hence the total exponent of $|x_a^{(i)}|_+$ is
$$
-4-\gamma^2\la e_i,\sum_{k=1}^rs_ke_k-e_i\ra
-\gamma\la e_i,\sum_{j=1}^n\alpha_j\ra.
$$
Using the neutrality condition
$$
\sum_{j=1}^n\alpha_j+\gamma\sum_{k=1}^rs_ke_k=2Q,
$$
this becomes
$$
-4+\gamma^2|e_i|^2-\gamma\la e_i,2Q\ra.
$$
Since
$$
Q=\gamma\rho-\frac{2}{\gamma}\rho^\vee,
\qquad
\la e_i,\rho\ra=\frac{|e_i|^2}{2},
\qquad
\la e_i,\rho^\vee\ra=1,
$$
we get
$$
\gamma\la e_i,2Q\ra
=
2\gamma^2\la e_i,\rho\ra-4\la e_i,\rho^\vee\ra
=
\gamma^2|e_i|^2-4.
$$
Therefore the total exponent of $|x_a^{(i)}|_+$ is equal to $0$, and all the
$|x|_+$-factors cancel. Substituting this cancellation and \eqref{eq:joint-moment-measure-2-proof}
into \eqref{eq:moment-expand-proof} yields \eqref{eq:GMC-to-integral}.
\end{proof}

Applying Theorem \ref{thm:electro-magnetic operators}, we obtain that the correlation functions are conformally covariant: if $g=e^\rho g_0=g(z)|\d z|^2$ is a conformal metric, if $z_1,\ldots,z_n$ are distinct points on $\hat{\C}$ with associated unit tangent vectors $v_1,\ldots,v_n$, and if $\psi(z)=\frac{az+b}{cz+d}$ is a M\"obius map with $ad-bc=1$, then
\begin{equation}\label{eq:sphere-mobius-cov}
\la V^g_{(\balpha,\m)}(\psi_*\bv)\ra_{\hat{\C},g}
=
\prod_{j=1}^n
\left(
\frac{|\psi'(z_j)|^2g(\psi(z_j))}{g(z_j)}
\right)^{-\Delta_{(\alpha_j,m_j)}}
\la V^g_{(\balpha,\m)}(\bv)\ra_{\hat{\C},g}.
\end{equation}

In particular, if $n=3$ and $m_1\leq m_2\leq m_3,$ then the M\"obius covariance implies that the three-point function is determined up to a constant, denoted by
$
C_{\gamma,\bmu}(\balpha,\m)
$ and
called the \emph{structure constant}. More precisely,
\begin{equation}\label{eq:sphere-three-point-structure}
\begin{aligned}
\la V^g_{(\balpha,\m)}(\bv)\ra_{\hat{\C},g}
=&e^{\frac{c}{96\pi}\int_{\hat{\C}} (|\d\rho|_{g_0}^2+2K_{g_0}\rho)\,\d \v_{g_0}}
e^{-\bi\sum_{j=1}^3\la Q,m_j\ra\arg(v_j)}
\\&\times P_{(\balpha,\m)}(\z)
\prod_{j=1}^3 g(z_j)^{-\Delta_{(\alpha_j,m_j)}}
\,C_{\gamma,\bmu}(\balpha,\m),
\end{aligned}
\end{equation}
where
\begin{align*}
P_{(\balpha,\m)}(\z)
:=&
|z_1-z_3|^{2(\Delta_{(\alpha_2,m_2)}-\Delta_{(\alpha_1,m_1)}-\Delta_{(\alpha_3,m_3)})}
|z_2-z_3|^{2(\Delta_{(\alpha_1,m_1)}-\Delta_{(\alpha_2,m_2)}-\Delta_{(\alpha_3,m_3)})}\\&\times
|z_1-z_2|^{2(\Delta_{(\alpha_3,m_3)}-\Delta_{(\alpha_1,m_1)}-\Delta_{(\alpha_2,m_2)})}.
\end{align*}
In particular, if $e_1=\partial_x$ in the coordinate $z=x+\bi y\in\C$, then \begin{equation*}\begin{aligned}
C_{\gamma,\bmu}(\balpha,\m)=&\la V_{(\alpha_1,m_1)}(0,e_1)V_{(\alpha_2,m_2)}(1,e_1)V_{(\alpha_3,m_3)}(\infty,e_1) \ra_{\hat{\C},g_0}.\end{aligned}\end{equation*}

We now compute the three-point function in the metric $g_0$. Assume that $z_1,z_2,z_3\in \R$ with $z_1<z_2<z_3$ and that $v_1=v_2=v_3=e_1$. We choose the defect graph $z_1\to z_2\to z_3$, with branch cut $[z_1,z_3]$. We then take

\begin{equation*}\begin{aligned}\nu^{\mathrm{h}}_{\z,\m}(z)=&\Im\left(\frac{m_1(z_1-z_2)\d z}{(z-z_1)(z-z_2)}+\frac{m_3(z_3-z_2)\d z}{(z-z_2)(z-z_3)}\right)\\
=&\Im\left((\frac{m_1}{z-z_1}+\frac{m_2}{z-z_2}+\frac{m_3}{z-z_3})\d z\right)\end{aligned}\end{equation*}
so that
$
I^{\bxi}_0(\nu^{\mathrm h}_{\z,\m})(x)
=
-m_1\,\arg\frac{z_2-x}{z_1-x}
+
m_3\,\arg\frac{z_3-x}{z_2-x},
$
which vanishes on $]z_3,+\infty[$.

Set $\bs=(s_1,\ldots,s_r)\in\N^r$ by
\begin{equation*}
2Q-\sum_{j=1}^3\alpha_j=\gamma\sum_{i=1}^r s_i e_i.
\end{equation*} Combining \eqref{eq:sphere-n-point-screened}, \eqref{eq:GMC-to-integral} and the observation \cite[Lemma 7.1]{Guillarmou:2023exh} that 
\begin{equation*}
e^{-\frac{1}{4\pi}\|\nu^{\mathrm h}_{\z,\m}\|_{g_0,0}^2
+\bi\sum_{j=1}^3\la\alpha_j,I^{\bxi}_0(\nu^{\mathrm h}_{\z,\m})(z_j)\ra}
=
\prod_{1\le j<k\le 3}|z_j-z_k|^{\la m_j,m_k\ra}\prod_{j=1}^3|z_j|_+^{|m_j|^2}
e^{\bi\pi\la\alpha_2,m_1\ra-\bi\pi\la\alpha_3,m_3\ra},
\end{equation*}
we obtain

\begin{equation}\label{eq:sphere-three-point-DF}
\begin{aligned}
\la V^{g_0}_{\balpha,\m}(\z)\ra_{\hat{\C},g_0} 
=&\mathrm{Vol}(\T(\gamma))
\left(\frac{\v_{g_0}(\hat{\C})}{\det'(\Delta_{g_0})}\right)^{r/2}
\prod_{i=1}^r \frac{(-\mu_i)^{s_i}}{s_i!}
\prod_{1\le j<k\le 3}|z_j-z_k|^{\la\alpha_j,\alpha_k\ra+\la m_j,m_k\ra}
\\
&\times
\prod_{j=1}^3|z_j|_+^{4\Delta_{(\alpha_j,m_j)}}
e^{\bi\pi\la\alpha_2,m_1\ra-\bi\pi\la\alpha_3,m_3\ra}
\mathcal I_{\bs}(\balpha,\m;\z).
\end{aligned}
\end{equation}
where
\begin{equation}\label{eq:sphere-three-point-DF-integral}
\begin{aligned}
\mathcal I_{\bs}(\balpha,\m;\z)
:={}&
\int_{\C^N}
\prod_{i=1}^r\prod_{a=1}^{s_i}
(x_a^{(i)}-z_1)^{A_{i,1}}
(\overline{x_a^{(i)}-z_1})^{\bar A_{i,1}}
(x_a^{(i)}-z_2)^{A_{i,2}}
(\overline{x_a^{(i)}-z_2})^{\bar A_{i,2}}
\\
&\times
(x_a^{(i)}-z_3)^{A_{i,3}}
(\overline{x_a^{(i)}-z_3})^{\bar A_{i,3}}
\prod_{i=1}^r\prod_{1\le a<b\le s_i}
|x_a^{(i)}-x_b^{(i)}|^{\gamma^2\la e_i,e_i\ra}
\\
&\times
\prod_{1\le i<j\le r}\prod_{a=1}^{s_i}\prod_{b=1}^{s_j}
|x_a^{(i)}-x_b^{(j)}|^{\gamma^2\la e_i,e_j\ra}
\prod_{i=1}^r\prod_{a=1}^{s_i}\d x_a^{(i)},
\end{aligned}
\end{equation}
with
$$
A_{i,j}:=\frac{\gamma}{2}\la e_i,\alpha_j+m_j\ra,
\qquad
\bar A_{i,j}:=\frac{\gamma}{2}\la e_i,\alpha_j-m_j\ra,
\qquad j=1,2,3.
$$

The structure constant is then given by
\begin{equation}\label{eq:three-point-0-1-infty}
\begin{aligned}
C_{\gamma,
,\bmu}(\balpha,\m)
=&
\lim_{|z|\to\infty}
\la V^{g_0}_{(\alpha_1,m_1)}(0)V^{g_0}_{(\alpha_2,m_2)}(1)V^{g_0}_{(\alpha_3,m_3)}(z)\ra_{\hat{\C},g_0}\\
=&
\mathrm{Vol}(\T(\gamma))
\left(\frac{\v_{g_0}(\hat{\C})}{\det'(\Delta_{g_0})}\right)^{r/2}
\prod_{i=1}^r \frac{(-\mu_i)^{s_i}}{s_i!}\,
e^{i\pi\la\alpha_2,m_1\ra-i\pi\la\alpha_3,m_3\ra}\,
\mathcal I_{\bs}(\balpha,\m),
\end{aligned}
\end{equation}
where
\begin{equation}\label{eq:DF-0-1-infty}
\begin{aligned}
\mathcal I_{\bs}(\balpha,\m)
:={}&
\int_{\C^N}
\prod_{i=1}^r\prod_{a=1}^{s_i}
\bigl(x_a^{(i)}\bigr)^{\frac{\gamma}{2}\la e_i,\alpha_1+m_1\ra}
\bigl(\bar x_a^{(i)}\bigr)^{\frac{\gamma}{2}\la e_i,\alpha_1-m_1\ra}
\\
&\qquad\qquad\qquad\times
\bigl(1-x_a^{(i)}\bigr)^{\frac{\gamma}{2}\la e_i,\alpha_2+m_2\ra}
\bigl(1-\bar x_a^{(i)}\bigr)^{\frac{\gamma}{2}\la e_i,\alpha_2-m_2\ra}
\\
&\qquad\qquad\qquad\times
\prod_{i=1}^r\prod_{1\le a<b\le s_i}
|x_a^{(i)}-x_b^{(i)}|^{\gamma^2\la e_i,e_i\ra}
\\
&\qquad\qquad\qquad\times
\prod_{1\le i<j\le r}\prod_{a=1}^{s_i}\prod_{b=1}^{s_j}
|x_a^{(i)}-x_b^{(j)}|^{\gamma^2\la e_i,e_j\ra}
\prod_{i=1}^r\prod_{a=1}^{s_i}\d x_a^{(i)}.
\end{aligned}
\end{equation}

In \cite{fateev2007correlation}, when $\g=\mathfrak{sl}_{r+1}$, the integral \eqref{eq:DF-0-1-infty} was computed in the pure electric and semidegenerate case, i.e., $\m=0$ and $\alpha_1=\kappa\omega_\gamma$. Following the argument in \cite[Appendix A]{fateev2007correlation}, we extend this result to the case where $\alpha_1=\kappa\omega_r$ and $m_1=0$. 

To present the formula, we write $l(x)=\Gamma(x)/\Gamma(1-x)$ with $\Gamma$ the usual Gamma function and introduce the special function $\Upsilon,$ defined intially on $0<\Re(z)<q:=\gamma+\frac{2}{\gamma},$ where it admits the integral representation \begin{equation}\label{eq:Upsilon}
    \log\Upsilon(z)=\int^\infty_0\left(\left(\frac{q}{2}-z\right)^2\frac{e^{-t}}{2}-\frac{\left(\sinh\left(\left(\frac{q}{2}-z\right)\frac{t}{2\sqrt{2}}\right)\right)^2}{\sinh\left(\frac{t\gamma}{2\sqrt{2}}\right)\sinh\left(\frac{t}{\sqrt{2}\gamma}\right)}\right)\frac{\d t}{t},\end{equation} and then extended analytically to an entire function via the shift equations \begin{equation}\label{eq:shift}
    \Upsilon(z+\chi)=l\left(\frac{\chi}{2}z\right)\left(\frac{\chi}{\sqrt{2}}\right)^{1-\chi z}\Upsilon(z),\quad \chi\in\left\{\gamma,\frac{2}{\gamma}\right\}.
\end{equation} We note that if $\gamma^2\notin\mathbb Q$, the special function $\Upsilon$ has no poles but only simple zeros given by $(-\gamma\N-\frac{2}{\gamma}\N)\cup(q+\gamma\N+\frac{2}{\gamma}\N)$.

In the case where $\alpha_1=\kappa\omega_r$ and $\m=0$, the structure constant can be written as $$C_{\gamma,\bmu}(\kappa\omega_r,\alpha_2,\alpha_3,0,0,0)=\mathrm{Vol}(\T(\gamma))
\left(\frac{\v_{g_0}(\hat{\C})}{\det'(\Delta_{g_0})}\right)^{r/2}
\prod_{i=1}^r (-\pi\mu_i)^{s_i}C^{\fl}_\gamma(\kappa\omega_r,\alpha_2,\alpha_3),
$$
where $C^{\fl}_\gamma(\kappa\omega_r,\alpha_2,\alpha_3)$ is the imaginary Fateev--Litivinov formula \cite[eq. (3.10)]{Dupic_2019}:
\begin{equation}\label{eq:FL formula}
    \begin{aligned}
&C^{\fl}_\gamma(\kappa\omega_r,\alpha_2,\alpha_3)\\&=\left(l\left(\frac{\gamma^2}{2}\right)\left(\frac{\gamma}{\sqrt{2}}\right)^{2-\gamma^2}\right)^{\frac{\sqrt{2}}{\gamma}\la 2Q-\kappa\omega_r-\alpha_2-\alpha_3,\rho \ra}\frac{\Upsilon(\gamma)^r\Upsilon(\kappa)\prod_{e>0}\Upsilon\left(\left\la Q-\alpha_2,e\right\ra\right)\Upsilon\left(\left\la Q-\alpha_3,e\right\ra\right)}{\prod_{i,j=1}^{r+1}\Upsilon\left(\frac{\kappa}{r+1}+\la \alpha_2-Q,h_i \ra+\la \alpha_3-Q,h_j\ra\right)},
    \end{aligned}
\end{equation}with $\prod_{e>0}$ the product over all positive roots and  the weights of the first fundamental representation $h_1=\omega_1,$ $h_k=\omega_1-e_1-\cdots-e_{k-1}$ of highest weight $\omega_r.$

\begin{proposition}\label{prop:closed formula}Assume that $\alpha_1=\kappa\omega_r$ and that $m_1=0$. The structure constant \eqref{eq:three-point-0-1-infty} is given by \begin{equation}\label{eq:FL with magnetic charges}
    \begin{aligned}
        C_{\gamma,\bmu}(\kappa\omega_r,\alpha_2,\alpha_3,0,m_2,m_3)^2 =&\mathrm{Vol}(\T(\gamma))^2
\left(\frac{\v_{g_0}(\hat{\C})}{\det'(\Delta_{g_0})}\right)^{r}
\prod_{i=1}^r (\pi\mu_i)^{2s_i}\\&\times C^{\fl}_\gamma(\kappa\omega_r,\alpha_2+m_2,\alpha_3+m_3)C^{\fl}_\gamma(\kappa\omega_r,\alpha_2-m_2,\alpha_3-m_3)
    \end{aligned}
\end{equation}
\end{proposition}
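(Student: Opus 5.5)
The plan is to work directly with the Dotsenko--Fateev representation \eqref{eq:three-point-0-1-infty}--\eqref{eq:DF-0-1-infty}. Squaring it, the prefactors give $\mathrm{Vol}(\T(\gamma))^2(\v_{g_0}(\hat\C)/\det'(\Delta_{g_0}))^{r}\prod_i\mu_i^{2s_i}/(s_i!)^2$. With $m_1=0$, the phase $e^{\bi\pi\la\alpha_2,m_1\ra-\bi\pi\la\alpha_3,m_3\ra}$ reduces to $e^{-\bi\pi\la\alpha_3,m_3\ra}$, which should be absorbed into the phases produced by the computation. So the whole task is to show
\begin{equation*}
\Big(\prod_i\tfrac{1}{s_i!}\mathcal I_{\bs}(\kappa\omega_r,\alpha_2,\alpha_3,0,m_2,m_3)\Big)^2
= \pm\,\pi^{2N}\,C^{\fl}_\gamma(\kappa\omega_r,\alpha_2+m_2,\alpha_3+m_3)\,C^{\fl}_\gamma(\kappa\omega_r,\alpha_2-m_2,\alpha_3-m_3),
\end{equation*}
with the sign and phase tracked explicitly. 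The guiding principle is holomorphic factorization. The integrand of \eqref{eq:DF-0-1-infty} has the form $f(x)\overline{g(x)}$, where the holomorphic exponents are $\frac{\gamma}{2}\la e_i,\alpha_2+m_2\ra$ and the antiholomorphic ones are $\frac\gamma2\la e_i,\alpha_2-m_2\ra$. By \eqref{eq:FL formula}, the right-hand side is essentially the geometric mean of two FL constants evaluated at the shifted charges $\alpha_j\pm m_j$. This is consistent with a KLT-type factorization in which both chiral halves appear symmetrically.

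I would follow \cite[Appendix A]{fateev2007correlation}, which proceeds by induction on the rank, integrating out one level of screening variables at a time. Because $\alpha_1=\kappa\omega_r$, the exponent at $x=0$ is nontrivial only for the variables $x^{(r)}$ attached to the last root. This allows a change of variables that nests the $s_i$-fold integrals. One then applies repeatedly a Selberg/Dotsenko--Fateev type identity of the form \cite[eq. (A.7)]{fateev2007correlation}: an $(n+1+m)$-variable complex integral of $\prod|x-y|$ type is traded for an $n$-variable integral times Gamma ratios. Since our integrand is not a modulus, each step needs the complex twin with distinct holomorphic and antiholomorphic exponents (differing by integers, because $\la e_i,m_j\ra\gamma\in\Z$). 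This is exactly what Lemma \ref{lem:complex-twin-A7-final} provides, generalizing \cite[Lemma 2.4]{neretin2024dotsenko}. Its output consists of products of $\Gamma(a)/\Gamma(1-\bar a)$, which for $a-\bar a\in\Z$ equal $\pm\, l$-type factors. After iterating through all $r$ levels, I would collect the Gamma factors and recognize them, using the shift equations \eqref{eq:shift}, as ratios of $\Upsilon$ functions at $\alpha_2\pm m_2$ and $\alpha_3\pm m_3$. Squaring then symmetrizes the holomorphic and antiholomorphic contributions into the product of the two FL constants.

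The main obstacle is convergence. Each application of Lemma \ref{lem:complex-twin-A7-final} needs absolute convergence of the intermediate integrals, and the parameter range required for that does not contain the physical range $\alpha_j-Q\in\mathcal C_+$ with $\gamma^2<1$. My approach is to treat all exponents (those at $0$, at $1$, and the mutual ones) as independent complex parameters. I would first verify the identity on a nonempty open set where every intermediate integral converges absolutely, keeping the integer differences $A-\bar A$ fixed. I would then extend it by analytic continuation. For the left-hand side this requires meromorphic continuation of the complex polynomial zeta integral in those parameters, which is Corollary \ref{cor:global-polynomial-zeta}, itself obtained from Theorem \ref{thm:local-complex-zeta} via resolution of singularities. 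The right-hand side is meromorphic because $\Upsilon$ is entire. The identity theorem, applied on the connected parameter domain, then gives the result at the actual parameters, where the left side is finite and equals the defined structure constant. Finally, I would check that the accumulated phases match the explicit prefactor $e^{-\bi\pi\la\alpha_3,m_3\ra}$, which is why the statement is phrased for the square.
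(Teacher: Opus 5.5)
Your plan is essentially the paper's proof. It rewrites \eqref{eq:DF-0-1-infty} as $\prod_i(\pi^{s_i}s_i!)\,\mathcal I_{\bs}$ and runs the row-by-row recursion of \cite[Appendix A]{fateev2007correlation}, reducing $\bs\to\bs-\bone$ after unfolding $D_{s_1}(x^{(1)})^{\gamma^2-1|\gamma^2-1}$ with the $q=0$ case of Lemma~\ref{lem:complex-twin-A7-final}, and with Lemma~\ref{lem:complex-twin-A7-final} in place of \cite[eq.~(A.7)]{fateev2007correlation}; each step is justified on an open parameter set with $\sigma,\tau$ kept independent and extended by Corollary~\ref{cor:global-polynomial-zeta}, and the resulting Gamma products are converted into $\Upsilon$-ratios via \eqref{eq:shift}. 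The one point to sharpen is that a single $\Gamma^{\C}(a|a')$ with $a\neq a'$ is not an $l$-factor: the paper squares precisely because $\Gamma^{\C}(a|a')^2=\Gamma^{\C}(a|a')\Gamma^{\C}(a'|a)=l(a)l(a')$ splits every recursion factor into its $\alpha_2+m_2$ and $\alpha_2-m_2$ halves, while the phase $e^{-\bi\pi\la\alpha_3,m_3\ra}=\pm1$ (since $\la\alpha_3,m_3\ra\in\Z$) simply disappears upon squaring.
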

We provide a proof in Appendix \ref{Appendix}.

\section{Segal's gluing axioms}\label{sec:Segal's axioms}

\subsection{Hilbert space of boundary states}\label{subsec:Hilbert-space}
We fix an orthonormal basis $(\eps_j)_{j=1,\ldots,r}$ of $\a$. Let $\mathbb T:=\{z\in\C:|z|=1\}$ denote the unit circle
A generic $\a$-valued Fourier series on $\mathbb T$ can be written as
$$
\widetilde\varphi(\theta)=c+\varphi(\theta),
\quad
\varphi(\theta)=\sum_{n\in\Z^*}\sum^r_{j=1}\varphi_{n,j} e^{\bi n\theta}\eps_j,
$$
with constant mode $c=\sum^r_{j=1}c_j\eps_j\in\a$ and Fourier coefficients $(\varphi_{n,j})_{n\in\Z^*,j=1,\ldots,r}$ parametrized by
$\varphi_{n,j}=\frac{x_{n,j}+\bi y_{n,j}}{2\sqrt n},\,\varphi_{-n,j}=\frac{x_{n,j}-\bi y_{n,j}}{2\sqrt n}$
for some $x_{n,j},y_{n,j}\in\mathbb R$.

For $s\in\mathbb R$, we define
$$
H^s(\mathbb T;\a)
:=
\left\{
\varphi=\sum_{n\in\mathbb Z}\varphi_n e^{\bi n\theta}:
\|\varphi\|_{H^s(\mathbb T;\a)}^2
:=
\sum_{n\in\mathbb Z}\sum^r_{j=1}
(1+|n|)^{2s}|\varphi_{n,j}|^2<\infty
\right\}.
$$
If the variables $x_{n,\ell},y_{n,\ell}$ are independent standard real Gaussian random
variables, then the corresponding random Fourier series belongs almost surely to
$H^s(\mathbb T;\a)$ for every $s<0$.

We next incorporate the compactification. Recall that the target torus is
$\T(\gamma)=\a/(2\pi\Lambda),$ with  $\Lambda=\gamma^{-1}\bigoplus_{i=1}^r\mathbb Z\omega_i^\vee.$ We define the equivariant Sobolev space
$$
H^s_\Lambda(\mathbb R;\a)
:=
\left\{
u\in H^s_{\mathrm{loc}}(\mathbb R;\a):
u(\theta+2\pi n)-u(\theta)\in 2\pi\Lambda
\quad\text{for all }n\in\mathbb Z
\right\}.
$$
Similarly to the discussion in Section \ref{sec:equivariant}, we have the identification $$\Lambda\times H^s(\T;\a)\rightarrow H^s_{\Lambda}(\R;\a),\quad (\lambda,\widetilde\varphi)\mapsto\pi^*\widetilde\varphi(\theta)+\lambda\theta.$$

Let $\Omega_{\mathbb T}:=(\R^{2r})^{\mathbb N^*}$, equipped with the cylinder sigma-algebra $\Sigma_{\T}$, and let $\mathbb P_{\mathbb T}$ be the probability measure
$$
\mathbb P_{\mathbb T}
:=
\bigotimes_{\substack{n\ge1,\\ j=1,\ldots,r}}
\left[
\frac{1}{2\pi}
e^{-\frac{1}{2}\left(x_{n,j}^2+y_{n,j}^2\right)}
\d x_{n,j}\d y_{n,j}
\right].
$$
The push-forward of $\mathbb P_{\mathbb T}$ by
$$
(x_{n,j},y_{n,j})_{\substack{n\ge1},j=1,\ldots,r}\longmapsto
\varphi(\theta)
=
\sum_{n>0}\sum_{j=1}^r
\frac{x_{n,j}+\bi y_{n,j}}{2\sqrt n}e^{\bi n\theta}\eps_j
+
\frac{x_{n,j}-\bi y_{n,j}}{2\sqrt n}e^{-\bi n\theta}\eps_j
$$
defines a measure on $H^s(\mathbb T;\a)$, still denoted
$\mathbb P_{\mathbb T}$, which is supported on $H^s(\T;\a)$ for any $s<0,$ i.e., $\P(\varphi\in H^s(\T;\a))=1$

We equip $\Lambda$ with the discrete sigma-algebra and the counting measure $\mu_\Lambda:=\sum_{\lambda\in\Lambda}\delta_\lambda,$ and $\T(\gamma)=\a/(2\pi\Lambda)$ with the Lebesgue measure induced by the Euclidean
structure on $\a$. We then define the boundary Hilbert space by
$$
\H:=\mathcal H_{\g,\gamma}
:=
L^2\!\left(
\T(\gamma)\times\Lambda\times\Omega_{\mathbb T},
\,
\d c\otimes\mu_\Lambda\otimes\mathbb P_{\mathbb T}
\right).
$$

Equivalently, the map
$$
(c,\lambda,\varphi)
\longmapsto
u_{\lambda,c,\varphi}(\theta)
:=
c+\lambda\theta+\varphi(\theta)
$$
pushes forward
$\d c\otimes\mu_\Lambda\otimes\mathbb P_{\mathbb T}^{\a}$
to a measure, denoted $\mu_0$, on $H^s_\Lambda(\mathbb R;\a)$. This measure is invariant
under global translations
$\tau_\ell:u\longmapsto u+2\pi\lambda,\, \lambda\in\Lambda,$
and therefore descends to the quotient $H^s_\Lambda(\mathbb R;\a)/(2\pi\Lambda).$
With this notation we may equivalently write
$$
\mathcal H\simeq L^2\left(H^s_\Lambda(\mathbb R;\a)/(2\pi\Lambda),\mu_0\right).
$$
Concretely, if $F:H^s_\Lambda(\mathbb R;\a)\to\mathbb C$ is measurable and invariant
under global translations by an element of the lattice $\Lambda$, then
\begin{equation}\label{eq:dmu0}
\int_{H^s_\Lambda(\mathbb R;\a)/(2\pi\Lambda)}
F(u)\,\d\mu_0(u)
=
\int_{\T(\gamma)}
\sum_{\lambda\in\Lambda}
\mathbb E
\left[
F\bigl(c+\lambda\theta+\varphi(\theta)\bigr)
\right]
\d c.
\end{equation}

\subsection{Dirichlet-to-Neumann map}\label{subsec:DN-map}

Let $(\Sigma,g)$ be a compact Riemann surface with analytic boundary
$\partial\Sigma=\bigsqcup_{j=1}^{\mathbbm b}\partial_j\Sigma$. We assume that each
$\partial_j\Sigma$ is equipped with an analytic parametrization
$\zeta_j:\mathbb T\longrightarrow \partial_j\Sigma,$
and that the metric $g$ is chosen so that each boundary component has length $2\pi$.
We denote by $\d\ell_g$ the Riemannian measure induced by $g$ on $\partial\Sigma$.

Boundary data will be written as $\widetilde{\bm\varphi}=(\widetilde\varphi_1,\ldots,\widetilde\varphi_{\bbb})
\in H^s(\mathbb T;\a)^{\bbb}.$
We use the following normalized distributional pairing between
$H^s(\mathbb T;\a)^{\mathfrak b}$ and
$H^{-s}(\mathbb T;\a)^{\mathfrak b}$:
$$
\langle \bm\varphi,\bm\psi\rangle_{\partial}
:=
\frac{1}{2\pi}
\sum_{j=1}^{\mathfrak b}
\int_0^{2\pi}
\la \varphi_j(e^{\bi\theta}),\psi_j(e^{\bi\theta})\ra
\,\d\theta
$$
whenever the integral expression makes sense.

For smooth boundary data
$\widetilde{\bm\varphi}\in C^\infty(\mathbb T;\a)^{\mathfrak b}$, we denote by
$P_\Sigma\widetilde{\bm\varphi}$ its harmonic extension to $\Sigma$, namely the unique smooth
$\a$-valued function on $\Sigma$ satisfying
$$
\Delta_g P_\Sigma\widetilde{\bm\varphi}=0
\quad\text{in }\Sigma,
\qquad
P_\Sigma\widetilde{\bm\varphi}|_{\partial_j\Sigma}
=
\widetilde\varphi_j\circ \zeta_j^{-1},
\quad j=1,\ldots,\bbb.
$$
For general $\widetilde{\bm\varphi}\in H^s(\mathbb T;\a)^{\bbb}$, the harmonic extension $P_\Sigma\widetilde{\bm\varphi}$
is understood weakly. More precisely, for every $u\in C^\infty(\mathbb T;\a)$,
$$
\lim_{r\to1^-}
\int_0^{2\pi}
\la P_\Sigma\bm\varphi(\zeta_j(re^{\bi\theta})),u(e^{\bi\theta})\ra
\,\d\theta
=
2\pi\langle \varphi_j,u\rangle .
$$

For later use, we define the Dirichlet-to-Neumann operator
$\mathbf D_\Sigma:
C^\infty(\mathbb T;\a)^{\bbb}
\rightarrow
C^\infty(\mathbb T;\a)^{\bbb}$
by
$$
\mathbf D_\Sigma\widetilde{\bm\varphi}
:=
\left(
-\partial_\nu P_\Sigma\widetilde{\bm\varphi}|_{\partial_j\Sigma}\circ\zeta_j
\right)_{j=1,\ldots,\bbb},
$$
where $\nu$ denotes the inward-pointing unit normal vector field along
$\partial\Sigma$.
Equivalently, if $\widetilde{\bm\varphi}=\sum_{\ell=1}^r\widetilde{\bm\varphi}^{(\ell)}\eps_\ell,$
then
$\mathbf D_\Sigma\widetilde{\bm\varphi}
=
\sum_{\ell=1}^r
\mathbf D_\Sigma^{\mathrm{scal}}\widetilde{\bm\varphi}^{(\ell)}\,\eps_\ell,$
where $\mathbf D_\Sigma^{\mathrm{scal}}$ is the usual scalar Dirichlet-to-Neumann map.
In particular, $\mathbf D_\Sigma$ is symmetric and non-negative with kernel given by $\ker \mathbf D_\Sigma
=
\left\{
(c,\ldots,c):c\in\a
\right\}.$
Indeed, Green's formula gives
\begin{equation}\label{eq:DN-energy-identity}
\int_\Sigma
\la \d P_\Sigma\widetilde{\bm\varphi},\d P_\Sigma\widetilde{\bm\varphi}\ra_g
\,\d\v_g
=
2\pi
\langle \widetilde{\bm\varphi},\mathbf D_\Sigma\widetilde{\bm\varphi}\rangle.
\end{equation}

We shall also need a Dirichlet-to-Neumann operator associated with cutting curves in
the interior of $\Sigma$. Let $\mathcal C'=\bigsqcup_{j=1}^{\bbb'}\mathcal C'_j$
be a finite collection of pairwise disjoint analytic simple closed curves contained in
the interior of $\Sigma$, and let $\zeta'_j:\mathbb T\to \mathcal C'_j$ be analytic parametrizations. For $\widetilde{\bm\varphi}=(\widetilde\varphi_1,\ldots,\widetilde\varphi_{\bbb'})
\in C^\infty(\mathbb T;\a)^{\bbb'},$
we denote by $P_{\mathcal C'}\widetilde{\bm\varphi}$ the harmonic function on
$\Sigma\setminus\mathcal C'$ with zero boundary condition on $\partial\Sigma$
and with trace
$$
P_{\mathcal C'}\widetilde{\bm\varphi}|_{\mathcal C'_j}
=
\widetilde\varphi_j\circ(\zeta'_j)^{-1},
\quad j=1,\ldots,\mathbbm b'.
$$

Let $\nu_-$ and $\nu_+$ denote the two unit normal vector fields along
$\mathcal C'_j$, pointing into the two sides of
$\Sigma\setminus\mathcal C'$. We define the interior Dirichlet-to-Neumann operator $\mathbf D_{\Sigma,\mathcal C'}:
C^\infty(\mathbb T;\a)^{\bbb'}
\longrightarrow
C^\infty(\mathbb T;\a)^{\bbb'}$
by the formula
\begin{equation}\label{eq:interior-DN-map}
\mathbf D_{\Sigma,\mathcal C'}\widetilde{\bm\varphi}
:=
-\left(
\left(\partial_{\nu_-}P_{\mathcal C'}\widetilde{\bm\varphi}|_{\mathcal C'_j}
+
\partial_{\nu_+}P_{\mathcal C'}\widetilde{\bm\varphi}|_{\mathcal C'_j}
\right)\circ \zeta'_j \right)_{j=1,\ldots,\bbb'}.
\end{equation}
The operator $\mathbf D_{\Sigma,\mathcal C'}$ is invertible and its inverse has
Schwartz kernel
\begin{equation}\label{eq:DN-inverse-kernel}
\mathbf D_{\Sigma,\mathcal C'}^{-1}(y,y')
=
\frac{1}{2\pi}G_{g,D}(y,y')\,\id_{\a},
\qquad
y\neq y'\in\mathcal C',
\end{equation}
where $\id_{\a}$ denotes the identity operator on $\a$.

For $k=\bbb$ or $k=\bbb'$, we introduce another operator $\mathbf D:
H^1(\mathbb T;\a)^k
\longrightarrow
L^2(\mathbb T;\a)^k$
by declaring that, for $\widetilde{\bm\varphi}
=
(\widetilde\varphi_1,\ldots,\widetilde\varphi_k)
\in C^\infty(\mathbb T;\a)^k,$
with
$$
\widetilde\varphi_j(\theta)
=
c_j\eps_\l+
\sum_{n\neq0}\sum_{\ell=1}^r
\varphi_{j,n,\ell}e^{\bi n\theta}\eps_\ell,
$$
one has
$$
(\mathbf D\widetilde{\bm\varphi})_j(\theta)
=
\sum_{n\neq0}\sum_{\ell=1}^r
|n|\varphi_{j,n,\ell}e^{\bi n\theta}\eps_\ell .
$$
Equivalently, we have
\begin{equation}\label{eq:model-DN-quadratic-form}
\left\langle
\mathbf D\widetilde{\bm\varphi},
\widetilde{\bm\varphi}
\right\rangle_2
=
2\sum_{j=1}^k\sum_{n>0}\sum_{\ell=1}^r
n|\varphi_{j,n,\ell}|^2
=
\frac12
\sum_{j=1}^k\sum_{n>0}\sum_{\ell=1}^r
\left(
x_{j,n,\ell}^2+y_{j,n,\ell}^2
\right).
\end{equation}
We then define the operators on $C^\infty(\T;\a)^\bbb$ and $C^\infty(\T;\a)^{\bbb'}$ 
\begin{equation}\label{eq:DN-smoothing-remainders}
\widetilde{\mathbf D}_{\Sigma}:=
\mathbf D_{\Sigma}-\mathbf D
,\qquad
\widetilde{\mathbf D}_{\Sigma,\mathcal C'}:=
\mathbf D_{\Sigma,\mathcal C'}-2\mathbf D,
\end{equation}
which are smoothing in the sense that their Schwartz kernels are smooth, and for every
$s,s'\in\mathbb R$, they extend to bounded operators
$$
\widetilde{\mathbf D}_{\Sigma}:
H^s(\mathbb T;\a)^{\bbb}
\longrightarrow
H^{s'}(\mathbb T;\a)^{\bbb},\qquad
\widetilde{\mathbf D}_{\Sigma,\mathcal C'}:
H^s(\mathbb T;\a)^{\bbb'}
\longrightarrow
H^{s'}(\mathbb T;\a)^{\bbb'},
$$respectively.

\subsection{Curvature terms on surfaces with boundary}
\label{subsec:boundary-curvature}

We extend the regularized curvature terms to compact Riemann
surfaces with analytic boundary. The construction is the $\a$-valued analogue of \cite[Section 8.3]{Guillarmou:2023exh}. Since all
forms and primitives are $\a$-valued, most statements follow componentwise after
passing to the fixed orthonormal basis $(\eps_\ell)_{\ell=1,\ldots,r}$ of $\a$.

Let $(\Sigma,g)$ be a compact connected Riemann surface with non-empty analytic boundary $\partial\Sigma=\bigsqcup_{j=1}^{\bbb}\partial_j\Sigma,$
and let $\zeta_j:\mathbb T\to\partial_j\Sigma, j=1,\ldots,\bbb,$
be analytic parametrizations. We assume that $g$ is admissible near the boundary, in
particular that each boundary component is geodesic and has length $2\pi$. We denote
by $p_j:=\zeta_j(1)\in\partial_j\Sigma$ the distinguished boundary point on the $j$-th boundary component, associated with the winding charges $\blambda=(\lambda_1,\ldots,\lambda_{\bbb})\in\Lambda^{\bbb}$.  Let $\bv=((z_1,v_1),\ldots,(z_{\nm},v_{\nm}))$
be unit tangent vectors with pairwise distinct base points, to which we attach magnetic charges
$\m=(m_1,\ldots,m_{\nm})\in\Lambda^{\nm}$. We impose the boundary magnetic neutrality condition
\begin{equation}\label{eq:boundary-magnetic-neutrality}
\sum_{i=1}^{\nm}m_i+\sum_{j=1}^{\bbb}\varsigma_j\lambda_j=0.
\end{equation}

We fix a canonical geometric basis $\bsigma=(\sigma_1,\ldots,\sigma_{2\bbg+\bbb-1})$ of the relative homology $H_1(\Sigma,\partial\Sigma),$ consisting of $2\bbg$ interior cycles $(\sigma_1,\ldots,\sigma_{2\bbg})=(a_1,b_1,\ldots,a_{\bbg},b_{\bbg})$ chosen as usual,
and $\bbb-1$ arcs $(\sigma_{2\bbg+1},\ldots,\sigma_{2\bbg+\bbb-1})=(d_1,\ldots,d_{\bbb-1})$ with endpoints on the boundary and no intersection with $\cup_j(a_j\cup b_j)$. We require each $d_j$ to have endpoints $p_j,\,p_{j+1}$ and make an angle $\pi/2$ with $\partial_j\Sigma$ at its endpoints. We recall from Lemma \ref{lem:relative-period-basis} the identification between  $\blambda^c=(\lambda^c_1,\ldots,\lambda^c_{2\bbg+\bbb-1})\in\Lambda^{2\bbg+\bbb-1}$ and $\Omega^c_{\blambda^c},$ where $\Omega^c_{\blambda^c}$ are chosen to be compactly supported in $\Sigma^\circ$ with $\int_{\sigma_k}\Omega^c_{\blambda^c}=2\pi\lambda^c_k$.

We recall from Proposition \ref{prop:magnetic-background} the closed $\a$-valued $1$-form $\nu_{\z,\m,\blambda}$ with $$(a_1,b_1,\ldots,a_{\bbg},b_{\bbg},\partial_1\Sigma,\ldots,\partial_{\bbb-1}\Sigma)$$ the choice of basis of $H_1(\Sigma$, where $\nu_{\z,\m,\blambda}$ is parametrized by $\blambda=(\lambda_1,\ldots,\lambda_\bbb)\in\Lambda^{\bbb}$ and satisfies \begin{equation}\label{eq:nu-boundary-condition}
    \int_{\partial_j\Sigma}\nu_{\z,\m,\blambda}=2\pi\varsigma_j\lambda_j, \qquad\sum^{n_\m}_{j=1}m_j+\sum^{\bbb}_{j=1}\varsigma_j\lambda_j=0.
\end{equation}
By Lemma \ref{lem:a-valued-absolute-relative-exactness} and possibly adding an exact $\a$-valued $1$-form $\d f$ with $\partial_\nu f|_{\partial\Sigma}=0$, we may require the form $\nu_{\z,\m,\blambda}$ to satisfy, for all $j=1,\ldots,\bbb$, 
\begin{equation}\label{nu-relative-period}
    \int_{\sigma_{2\bbg+j}}\nu_{\z,\m,\blambda}\in 2\pi\Lambda
\end{equation}
For notational convenience, we set $z_{n_{\m}+j}=p_j$ and $m_{n_{\m}+j}=\varsigma_j\lambda_j$ and we associate to the marked points $\z=(z_1,\ldots,z_{n_{\m}+j})$ the magnetic charges 
$$
\m(\blambda)=(m_1(\blambda),\ldots,m_{n_{\m}+\bbb}(\blambda))
=
(m_1,\ldots,m_{\nm},\varsigma_1\lambda_1,\ldots,\varsigma_{\bbb}\lambda_{\bbb})
\in\Lambda^{\nm+\bbb}.
$$
For $j>\nm$, we choose $v_j\in T_{p_{j-\nm}}\Sigma$ to be the unit normal vector to
the boundary component at $p_{j-\nm}$, with the sign chosen consistently with the
orientation convention of the boundary component. We also recall the lexicographic order on $\Lambda=\gamma^{-1}\bigoplus_{i=1}^r\Z\omega_i^\vee$ induced by the ordered basis $(\omega^\vee_1,\ldots,\omega_r^\vee)$.
\begin{definition}[Defect graph]
We consider a collection of $n_{\mathfrak{m}+\bbb}-1$ smooth arcs $\xi_p:[0,1]\rightarrow\Sigma$ with endpoints in $\z$ such that:
\begin{itemize}
    \item every arc is simple, and any two arcs do not intersect except possibly at their endpoints;
    \item if $\xi_p(0)=z_j$ and $\xi_p(1)=z_{j'}$, then
    $$
    \xi'_p(0)=\lambda_{p,j}v_j,\qquad \xi'_p(1)=\lambda_{p,j'}v_{j'}
    $$
    for some $\lambda_{p,j},\lambda_{p,j'}>0$ if $\xi_p(1)\in\partial\Sigma$ and $\lambda_{p,j'}<0$ if $\xi_p(1)\in\partial\Sigma$;
    \item if $\xi_p(0)=z_j$ and $\xi_p(1)=z_{j'}$, then $m_j\leq m_{j'}$;
    \item the induced graph with vertices $\z$ is connected, acyclic, and when viewed as a subset of $\Sigma$, the set $$
\mathcal{D}_{\bv,\bxi}:=\bigcup_{j=1}^{n_{\mathfrak{m}}-1}\xi_j([0,1])
$$ is homotopic to a point.
\end{itemize}
We call $\mathcal{D}_{\bv,\bxi}$ the \emph{defect graph} associated to $\bv$ and $\bxi$.
\end{definition}

As in Lemma \ref{exactness-defect-graph}, the form $\nu_{\z,\m,\blambda}$ is exact on $\Sigma\setminus\mathcal D_{\bv,\bxi}.$
Consequently, for $x_0\in\Sigma\setminus\mathcal D_{\bv,\bxi}$, the primitive
$$
I^{\bxi}_{x_0}(\nu_{\z,\m,\blambda})(x)
:=
\int_{\alpha_{x_0,x}}\nu_{\z,\m,\blambda}
$$
is well defined on $\Sigma\setminus\mathcal D_{\bv,\bxi}$, where
$\alpha_{x_0,x}$ is any smooth path in this complement.

We now define the magnetic curvature counterterm. For an edge $\xi_p$, choose
$x\in\xi_p((0,1))$. Let $\alpha_x$ be a positively oriented $C^1$ simple loop based
at $x$, contained in a small tubular neighborhood of the defect graph, meeting
$\mathcal D_{\bv,\bxi}$ only at $x$, and starting from the left face of
$\xi_p$. We set
\begin{equation}\label{eq:boundary-kappa-def}
\kappa(\xi_p)
:=
\int_{\alpha_x}\nu_{\z,\m,\blambda}.
\end{equation}
Equivalently,
$$
\kappa(\xi_p)
=
2\pi\sum_{z_j\in D_{\alpha_x}}m_j(\blambda),
$$
where $D_{\alpha_x}$ is the disk bounded by $\alpha_x$, with the convention that
boundary marked points $p_j$ carry charge $\varsigma_j\lambda_j$.

\begin{definition}[Regularized magnetic curvature term with boundary]
\label{def:boundary-magnetic-curvature-term}
We define
\begin{equation}\label{eq:boundary-magnetic-curvature-term}
\int_\Sigma^{\reg}
I^{\bxi}_{x_0}(\nu_{\z,\m,\blambda})K_g\d\v_g
:=
\int_{\Sigma\setminus\mathcal D_{\bv,\bxi}^{\partial}}
I^{\bxi}_{x_0}(\nu_{\z,\m,\blambda})K_g\d\v_g
-
2\sum_{p=1}^{\nm+\bbb-1}
\kappa(\xi_p)\int_{\xi_p}k_g\d\ell_g,
\end{equation}
where $k_g$ is the signed geodesic curvature of the oriented arc $\xi_p$.
\end{definition}

We also need the regularized curvature term associated with relative cohomology. Let $\Sigma_{\bsigma}$ be the surface cut along the curves of the basis, on which we define
$$
I^{\bsigma}_{x_0}(\Omega^c_{\blambda^c})(x)
:=
\int_{\alpha_{x_0,x}}\Omega^c_{\blambda^c}.
$$

\begin{definition}[Regularized relative cohomology curvature term]
\label{def:boundary-relative-curvature-term}
We define
\begin{equation}\label{eq:boundary-relative-curvature-term}
\begin{aligned}
\int_{\Sigma_{\bsigma}}^{\reg}
I^{\bsigma}_{x_0}(\Omega^c_{\blambda^c})K_g\d\v_g
:={}&
\int_{\Sigma_{\bsigma}}
I^{\bsigma}_{x_0}(\Omega^c_{\blambda^c})K_g\d\v_g
+
2\sum_{j=1}^{\bbg}
\left(
\int_{a_j}\Omega^c_{\blambda^c}\int_{b_j}k_g\d\ell_g
-
\int_{b_j}\Omega^c_{\blambda^c}\int_{a_j}k_g\d\ell_g
\right).
\end{aligned}
\end{equation}
\end{definition}
We stress that there is no extra boundary term since the boundary components are geodesic for admissible metrics and $\Omega^c_{\blambda^c}$ has compact support in $\Sigma^\circ$.

We collect some properties of the curvature terms, the proofs of which follow from the corresponding lemmas in \cite[Section 8.3]{Guillarmou:2023exh}.
\begin{lemma}\label{lem:boundary-curvature-conformal-change}
Let $\hat g=e^\rho g$, where $\rho\in C^\infty(\Sigma)$ satisfies
$\partial_\nu\rho|_{\partial\Sigma}=0.$
If $\Omega\in H^1_{\Lambda}(\Sigma;\a)$ with compact support in $\Sigma^\circ$, then
$$
\int_{\Sigma_{\bsigma}}^{\reg}
I^{\bsigma}_{x_0}(\Omega)K_{\hat g}\d\v_{\hat g}
=
\int_{\Sigma_{\bsigma}}^{\reg}
I^{\bsigma}_{x_0}(\Omega)K_g\d\v_g
+
\langle \d\rho,\Omega\rangle_2.
$$
Moreover,
$$
\int_\Sigma^{\reg}
I^{\bxi}_{x_0}(\nu_{\z,\m,\blambda})K_{\hat g}\d\v_{\hat g}
=
\int_\Sigma^{\reg}
I^{\bxi}_{x_0}(\nu_{\z,\m,\blambda})K_g\d\v_g
+
\langle \d\rho,\nu_{\z,\m,\blambda}\rangle_2.
$$
\end{lemma}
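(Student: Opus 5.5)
The plan is to reduce Lemma \ref{lem:boundary-curvature-conformal-change} to scalar statements, exactly as in the closed case (Lemmas \ref{lem:conformal change of metrics} and \ref{lem:conformal-change-magnetic-curvature}). Both identities are $\R$-linear in the form. Fix the orthonormal basis $(\eps_\ell)_{\ell=1,\ldots,r}$ of $\a$ and write $\Omega=\sum_\ell\Omega^{(\ell)}\eps_\ell$ and $\nu_{\z,\m,\blambda}=\sum_\ell\nu^{(\ell)}\eps_\ell$. Then $I^{\bsigma}_{x_0}(\Omega)=\sum_\ell I^{\bsigma}_{x_0}(\Omega^{(\ell)})\eps_\ell$, and the counterterms $\int_{a_j}\Omega$, $\kappa(\xi_p)$ decompose the same way. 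The pairing $\la \d\rho,\Omega\ra_2$ is $\sum_\ell\la\d\rho,\Omega^{(\ell)}\ra_2\eps_\ell$, with $\rho$ scalar. So it suffices to prove each identity for a real-valued closed form with real (not necessarily lattice) periods. As in the proof of Lemma \ref{lem:conformal-change-magnetic-curvature}, the scalar argument of \cite[Section 8.3]{Guillarmou:2023exh} never uses integrality of the charges. It only uses the local form $m\,\d\theta+O(1)$ near marked points and the jump structure across the cuts.

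For the scalar relative-cohomology term, I would insert $K_{\hat g}\d\v_{\hat g}=(\Delta_g\rho+K_g)\d\v_g$ from \eqref{eq:curvature-conformal-change}. The $K_g$ part reproduces the old bulk integral. On each cut curve $a_j,b_j$, the geodesic curvature changes by $k_{\hat g}\d\ell_{\hat g}=(k_g+\tfrac12\partial_{n}\rho)\d\ell_g$, with $n$ a unit normal. Next I apply Green's formula on $\Sigma_{\bsigma}$ to $\int I\,\Delta_g\rho$. This produces $\la \d\rho,\Omega\ra_2$ plus boundary integrals of $I\,\partial_n\rho$. These come from the two sides of each cut and from $\partial\Sigma$. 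On a cut, the two sides of $I$ differ by the constant period $\pm\int_{b_j}\Omega$ or $\pm\int_{a_j}\Omega$. The resulting terms $\mp\int_{a_j}\Omega\int_{b_j}\partial_n\rho$ cancel exactly against the $\tfrac12\partial_n\rho$ corrections, once multiplied by the factor $2$ in \eqref{eq:boundary-relative-curvature-term}. The $\partial\Sigma$ contributions vanish because $\partial_\nu\rho|_{\partial\Sigma}=0$. Also, since $\Omega$ is compactly supported, $I$ is locally constant near the boundary, so no boundary term involving $\d\rho$ survives either. The boundary remains geodesic for $\hat g$, because $k_{\hat g}=e^{-\rho/2}(k_g+\tfrac12\partial_\nu\rho)=0$.

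For the magnetic term, I would use the same computation on $\Sigma\setminus\mathcal D^\partial_{\bv,\bxi}$, after first removing small $g$-balls of radius $\eps$ around the interior marked points $z_j$. The jump of $I^{\bxi}_{x_0}(\nu)$ across $\xi_p$ is the constant $\kappa(\xi_p)$. The Green boundary terms on the cut therefore cancel the change $2\kappa(\xi_p)\int_{\xi_p}\tfrac12\partial_n\rho\,\d\ell_g$ of the geodesic-curvature counterterm. On the small circles, $I$ is $O(\log)$-bounded up to the multivalued $m_j\theta$ part, so the terms $\int I\,\partial_n\rho$ are $O(\eps\log\eps)\to0$. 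The remaining pairing $\int\la\d\rho,\nu\ra$ converges because $\nu\in L^1$.

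\textbf{Main obstacle.} I expect the hardest point to be the boundary marked points $p_j$, where edges of the defect graph end on $\partial\Sigma$. There, $I^{\bxi}_{x_0}(\nu_{\z,\m,\blambda})$ winds by $\varsigma_j\lambda_j$ along $\partial_j\Sigma$. I would treat this with half-disks around $p_j$: the term $\int_{\partial\Sigma} I\,\partial_\nu\rho$ vanishes by the Neumann condition on $\rho$. The half-circle contributions vanish as the radius goes to $0$, by the same estimate as at interior points. The edges meet the boundary orthogonally, and $\partial\Sigma$ is geodesic, so no corner angle terms appear. This reproduces verbatim the scalar argument of \cite[Section 8.3]{Guillarmou:2023exh}.
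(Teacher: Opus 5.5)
Your proposal is correct and follows the paper's route. The paper's proof is only the componentwise reduction to the scalar lemmas of \cite[Section 8.3]{Guillarmou:2023exh}, and your Green's-formula computation is exactly the content of those lemmas: the primitive jumps by a constant across $a_j,b_j,\xi_p$, these jumps cancel the $\tfrac12\partial_n\rho$ change of the geodesic-curvature counterterms, and all $\partial\Sigma$ terms vanish because $\partial_\nu\rho=0$. (A minor correction: since $\nu_{\z,\m,\blambda}=m_j\,\d\theta$ exactly near $z_j$, the primitive is bounded there, so the small-circle terms are $O(\eps)$ and no logarithm arises.)
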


\begin{lemma}
\label{lem:boundary-relative-curvature-independence}
Let $\bsigma$ and $\bsigma'$ be two canonical geometric bases of
$H_1(\Sigma,\partial\Sigma)$. For $\Omega\in H^1_{\Lambda}(\Sigma;\a)$ with compact support in $\Sigma^\circ$, we have
$$
\int_{\Sigma_{\bsigma}}^{\reg}
I^{\bsigma}_{x_0}(\Omega)K_g\d\v_g
-
\int_{\Sigma_{\bsigma'}}^{\reg}
I^{\bsigma'}_{x_0}(\Omega)K_g\d\v_g
\in 8\pi^2\Lambda.
$$
\end{lemma}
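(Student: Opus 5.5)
The plan is to reduce to the scalar statement in the same way as the closed-surface Lemmas \ref{lem:local invariance}--\ref{lem:Invariance under change of symplectic basis}. Write $\Omega=\sum_{\l=1}^r\Omega^{(\l)}\,\omega_\l^\vee$ in the coweight basis, not in an orthonormal basis. Since $\Omega\in H^1_\Lambda(\Sigma;\a)$ and $\Lambda=\gamma^{-1}\bigoplus_\l\Z\omega_\l^\vee$, each scalar form $\Omega^{(\l)}$ has all its periods in $2\pi\gamma^{-1}\Z$, i.e. it is a closed form of radius $R=\gamma^{-1}$ in the sense of \cite{Guillarmou:2023exh}. It also inherits compact support in $\Sigma^\circ$. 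Both sides of the identity are $\R$-linear in $\Omega$:
\begin{itemize}
\item the primitive $I^{\bsigma}_{x_0}(\Omega)$ is linear in $\Omega$;
\item the counterterms in \eqref{eq:boundary-relative-curvature-term} are linear in the periods $\int_{a_j}\Omega$ and $\int_{b_j}\Omega$.
\end{itemize}
So the difference to be studied is $\sum_\l \delta^{(\l)}\omega_\l^\vee$, where $\delta^{(\l)}$ is the same difference computed for the scalar form $\Omega^{(\l)}$. It therefore suffices to show $\delta^{(\l)}\in 8\pi^2\gamma^{-1}\Z$ for each $\l$, since $8\pi^2\gamma^{-1}\Z\,\omega_\l^\vee\subset 8\pi^2\Lambda$.

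For the scalar claim I would invoke the bordered-surface analogue of \cite[Lemmas 4.2 \& 4.5]{Guillarmou:2023exh}, stated in \cite[Section 8.3]{Guillarmou:2023exh}. If it is not available verbatim, I would reprove it by the closed-case argument:
\begin{itemize}
\item Two canonical geometric bases of $H_1(\Sigma,\partial\Sigma)$ are connected by a finite sequence of elementary moves: isotopies of the interior curves $a_j,b_j$, symplectic (Dehn-twist type) changes of the interior part, and changes of the arcs $d_j$.
\item Under an isotopy, the region swept between $a_j$ and $a_j'$ (or $b_j$ and $b_j'$) is a union of disks or annuli. On it the primitive jumps by a period $\chi(\cdot)\in 2\pi\gamma^{-1}\Z$.
\item Gauss--Bonnet on that region turns $\int K_g$ into boundary geodesic curvature integrals plus $2\pi\times$(Euler characteristic and turning angles). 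The curvature integrals cancel the counterterm variation, leaving $2\pi\cdot\chi\cdot(\text{integer multiple of }2\pi)\in 8\pi^2\gamma^{-1}\Z$ after the factor-$2$ normalization.
\item Change of base point and of the symplectic part works exactly as in the closed case, because $\Omega$ vanishes near $\partial\Sigma$.
\end{itemize}

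The main obstacle is the arcs $d_j$. The relative cut surface $\Sigma_{\bsigma}$ is cut along arcs ending on $\partial\Sigma$, and the definition contains no counterterm for the $d_j$. I would handle this as follows:
\begin{itemize}
\item Because $\Omega$ is compactly supported in $\Sigma^\circ$, $\Omega$ is exact near $\partial\Sigma$, and the jump of $I^{\bsigma}_{x_0}(\Omega)$ across $d_j$ is the absolute period of $\Omega$ along a loop crossing $d_j$ once.
\item By choosing representatives, I reduce to the case where these periods vanish or lie in $2\pi\Lambda$ and are localized away from the boundary.
\item Since $g$ is admissible, $K_g=0$ near $\partial\Sigma$ and the boundary is geodesic. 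Hence moving an arc $d_j$ only sweeps regions where either the jump is constant in $2\pi\Lambda$, giving a contribution of the same Gauss--Bonnet form and thus in $8\pi^2\Lambda$, or $K_g$ vanishes.
\end{itemize}
I expect this careful bookkeeping near the boundary, in particular the right-angle condition at the endpoints $p_j$ that kills the corner angle terms, to be the only step requiring real verification beyond the componentwise reduction.
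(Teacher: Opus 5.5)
Your proposal is correct and follows the paper's route. The paper likewise decomposes $\Omega=\sum_{\ell}\Omega^{(\ell)}\omega_\ell^\vee$ in the coweight basis, so that each scalar component has periods in $2\pi\gamma^{-1}\Z$, applies the scalar bordered-surface statements of \cite[Section 8.3]{Guillarmou:2023exh} componentwise, and uses linearity to get a difference in $8\pi^2\gamma^{-1}\sum_\ell\Z\,\omega_\ell^\vee=8\pi^2\Lambda$.

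Your fallback worry about the arcs $d_j$ is unnecessary, and no change of representative is needed (nor allowed, since $\Omega$ is fixed). A loop crossing only $d_j$ is homologous in $H_1(\Sigma)$ to a sum of boundary cycles, and the compactly supported $\Omega$ integrates to zero on these, so $I^{\bsigma}_{x_0}(\Omega)$ has no jump across $d_j$ and the arcs do not affect the regularized integral.
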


\begin{lemma}
\label{lem:boundary-relative-curvature-diffeomorphism}
Let $\psi:\Sigma\to\Sigma$ be an orientation-preserving diffeomorphism. Then, for $x_0\in\Sigma$ and $\Omega\in H^1_\Lambda(\Sigma;\a)$ with compact support in $\Sigma^\circ,$ we have
$$
\int_{\Sigma_{\bsigma}}^{\reg}
I^{\bsigma}_{x_0}(\Omega)K_g\,d\v_g
=
\int_{\Sigma_{\psi(\bsigma)}}^{\reg}
I^{\psi(\bsigma)}_{\psi(x_0)}(\psi_*\Omega)
K_{\psi_*g}\,d\v_{\psi_*g}.
$$
\end{lemma}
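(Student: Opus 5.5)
The statement is linear in $\Omega$ and involves only scalar geometric quantities (the cut surface $\Sigma_{\bsigma}$, the geodesic curvature $k_g$, and the curvature $K_g$). The plan is therefore to reduce it componentwise to the scalar statement of \cite[Section 8.3]{Guillarmou:2023exh}. Fix an orthonormal basis $(\eps_\ell)_{\ell=1,\ldots,r}$ of $\a$ and write $\Omega=\sum_\ell \Omega^{(\ell)}\eps_\ell$, with each $\Omega^{(\ell)}$ a real closed $1$-form compactly supported in $\Sigma^\circ$. Because integration commutes with the decomposition, we get
$$I^{\bsigma}_{x_0}(\Omega)=\sum_\ell I^{\bsigma}_{x_0}(\Omega^{(\ell)})\eps_\ell,$$
and every term of Definition \ref{def:boundary-relative-curvature-term} splits accordingly. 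Integrality of the periods is not used, because the identity is an exact equality and not a statement modulo a lattice. So it suffices to prove the scalar identity for a real closed compactly supported form $\omega$.

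For the scalar identity, we transport everything by $\psi$. Set $g'=\psi_*g$ and $\omega'=\psi_*\omega=(\psi^{-1})^*\omega$. The set $\psi$ maps $\Sigma_{\bsigma}$ diffeomorphically onto $\Sigma_{\psi(\bsigma)}$, and a path $\alpha_{x_0,x}$ in $\Sigma_{\bsigma}$ is mapped to a path from $\psi(x_0)$ to $\psi(x)$ in $\Sigma_{\psi(\bsigma)}$. This gives
$$I^{\psi(\bsigma)}_{\psi(x_0)}(\omega')\circ\psi=I^{\bsigma}_{x_0}(\omega).$$
Naturality of curvature gives $K_{g'}\circ\psi=K_g$ and $\psi^*\d\v_{g'}=\d\v_g$, so the bulk integrals agree after the change of variables $y=\psi(x)$. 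For the counterterms, $\psi(a_j),\psi(b_j)$ is again a canonical geometric basis with the same orientations, since $\psi$ is orientation preserving. In addition, $\int_{\psi(a_j)}\omega'=\int_{a_j}\omega$, and the signed geodesic curvature satisfies $k_{g'}\circ\psi=k_g$ along the image curves, which gives $\int_{\psi(b_j)}k_{g'}\d\ell_{g'}=\int_{b_j}k_g\d\ell_g$. The sign of $k_g$ depends on the orientation of the curve and of the surface, and both are preserved. The compact support of $\omega'$ in $\Sigma^\circ$ is also preserved, so there are no boundary contributions, consistent with the remark after Definition \ref{def:boundary-relative-curvature-term}.

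The only point needing care is to check that the transported arcs $\psi(d_j)$ and the cut structure remain admissible for the definition. The arcs $d_j$ must meet the boundary orthogonally at the points $p_j$, and $\psi$ need not preserve angles or the points $p_j$. However, the $d_j$ carry no counterterm in Definition \ref{def:boundary-relative-curvature-term}, since $\Omega^c$ is compactly supported and only the $a_j,b_j$ terms appear. Therefore only the cut-set geometry enters, and it is transported identically. If one insists on the orthogonality convention, one can invoke Lemma \ref{lem:boundary-relative-curvature-independence} to modify the arcs, but this is only invariance modulo $8\pi^2\Lambda$. For that reason I would avoid this route and rely on the exact transport argument above. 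Summing the scalar identities over $\ell$ with the vectors $\eps_\ell$ then concludes the proof.
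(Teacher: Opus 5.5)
Your proposal is correct and follows the same route as the paper: the paper reduces componentwise, in an orthonormal basis of $\mathfrak{a}$, to the scalar statement of \cite[Section 8.3]{Guillarmou:2023exh}, and your transport-by-$\psi$ argument is exactly the content of that scalar lemma (it would in fact work verbatim for $\mathfrak{a}$-valued forms, so the componentwise step is optional). One small correction to your last paragraph: $\psi:(\Sigma,g)\to(\Sigma,\psi_*g)$ is an orientation-preserving isometry, so the arcs $\psi(d_j)$ do meet the boundary orthogonally for $\psi_*g$ at the transported base points $\psi(p_j)$, hence $\psi(\bsigma)$ is again admissible and the worry there is moot.
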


\begin{lemma}
\label{lem:boundary-defect-graph-independence}
For any two defect graphs $\bxi$ and $\bxi'$ associated with the same data
$(\bv,\m,\blambda)$, one has
$$
\int_\Sigma^{\reg}
I^{\bxi}_{x_0}(\nu_{\z,\m,\blambda})K_g\d\v_g
-
\int_\Sigma^{\reg}
I^{\bxi'}_{x_0}(\nu_{\z,\m,\blambda})K_g\d\v_g
\in 8\pi^2\Lambda.
$$
\end{lemma}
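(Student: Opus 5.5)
The statement to prove is Lemma \ref{lem:boundary-defect-graph-independence}, and I will reduce it to the scalar case, as was done for Lemma \ref{lem:defect-graph-independence} on closed surfaces. I first expand all lattice-valued data in the coweight basis:
$$
m_j=\gamma^{-1}\sum_{\ell=1}^r m_{j,\ell}\,\omega_\ell^\vee,\qquad \lambda_k=\gamma^{-1}\sum_{\ell=1}^r\lambda_{k,\ell}\,\omega_\ell^\vee,\qquad m_{j,\ell},\lambda_{k,\ell}\in\Z.
$$
By the proof of Proposition \ref{prop:magnetic-background}, the form decomposes as $\nu_{\z,\m,\blambda}=\sum_{\ell}\nu^{(\ell)}_{\z,\m,\blambda}\,\omega_\ell^\vee$. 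Each $\nu^{(\ell)}$ is the scalar magnetic form of \cite[Proposition 3.10]{Guillarmou:2023exh} with radius $R=\gamma^{-1}$, scalar charges $(m_{j,\ell})_j$ and boundary windings $(\lambda_{k,\ell})_k$. The relative period normalization \eqref{nu-relative-period} also holds componentwise.

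Next I check linearity. The primitive satisfies $I^{\bxi}_{x_0}(\nu)=\sum_\ell I^{\bxi}_{x_0}(\nu^{(\ell)})\omega_\ell^\vee$, and the counterterm satisfies $\kappa(\xi_p)=\sum_\ell\kappa^{(\ell)}(\xi_p)\omega_\ell^\vee$, where $\kappa^{(\ell)}(\xi_p)=2\pi\gamma^{-1}\sum_{z_j\in D_{\alpha_x}}m_{j,\ell}(\blambda)$ by \eqref{eq:boundary-kappa-def}. Hence the regularized term \eqref{eq:boundary-magnetic-curvature-term} is the $\omega^\vee$-combination of the scalar regularized terms of \cite[Section 8.3]{Guillarmou:2023exh}. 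The difference for $\bxi,\bxi'$ is therefore $\sum_\ell \Delta_\ell\,\omega_\ell^\vee$, where $\Delta_\ell$ is the corresponding scalar difference.

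It then suffices to show $\Delta_\ell\in 8\pi^2\gamma^{-1}\Z$, since $8\pi^2\gamma^{-1}\sum_\ell\Z\omega_\ell^\vee=8\pi^2\Lambda$. For this I invoke the scalar boundary version of \cite[Lemma 4.9]{Guillarmou:2023exh} from \cite[Section 8.3]{Guillarmou:2023exh}. In that argument, two defect graphs are connected by finitely many elementary moves: sliding an edge across a region, rotating around a vertex, and reattaching a leaf, including moves near the boundary vertices $p_j$. For each move, Gauss--Bonnet on the swept region $D$ turns the jump of the primitive, $2\pi\times(\text{integer charge})\times\int_D K_g$, into geodesic-curvature terms that cancel the counterterm. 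What remains is the charge multiplied by a turning-angle total, which is an integer multiple of $2\pi$.

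The main obstacle is the bookkeeping at the boundary vertices, and there are two issues. First, there the edges must end with the prescribed normal direction, with the sign convention for $\lambda_{p,j'}<0$. Second, the swept region may touch $\partial\Sigma$, which is geodesic for admissible metrics and so contributes no curvature, although corner angles at $p_j$ do appear. I would check that these corner angles always add up to multiples of $2\pi$, since the edges meet the boundary perpendicularly at fixed points.

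A further subtlety is the ordering constraint $m_j\le m_{j'}$ with respect to the lexicographic order on $\Lambda$. Scalar components need not respect this order, so the scalar defect graph obtained for component $\ell$ may not be admissible in the scalar sense. I would handle this by noting that the scalar invariance argument never uses the ordering, only the tree structure and the tangency conditions. Alternatively, a reversed edge changes the term only by a sign convention absorbed into $\kappa$, and the conclusion modulo $8\pi^2\gamma^{-1}\Z$ is unaffected. Summing over $\ell$ then gives the claim.
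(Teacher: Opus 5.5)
Your proposal is correct and follows the paper's route. The paper obtains this lemma by the same componentwise reduction in the coweight basis to the scalar boundary statement of \cite[Section 8.3]{Guillarmou:2023exh}, exactly as in the proof of Lemma~\ref{lem:defect-graph-independence}: each scalar change lies in $8\pi^2\gamma^{-1}\Z$, so the $\a$-valued change lies in $8\pi^2\Lambda$.

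Your remark that the lexicographic ordering on $\Lambda$ need not hold componentwise addresses a point the paper leaves implicit. Of your two fixes only the first really works: reversing an arc's parametrization leaves the term unchanged, but the reversed arc violates the tangency condition. So one has to rerun the per-move Gauss--Bonnet computation, which is purely geometric, rather than cite the scalar lemma verbatim.
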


\begin{lemma}
\label{lem:boundary-magnetic-curvature-diffeomorphism}
Let $\psi:\Sigma'\to\Sigma$ be an orientation-preserving diffeomorphism. Then
$$
\int_{\Sigma'}^{\reg}
I^{\bxi}_{x_0}(\nu_{\z,\m,\blambda})K_g\d\v_g
=
\int_{\Sigma}^{\reg}
I^{\psi\circ\bxi}_{\psi(x_0)}(\psi_*\nu_{\z,\m,\blambda})
K_{\psi_*g}\d\v_{\psi_*g}.
$$
\end{lemma}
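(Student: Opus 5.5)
The statement is Lemma \ref{lem:boundary-magnetic-curvature-diffeomorphism}, diffeomorphism invariance of the regularized magnetic curvature term on a surface with boundary. I will prove it as a pure transport-of-structure identity. All data are pushed forward by $\psi$: the marked points $\psi(\z)$, the tangent vectors $\psi_*\bv$, the boundary points $\psi(p_j)$, the form $\psi_*\nu_{\z,\m,\blambda}$, the defect graph $\psi\circ\bxi$ and the metric $\psi_*g$. Each term of \eqref{eq:boundary-magnetic-curvature-term} should then be unchanged.

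\textbf{Step 1: the transported data are admissible.} The arcs $\psi\circ\xi_p$ are simple and pairwise disjoint away from endpoints, because $\psi$ is a diffeomorphism. Their initial and final tangents are positive (resp. negative at the boundary) multiples of $\d\psi\, v_j$. The induced graph is still a contractible tree. The ordering condition $m_j\le m_{j'}$ only involves the charges, which are unchanged. Since $\psi$ preserves orientation, it maps boundary components to boundary components preserving the signs $\varsigma_j$, and unit normals go to normals with the same sign convention. Finally $\psi_*\nu_{\z,\m,\blambda}$ is closed on $\Sigma\setminus\psi(\z)$, exact off $\psi(\mathcal D_{\bv,\bxi})$, and has the same periods.

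\textbf{Step 2: the bulk term.} Primitives are natural: $I^{\psi\circ\bxi}_{\psi(x_0)}(\psi_*\nu)\circ\psi=I^{\bxi}_{x_0}(\nu)$, by composing the path $\alpha_{x_0,x}$ with $\psi$. Curvature and volume are natural as well, $K_{\psi_*g}\circ\psi=K_g$ and $\psi^*\d\v_{\psi_*g}=\d\v_g$. A change of variables then identifies the integrals over $\Sigma\setminus\mathcal D$ and $\Sigma\setminus\psi(\mathcal D)$.

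\textbf{Step 3: the counterterms.} Because $\psi$ is an isometry from $(\Sigma,g)$ to $(\Sigma,\psi_*g)$ and preserves orientation, the signed geodesic curvature of $\psi\circ\xi_p$ for $\psi_*g$ equals that of $\xi_p$ for $g$. Hence $\int_{\psi\circ\xi_p}k_{\psi_*g}\d\ell_{\psi_*g}=\int_{\xi_p}k_g\d\ell_g$. For $\kappa$, I take the loop $\psi\circ\alpha_x$. It is positively oriented, meets the graph only at $\psi(x)$, and starts from the left face, since orientation preservation keeps left and right faces. It bounds the disk $\psi(D_{\alpha_x})$, which contains exactly the images of the same charged points. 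Therefore $\kappa(\psi\circ\xi_p)=\int_{\psi\circ\alpha_x}\psi_*\nu=\kappa(\xi_p)$.

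The only point needing care is Step 3: orientation preservation is what keeps the sign of $k_g$, the left-face convention and the boundary signs $\varsigma_j$ intact. One could alternatively reduce componentwise to the scalar Lemma in \cite[Section 8.3]{Guillarmou:2023exh} by expanding in the coweight basis $\omega_\l^\vee$. Both sides are linear in $\nu$, so this gives the same conclusion.
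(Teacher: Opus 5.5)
Your proof is correct, and it takes a more self-contained route than the paper does.

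\textbf{What the paper does.} It gives no argument specific to this lemma. It only says that the boundary curvature lemmas follow from the corresponding scalar lemmas in \cite[Section 8.3]{Guillarmou:2023exh}. The reduction it has in mind is coordinatewise: pass to the orthonormal basis $(\eps_\ell)_\ell$ of $\a$, use linearity in the form, and apply the scalar statement to each coordinate. This is what it does explicitly for the closed-surface analogue, Lemma~\ref{lem:diffeomorphism invariance}.

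\textbf{What you do.} You verify the identity directly for $\a$-valued data by naturality:
\begin{itemize}
\item the primitive is natural along transported paths;
\item $K_g\,\d\v_g$ and the signed geodesic curvature are preserved by the orientation-preserving isometry $\psi:(\Sigma',g)\to(\Sigma,\psi_*g)$;
\item line integrals are natural, so the transported loop $\psi\circ\alpha_x$ gives $\kappa(\psi\circ\xi_p)=\kappa(\xi_p)$ directly, without the disk description.
\end{itemize}
In Steps 2--3 the source surface should read $\Sigma'$ rather than $\Sigma$.

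\textbf{What your route buys.} It does not depend on the scalar hypotheses. A coordinatewise reduction has to check that each coordinate is admissible scalar data, and this is not automatic. In the orthonormal basis the coordinates of $m_j$ need not lie in $\gamma^{-1}\Z$. In the basis $(\omega_\ell^\vee)_\ell$ you suggest, the lexicographic condition $m_j\le m_{j'}$ on the edges of $\bxi$ is not inherited by each coordinate. So a black-box appeal ends up needing the remark that the scalar proof is exactly your transport argument, which never uses these conditions.

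\textbf{What the paper's route buys.} Brevity, and uniformity with the neighbouring lemmas on basis and defect-graph independence. There the coweight-basis reduction carries real content, through the lattice $8\pi^2\Lambda$.
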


We next record the additivity properties of the regularized curvature terms under
gluing. These are the $\a$-valued analogues of \cite[Lemmas 8.7 and 8.8]{Guillarmou:2023exh}.
Let $(\Sigma_i,g_i)$, $i=1,2$, be compact Riemann surfaces with admissible
metrics, with $\bbb_i>0$ analytic boundary components and genera $\bbg_i$. We consider the geometric data as above, decorated with a superscript or subscript $i=1,2.$ 

We choose the two defect graphs in the following compatible way. There is exactly
one arc $\xi^1_{j_0}$
in $\bxi_1$ having $p_{1,\bbb_1}=\zeta_{1,\bbb_1}(1)$ as endpoint, and exactly one arc $\xi^2_{j_0'}$
in $\bxi_2$ having $p_{2,\bbb_2}=\zeta_{2,\bbb_2}(1)$ as endpoint. We assume that
$p_{1,\bbb_1}$ is glued to $p_{2,\bbb_2}$, and that the tangent vectors of
$\xi^1_{j_0}$ and $\xi^2_{j_0'}$ match under the gluing, namely one is inward normal
and the other is outward normal. We also impose the condition
$\lambda_{1,\bbb_1}=\lambda_{2,\bbb_2}.$

Let $\blambda_i^-:=(\lambda_{i,1},\ldots,\lambda_{i,\bbb_i-1})$ and $
\blambda:=(\blambda_1^-,\blambda_2^-).$
We obtain a defect graph $\bxi$ on the glued surface $\Sigma$ by keeping all arcs of
$\bxi_1$ except $\xi^1_{j_0}$, all arcs of $\bxi_2$ except $\xi^2_{j_0'}$, and adding the
arc $\xi$ obtained by gluing $\xi^1_{j_0}$ and $\xi^2_{j_0'}$, with the induced
orientation. We also glue the magnetic forms
$\nu^{1}_{\z_1,\m_1,\blambda_1},\,$ 
$\nu^{2}_{\z_2,\m_2,\blambda_2}$
to obtain the form
$\nu_{\z,\m,\blambda}$
on $\Sigma$, where
$\z=(\z_1,\z_2),\,
\m=(\m_1,\m_2).$

Finally, since the forms $\Omega^{1,c}_{\blambda_i^c}$ are compactly supported in the
interiors of $\Sigma_i$, they glue to a compactly supported relative cohomology
representative on $\Sigma$, denoted by
$\Omega^c_{\blambda^c}
=
\Omega^{1,c}_{\blambda_1^c}
+
\Omega^{2,c}_{\blambda_2^c}.$

\begin{lemma}[Additivity under gluing of two surfaces]
\label{lem:boundary-curvature-additivity-gluing}
With the above choices, one has
\begin{equation}\label{eq:boundary-magnetic-curvature-gluing}
\begin{aligned}
\int_{\Sigma}^{\reg}
I^{\bxi}_{x_0}(\nu_{\z,\m,\blambda})K_g\d\v_g
=
\int_{\Sigma_1}^{\reg}
I^{\bxi_1}_{x^1_0}(\nu^1_{\z_1,\m_1,\blambda_1})K_{g_1}\d\v_{g_1}
+\int_{\Sigma_2}^{\reg}
I^{\bxi_2}_{x^2_0}(\nu^2_{\z_2,\m_2,\blambda_2})K_{g_2}\d\v_{g_2}.
\end{aligned}
\end{equation}
Moreover,
\begin{equation}\label{eq:boundary-relative-curvature-gluing}
\begin{aligned}
\int_{\Sigma_{\bsigma}}^{\reg}
I^{\bsigma}_{x_0}(\Omega^c_{\blambda^c})K_g\,d\v_g
=
\int_{(\Sigma_1)_{\bsigma_1}}^{\reg}
I^{\bsigma_1}_{x^1_0}(\Omega^{1,c}_{\blambda^c_1})K_{g_1}\d\v_{g_1}
+
\int_{(\Sigma_2)_{\bsigma_2}}^{\reg}
I^{\bsigma_2}_{x^2_0}(\Omega^{2,c}_{\blambda^c_2})K_{g_2}\d\v_{g_2}.
\end{aligned}
\end{equation}
\end{lemma}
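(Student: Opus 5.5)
Both identities are linear in the forms involved, so the plan is to reduce them to the scalar additivity statements \cite[Lemmas 8.7 and 8.8]{Guillarmou:2023exh}, applied componentwise in the coweight basis, exactly as in the proofs of Lemmas \ref{lem:defect-graph-independence} and \ref{lem:conformal-change-magnetic-curvature}. The steps are as follows.

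\textbf{Step 1: componentwise decomposition.} We write every lattice datum as
\[
m_j=\gamma^{-1}\sum_{q=1}^r m_{j,q}\omega_q^\vee,\qquad
\lambda_\ell=\gamma^{-1}\sum_{q=1}^r\lambda_{\ell,q}\omega_q^\vee,
\]
and similarly for $\blambda^c$. By the construction in Proposition \ref{prop:magnetic-background}, we may take
\[
\nu^i_{\z_i,\m_i,\blambda_i}=\sum_{q=1}^r\nu^{i,(q)}\,\omega_q^\vee,\qquad
\Omega^{i,c}_{\blambda_i^c}=\sum_{q=1}^r\Omega^{i,c,(q)}\omega_q^\vee ,
\]
with scalar forms having integer multiples of $2\pi\gamma^{-1}$ as periods and residues.

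The primitives $I^{\bxi}_{x_0}$ and $I^{\bsigma}_{x_0}$ are $\R$-linear in the form, and so is $\kappa(\xi_p)$, which is a loop integral. Both regularized curvature terms \eqref{eq:boundary-magnetic-curvature-term} and \eqref{eq:boundary-relative-curvature-term} are therefore $\R$-linear in the data $(\nu,\Omega^c)$. Hence each side of \eqref{eq:boundary-magnetic-curvature-gluing} and \eqref{eq:boundary-relative-curvature-gluing} equals $\sum_q\omega_q^\vee$ times the corresponding scalar quantity.

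\textbf{Step 2: compatibility of the gluing data with the scalar hypotheses.} We must check that each component $q$ satisfies the hypotheses of the scalar lemmas. This involves four points:
\begin{itemize}
\item the gluing of the defect graphs is the same combinatorial operation for every $q$;
\item the matching condition $\lambda_{1,\bbb_1}=\lambda_{2,\bbb_2}$ gives $\lambda_{1,\bbb_1,q}=\lambda_{2,\bbb_2,q}$ for each $q$;
\item the tangent vectors of the two arcs match under gluing;
\item the glued form restricts componentwise to the glued scalar forms.
\end{itemize}

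One subtlety concerns the ordering requirement $m_j\le m_{j'}$ on the edges. It is imposed for the lexicographic order on $\Lambda$, so it need not hold for every scalar component. However, the scalar additivity argument only uses the tree structure and the jumps $\kappa(\xi_p)$, not the orientation rule. We would check this in \cite{Guillarmou:2023exh}. If the orientation rule is in fact used, we would reorient the offending edges componentwise. Reversing an edge changes $\kappa$ and $k_g$ consistently, so the product $\kappa\int k_g$ is unchanged.

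\textbf{Step 3: conclusion, and where care is needed.} The conclusion then follows from the scalar identities multiplied by $\omega_q^\vee$ and summed over $q$. The geometric input underlying the scalar lemma is as follows. In a neighbourhood of the glued circle the metric is admissible, so $K_g=0$ and the boundary is geodesic. As a result, no bulk or boundary curvature is gained or lost when gluing. The glued arc has geodesic curvature integral equal to the sum of those of the two halves, with the same jump $\kappa$, since the loop around the glued arc sees the same enclosed charges.

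The step I expect to require the most care is the base-point consistency. On each piece the primitive is normalized at $x_0^i$, while on $\Sigma$ it is normalized at the single point $x_0$. These normalizations differ by additive constants. Such constants multiply $\int_{\Sigma_i}K_{g_i}\d\v_{g_i}$ together with the geodesic-curvature counterterms, and they must cancel, or be absorbed into the choice of $x_0^i$, exactly as in the scalar proof. I would handle this by choosing $x_0^2$ to be the point reached from $x_0$ along the glued arc $\xi$, so that the primitives on $\Sigma$ restrict exactly to the primitives on each $\Sigma_i$.
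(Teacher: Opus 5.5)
Your proposal follows the paper's route: the paper gives no separate argument and records the lemma as the $\a$-valued analogue of \cite[Lemmas 8.7 and 8.8]{Guillarmou:2023exh}, i.e.\ your componentwise reduction via $\R$-linearity of $I^{\bxi}_{x_0}$, $I^{\bsigma}_{x_0}$ and $\kappa(\xi_p)$, and your observation that reversing an edge flips both $\kappa(\xi_p)$ and the signed $k_g$ correctly handles the lexicographic-ordering issue. On the base point, the counterterms are loop integrals and do not depend on $x_0$, so an offset $c_i$ between $I^{\bxi}_{x_0}|_{\Sigma_i}$ and $I^{\bxi_i}_{x^i_0}$ shifts the $\Sigma_i$-term by $4\pi\chi(\Sigma_i)c_i$ and does not cancel in general; the clean choice, which the paper fixes in its gluing setup, is $x_0=x_0^1=x_0^2$ at a common point of the glued circle away from $p_{1,\bbb_1}\sim p_{2,\bbb_2}$, so that paths in $\Sigma_i\setminus\mathcal D_{\bv_i,\bxi_i}$ are admissible on $\Sigma$ and the primitives restrict exactly.
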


We also need the corresponding self-gluing statement. Let $(\Sigma,g,\bzeta)$ be a
compact Riemann surface with admissible metric, genus $\bbg$, and
$\bbb\ge2$ analytic boundary components. Assume that
$\partial_{\bbb-1}\Sigma$ is outgoing and that $\partial_{\bbb}\Sigma$ is incoming.
Let
$\Sigma^\#$
be the surface obtained by identifying $\partial_{\bbb-1}\Sigma$ and
$\partial_{\bbb}\Sigma$ through the boundary parametrizations.

We consider marked tangent vectors $\bv=((z_1,v_1),\ldots,(z_{\nm},v_{\nm}))\in(T\Sigma)^{n_{\m}},$
 magnetic charges $\m=(m_1,\ldots,m_{\nm})\in\Lambda^{\nm},$
boundary winding data
$\blambda=(\lambda_1,\ldots,\lambda_{\bbb})\in\Lambda^{\bbb},$
and the magnetic form $\nu_{\z,\m,\blambda}.$ The signs are again chosen so that outgoing boundary components have
$\varsigma=-1$ and incoming ones have $\varsigma=+1$.
We assume that $\lambda_{\bbb-1}=\lambda_{\bbb}$ and write
$\blambda=(\blambda^-,\lambda_{\bbb},\lambda_{\bbb}),
\,
\blambda^-=(\lambda_1,\ldots,\lambda_{\bbb-2}).$

We choose the defect graph $\bxi$ on $\Sigma$ as follows. There is exactly one arc $\xi_{j_0}$
having $p_{\bbb}=\zeta_{\bbb}(1)$ as endpoint, and the other endpoint of this arc is
$p_{\bbb-1}=\zeta_{\bbb-1}(1)$. This arc is oriented according to the tangent vectors
at $p_{\bbb}$ and $p_{\bbb-1}$, one of which is inward normal and the other outward
normal. If $\bbb+\nm>2$, we also require that there are exactly two arcs having
$p_{\bbb-1}$ as endpoint; one of them is $\xi_{j_0}$, and we denote the other one by $\xi_{j_0'}.$
After self-gluing, we obtain a defect graph $\bxi^\#$ on $\Sigma^\#$ by keeping all
arcs of $\bxi$ and removing the arcs having $p_{\bbb-1}$ or $p_{\bbb}$ as endpoint. Finally, one decomposes $\nu_{\z,\m\blambda}$ before gluing as
$$
\nu_{\z,\m,\blambda}
=
\nu_{\z,\m,(\blambda^-,0,0)}
+
\nu_{\z,0,(0,\lambda_{\bbb},\lambda_{\bbb})},
$$
where both terms make sense by \cite[Lemma 3.8]{Guillarmou:2023exh}.

\begin{lemma}[Additivity under self-gluing]
\label{lem:boundary-curvature-additivity-self-gluing}
With the above notation, one has
\begin{equation}\label{eq:boundary-magnetic-curvature-self-gluing}
\begin{aligned}
\int_{\Sigma}^{\reg}
I^{\bxi}_{x_0}(\nu_{\z,\m,\blambda})K_g\d\v_g
=
\int_{\Sigma^\#}^{\reg}
I^{\bxi^\#}_{x_0}(\nu_{\z,\m,(\blambda^-,0,0)})K_g\d\v_g
+
\int_{(\Sigma^\#)_{\bsigma^\#}}^{\reg}
I^{\bsigma^\#}_{x_0}(\nu_{\z,0,(0,\lambda_{\bbb},\lambda_{\bbb})})K_g\d\v_g.
\end{aligned}
\end{equation}
\end{lemma}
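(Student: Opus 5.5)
The plan is to reduce the identity to the scalar statement \cite[Lemma 8.8]{Guillarmou:2023exh} componentwise. Write every lattice element in the coweight basis:
$$m_j=\gamma^{-1}\sum_{q=1}^r m_{j,q}\omega_q^\vee,\qquad \lambda_\ell=\gamma^{-1}\sum_{q=1}^r\lambda_{\ell,q}\omega_q^\vee,\qquad m_{j,q},\lambda_{\ell,q}\in\Z.$$
By the proof of Proposition \ref{prop:magnetic-background}, the forms decompose as $\nu_{\z,\m,\blambda}=\sum_q\nu^{(q)}_{\z,\m,\blambda}\omega_q^\vee$, where each $\nu^{(q)}$ is the scalar form of \cite{Guillarmou:2023exh} with radius $R=\gamma^{-1}$ and integer charges. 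The splitting
$$\nu_{\z,\m,\blambda}=\nu_{\z,\m,(\blambda^-,0,0)}+\nu_{\z,0,(0,\lambda_{\bbb},\lambda_{\bbb})}$$
should be chosen so that it holds component by component. The scalar splitting exists by \cite[Lemma 3.8]{Guillarmou:2023exh}, and summing against $\omega_q^\vee$ gives the $\a$-valued one.

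Next I check that every object in \eqref{eq:boundary-magnetic-curvature-self-gluing} is linear in this decomposition.

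The primitives $I^{\bxi}_{x_0}$ and $I^{\bsigma^\#}_{x_0}$ are linear in the form and are computed on the same cut surfaces for each component. Here it matters that the defect graph $\bxi$, the removed arcs $\xi_{j_0},\xi_{j_0'}$, and the homology basis $\bsigma^\#$ on $\Sigma^\#$ are geometric data independent of $q$.

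The counterterms $\kappa(\xi_p)=\int_{\alpha_x}\nu$ are linear in $\nu$.

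The relative counterterms $\int_{a_j}\Omega\int_{b_j}k_g-\int_{b_j}\Omega\int_{a_j}k_g$ are linear in $\Omega$.

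One subtlety is the ordering condition $m_j\le m_{j'}$ in the defect graph definition, which uses the lexicographic order on $\Lambda$. This is not componentwise, so for a given component the graph need not be admissible in the scalar sense. I will argue that the scalar proof of \cite[Lemma 8.8]{Guillarmou:2023exh} never uses the orientation or ordering condition. It only uses Stokes/Gauss--Bonnet on the cut surface and the jump of the primitive across each arc, namely $\kappa(\xi_p)$. The jump identity holds for any orientation once $\kappa$ is defined via the left face, as in \eqref{eq:boundary-kappa-def}. So the scalar identity holds for each component with the given $\bxi$.

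The core of the scalar argument, which I will recheck, runs as follows. After gluing, the arc $\xi_{j_0}$ from $p_{\bbb-1}$ to $p_{\bbb}$ becomes a closed curve on $\Sigma^\#$ crossing the glued circle. Together with the glued circle, it forms a new symplectic pair $(a_{\bbg+1},b_{\bbg+1})$ in $\bsigma^\#$.

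The winding part $\nu_{\z,0,(0,\lambda_{\bbb},\lambda_{\bbb})}$ then becomes a genuine closed form on $\Sigma^\#$ with period $2\pi\lambda_{\bbb}$ along the glued circle. Its primitive jumps by $2\pi\lambda_{\bbb}$ across $\xi_{j_0}$. This is exactly the jump absorbed on $\Sigma$ by the counterterm $-2\kappa(\xi_{j_0})\int_{\xi_{j_0}}k_g\,\d\ell_g$. On the $\Sigma^\#$ side it matches $2\chi(a)\int_b k_g-2\chi(b)\int_a k_g$. There is no contribution from the glued circle itself, since it is a geodesic for admissible metrics, so $\int k_g=0$ there.

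The arc $\xi_{j_0'}$ also needs care. After gluing, its charge contribution from $p_{\bbb-1}$ and $p_{\bbb}$ cancels, since $\varsigma_{\bbb-1}\lambda_{\bbb-1}+\varsigma_{\bbb}\lambda_{\bbb}=0$. Hence its $\kappa$ vanishes and it may be removed, consistent with $\bxi^\#$.

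Finally, summing the scalar identities against $\omega_q^\vee$ yields \eqref{eq:boundary-magnetic-curvature-self-gluing}.

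The main obstacle I expect is the base point and constant-ambiguity bookkeeping. The primitive of the winding part on $\Sigma$ versus on $(\Sigma^\#)_{\bsigma^\#}$ may differ by an additive constant depending on which side of the cut $x_0$ lies. Such a constant, multiplied by $\int K_g$, could spoil exact equality. I would handle this by taking $x_0$ off all cuts and using that the decomposition of $\nu$ is exact on each side. If necessary, I would instead show the discrepancy lies in $8\pi^2\Lambda$, which suffices for the path integral.
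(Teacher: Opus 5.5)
Your proposal is correct and follows the paper's route. The paper gives no separate argument: it records this lemma as the $\a$-valued analogue of the scalar self-gluing additivity \cite[Lemma 8.8]{Guillarmou:2023exh}, to be applied componentwise in the coweight basis exactly as you do. Your extra remarks are sound refinements the paper leaves implicit: that the lexicographic orientation convention on $\Lambda$ plays no role in the scalar argument, and that using a common interior base point off all cuts means no additive constant appears, so your $8\pi^2\Lambda$ fallback is not needed for the exact identity.
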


\subsection{Amplitudes: definition and gluing}\label{subsec:amplitudes}

We now attach to a compact Riemann surface $(\Sigma,g)$, with or without boundary,
an amplitude. In the closed case this is nothing but the correlation
functional with electro-magnetic operator defined in
Theorem~\ref{thm:electro-magnetic operators}. When $\partial\Sigma\neq\emptyset$,
the amplitude becomes a functional of the boundary field and will be viewed as an
element of the boundary Hilbert space $\mathcal H^{\otimes\bbb}$ introduced in
Section~\ref{subsec:Hilbert-space}.

Boundary data will be encoded by
$$
\widetilde{\bm\varphi}_{\blambda}
=
(\widetilde\varphi_{\lambda_1,1},\ldots,\widetilde\varphi_{\lambda_{\bbb},\bbb})
\in \bigl(H^s_\Lambda(\mathbb R;\a)\bigr)^{\bbb},
\qquad
\widetilde\varphi_{\lambda_j,j}(\theta)=\lambda_j\theta+\widetilde\varphi_j(\theta),
$$
where $\widetilde{\bm\varphi}=(\widetilde\varphi_1,\ldots,\widetilde\varphi_{\bbb})
\in H^s(\mathbb T;\a)^{\bbb}$ is the single-valued part.
Let $P\widetilde{\bm\varphi}$ be the harmonic extension of the single-valued boundary
field (Section~\ref{subsec:DN-map}).
The winding data $\blambda$ is inserted through the closed $\a$-valued $1$-form
$\nu_{\z,\m,\blambda}$ (defined in Proposition \ref{prop:magnetic-background}), together with a
choice of defect graph $\mathcal D_{\bv,\bxi}$ and primitive $I^{\bxi}_{x_0}(\nu_{\z,\m,\blambda})$. We recall that $\nu_{\z,\m,\blambda}$ does \emph{not} belong to
$L^2(\Sigma;T^*\Sigma\otimes\a)$. For this reason we systematically use its
\emph{regularised} $L^2$-norm $\|\nu_{\z,\m,\blambda}\|_{g,0}$. By the same convention, whenever $\Omega$ is a smooth $\a$-valued $1$-form on $\Sigma$, we extend the notation and set
$$
\|\nu_{\z,\m,\blambda}+\Omega\|_{g,0}^2
:=
\|\nu_{\z,\m,\blambda}\|_{g,0}^2
+2\langle \nu_{\z,\m,\blambda},\Omega\rangle_2
+\|\Omega\|_{g,2}^2,
$$
so that $\|\cdot\|_{g,0}$ behaves formally as an $L^2$-norm on the affine space
$\nu_{\z,\m,\blambda}+C^\infty(\Sigma;T^*\Sigma\otimes\a)$.

We also need a class of test functionals suited to GFF on a surface with
boundary and to the non-uniqueness of the topological decomposition of equivariant fields.
Fix $s<0$. If $u\in H^s_\Lambda(\widetilde\Sigma_{\z};\a)$ is an equivariant distribution
on the universal cover of $\Sigma_{\z}:=\Sigma\setminus\{z_1,\ldots,z_{\nm}\}$, its
monodromies are recorded by the winding vector $\blambda$ along the boundary components,
the magnetic charges $\m$ around the punctures, and an auxiliary choice of interior cycle
data $\blambda^c\in\Lambda^{2\bbg+\bbb-1}$. Choosing $\blambda^c$ (non-uniquely) and a
representative of the corresponding cohomology class, one can write $u$ in the form
\begin{equation}\label{eq:boundary-decomposition-u}
u
=
\pi^*f
+
I^{\bsigma}_{x_0}(\Omega_{\blambda^c})
+
I^{\bxi}_{x_0}(\nu_{\z,\m,\blambda}),
\end{equation}
where $f\in H^s(\Sigma_{\z};\a)$ is a single-valued Sobolev distribution and $\pi$ is the
covering projection. This decomposition is not canonical: changing the representative within
the same cohomology class amounts to replacing
$\Omega_{\blambda^c}+\nu_{\z,\m,\blambda}$ by
$\Omega_{\blambda^c}+\nu_{\z,\m,\blambda}+ \d h$,
for some smooth $\a$-valued function $h$ on $\Sigma$ which is locally constant near
$\partial\Sigma$.

We denote by $\mathcal E^{\m}_{\Lambda}(\Sigma)$ the space of functionals $F$ defined
on $H^s_\Lambda(\widetilde\Sigma_{\z};\a)$ with the following property.
Fix a connected component of $H^s_\Lambda(\widetilde\Sigma_{\z};\a)$, equivalently a fixed
(relative) cohomology class determined by the monodromy data $(\blambda^c,\blambda,\m)$.
Choose a decomposition \eqref{eq:boundary-decomposition-u} for $u$ on this component.
We require that $F$ can be written as
\begin{equation}\label{eq:def-test-functional-boundary}
F(u)
=
P\!\big(\langle f,g_1\rangle,\ldots,\langle f,g_M\rangle\big)
G\left(
e^{\bi\big\langle q,I^{\bsigma}_{x_0}(\Omega_{\blambda^c})+
I^{\bxi}_{x_0}(\nu_{\z,\m,\blambda})\big\rangle}
\right),
\end{equation}
where $q\in\Lambda^*,$ $g_1,\ldots,g_M\in H^{-s}(\Sigma;\a)$, $P$ is a polynomial (allowed to depend on
$(\blambda^c,\blambda,\m)$), and $G$ is bounded and continuous on $C^0(\Sigma;\S^1)$
(again allowed to depend on $(\blambda^c,\blambda,\m)$). 

We stress that $\mathcal E^{\m}_{\Lambda}(\Sigma)$ does \emph{not} depend on the auxiliary
choices entering \eqref{eq:boundary-decomposition-u}. Indeed, changing the representative of
the cohomology class amounts to replacing
$\Omega_{\blambda^c}+\nu_{\z,\m,\blambda}$ by $\Omega_{\blambda^c}+\nu_{\z,\m,\blambda}+\d h$,
where $h$ is a smooth $\a$-valued function on $\Sigma$, locally constant near
$\partial\Sigma$. Correspondingly, the same $u$ admits a decomposition with the shifted
single-valued part $f' = f-h$. Since $h$ is smooth, the pairings satisfy
$\langle f',g_j\rangle=\langle f,g_j\rangle-\langle h,g_j\rangle$, hence the polynomial factor
in \eqref{eq:def-test-functional-boundary} can be absorbed into a (possibly different) polynomial
$P'$. Moreover, the primitive changes by the addition of $h-h(x_0)$, so the $\S^1$-valued
function inside $G$ is multiplied by the continuous $\S^1$-valued function
$\exp(i\langle q,\,h-h(x_0)\rangle)$; composing $G$ with this multiplication yields another
bounded continuous functional $G'$. Therefore the existence of a representation of the form
\eqref{eq:def-test-functional-boundary} is invariant under changing representatives, and the
definition of $\mathcal E^{\m}_{\Lambda}(\Sigma)$ is well posed.
\medskip

For $p\ge 1$, we define the seminorm on $\mathcal{E}^{\m}_\Lambda(\Sigma;\a)$
\begin{equation}\label{eq:boundary-seminorm-test-functionals}
\begin{aligned}
\|F\|_{\mathcal L^{\infty,p}_{\balpha,\m}}
:={}&
\sup_{\blambda^c\in\Lambda^{2\bbg+\bbb-1}}
\left(
\mathbb E\!\Bigg[
e^{
-\frac{1}{2\pi}\big\langle \d X_{g,D},\,
\Omega_{\blambda^c}^c+\nu_{\z,\m,\blambda}\big\rangle_2
-\frac{1}{4\pi}\big\|(1-\Pi_1^c)\Omega_{\blambda^c}\big\|_2^2}
\right.\\
&\hspace{3cm}\left.\times
\Big|F\!\big(X_{g,D}+P_\Sigma\widetilde{\bm\varphi}+u_{\z}^0+
I^{\bsigma}_{x_0}(\Omega^c_{\blambda^c})+I^{\bxi}_{x_0}(\nu_{\z,\m,\blambda})\big)\Big|^p
\Bigg]
\right)^{\!1/p},
\end{aligned}
\end{equation}
where $u_{z}^0:=\sum^{n_{\m}}_{j=1}i\alpha_jG_{g,D}(\cdot,z_j)$, and $\Pi_1^c$ is the $L^2$-orthogonal projection onto the space of harmonic $1$-forms with relative boundary
condition so that $(1-\Pi_1^c)\Omega_{\blambda^c}^c=\d f_{\blambda^c}$.
\begin{definition}[Amplitudes]\label{def:amplitude}$\,$
\begin{itemize}
    \item Let $\partial\Sigma=\emptyset$. Under the assumptions of Theorem \ref{thm:electro-magnetic operators}, for $F\in\mathcal{E}_\Lambda^{\m}(\Sigma;\a),$ we define the amplitude as \begin{equation}\label{eq:def-closed-amplitude}\A_{\Sigma,g,\bv,\balpha,\m}(F):=\la FV^g_{(\balpha,\m)}(\bv)\ra_{\Sigma,g}.\end{equation}
    \item Assume $\partial\Sigma\neq\emptyset$. For $F\in\mathcal E^{\m}_\Lambda(\Sigma;\balpha)$ and
boundary field $\widetilde{\bm\varphi}^{\blambda}\in (H^s_\Lambda(\mathbb R;\a))^{\bbb}$,
we define the amplitude $\mathcal A_{\Sigma,g,\bv,\balpha,\m,\bzeta}(F,\widetilde{\bm\varphi}^{\blambda})$ as 
\begin{equation}\label{eq:def-boundary-amplitude}
    \begin{aligned}
        &\mathcal A_{\Sigma,g,\bv,\balpha,\m,\bzeta}(F,\widetilde{\bm\varphi}^{\blambda})\\
        &:=\delta_0(\sum_{j=1}^{n_{\m}+\bbb}m_j(\blambda))\lim_{t\rightarrow1}\lim_{\eps\rightarrow0}\sum_{\blambda^c\in\Lambda^{2\bbg+\bbb-1}}e^{-\frac{1}{4\pi}\norm{\nu_{\z,\m,\blambda}+\Omega^c_{\blambda^c}}^2_{g,0}}Z_{\Sigma,g}\A^0_{\Sigma,g}(\widetilde{\bm\varphi})\\
        &\times\E\left[e^{-\frac{\la d\x_{g,D}+\d P\widetilde{\bm\varphi},\nu_{\z,\m,\blambda}+\Omega^c_{\blambda^c}\ra}{2\pi}}F(\Phi_g)\prod^{n_{\m}}_{j=1}V^g_{\alpha,\eps}(x_j(t))e^{-\frac{\bi}{4\pi}\la QK_g,\Phi_g\ra^{\reg}_{g}-\sum^r_{i=1}\mu_iM^g_{\gamma e_i}(\Phi_g,\Sigma)}\right]
    \end{aligned}
\end{equation}
where $\delta_0$ is the Dirac mass at $0,$ $x_j\in C^1([0,1],\Sigma)$ such that as $t\rightarrow1,$ $(x_j(t),\dot{x}_j(t))\rightarrow(z_j,v_j,)$ the Toda field is $\Phi_g=\x_{g,D}+P\widetilde{\bm\varphi}+I^{\bxi}_{x_0}(\nu_{\z,\m,\blambda})+I^{\bsigma}_{x_0}(\Omega^c_{\blambda^c})$, the expectation is over the $\a$-valued Dirichlet GFF $\x_{g,D}$, and $Z_{\Sigma,g}$ is the normalizing constant \begin{equation}\label{eq:amplitude-normalizing}
    Z_{\Sigma,g}:=\det(\Delta_{g,D})^{-\frac{r}{2}}.
\end{equation}
The regularized curvature term is 
\begin{equation}\label{eq:amplitude-curvature}
\begin{aligned}
    \la QK_g,\Phi_g\ra^{\reg}_{g}:=&\int_{\Sigma}\la QK_g,\x_{g,D}+P\widetilde{\bm\varphi}\ra\d\v_g+\int^{\reg}_{\Sigma}\la QK_g,I_{x_0}^{\bsigma}(\Omega^c_{\blambda^c})\ra\d\v_g\\
    &+\int^{\reg}_{\Sigma}\la QK_g,I_{x_0}^{\bxi}(\nu_{\z,\m,\blambda})\ra\d\v_g,
\end{aligned}
\end{equation}
and $\A^0_{\Sigma,g}(\widetilde{\bm\varphi})$ is the free field amplitude \begin{equation}\label{eq:amplitude-free-field}
\A^0_{\Sigma,g}(\widetilde{\bm\varphi})=e^{-\frac{1}{2}\la \widetilde{\bm\varphi},(\mathbf{D}_\Sigma-\mathbf{D})\widetilde{\bm\varphi}\ra}.
\end{equation} If $F=1$, we write $\A_{\Sigma,g,\bv,\balpha,\m,\bzeta}(\widetilde{\bm{\varphi}}^{\blambda}).$

\end{itemize}
\end{definition}
\begin{remark}
    $\,$
\begin{itemize}
\item As soon as $\partial\Sigma\neq\emptyset$, the primitive $I^{\bxi}_{x_0}(\nu_{\z,\m,\blambda})$
(and similarly $I^{\bsigma}_{x_0}(\Omega_{\blambda^c})$) depends on the choice of the base point $x_0$;
when needed we indicate this by writing $\mathcal A^{x_0}_{\Sigma,g,\bv,\balpha,\m,\z}$.
\item The definition of the amplitude does not depend on the choice of orientation of the arcs
$d_1,\ldots,d_{\bbb-1}$ entering the canonical relative basis. Indeed, reversing the
orientation of one arc $d_j$ replaces the corresponding dual representative
$\Omega_j^c$ by $-\Omega_j^c$, and therefore replaces the corresponding summation
variable $\lambda_j^c$ by $-\lambda_j^c$. Since the amplitude contains the full sum over
$\blambda^c\in\Lambda^{2\bbg+\bbb-1}$, this is only a reindexing of the lattice sum.
\end{itemize}
\end{remark}

We claim some properties of the amplitudes, whose proof is postponed to Section \ref{prop:proof of gluing}.
\begin{proposition}
\label{prop:CITT-boundary-amplitudes-properties}
Assume that $\partial\Sigma$ has $\bbb>0$ connected components. Under the standing
assumptions $\gamma^2<1$, $Q\in\Lambda^*$, and
$\alpha_j-Q\in\mathcal C_+,\, j=1,\ldots,n_{\mathfrak m},$
the amplitudes satisfy the following properties:

\begin{itemize}
\item The limit defining
$$
\mathcal A_{\Sigma,g,\bv,\balpha,\m,\bzeta}
(F,\widetilde{\bm\varphi}^{\blambda})
$$
exists for every $F\in\mathcal E^\m_\Lambda(\Sigma;\a)$, for
$\mu_0^{\otimes\bbb}$-almost every boundary field
$\widetilde{\bm\varphi}^{\blambda}$, and the resulting function of the boundary field
belongs to $\mathcal H^{\otimes\bbb}$.

\item The amplitude does not depend on the choice of the canonical relative homology
basis $\bsigma$, nor on the choice of compactly supported relative cohomology
representatives $(\Omega^c_j)_j$ dual to $\bsigma$.

\item The amplitude does not depend on the choice of the representative
$\nu_{\z,\m,\blambda}$ in its absolute cohomology class, provided it satisfies the conditions of Proposition \ref{prop:magnetic-background} and \eqref{nu-relative-period}
It is also independent of the auxiliary defect graph $\bxi$.

\item \textbf{Conformal anomaly.}
Let $g'$ and $g$ be two conformal admissible metrics on $\Sigma$, with
$g'=e^\rho g$, where $\rho\in C^\infty(\Sigma)$ and $\rho|_{\partial\Sigma}=0$.
Then
\begin{equation}\label{eq:boundary-amplitude-conformal-anomaly}
\begin{aligned}
\mathcal A_{\Sigma,g',\bv,\balpha,\m,\bzeta}
(F,\widetilde{\bm\varphi}^{\blambda})
=&e^{
\frac{c}{96\pi}
\int_\Sigma
\left(|d\rho|_g^2+2K_g\rho\right)\d\v_g
-
\sum_{j=1}^{n_{\mathfrak m}}
\Delta_{(\alpha_j,m_j)}\rho(z_j)
}
\\
&\times
\mathcal A_{\Sigma,g,\bv,\balpha,\m,\bzeta}
(
F(\cdot-\frac{\bi}{2}Q\rho),
\widetilde{\bm\varphi}^{\blambda}
),
\end{aligned}
\end{equation}
where
$$
c=r-6\langle Q,Q\rangle,
\qquad
\Delta_{(\alpha,m)}
=
\langle\frac{\alpha}{2},\frac{\alpha}{2}-Q\rangle
+\frac14|m|^2.
$$

\item \textbf{Diffeomorphism invariance.}
Let $\psi:\Sigma'\to\Sigma$ be an orientation-preserving diffeomorphism. Then
\begin{equation}\label{eq:boundary-amplitude-diffeomorphism}
\mathcal A^{x_0}_{\Sigma',\psi^*g,\bv,\balpha,\m,\bzeta}
(F,\widetilde{\bm\varphi}^{\blambda})
=
\mathcal A^{\psi(x_0)}_{\Sigma,g,\psi_*\bv,\balpha,\m,\psi\circ\bzeta}
(F_\psi,\widetilde{\bm\varphi}^{\blambda}),
\end{equation}
where $F_\psi(\phi):=F(\phi\circ\psi).$

\item \textbf{Spins.}
For $r_{\btheta}\bv
:=
((z_1,r_{\theta_1}v_1),\ldots,
(z_{n_{\mathfrak m}},r_{\theta_{n_{\mathfrak m}}}v_{n_{\mathfrak m}})),$
one has
\begin{equation}\label{eq:boundary-amplitude-spin}
\mathcal A_{\Sigma,g,r_{\btheta}\bv,\balpha,\m,\bzeta}
(F,\widetilde{\bm\varphi}^{\blambda})
=
e^{
i\sum_{j=1}^{n_{\mathfrak m}}
\langle \alpha_j-Q,m_j\rangle\theta_j
}\mathcal A_{\Sigma,g,\bv,\balpha,\m,\bzeta}
(F,\widetilde{\bm\varphi}^{\blambda}).
\end{equation}
\end{itemize}
\end{proposition}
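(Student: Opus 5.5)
We follow the scheme of the closed-surface case (Proposition \ref{prop:magnetic-path-integral-well-defined}, Theorems \ref{thm:mixed operators} and \ref{thm:electro-magnetic operators}), now with a Dirichlet GFF and a random harmonic boundary term.

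\textbf{Existence and membership in $\mathcal H^{\otimes\bbb}$.} We fix $\widetilde{\bm\varphi}^{\blambda}$ and first treat the inner expectation at fixed $\blambda^c$, $\eps$, $t$. We apply the Girsanov transform to the linear term $\la \d\x_{g,D},\nu_{\z,\m,\blambda}+\Omega^c_{\blambda^c}\ra$. Then we use Proposition \ref{prop:exp-moment-shifted} with $D'=\Sigma$, with the shift $Z=P\widetilde{\bm\varphi}+I^{\bxi}_{x_0}+I^{\bsigma}_{x_0}$ contributing only a unimodular phase. The analytic-continuation argument for the electric insertions then goes through verbatim, with $G_g$ replaced by $G_{g,D}$. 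The charge condition $\alpha_j-Q\in\mathcal C_+$ gives the local integrability \eqref{eq:local-toda-integrability}, and hence the limits $\eps\to0$ and $t\to1$, with bounds uniform in $(\blambda^c,\widetilde{\bm\varphi})$.

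Completing the square in $\blambda^c$ shows that the lattice sum is dominated by a Gaussian sum. Its quadratic form is the norm of the harmonic relative part, which is positive definite on $\Lambda^{2\bbg+\bbb-1}$. The cross term with $\d P\widetilde{\bm\varphi}$ is linear in $\blambda^c$ and Gaussian in $\widetilde{\bm\varphi}$, so it is absorbed.

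For square integrability against $\mu_0^{\otimes\bbb}$, we bound $|\A|$ by $C\,\A^0_{\Sigma,g}(\widetilde{\bm\varphi})e^{-\frac1{4\pi}\|\nu_{\z,\m,\blambda}\|^2_{g,0}}$ times Gaussian factors in $\blambda$. We then use $\mathbf D_\Sigma-\mathbf D$ smoothing, as in the scalar case, together with \eqref{eq:DN-energy-identity}: on the nonzero modes, $e^{-\la\widetilde{\bm\varphi},\widetilde{\mathbf D}_\Sigma\widetilde{\bm\varphi}\ra}$ is integrable against $\P_{\mathbb T}$. Integrability in the zero modes is automatic on the compact torus, and in $\blambda$ it follows from the quadratic decay of $\|\nu_{\z,\m,\blambda}\|_{g,0}^2$. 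Everything decouples componentwise in an orthonormal basis of $\a$, so the scalar estimates of \cite{Guillarmou:2023exh} apply. We expect this decoupling to fail only through the interaction term, which is already controlled.

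\textbf{Independence statements.} Changing $\bsigma$ or the compactly supported representatives reindexes the sum over $\blambda^c$ and changes primitives by constants or by $h$ vanishing near $\partial\Sigma$. These changes are absorbed by Girsanov, since $h|_{\partial\Sigma}=0$ leaves $P\widetilde{\bm\varphi}$ untouched. Curvature changes lie in $8\pi^2\Lambda$ by Lemmas \ref{lem:boundary-relative-curvature-independence} and \ref{lem:boundary-defect-graph-independence}, and $Q\in\Lambda^*$ makes the resulting phase $e^{-\frac{\bi}{4\pi}\la Q,8\pi^2\lambda\ra}=e^{-2\pi\bi\la Q,\lambda\ra}=1$. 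Changing $\nu$ by $\d f$ with $\partial_\nu f=0$ is handled similarly. Here we use \eqref{nu-relative-period} so that the relative periods stay in $2\pi\Lambda$, and we absorb the boundary value of $f$ by shifting the harmonic extension. We expect this step to be the main obstacle: we must check that the shift of $\widetilde{\bm\varphi}$ is compensated exactly by $\A^0$ and by the cross term $\la \d P\widetilde{\bm\varphi},\nu\ra$. We will first verify it in the scalar case following \cite[Section 8]{Guillarmou:2023exh}.

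\textbf{Conformal anomaly, diffeomorphism invariance, spins.} Because $\rho|_{\partial\Sigma}=0$ and the metrics are admissible, $P\widetilde{\bm\varphi}$ and $\mathbf D_\Sigma$ are unchanged. We then combine four transformation rules:
\begin{itemize}
\item the Dirichlet Polyakov formula for $Z_{\Sigma,g}$, which gives the $r/96\pi$ term;
\item Lemma \ref{lem:toda-reg-norm} for the magnetic norm;
\item Lemma \ref{lem:boundary-curvature-conformal-change} for the curvature terms, which produces the shift $-\frac{\bi}{2}Q\rho$ and the $-6\la Q,Q\ra$ contribution;
\item \eqref{eq:weyl IGMC}, with the identity $\gamma\la e_i,Q\ra=\frac{\gamma^2}{2}|e_i|^2-2$, for the interaction.
\end{itemize}
This reproduces Step 2 of Theorem \ref{thm:mixed operators}.

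For diffeomorphisms, we transport all data and use Lemmas \ref{lem:boundary-relative-curvature-diffeomorphism} and \ref{lem:boundary-magnetic-curvature-diffeomorphism}. For spins, the defect-graph Gauss--Bonnet argument of Corollary \ref{cor:rotation-covariance-magnetic} gives $e^{-\bi\la Q,m_j\ra\theta_j}$. The electric insertion $e^{\bi\la\alpha_j,I^{\bxi}_{x_0}(\nu)(x_j(t))\ra}$ approaching along $r_{\theta_j}v_j$ contributes $e^{\bi\la\alpha_j,m_j\ra\theta_j}$, and the product of the two factors is the stated phase.
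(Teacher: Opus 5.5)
Your treatment of existence, of the basis/representative/defect-graph independence, and of the anomaly, diffeomorphism and spin covariance follows the paper's route. The genuine gap is the claim that the amplitude lies in $\H^{\otimes\bbb}$.

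\textbf{The $L^2$ bound is wrong.} Your bound $|\A|\le C\,\A^0_{\Sigma,g}(\widetilde{\bm\varphi})e^{-\frac1{4\pi}\|\nu_{\z,\m,\blambda}\|^2_{g,0}}\times(\text{Gaussian in }\blambda)$ is false. It drops the factor $e^{-\frac1{2\pi}\la\d P\widetilde{\bm\varphi},\nu_{\z,\m,\blambda}+\Omega^c_{\blambda^c}\ra_2}$, which is unbounded in $\widetilde{\bm\varphi}$, zero modes included. Completing the square in $\blambda^c$ honestly leaves a growing factor of size $e^{\frac1{4\pi}\|P_V\d P\widetilde{\bm\varphi}\|_2^2}$, where $V$ is the space of relative harmonic forms. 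Along $\d P(\text{locally constant data})\in V$, this factor exactly cancels the decay of $\A^0$. These are the exact sectors $\Omega^c_{(0,\ldots,0,\blambda^c_\partial)}=\d f_{\blambda^c_\partial}$ of Lemma~\ref{lem:relative-period-basis}.

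Already on the annulus $\{e^{-L}<|z|<1\}$ the zero-mode dependence is the periodized Gaussian $\sum_{\lambda^c\in\Lambda}e^{-\frac{1}{2L}|c_2-c_1+2\pi\lambda^c|^2}$. Your bound would replace it by $e^{-\frac1{2L}|c_2-c_1|^2}\sum_{\lambda^c}e^{-\frac{1}{2L}|2\pi\lambda^c|^2}$, which is not an upper bound. So integrability of $e^{-\la\widetilde{\bm\varphi},\widetilde{\mathbf D}_\Sigma\widetilde{\bm\varphi}\ra}$ against $\P_{\mathbb T}$ is not the relevant criterion.

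\textbf{Periodicity is assumed, not proved.} Your step ``integrability in the zero modes is automatic on the compact torus'' presupposes that $\A$ is invariant under $\widetilde{\bm\varphi}^{\blambda}\mapsto\widetilde{\bm\varphi}^{\blambda}+2\pi\ell\bone_{\partial_j\Sigma}$ for each single $j$. Without this, $\A$ does not even define a function on $(H^s_\Lambda(\R;\a)/2\pi\Lambda)^{\bbb}$, and you never prove it.

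\textbf{How the paper fills both holes.} For periodicity, the simultaneous shift is handled by $Q\in\Lambda^*$ and Gauss--Bonnet. A single component $j>1$ is handled by writing $\d P(2\pi\ell\bone_{\partial_j\Sigma})=\eta^c_{2\bbg+j-1}\otimes\ell+\d f_j$ with $f_j|_{\partial\Sigma}=0$, reindexing the $\blambda^c$-sum, and applying Girsanov. This also resolves the step you flag as the main obstacle, independence of the representative of $\nu_{\z,\m,\blambda}$. Once the Dirichlet part and the relative-cohomology part of $f$ are absorbed, the leftover locally constant $2\pi\Lambda$-valued part of $f$ is killed precisely by this per-component periodicity.

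For square integrability, the paper bounds $|\A|\le C\,\A^0_{\Sigma,g,\z,\m,\bzeta}(F_\z,\cdot)$ with a positive functional $F_\z$. Lemma~\ref{lem:reversed-orientatoin} and the free-boson gluing of Proposition~\ref{prop:CITT-free-boson-two-surface-gluing} then identify $\int|\A^0|^2\,\d\mu_0^{\otimes\bbb}$ with a constant times a closed-surface amplitude on the double $\Sigma^{\#2}$. That amplitude is finite by the closed-surface theory, which disposes of cross terms, windings and zero modes at once.

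A direct Gaussian argument like yours can be made to work, but only by proving positivity of $\|(1-P_V)\d P\varphi\|_2^2$ on nonzero modes. You would also have to unfold the $\blambda^c_\partial$-sum against the zero-mode integrals.
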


We now state the gluing identities for the amplitudes. We first recall the geometric
setup. Let $(\Sigma_i,g_i,\z_i,\bzeta_i),\,i=1,2,$ be two compact connected Riemann surfaces with admissible metrics and parametrized
boundary. We write $\partial\Sigma_i=\bigsqcup_{j=1}^{\bbb_i}\partial_j\Sigma_i,\,$
$\bzeta_i=(\zeta_{i,1},\ldots,\zeta_{i,\bbb_i}),$
where each $\zeta_{i,j}:\mathbb T\to\partial_j\Sigma_i$ is an analytic
parametrization. We assume that the last boundary component of $\Sigma_1$ and that of $\Sigma_2$ are glued together by the parametrizations, with
opposite orientations, that $\partial_{\bbb-1}\Sigma_1$ is outgoing, that $x_0^i\in\partial_{\bbb_i}\Sigma_i$, and that $x_0^1=x_0^2$ on the glued surface. The resulting surface is denoted by $\Sigma=\Sigma_1\#\Sigma_2,$ whose remaining boundary parametrization is
$\bzeta
=
(\zeta_{1,1},\ldots,\zeta_{1,\bbb_1-1},
\zeta_{2,1},\ldots,\zeta_{2,\bbb_2-1}).$

The metric $g$ on $\Sigma$ is the metric obtained by gluing $g_1$ and $g_2$, and is
again assumed admissible. The marked points, tangent vectors, electric charges, and magnetic
charges are the union of the data on the two pieces:
$$\z=(\z_1,\z_2),\quad
\bv=(\bv_1,\bv_2),\quad
\balpha=(\balpha_1,\balpha_2),\quad
\m=(\m_1,\m_2).
$$
We also take test functionals $F_i\in\mathcal E^{\m_i}_{\Lambda}(\Sigma_i;\a)$, and write
$F_1\otimes F_2(\Phi_g)=F(\Phi_g|_{\Sigma_1})F(\Phi_g|_{\Sigma_2})$ for the corresponding functional on the glued surface, evaluated on the
two restrictions of the field. We write the boundary data on the glued surface $\Sigma$ as $(\widetilde{\bm\varphi}^{\blambda_1}_1,\widetilde{\bm\varphi}^{\blambda_2}_2)\in (H^s_\Lambda(\R;\a))^{\bbb_1-1}\times(H^s_\Lambda(\R;\a))^{\bbb_2-1}$ 
so that boundary data on $\Sigma_i$ before gluing takes the form $(\widetilde{\bm\varphi}_{1}^{\blambda_1},\widetilde{\varphi}^\lambda)$.

We will also need the analogous self-gluing setup. Let
$(\Sigma,g,\bzeta)$ be a compact connected Riemann surface with admissible metric and
$\bbb\ge2$ parametrized boundary components. We assume that the last two boundary
components are glued together, with opposite orientations and compatible parametrizations,
and denote the resulting surface by $\Sigma^\#,$
whose remaining boundary parametrizations are denoted by $\bzeta_\#$. The boundary data on the glued surface $\Sigma^\sharp$ will be written as  $\widetilde{\bm\varphi}^{\blambda}
=
(\widetilde\varphi^{\lambda_1},\ldots,
\widetilde\varphi^{\lambda_{\bbb-2}})\in(H^s_{\Lambda}(\R;\a))^{\bbb-2}$, and that on the unglued surface $\Sigma$ will be written as $(\widetilde{\bm\varphi}^{\blambda},\widetilde{\varphi}^{\lambda_{\bbb-1}},\widetilde{\varphi}^{\lambda_{\bbb}})\in (H^s_{\Lambda}(\R;\a))^{\bbb-2}\times H^s_{\Lambda}(\R;\a)\times H^s_{\Lambda}(\R;\a)$, where $\widetilde{\varphi}^{\lambda_{\bbb-1}},\,\widetilde{\varphi}^{\lambda_{\bbb-1}}$ denote the boundary data on $\partial_{\bbb-1}\Sigma,\,\partial_{\bbb}\Sigma,$ respectively.

\begin{proposition}\label{prop:CITT-Segal-gluing}
With the above notation, the following hold true:
\begin{itemize}
\item \textbf{Gluing two surfaces.} 
For functionals $F_i\in\mathcal{E}^{\m_i}_{\Lambda}(\Sigma_i;\a),\,i=1,2$, we have
\begin{equation}\label{eq:CITT-two-surface-gluing}
\begin{aligned}
\mathcal A_{\Sigma,g,\bv,\balpha,\m,\bzeta}
\left(
F_1\otimes F_2,
\widetilde{\bm\varphi}_{1}^{\blambda_1},
\widetilde{\bm\varphi}_{2}^{\blambda_2}
\right)
=C_r
&\int_{H^s_\Lambda(\mathbb R;\a)/(2\pi\Lambda)}
\mathcal A_{\Sigma_1,g_1,\bv_1,\balpha_1,\m_1,\bzeta_1}
\left(
F_1,
\widetilde{\bm\varphi}_{1}^{\blambda_1},
\widetilde\varphi^\lambda
\right)\\&\times
\mathcal A_{\Sigma_2,g_2,\bv_2,\balpha_2,\m_2,\bzeta_2}
\left(
F_2,
\widetilde{\bm\varphi}_{2}^{\blambda_2},
\widetilde\varphi^\lambda
\right)
\,\d\mu_0(\widetilde\varphi^\lambda),
\end{aligned}
\end{equation}
with 
$$
C_r:=
\begin{cases}
(\sqrt 2\,\pi)^{-r}, & \partial\Sigma\neq\emptyset,\\
2^{r/2}, & \partial\Sigma=\emptyset.
\end{cases}
$$

\item \textbf{Self-gluing.} For a functional $F\in\mathcal{E}^{\m}_{\Lambda}(\Sigma;\a)$, we have
\begin{equation}\label{eq:CITT-self-gluing}
\begin{aligned}
\mathcal A_{\Sigma^\#,g^\#,\bv,\balpha,\m,\bzeta_\#}
\left(
F,
\widetilde{\bm\varphi}^{\blambda}_{\#}
\right)
=
C_r
\int_{H^s_\Lambda(\mathbb R;\a)/(2\pi\Lambda)}
\mathcal A_{\Sigma,g,\bv,\balpha,\m,\bzeta}
\left(
F,
\widetilde{\bm\varphi}^{\blambda}_{\#},
\widetilde\varphi^\lambda,
\widetilde\varphi^\lambda
\right)
\,\d\mu_0(\widetilde\varphi^\lambda),
\end{aligned}
\end{equation}
with $$
C_r:=
\begin{cases}
(\sqrt 2\,\pi)^{-r}, & \partial\Sigma^\#\neq\emptyset,\\
2^{r/2}, & \partial\Sigma^\#=\emptyset.
\end{cases}
$$

\end{itemize}
\end{proposition}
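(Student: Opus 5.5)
The plan is to follow the strategy of \cite[Section 8]{Guillarmou:2023exh} (itself modelled on \cite{guillarmou2021segal}), reducing each analytic step to a componentwise scalar statement in an orthonormal basis $(\eps_\ell)$ of $\a$. I treat gluing two surfaces first; self-gluing follows the same lines with extra bookkeeping.

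\textbf{Step 1: split the Gaussian field.} Let $\mathcal C=\partial_{\bbb_1}\Sigma_1\simeq\partial_{\bbb_2}\Sigma_2$ be the glued curve. By the Markov property, Proposition \ref{prop:Markov-property} (in the version with boundary, since $\partial\Sigma\neq\emptyset$), the Dirichlet GFF on $\Sigma$ decomposes as $\x_{g,D}=\x_{1,D}+\x_{2,D}+P_{\mathcal C}(\x_{g,D}|_{\mathcal C})$. The trace $\x_{g,D}|_{\mathcal C}$ is Gaussian with covariance $G_{g,D}|_{\mathcal C\times\mathcal C}$. By \eqref{eq:DN-inverse-kernel}, this covariance is $2\pi\mathbf D_{\Sigma,\mathcal C}^{-1}$.

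I write $\mathbf D_{\Sigma,\mathcal C}=2\mathbf D+\widetilde{\mathbf D}_{\Sigma,\mathcal C}$, using \eqref{eq:DN-smoothing-remainders}. Then I rewrite the law of the trace, together with the boundary harmonic extensions, against the reference measure $\mathbb P_{\mathbb T}$: the $\mathbf D$ part reproduces $\mathbb P_{\mathbb T}$, and the leftover density is $e^{-\frac12\la\widetilde\varphi,(\mathbf D_{\Sigma_1}-\mathbf D)\widetilde\varphi\ra-\frac12\la\widetilde\varphi,(\mathbf D_{\Sigma_2}-\mathbf D)\widetilde\varphi\ra}$, i.e. the product of the two free-field amplitudes \eqref{eq:amplitude-free-field}. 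Since all operators are diagonal in $\a$, this is $r$ copies of the scalar identity, $\mathbf D_{\Sigma,\mathcal C}$ restricted to the curve equals $\mathbf D_{\Sigma_1}+\mathbf D_{\Sigma_2}$, and the determinant identity $\det(\Delta_{g,D})=\det(\Delta_{g_1,D})\det(\Delta_{g_2,D})\det(\mathbf D_{\Sigma,\mathcal C})\times$const from \cite{guillarmou2021segal} holds componentwise. The zero-mode normalisation $c\in\T(\gamma)$ against the Gaussian constant mode then yields the factor $C_r=(\sqrt2\pi)^{-r}$. In the closed case one instead uses the Markov property for the closed surface, which produces the factor $2^{r/2}$ together with $\v_g(\Sigma)/\det'(\Delta_g)$.

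\textbf{Step 2: match the topological sectors and the other terms of the action.} The boundary field on $\mathcal C$ has winding $\lambda\in\Lambda$. The sum over $\lambda$ in $\d\mu_0$, combined with the relative sums over $\blambda^c_1,\blambda^c_2$, has to reproduce the sum over $\blambda^c$ on $\Sigma$ for the glued relative basis of Remark \ref{rem:glued-homology-bases}. In self-gluing, the new $a$-cycle period is exactly $\lambda$, and the new $b$-cycle period comes from the zero mode $c$ integrated over $\T(\gamma)$ after unfolding.

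\begin{itemize}
\item Quadratic terms: $\|\nu_{\z,\m,\blambda}+\Omega^c\|_{g,0}^2$ splits additively, up to cross terms involving $\d P\widetilde\varphi$. These cross terms are absorbed by the linear Girsanov factors $e^{-\frac1{2\pi}\la \d\x_{g,D}+\d P\widetilde{\bm\varphi},\cdot\ra}$ via Green's formula, using $\iota_{\partial\Sigma}(i_\nu\nu)=0$ and compact support of $\Omega^c$.
\item Curvature terms: they split by Lemmas \ref{lem:boundary-curvature-additivity-gluing} and \ref{lem:boundary-curvature-additivity-self-gluing}, modulo $8\pi^2\Lambda$, which gives a trivial phase because $Q\in\Lambda^*$.
\item Interaction and insertions: the imaginary GMC is additive over $\Sigma_1\cup\Sigma_2$ because $\mathcal C$ has measure zero. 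The electric insertions and $F_1\otimes F_2$ factorise.
\end{itemize}

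\textbf{Step 3: justify the interchanges of limits, sums and expectations.} This is where I expect the main obstacle. The limits $\eps\to0$ and $t\to1$ must be exchanged with the $\mu_0$-integral and the lattice sums. I will do this by dominated convergence using the uniform exponential moment bounds of Proposition \ref{prop:exp-moment-shifted}, whose constants are independent of the boundary data because the boundary field contributes a unimodular phase. Summability over $\lambda$ and $\blambda^c$ comes from the Gaussian factor $e^{-\frac1{4\pi}\|\cdot\|^2_{g,0}}$, after completing the square in the linear magnetic cross terms. Membership in $\H^{\otimes\bbb}$ (Proposition \ref{prop:CITT-boundary-amplitudes-properties}), together with Cauchy--Schwarz, makes the right-hand integrals absolutely convergent. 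The delicate point is that the winding $\lambda$ along $\mathcal C$ enters both amplitudes through $\nu_{\z,\m,\blambda}$; its quadratic growth must be controlled uniformly against the potential term, which I will do exactly as in \cite[Proposition 8.10]{Guillarmou:2023exh}, componentwise in the coweight basis.
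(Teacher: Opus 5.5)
Your analytic skeleton matches the paper's: Markov decomposition, the Dirichlet-to-Neumann density of the trace, additivity of the curvature terms, then passage to the limits. The paper organizes this by first proving the free compactified-boson gluing, Proposition \ref{prop:CITT-free-boson-two-surface-gluing}, and then inserting the regularized vertex factors, curvature phase and imaginary GMC as $2\pi\Lambda$-periodic test functionals.

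The gap is in the step specific to the compactified theory: matching the zero mode and the topological sectors when $\partial\Sigma\neq\emptyset$.
\begin{itemize}
\item In two-surface gluing the winding on $\mathcal C$ is not summed. The factors $\delta_0(\sum_j m_j(\blambda))$ in the two boundary amplitudes pin it to one value determined by the data on $\Sigma_1$. So the $\Lambda$-component of $\d\mu_0$ contributes a single term and cannot reproduce any lattice sum.
\item The count actually goes the other way. The pieces carry $(2\bbg_1+\bbb_1-1)+(2\bbg_2+\bbb_2-1)$ relative lattice variables, one more than the $2\bbg+\bbb-1$ on $\Sigma$.
\item On the left-hand side, the trace of $\x_{g,D}$ on $\mathcal C$ has a non-compact constant mode $c\in\a\cong\R^r$: the density from \cite[Lemmas 5.3 \& 5.4]{guillarmou2021segal} is with respect to Lebesgue measure on $\R^r$. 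But $\d\mu_0$ integrates $c$ only over $\T(\gamma)$. Your phrase ``$c\in\T(\gamma)$ against the Gaussian constant mode'' hides exactly this mismatch.
\end{itemize}
As planned, the two sides of \eqref{eq:CITT-two-surface-gluing} therefore differ by one lattice sum and by $\R^r$ versus $\T(\gamma)$. For the same reason, the difficulty you flag in Step 3 (controlling growth in the winding $\lambda$) does not arise in two-surface gluing.

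The missing idea is to unfold $\int_{\R^r}\d c=\sum_{\ell\in\Lambda}\int_{\T(\gamma)}(\,\cdot\,)(c+2\pi\ell)\,\d c$, and then show that shifting the boundary value on $\mathcal C$ by $2\pi\ell$ amounts to a change of relative sector on each piece.
\begin{itemize}
\item Let $P^i_\ell$ be harmonic on $\Sigma_i$, equal to $2\pi\ell$ on $\mathcal C$ and $0$ on the other boundary components.
\item The free-field amplitude acquires the factor $e^{-\frac1{2\pi}\la \d P(\widetilde{\bm\varphi}_i,c+\varphi),\d P^i_\ell\ra_2-\frac1{4\pi}\|\d P^i_\ell\|_2^2}$. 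Combined with the linear and quadratic magnetic factors, this replaces $\Theta_i:=\nu^i+\Omega^{i,c}_{\blambda^c_i}$ by $\Theta_i+\d P^i_\ell$. These are the cross terms that actually need handling; $\|\nu+\Omega^c\|_{g,0}^2$ itself is simply additive over the two pieces.
\item Since $\d P^i_\ell$ has period $2\pi\ell$ along the arc ending on $\mathcal C$ and zero periods otherwise, one has $\d P^i_\ell=\Omega^{i,c}_{\bm\ell^i}+\d f^i_\ell$, with $f^i_\ell|_{\partial\Sigma_i}=0$ and $\bm\ell^i=(0,\ldots,0,\ell)$.
\item The Dirichlet-exact part is removed by a Girsanov shift of $\x_i$ by $f^i_\ell$.
\item The reindexing $\blambda^c_i\mapsto\blambda^c_i+\bm\ell^i$ then turns $\sum_{\blambda^c}\sum_{\ell}$ into $\sum_{\blambda^c_1}\sum_{\blambda^c_2}$. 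If $\bbb_2=1$, $P^2_\ell$ is constant and only $\Sigma_1$ is affected.
\end{itemize}
In the closed case none of this is needed: the zero mode already lives in $\T(\gamma)$ and only absorbs the mean of the trace.

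Your self-gluing bookkeeping has the roles mixed up for the same reason.
\begin{itemize}
\item The two glued components carry opposite signs, so neutrality does not pin $\lambda$. The $\mu_0$-sum over windings gives the period around $\mathcal C$.
\item The new $b$-period comes from the relative arc joining the two glued components, not from the zero mode.
\item The unfolded $\ell$, present only when $\partial\Sigma^\#\neq\emptyset$, accounts for the arc joining $\mathcal C$ to the remaining boundary.
\end{itemize}
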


\subsection{Proof of Propositions \ref{prop:CITT-boundary-amplitudes-properties} and \ref{prop:CITT-Segal-gluing}}\label{prop:proof of gluing}

We start with the case of free compactified boson, namely we restrict to the case where $\balpha=0,\, \bmu=0,\, Q=0.$
In particular, there is no curvature phase and no Toda interaction. The only remaining nontrivial terms are the compactification, the topological sectors, and the magnetic
background $\nu_{\z,\m,\blambda}$. 

For a positive measurable functional $F$, or more generally for an integrable measurable
functional for which the expression below is finite, we set
\begin{equation}\label{eq:free-compactified-boson-amplitude}
\begin{aligned}
\mathcal A^0_{\Sigma,g,\z,\m,\bzeta}
\left(F,\widetilde{\bm\varphi}^{\blambda}\right)
:=&
\delta_0\!\left(\sum_{j=1}^{n_{\mathfrak m}+\bbb}m_j(\blambda)\right)
\sum_{\blambda^c\in\Lambda^{2\bbg+\bbb-1}}
e^{-\frac{1}{4\pi}
\left\|\nu_{\z,\m,\blambda}+\Omega^c_{\blambda^c}\right\|_{g,0}^2}
Z_{\Sigma,g}\,
\mathcal A^0_{\Sigma,g}(\widetilde{\bm\varphi})
\\
&\times
\mathbb E\left[
e^{
-\frac{1}{2\pi}
\left\langle
\d \x_{g,D}+\d P\widetilde{\bm\varphi},
\nu_{\z,\m,\blambda}+\Omega^c_{\blambda^c}
\right\rangle_2
}F(\Phi_g^0)
\right],
\end{aligned}
\end{equation}
where
$$
\Phi_g^0
:=
\x_{g,D}
+
P\widetilde{\bm\varphi}
+
I^{\bxi}_{x_0}(\nu_{\z,\m,\blambda})
+
I^{\bsigma}_{x_0}(\Omega^c_{\blambda^c}).
$$
Here $F$ is assumed to be $2\pi\Lambda$-periodic: $F(\Phi+2\pi\lambda)=F(\Phi),
\, \lambda\in\Lambda.$
The amplitude $\mathcal A^0_{\Sigma,g,\z,\m,\bzeta}$ has no dependence on the tangent
vectors $\bv$, except through the auxiliary choice of defect graph used to write the
primitive; by the defect-graph independence established above, the resulting amplitude is
independent of that choice.

We now record the free gluing identities. They are the compactified-boson analogues of
Proposition~\ref{prop:CITT-Segal-gluing}, and they are the basic input for the interacting
Toda gluing identities.

\begin{proposition}
\label{prop:CITT-free-boson-two-surface-gluing}
In the setup of Section~\ref{subsec:amplitudes}, the following hold true: 
\begin{itemize}
    \item \textbf{Gluing two surfaces.} For $F_i$ $2\pi\Lambda$-periodic positive measurable, or $2\pi\Lambda$-periodic integrable, functionals on $\Sigma_i$, $i=1,2$, we have
\begin{equation}\label{eq:CITT-free-boson-two-surface-gluing}
\begin{aligned}
\mathcal A^0_{\Sigma,g,\z,\m,\bzeta}
\left(
F_1\otimes F_2,
\widetilde{\bm\varphi}_{1}^{\blambda_1},
\widetilde{\bm\varphi}_{2}^{\blambda_2}
\right)
=&
C_r
\int_{H^s_\Lambda(\mathbb R;\a)/(2\pi\Lambda)}
\mathcal A^0_{\Sigma_1,g_1,\z_1,\m_1,\bzeta_1}
\left(
F_1,
\widetilde{\bm\varphi}_{1}^{\blambda_1},
\widetilde\varphi^\lambda
\right)
\\
&\times
\mathcal A^0_{\Sigma_2,g_2,\z_2,\m_2,\bzeta_2}
\left(
F_2,
\widetilde{\bm\varphi}_{2}^{\blambda_2},
\widetilde\varphi^\lambda
\right)
\,\d\mu_0(\widetilde\varphi^\lambda),
\end{aligned}
\end{equation}
with the constant
$$C_r=\begin{cases}
(\sqrt 2\,\pi)^{-r}, & \partial\Sigma\neq\emptyset,\\
2^{r/2}, & \partial\Sigma=\emptyset.
\end{cases}
$$
\item \textbf{Self-gluing.} For $F$ a
$2\pi\Lambda$-periodic positive measurable, or $2\pi\Lambda$-periodic integrable functional on $\Sigma$, we have
\begin{equation}\label{eq:CITT-free-boson-self-gluing}
\begin{aligned}
\mathcal A^0_{\Sigma^\#,g^\#,\z,\m,\bzeta_\#}
\left(F,\widetilde{\bm\varphi}^{\blambda}\right)=
C_r\int_{H^s_\Lambda(\mathbb R;\a)/(2\pi\Lambda)}
\mathcal A^0_{\Sigma,g,\z,\m,\bzeta}
\left(F,\widetilde{\bm\varphi}^{\blambda},\widetilde\varphi^\lambda,\widetilde\varphi^\lambda\right)\,\d\mu_0(\widetilde\varphi^\lambda),
\end{aligned}
\end{equation}
with the constant
$$
C_r=
\begin{cases}
(\sqrt 2\,\pi)^{-r}, & \partial\Sigma^\#\neq\emptyset,\\
2^{r/2}, & \partial\Sigma^\#=\emptyset.
\end{cases}
$$

\end{itemize}

\end{proposition}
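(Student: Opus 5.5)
The plan is to reduce everything to the scalar compactified boson gluing of \cite{Guillarmou:2023exh}, carried out componentwise in an orthonormal basis $(\eps_\ell)_{\ell=1,\ldots,r}$ of $\a$. The lattice $\Lambda$ is not orthogonal in this basis, so the reduction cannot be literal. I will therefore separate two kinds of steps. Those involving the Gaussian field factor exactly over $\ell$; they produce the constant $C_r$ as an $r$-th power of the scalar constant, $(\sqrt2\pi)^{-1}$ or $\sqrt2$. Those involving the zero mode and the topological sums are handled directly on $\a$ and $\Lambda$.

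\textbf{Step 1 (Gaussian part).} Take $F_1\otimes F_2$ to be, say, bounded and $2\pi\Lambda$-periodic. Apply the Markov property, Proposition~\ref{prop:Markov-property}, on the glued surface along the glued curve $\mathcal C$. This gives $\x_{g,D}=\x_1+\x_2+P$, where $P$ is the harmonic extension of the trace of $\x_{g,D}$ on $\mathcal C$. That trace is a centred Gaussian field on $\mathbb T$ with covariance $\frac1{2\pi}G_{g,D}|_{\mathcal C\times\mathcal C}$, which by \eqref{eq:DN-inverse-kernel} equals $\mathbf D_{\Sigma,\mathcal C}^{-1}$.

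Next decompose $\mathbf D_{\Sigma,\mathcal C}=2\mathbf D+\widetilde{\mathbf D}_{\Sigma,\mathcal C}$ as in \eqref{eq:DN-smoothing-remainders}. Writing the law of the trace against $\mathbb P_{\mathbb T}$ (density $\propto e^{-\frac12\la\widetilde\varphi,\widetilde{\mathbf D}\widetilde\varphi\ra}$ up to determinants), the exponent splits as $\mathbf D_{\Sigma_1}+\mathbf D_{\Sigma_2}-2\mathbf D$. This reproduces the product $\mathcal A^0_{\Sigma_1,g_1}\mathcal A^0_{\Sigma_2,g_2}$ of free-field amplitudes \eqref{eq:amplitude-free-field}.

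The determinant identity $\det(\Delta_{g,D})=C\det(\Delta_{g_1,D})\det(\Delta_{g_2,D})\det{}_{\mathrm{Fr}}(\mathbf D_{\Sigma,\mathcal C}/2\mathbf D)$, used in the scalar case, holds for each of the $r$ components separately. The zero-mode integral over $c\in\a$ is Lebesgue on $\a=\R^r$, so it produces the $r$-th power of the scalar constant. When the glued surface is closed, the zero mode is instead $\T(\gamma)$, the zero-average GFF appears (the $-c_g$ term in Proposition~\ref{prop:Markov-property}), and the constant becomes $2^{r/2}$.

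\textbf{Step 2 (topological data).} The boundary field on $\mathcal C$ is $c+\lambda\theta+\varphi$ with $\lambda\in\Lambda$. I will match the three pieces of data on the two sides.

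\begin{itemize}
\item The sum over relative classes $\blambda^c$ on $\Sigma$ against the sums over $\blambda^c_1,\blambda^c_2$ and the winding $\lambda$. The glued bases of Remark~\ref{rem:glued-homology-bases} give a bijection $\Lambda^{2\bbg+\bbb-1}\cong\Lambda^{2\bbg_1+\bbb_1-1}\times\Lambda^{2\bbg_2+\bbb_2-1}\times\Lambda$ modulo the neutrality constraint enforced by $\delta_0$. The extra arc $d$ crossing $\mathcal C$ is exactly what records the shift of the zero mode $c$ between the two sides by $2\pi\Lambda$. Summing over it together with integrating $c$ over $\T(\gamma)$ reconstitutes the integral over $\a$, which is why $\d\mu_0$ is taken on $H^s_\Lambda/(2\pi\Lambda)$.
\item The glued form $\nu_{\z,\m,\blambda}$. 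Lemma~\ref{lem:boundary-curvature-additivity-gluing} together with the normal form $-\lambda\d\theta$ of $\nu$ near $\mathbb T$ (Proposition~\ref{prop:magnetic-background}) shows that the regularized norms split additively: $\|\nu+\Omega^c\|_{g,0}^2=\sum_i\|\nu_i+\Omega^c_i\|_{g_i,0}^2$. Any cross terms come only from $\lambda\d\theta$ on the collar, where $\d^*$ of it vanishes.
\item The linear terms $\la \d\x_{g,D}+\d P\widetilde{\bm\varphi},\nu+\Omega^c\ra$. Integrating by parts, these split into the contributions of $\x_1,\x_2$ and harmonic extensions on each side, plus boundary terms $\int_{\mathcal C}\varphi\,\ast\nu$. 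Since the two sides carry opposite orientations, these boundary terms cancel.
\end{itemize}

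\textbf{Step 3 (self-gluing and the main obstacle).} Self-gluing follows the same scheme. The new interior cycle created by gluing corresponds to the decomposition $\nu_{\z,\m,\blambda}=\nu_{\z,\m,(\blambda^-,0,0)}+\nu_{\z,0,(0,\lambda_\bbb,\lambda_\bbb)}$, and the winding $\lambda$ becomes an absolute period of $\Sigma^\#$, handled by Lemma~\ref{lem:boundary-curvature-additivity-self-gluing}. The $\Lambda$-shift of the zero mode along the new $b$-cycle is absorbed by periodicity of $F$.

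I expect the main obstacle to be the bookkeeping in Step~2 rather than the analysis. Specifically, one must check that the map between lattice indices is a bijection of $\Lambda$-lattices, not merely of $\Z^r$-lattices in an orthonormal frame. To do this I will work in the coweight basis $\omega_q^\vee$, as in the proof of Proposition~\ref{prop:magnetic-background}, and apply the scalar bijection of \cite[Section 9]{Guillarmou:2023exh} coordinatewise with $R=\gamma^{-1}$. The Gaussian and quadratic-form identities from Step~1, which are basis-independent, are then transported back to that frame.
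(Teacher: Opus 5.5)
Your Step~1 matches the paper's Gaussian argument: the Markov decomposition along $\mathcal C$, the law of the trace against $\d c\otimes\mathbb P_{\mathbb T}$ via the componentwise Dirichlet-to-Neumann and determinant identities, and constants as $r$-th powers. In the closed case this is essentially all that is needed. The gap is in Step~2, at exactly the point where the compactified theory differs from the non-compact one.

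\textbf{The lattice count is reversed.} On $\Sigma$ the two arcs of $\Sigma_1,\Sigma_2$ ending on $\mathcal C$ merge into one crossing arc, so $(2\bbg+\bbb-1)+1=(2\bbg_1+\bbb_1-1)+(2\bbg_2+\bbb_2-1)$. The correct identification is $\Lambda^{2\bbg+\bbb-1}\times\Lambda\cong\Lambda^{2\bbg_1+\bbb_1-1}\times\Lambda^{2\bbg_2+\bbb_2-1}$, with the winding fixed by $\delta_0$. The extra factor on the left is the lattice part $\ell$ of the Lebesgue zero mode $c\in\a=\T(\gamma)+2\pi\Lambda$ of the trace. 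So it is the pieces, not the glued surface, that carry the additional relative class.

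\textbf{The zero-mode shift is never handled.} You assert, but do not show, that the summand at $c+2\pi\ell$ in sector $\blambda^c$ equals a summand at $c$ in a shifted sector. If $\Sigma_i$ has other boundary circles, adding $2\pi\ell$ to the data on $\mathcal C$ is not a global translation. The harmonic extension changes by the non-constant harmonic function $P^i_\ell$ (equal to $2\pi\ell$ on $\mathcal C$ and $0$ on the other circles). Write $\Theta_i:=\nu^i+\Omega^{i,c}_{\blambda^c_i}$. The shift then does three things:
\begin{itemize}
\item it multiplies $\mathcal A^0_{\Sigma_i,g_i}$ by $e^{-\frac1{2\pi}\la \d P(\widetilde{\bm\varphi}_i,c+\varphi),\d P^i_\ell\ra_2-\frac1{4\pi}\|\d P^i_\ell\|_2^2}$;
\item it adds $\la\d P^i_\ell,\Theta_i\ra_2$ to the linear coupling;
\item it adds $P^i_\ell$ to the argument of $F_i$.
\end{itemize}
Periodicity of $F_i$ removes none of this.

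\textbf{The missing argument, which is what the paper does.}
\begin{enumerate}
\item Complete the square, using $\la\d\x_i,\d P^i_\ell\ra_2=0$, to replace $\Theta_i$ by $\Theta_i+\d P^i_\ell$.
\item Note that $\d P^i_\ell$ has relative periods $\pm2\pi\ell$ on the basis arcs ending on $\mathcal C$ and $0$ on all other basis elements. Hence $\d P^i_\ell=\Omega^{i,c}_{\bm\ell^i}+\d f^i_\ell$ with $f^i_\ell|_{\partial\Sigma_i}=0$.
\item Remove $\d f^i_\ell$ by a Cameron--Martin shift of $\x_i$.
\item Reindex $\blambda^c_i\mapsto\blambda^c_i+\bm\ell^i$.
\end{enumerate}

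\textbf{Step~3 fails for the same reason.} When $\partial\Sigma^\#\neq\emptyset$, the trace zero mode is again Lebesgue on $\a$. Shifting both glued circles by $2\pi\ell$ is non-global on $\Sigma$, so periodicity of $F$ cannot absorb it. It must be traded, by the same mechanism, for a shift in the relative class of an arc joining the glued pair to another boundary circle. The new $b$-period of $\Sigma^\#$ corresponds instead to the arc joining the two glued circles, and the new $a$-period to the winding $\lambda$. Periodicity alone suffices only when the shift is global on the piece, as on $\Sigma_2$ when $\bbb_2=1$. In the closed case no lattice shift arises at all, since the zero mode already lives on $\T(\gamma)$.

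Conversely, the obstacle you single out, $\Lambda$ versus $\Z^r$, is harmless: the reindexing is a translation by $\bm\ell^i\in\Lambda^{2\bbg_i+\bbb_i-1}$.
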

\begin{proof}The proof is a higher-rank adaptation of the gluing argument in \cite{Guillarmou:2023exh}. We give the details for the gluing of two surfaces, indicating only the modifications required in the present setting. The self-gluing case follows from the argument of \cite[Proposition~8.14]{Guillarmou:2023exh} after making the corresponding higher-rank replacements.

We separate into two cases: $\partial\Sigma=\emptyset$ and $\partial\Sigma\neq\emptyset$.
\paragraph{\textbf{Case 1: $\partial\Sigma=\emptyset$.}}
Since the glued surface $\Sigma$ is closed, each
piece $\Sigma_i$ has a single boundary component, and if the amplitudes on $\Sigma_1$ and $\Sigma_2$ are
non-zero, the boundary magnetic neutrality conditions force
$\lambda_1=
\sum_{j=1}^{n_{\mathfrak m}^1}m_{1j}$ and
$\lambda_2=-\sum_{j=1}^{n_{\mathfrak m}^2}m_{2j}.$
Moreover, the magnetic charge neutrality on $\Sigma$ reads
$\sum_{j=1}^{n_{\mathfrak m,1}}m_{1,j}
+
\sum_{j=1}^{n_{\mathfrak m,2}}m_{2,j}
=0,$ and so $\lambda=\lambda_1=\lambda_2=
\sum_{j=1}^{n_{\mathfrak m}^1}m_{1j}.$

Next we consider the $1$-forms $\nu^1_{\z_1,\m_1,\lambda},
\,
\nu^2_{\z_2,\m_2,\lambda}$
on $\Sigma_1$ and $\Sigma_2$, respectively. By the construction of
Proposition~\ref{prop:magnetic-background}, in the chart
$\omega_i:U_i\to\mathbb A_\delta$ near the boundary component of $\Sigma_i$, these forms
may be chosen so that, close to the boundary,
$\nu^i_{\z_i,\m_i,\lambda}
=
-\varsigma_i\,\lambda\d\theta .$
Since the two boundary parametrizations are
identified with opposite orientations, the two expressions match under gluing.
Equivalently, $\nu^2_{\z_2,\m_2,\lambda}$ gives a smooth continuation of
$\nu^1_{\z_1,\m_1,\lambda}$ across the glued circle. Hence the two forms glue to a closed
$\a$-valued $1$-form $\nu_{\z,\m}$
on $\Sigma_{\z}$, with prescribed winding $m_{i,j}$ around each marked point
$z_{i,j}$, $i=1,2$. This is precisely the magnetic background form entering the
closed compactified-boson amplitude on $\Sigma$.

The defect graphs are chosen compatibly with this gluing (see Lemma \ref{lem:boundary-curvature-additivity-gluing}). Namely, on each $\Sigma_i$ we
choose a defect graph $\mathcal D_{\bv_i,\bxi_i}$ such that exactly one arc has the
distinguished boundary point $\zeta_i(1)$ as an endpoint. After identifying the two
boundary components, these two boundary-ending arcs concatenate to a single smooth arc
on $\Sigma$. Keeping all the other arcs unchanged gives a defect graph
$\mathcal D_{\bv,\bxi}
\subset \Sigma$ associated with the glued magnetic data $(\bv,\m)$.

For the topological sectors, we observe that $\bsigma:=\bsigma_1\cup\bsigma_2$ forms a basis of $H_1(\Sigma)$ since $\bbb_1=\bbb_2=1,$ and the glued curve $\mathcal{C}$ is homologically trivial. This shows that, after choosing
compactly supported representatives $\Omega^{1,c}_{\blambda_1^c}$ and
$\Omega^{2,c}_{\blambda_2^c}$ in the interiors of the two pieces, their sum defines a
smooth closed representative $\Omega^c_{\blambda^c}=\Omega^{1,c}_{\blambda_1^c}+\Omega^{2,c}_{\blambda_2^c}$ on $\Sigma$. 

On $\Sigma$, the compactified-boson amplitude can therefore be written, for every
positive or integrable $2\pi\Lambda$-periodic functional $F$, as
\begin{equation}\label{eq:closed-free-boson-amplitude-glued}
\begin{aligned}
\mathcal A^0_{\Sigma,g,\z,\m}(F)
:=
\left(
\frac{\operatorname{vol}_g(\Sigma)}
{\det{}'(\Delta_g)}
\right)^{\frac r2}
\sum_{\blambda^c\in\Lambda^{2\bbg}}
e^{-\frac{1}{4\pi}
\|\Omega^c_{\blambda^c}+\nu_{\z,\m}\|_{g,0}^2}
\int_{\T(\gamma)}
\mathbb E\left[
e^{-\frac{1}{2\pi}
\langle \d \x_g,\Omega^c_{\blambda^c}+\nu_{\z,\m}\rangle_2}
F(\Phi_g)
\right]\d c,
\end{aligned}
\end{equation}
where $\x_g$ is the $\a$-valued GFF with zero mean on the closed surface,
and
$$
\Phi_g
=
c+\x_g+
I^{\bsigma}_{x_0}(\Omega^c_{\blambda^c})
+
I^{\bxi}_{x_0}(\nu_{\z,\m}).
$$

We now express the closed GFF by cutting $\Sigma$ along the gluing curve. Let
$\mathcal C\subset\Sigma$ be the common boundary circle. The Markov decomposition (Proposition \ref{prop:Markov-property}) gives
$$
\x_g
\overset{\mathrm{law}}{=}
\x_{1}+\x_{2}+P\x_{\mathcal{C}}-c_g,
$$
where $\x_{i}$ are independent $\a$-valued Dirichlet GFFs on $\Sigma_i$,
$\x_{\mathcal{C}}$ is the restriction of $\x$ to the glued curve $\mathcal C$, and $c_g:=\frac{1}{\v_g(\Sigma)}\int_{\Sigma}(\x_1+\x_2+P\x_{\mathcal{C}})\d\v_g$. Since the zero mode $c$ is integrated over
$\T(\gamma)$, this correction may be absorbed into $c$. 
Therefore, plugging this decomposition into
\eqref{eq:closed-free-boson-amplitude-glued}, and shifting the zero mode by
$c_g$, we obtain
\begin{equation}\label{eq:CITT-closed-gluing-after-Markov}
\begin{aligned}
\mathcal A^0_{\Sigma,g,\z,\m}(F_1\otimes F_2)
={}&
\left(
\frac{\operatorname{vol}_g(\Sigma)}
{\det{}'(\Delta_g)}
\right)^{\frac r2}
\sum_{\blambda^c}
e^{-\frac{1}{4\pi}
\|\Omega^c_{\blambda^c}+\nu_{\z,\m}\|_{g,0}^2}
\\ &\times
\int_{\T(\gamma)}
\int\E\left[\mathcal{B}
_1(c,\x_{\mathcal{C}},\Omega^{1,c}_{\blambda^c_1}+\nu^1_{\z_1,\m_1,\lambda_1})\mathcal{B}
_2(c,\x_{\mathcal{C}},\Omega^{2,c}_{\blambda^c_2}+\nu^2_{\z_2,\m_2,\lambda_2})\right]\d c
\end{aligned}
\end{equation}
where $\mathcal{B} _i(c,\varphi,\Omega):=\E[e^{-\frac{1}{2\pi}\la\d\x_i+\d P\varphi,\Omega\ra}F_i(\Phi_i)]$ with $\Phi_i:=c+\x_i+P\varphi+I^i_{x_0}(\Omega)$, and the expectation is over the GFF $\x_i.$ Here $I^i(\Omega^{i,c}_{\blambda_i}+\nu^i_{\z_i,\m_i,\lambda_i}):=I^{\bsigma_i}_{x_0}(\Omega^{i,c}_{\blambda^c_i})+I_{x_0}^{\bxi_i}(\nu^i_{\z_i,\m_i,\lambda_i})$.

We now make a further shift in the zero mode $c$, subtracting the average of the boundary trace $m_{\mathcal C}(\x_{\mathcal{C}})
:=
\frac{1}{2\pi}\int_0^{2\pi}\x_{\mathcal{C}}(e^{\bi\theta})\,\d\theta
$
This allows us to replace the law $\P_{\x_{\mathcal{C}}}$ of the boundary trace
$\x_{\mathcal{C}}$ by the law $\P_{\x_{\mathcal{C}}-m_{\mathcal C}(\x_{\mathcal{C}})}$ of its recentered part $\x_{\mathcal{C}}-m_{\mathcal C}(\x_{\mathcal{C}}).$

By the $\a$-valued version of \cite[(5.14)]{guillarmou2021segal}, applied componentwise
in the orthonormal basis $(\eps_\ell)_{\ell=1}^r$, the desired amplitude becomes
\begin{equation}\label{eq:CITT-recentered-trace-law}
\begin{aligned}
\A^0_{\Sigma,g,\z,\m}(F)=&
2^{\frac r2}Z_{\Sigma_1,g_1}Z_{\Sigma_2,g_2}
\sum_{\blambda^c\in\Lambda^{2\bbg}}
e^{-\frac{1}{4\pi}
\|\Omega^c_{\blambda^c}+\nu_{\z,\m}\|_{g,0}^2}
\\&\times
\int_{\T(\gamma)}\int
\mathcal{B}
_1(c,\varphi,\Omega^{1,c}_{\blambda^c_1}
+\nu^1_{\z_1,\m_1,\lambda_1})
\mathcal{B}
_2(c,\varphi,\Omega^{2,c}_{\blambda^c_2}
+\nu^2_{\z_2,\m_2,\lambda_2})
\\
&\hspace{6cm}
\times
e^{-\frac12
\langle \varphi,
\widetilde{\mathbf D}_{\Sigma,\mathcal C}\varphi
\rangle_2}
\,\d\mathbb P_{\mathbb T}(\varphi)\d c.
\end{aligned}
\end{equation}
We observe that $e^{-\frac12
\langle \varphi,
\widetilde{\mathbf D}_{\Sigma,\mathcal C}\varphi
\rangle_2}
=
\mathcal A^0_{\Sigma_1,g_1}(c+\varphi)\,
\mathcal A^0_{\Sigma_2,g_2}(c+\varphi).$
Combining this, \eqref{eq:CITT-recentered-trace-law}, the additivity
$$
\|\nu_{\z,\m}+\Omega^c_{\blambda^c}\|_{g,0}^2
=
\|\nu^1_{\z_1,\m_1,\lambda}+\Omega^{1,c}_{\blambda_1^c}\|_{g_1,0}^2
+
\|\nu^2_{\z_2,\m_2,\lambda}+\Omega^{2,c}_{\blambda_2^c}\|_{g_2,0}^2,
$$ and the fact that  $\lambda=\sum^{n_{\m,1}}_{j=1}m_{1,j},$
we get
\begin{equation}\label{eq:CITT-closed-case-final-gluing}
\begin{aligned}
\mathcal A^0_{\Sigma,g,\z,\m}(F_1\otimes F_2)
=
2^{\frac r2}
\int_{H^s_\Lambda(\mathbb R;\a)/(2\pi\Lambda)}
&
\mathcal A^0_{\Sigma_1,g_1,\z_1,\m_1,\bzeta_1}
\left(
F_1,\widetilde\varphi^\lambda
\right)
\\
&\times
\mathcal A^0_{\Sigma_2,g_2,\z_2,\m_2,\bzeta_2}
\left(
F_2,\widetilde\varphi^\lambda
\right)
\,\d\mu_0(\widetilde\varphi^\lambda).
\end{aligned}
\end{equation}
This proves the desired gluing identity in the case where the glued surface
$\Sigma$ is closed. 
\medskip

\paragraph{\textbf{Case 2: $\partial\Sigma\neq\emptyset$.}}

We now consider the case where the glued surface has non-empty boundary, so that we may assume without loss of generality that $\bbb_1\geq 2$.
We observe that if $\sum_{j=1}^{n_{\mathfrak m,1}}m_{1,j}
+
\sum_{j=1}^{n_{\mathfrak m,2}}m_{2,j}
+
\sum_{\ell=1}^{\bbb_1-1}\varsigma_{1,\ell}\lambda_{1,\ell}
+
\sum_{\ell=1}^{\bbb_2-1}\varsigma_{2,\ell}\lambda_{2,\ell}
\neq0,
$ then both sides of \eqref{eq:CITT-free-boson-two-surface-gluing} vanish and the equality holds. If $\sum_{j=1}^{n_{\mathfrak m,1}}m_{1,j}
+
\sum_{j=1}^{n_{\mathfrak m,2}}m_{2,j}
+
\sum_{\ell=1}^{\bbb_1-1}\varsigma_{1,\ell}\lambda_{1,\ell}
+
\sum_{\ell=1}^{\bbb_2-1}\varsigma_{2,\ell}\lambda_{2,\ell}
=0,
$ the neutrality condition on $\Sigma_i,$ forces $$\lambda=\lambda_{1\bbb_1}=\lambda_{\bbb_2}=
\sum_{j=1}^{n_{\mathfrak m,1}}m_{1,j}+
\sum_{\ell=1}^{\bbb_1-1}\varsigma_{1,\ell}\lambda_{1,\ell}.
$$
Hence, in the integral against $\mu_0$, all winding sectors vanish except this single
compatible value of $\lambda$.

We next compare the magnetic backgrounds. By Proposition~\ref{prop:magnetic-background},
we may choose $\nu^i_{\z_i,\m_i,(\blambda_i,\lambda)}$ on $\Sigma_i$
so that near the glued boundary component, $\nu^i_{\z_i,\m_i,(\blambda_i,\lambda)}
=
-\varsigma_{i,\bbb_i}\lambda\,\d\theta .$
Because the two boundary parametrizations are glued with opposite orientation, these two
local expressions agree after gluing. Thus the two forms glue to a closed
$\a$-valued form $\nu_{\z,\m,(\blambda_1,\blambda_2)}=\nu^1_{\z_1,\m_1,(\blambda_1,\lambda_{1\bbb_1})}\bone_{\Sigma_1}+\nu^2_{\z_2,\m_2,(\blambda_2,\lambda_{2\bbb_2})}\bone_{\Sigma_2}$
on $\Sigma$, with the prescribed singularities at all interior magnetic insertions and
with the prescribed boundary winding data on the remaining boundary components.

We choose the defect graphs compatibly, exactly as in the closed case: on each
$\Sigma_i$, exactly one arc ends at the distinguished point of the glued boundary
component, and after gluing these two arcs concatenate to one smooth arc. Keeping all the
other arcs unchanged gives a defect graph $\mathcal D_{\bv,\bxi}$ on the glued surface.

We separate into two subcases: $\bbb_2\geq2$ and $\bbb_2=1$. We begin with the case where $\bbb_2\geq 2.$ By \cite[Lemma 3.7]{Guillarmou:2023exh}, we have a basis of the relative homology $H_1(\Sigma,\partial\Sigma)$ $$\bsigma=\bsigma_1\#\bsigma_2$$ by collecting the curves for $i=1,\,2$: $a_{ij},b_{ij},$ $j=1,\ldots,2\bbg_i,$ $d_{ij},$ $j=1,\ldots,\bbb_i-1$, and the curve $d_{1(\bbb_1-1)}-d_{2(\bbb_2-1)}$. This gives rise to a basis of $H^1(\Sigma,\partial\Sigma)$ $$\eta^{1,c}_1,\ldots,\eta^{1,c}_{2\bbg_1+\bbb_1-1},\eta^{2,c}_1,\ldots,\eta^{2,c}_{2\bbg_1+\bbb_1-2},$$ consisting of smooth real-valued closed 1-forms, dual to $\bsigma$ with compact support.
We will denote the corresponding $2\pi\Lambda$-periodic $\a$-valued 1-form by $\Omega^c_{\blambda^c}:=\Omega^{1,c}_{\blambda_1^c}+\Omega^{2,c}_{\blambda^c_2}$ for $(\blambda^c_1,\blambda^c_2)\in\Lambda^{2\bbg+\bbb_1-1}\times\Lambda^{2\bbg+\bbb_2-2}$.

The left-hand side of \eqref{eq:CITT-free-boson-two-surface-gluing} takes the form 
$$
\begin{aligned}
&\mathcal A^0_{\Sigma,g,\z,\m,\bzeta}
\left(
F_1\otimes F_2,
\widetilde{\bm\varphi}^{\blambda_1}_1,
\widetilde{\bm\varphi}^{\blambda_2}_2
\right)
\\&=
Z_{\Sigma,g}
\mathcal A^0_{\Sigma,g}
\left(
\widetilde{\bm\varphi}_1,\widetilde{\bm\varphi}_2
\right)
\sum_{\blambda^c\in\Lambda^{2\bbg+\bbb_1+\bbb_2-3}}
e^{-\frac{1}{4\pi}
\left\|
\nu_{\z,\m,\blambda}
+
\Omega^c_{\blambda^c}
\right\|_{g,0}^2}
\mathcal{B}_{\Sigma,g}(F_1\otimes F_2,\widetilde{\bm{\varphi}}_1,\widetilde{\bm{\varphi}}_2,\blambda,\blambda^c),
\end{aligned}
$$
with 
$$\mathcal{B}_{\Sigma,g}(F_1\otimes F_2,\widetilde{\bm{\varphi}}_1,\widetilde{\bm{\varphi}}_2,\blambda,\blambda^c):=
\mathbb E\left[
e^{
-\frac{1}{2\pi}
\left\langle
\d \x_{g,D}
+
\d P\widetilde{\bm\varphi},
\nu_{\z,\m,\blambda}
+
\Omega^c_{\blambda^c}
\right\rangle_2
}F_1\otimes F_2(\Phi_g)
\right],$$
where the Toda field is $\Phi_g
=
\x_{g,D}
+
P\widetilde{\bm\varphi}
+
I^{\bsigma}_{x_0}(\Omega^c_{\blambda^c})
+
I^{\bxi}_{x_0}
\left(
\nu_{\z,\m,\blambda}
\right),$ and the expectation is over the $\a$-valued Dirichlet GFF $\x_{g,D}.$

The Markov property of the Dirichlet GFF (Proposition \ref{prop:Markov-property}) gives the decomposition
$$\x_{g,D}\overset{\mathrm{law}}{=}\x_1+\x_2+P\mathbf{Y},$$
where, $\mathbf{Y}$ is the restriction of $\x_{g,D}$ to the glued component $\mathcal{C}$, and the
fields $\x_{1}$ and $\x_{2}$ are independent Dirichlet GFFs on $\Sigma_1$ and $\Sigma_2$. Denoting by $h_{\mathcal{C}}$ the restriction of the harmonic function $P(\widetilde{\bm\varphi}_1,\widetilde{\bm\varphi}_2)$ to $\mathcal{C}$, Lemma \ref{lem:boundary-curvature-additivity-gluing} implies that
\begin{align*}
    &\mathcal{B}_{\Sigma,g}(F_1\otimes F_2,\widetilde{\bm{\varphi}}_1,\widetilde{\bm{\varphi}}_2,\blambda,\blambda^c)\\
    &=\int \mathcal{B}_{\Sigma_1,g_1}(F_1,\widetilde{\bm{\varphi}}_1,\widetilde{{\varphi}}+h_{\mathcal{C}},\blambda_1,\lambda_{1\bbb_1},\blambda_1^c)\mathcal{B}_{\Sigma_2,g_2}(F_2,\widetilde{\bm{\varphi}}_2,\widetilde{{\varphi}}+h_{\mathcal{C}},\blambda_2,\lambda_{2\bbb_2},\blambda_2^c)\d\P_{\mathbf{Y}}(\widetilde\varphi)
\end{align*}
where $\P_{\mathbf{Y}}$ is the law of $\mathbf{Y}$, and 
$$
\begin{aligned}\mathcal{B}_{\Sigma_i,g_i}(F_i,\widetilde{\bm{\varphi}}_i,\widetilde{{\varphi}},\blambda_i,\lambda_{i\bbb_i},\blambda_i^c)
:=
\mathbb E_i[
&e^{
-\frac{1}{2\pi}
\langle
\d \x_{i}
+
\d P
\left(
\widetilde{\bm\varphi}_i,
\widetilde\varphi\right),
\nu^i_{\z_i,\m_i,(\blambda_i,\lambda_{i\bbb_i})}
+
\Omega^{i,c}_{\blambda^c_i}
\rangle_2}
F_i
(\Phi_i
)
],
\end{aligned}
$$ with the expectation $\E_i$ with respect to $\x_i$ and $$\Phi_i=
\x_{i}
+
P
\left(
\widetilde{\bm\varphi}_i,
\widetilde\varphi
\right)
+
I^{\bsigma_i}_{x_0}(\Omega^{i,c}_{\blambda^c_i})
+
I^{\bxi_i}_{x_0^i}
(
\nu^i_{\z_i,\m_i,(\blambda_i,\lambda)}).$$
Applying \cite[Lemmas 5.3 \& 5.4]{guillarmou2021segal}
componentwise in the orthonormal basis $(\eps_\ell)_{\ell=1}^r$, the law of the trace $\mathbf Y+h_{\mathcal{C}}$ can be written as
$$
\d\P_{\mathbf Y+h_{\mathcal{C}}}(\widetilde\varphi)
=
C_r\,
\frac{
Z_{\Sigma_1,g_1}Z_{\Sigma_2,g_2}
}{
Z_{\Sigma,g}
}
\frac{
\mathcal A^0_{\Sigma_1,g_1}
(\widetilde{\bm\varphi}_1,c+\varphi)
\mathcal A^0_{\Sigma_2,g_2}
(\widetilde{\bm\varphi}_2,c+\varphi)
}{
\mathcal A^0_{\Sigma,g}
(\widetilde{\bm\varphi}_1,\widetilde{\bm\varphi}_2)
}
\,\d c\otimes\d\P_{\T}(\varphi),
$$
where in the case $\partial\Sigma\neq\emptyset$ the constant is
$C_r=(\sqrt 2\,\pi)^{-r}.$ This implies that$$
\begin{aligned}
&Z_{\Sigma,g}\mathcal A^0_{\Sigma,g}
(\widetilde{\bm\varphi}_1,\widetilde{\bm\varphi}_2)
\,\d\P_{\mathbf Y}(\widetilde\varphi)
\\
&\qquad =
C_r\,
Z_{\Sigma_1,g_1}\mathcal A^0_{\Sigma_1,g_1}
(\widetilde{\bm\varphi}_1,\widetilde\varphi+h_{\mathcal C})
\,
Z_{\Sigma_2,g_2}\mathcal A^0_{\Sigma_2,g_2}
(\widetilde{\bm\varphi}_2,\widetilde\varphi+h_{\mathcal C})
\,\d\mu_0(\widetilde\varphi^\lambda).
\end{aligned}
$$
Substituting this identity into the preceding expression gives
$$
\begin{aligned}
&\mathcal A^0_{\Sigma,g,\z,\m,\bzeta}
\left(
F_1\otimes F_2,
\widetilde{\bm\varphi}^{\blambda_1}_1,
\widetilde{\bm\varphi}^{\blambda_2}_2
\right)
\\
&=
C_r
\sum_{\blambda^c\in\Lambda^{2\bbg+\bbb-1}}
e^{-\frac{1}{4\pi}
\left\|
\nu_{\z,\m,\blambda}
+
\Omega^c_{\blambda^c}
\right\|_{g,0}^2}
Z_{\Sigma_1,g_1}Z_{\Sigma_2,g_2}\int_{\R^r}H_0(c,\blambda^c)\d c
\\&=
C_r
\sum_{\substack{\blambda^c\in\Lambda^{2\bbg+\bbb-1}\\\l\in\Lambda}}
e^{-\frac{1}{4\pi}
\left\|
\nu_{\z,\m,\blambda}
+
\Omega^c_{\blambda^c}
\right\|_{g,0}^2}
Z_{\Sigma_1,g_1}Z_{\Sigma_2,g_2}
\int_{\T(\gamma)}
H_0(c+2\pi \ell,\blambda^c)\,\d c ,
\end{aligned}
$$where 
\begin{align*}
H_0(c,\blambda^c)=\E[&\mathcal A^0_{\Sigma_1,g_1}
(\widetilde{\bm\varphi}_1,c+\varphi)
\mathcal A^0_{\Sigma_2,g_2}
(\widetilde{\bm\varphi}_2,c+\varphi)
\\
&\times
\mathcal{B}_{\Sigma_1,g_1}
(F_1,\widetilde{\bm\varphi}_1,c+\varphi,
\blambda_1,\lambda,\blambda_1^c)
\\
&\times
\mathcal{B}_{\Sigma_2,g_2}
(F_2,\widetilde{\bm\varphi}_2,c+\varphi,
\blambda_2,\lambda,\blambda_2^c)].    
\end{align*} 
For $\ell\in\Lambda$, let $P^i_\ell$, $i=1,2$, be the harmonic $\a$-valued
function on $\Sigma_i$ with boundary value $2\pi\ell$ on the glued boundary
component $\mathcal C$, and boundary value $0$ on all the other boundary components.
Thus
$P(\widetilde{\bm\varphi}_i,c+\varphi+2\pi\ell)
=
P(\widetilde{\bm\varphi}_i,c+\varphi)+P^i_\ell ,$ and so 
\begin{equation}\label{eq:free-field-amplitude P^i_l}
    \begin{aligned}
&\mathcal A^0_{\Sigma_i,g_i}
(\widetilde{\bm\varphi}_i,c+\varphi+2\pi\ell)
=
\mathcal A^0_{\Sigma_i,g_i}
(\widetilde{\bm\varphi}_i,c+\varphi)
e^{
-\frac{1}{2\pi}
\langle
\d P(\widetilde{\bm\varphi}_i,c+\varphi),
\d P^i_\ell
\rangle_2
-\frac{1}{4\pi}\|\d P^i_\ell\|_2^2
}.
\end{aligned}
\end{equation}

We now prove that the topological part transforms compatibly with this shift. By the
Hodge decomposition with relative boundary condition, we may write
$\Omega^{i,c}_{\blambda_i^c}
=
\Omega^{i,\mathrm{h}}_{\blambda_i^c}
+
\d u_i,$ for some smooth $\a$-valued function 
$u_i$ with
$u_i|_{\partial\Sigma_i}=0,$
where $\Omega^{i,\mathrm{h}}_{\blambda_i^c}$ is co-closed and satisfies the relative boundary
condition $\iota_{\partial\Sigma_i}^*\Omega^{1,\mathrm{h}}_{\blambda^c_i}=0$. Since $P^i_\ell$ is harmonic, Stokes' formula on $\Sigma_i$ gives
$\langle
\Omega^{i,c}_{\blambda_i^c},
\d P_i^\ell
\rangle_2=0.$
The same argument applies to the magnetic background. Namely, using the decomposition
of the magnetic representative into its co-closed part and a Dirichlet exact part, we write
$\nu^i_{\z_i,\m_i,(\blambda_i,\lambda)}
=
\nu^{i,\mathrm{h}}_{\z_i,\m_i,(\blambda_i,\lambda)}
+
\d u_i',$
for some smooth $\a$-valued function $u_i'$ with $
u_i'|_{\partial\Sigma_i}=0,$
where $\nu^{i,\mathrm{h}}_{\z_i,\m_i,(\blambda_i,\lambda)}$ is co-closed away from the marked
points and satisfies the absolute boundary condition. Again, Stokes' formula gives
$\langle
\nu^i_{\z_i,\m_i,(\blambda_i,\lambda)},
\d P^i_\ell
\rangle_2=0.$ The above identities can be combined to show, for $g_i:=g|_{\Sigma_i},$ 
\begin{equation}\label{eq:computation-regarding-2pil-harmonic-function}
\begin{aligned}
\|
\nu^i_{\z_i,\m_i,(\blambda_i,\lambda)}
+
\Omega^{i,c}_{\blambda_i^c}
\|_{g_i,0}^2
+
\|\d P^i_\ell\|_2^2
=
\|
\nu^i_{\z_i,\m_i,(\blambda_i,\lambda)}
+
\Omega^{i,c}_{\blambda_i^c}
+
\d P^i_\ell
\|_{g_i,0}^2.
\end{aligned}
\end{equation}
Now this
orthogonality identity together with \eqref{eq:free-field-amplitude P^i_l} yields 
$$
\begin{aligned}
&\mathcal A^0_{\Sigma,g,\z,\m,\bzeta}
\left(
F_1\otimes F_2,
\widetilde{\bm\varphi}^{\blambda_1}_1,
\widetilde{\bm\varphi}^{\blambda_2}_2
\right)
\\
&=
C_r
\sum_{\substack{\blambda\in\Lambda^{2\bbg+\bbb-1}\\ \ell\in\Lambda}}
e^{-\frac{1}{4\pi}(
\|
\nu^1_{\z_1,\m_1,(\blambda_1,\lambda)}
+
\Omega^{1,c}_{\blambda_1^c}
+
\d P^1_\ell
\|_{g_1,0}^2
+\|
\nu^2_{\z_2,\m_2,(\blambda_2,\lambda)}
+
\Omega^{2,c}_{\blambda_2^c}
+
\d P^2_\ell
\|_{g_2,0}^2)}
\\
&\times
Z_{\Sigma_1,g_1}Z_{\Sigma_2,g_2}
\int_{\T(\gamma)}
H_1(c,\ell,\blambda^c)\,\d c ,
\end{aligned}
$$
where
$$
\begin{aligned}
H_1(c,\ell,\blambda^c)
:=
\mathbb E_{\mathbb T}\Big[
&
\mathcal A^0_{\Sigma_1,g_1}
\left(
\widetilde{\bm\varphi}_1,c+\varphi
\right)
\mathcal A^0_{\Sigma_2,g_2}
\left(
\widetilde{\bm\varphi}_2,c+\varphi
\right)
\\
&\times
\hat{\mathcal B}_{\Sigma_1,g_1}
\left(
F_1,
\widetilde{\bm\varphi}_1,
c+\varphi,
\blambda_1,
\lambda,
\blambda_1^c,
\ell
\right)
\\
&\times
\hat{\mathcal B}_{\Sigma_2,g_2}
\left(
F_2,
\widetilde{\bm\varphi}_2,
c+\varphi,
\blambda_2,
\lambda,
\blambda_2^c,
\ell
\right)
\Big],
\end{aligned}
$$
with
$$
\begin{aligned}
&\hat{\mathcal B}_{\Sigma_i,g_i}
\left(
F_i,
\widetilde{\bm\varphi}_i,
\widetilde\varphi,
\blambda_i,
\lambda,
\blambda_i^c,
\ell
\right)
 :=
\mathbb E_i\left[
F_i(
\Phi_i)
e^{-\frac{1}{2\pi}
\langle
\d\x_i
+
\d P
\left(
\widetilde{\bm\varphi}_i,\widetilde\varphi
\right),
\nu^i_{\z_i,\m_i,(\blambda_i,\lambda)}
+
\Omega^{i,c}_{\blambda_i^c}
+
\d P^i_\ell
\rangle_2
}\right].
\end{aligned}
$$

We now identify the exact form $\d P_i^\ell$ with a relative cohomology sector. Let $\bm{\l}^{i}=(0,\ldots,0,\l)
\in \Lambda^{2\bbg_i+\bbb_i-1}$. Since $P^i_\ell$ has
boundary value $2\pi\ell$ on the glued boundary component $\mathcal{C}$ and $0$ on all other boundary
components, its periods satisfy
$$
\int_{a_{i,j}}\d P_i^\ell=0,\qquad
\int_{b_{i,j}}\d P_i^\ell=0,
$$
and
$$
\int_{d_{i,j}}\d P_i^\ell=0
\quad
\text{for }j=1,\ldots,\bbb_i-1,
\qquad
\int_{d_{i,\bbb_i}}\d P^i_\ell=2\pi\ell .
$$
Therefore $\d P_i^\ell$ has the same relative periods as
$\Omega^{i,c}_{\bm{\l}^i}$, and hence there exists a smooth $\a$-valued function
$f_i$, vanishing on $\partial\Sigma_i$, such that
$\d P^i_\ell
=
\Omega^{i,c}_{\bm\l^i}
+
\d f^i_\ell .$
We absorb the Dirichlet exact part by the Cameron--Martin formula with the shift $\x_i\mapsto \x_i-f_i^\ell$. Applying the Girsanov transform to the factor $e^{
-\frac{1}{2\pi}
\langle
\d\x_i,\d f^i_\ell
\rangle_2
-\frac{1}{4\pi}
\left\|\d f^i_\ell\right\|_2^2},
$
and using $\langle
\d P
\left(
\widetilde{\bm\varphi}_i,\widetilde\varphi
\right),
\d f^i_\ell
\rangle_2=0,$
we get
$$
\begin{aligned}
e^{-\frac{1}{4\pi}
\|
\nu^i_{\z_i,\m_i,(\blambda_i,\lambda)}
+
\Omega^{i,c}_{\blambda_i^c}
+
\d P^i_\ell
\|_{g_i,0}^2
}
&\hat{\mathcal B}_{\Sigma_i,g_i}
\left(
F_i,
\widetilde{\bm\varphi}_i,
\widetilde\varphi,
\blambda_i,
\lambda,
\blambda_i^c,
\ell
\right)
\\
&=
e^{-\frac{1}{4\pi}
\|
\nu^i_{\z_i,\m_i,(\blambda_i,\lambda)}
+
\Omega^{i,c}_{\blambda_i^c}
+
\Omega^{i,c}_{\bm\l^i}
\|_{g_i,0}^2
}\hat{\mathcal B}'_{\Sigma_i,g_i}
\left(
F_i,
\widetilde{\bm\varphi}_i,
\widetilde\varphi,
\blambda_i,
\lambda,
\blambda_i^c+\bm\l^i
\right),
\end{aligned}
$$
where
$$
\begin{aligned}
&\hat{\mathcal B}'_{\Sigma_i,g_i}
\left(
F_i,
\widetilde{\bm\varphi}_i,
\widetilde\varphi,
\blambda_i,
\lambda,
\blambda_i^c+\bm\l^i
\right)
\\&:=
\mathbb E_i[
F_i
(\Phi_i+I^{\bsigma_i}_{x_0}(\Omega^{i,c}_{\bm\l^i}))
e^{-\frac{1}{2\pi}
\langle
\d\x_i
+
\d P
\left(
\widetilde{\bm\varphi}_i,\widetilde\varphi
\right),
\nu^i_{\z_i,\m_i,(\blambda_i,\lambda)}
+
\Omega^{i,c}_{\blambda_i^c+\mathbf n_i(\ell)}
\rangle_2
}].
\end{aligned}
$$ Indeed, the above expression follows as soon as one observes that $\Omega^{i,c}_{\blambda^c_{i}}+\Omega^{i,c}_{\bm\l^i}=\Omega^{i,c}_{\blambda^c_{i}+\bm\l^i}$. Finally reindexing the lattice sums $(\blambda_i^c,\ell)\mapsto \blambda_i^c+\bm\l^i$, we obtain the desired gluing identity in the subcase $\bbb_2\geq2.$

It remains to indicate the minor modification when $\bbb_2=1$. In this case the second
surface has no remaining boundary component after gluing, so the relative basis on the
glued surface is obtained from the basis of $\Sigma_1$, together with the interior cycles
of $\Sigma_2$. The same argument applies: the neutrality condition again fixes the unique
compatible intermediate winding $\lambda$, the two magnetic backgrounds glue, and the
boundary shift $2\pi\ell$ is represented by a relative cohomology class on the
$\Sigma_1$-side. The exact Dirichlet remainder is again absorbed by Girsanov, and the
sum over $\blambda^c$ is unchanged by the resulting lattice reindexing. Hence the same
identity holds when $\bbb_2=1$. This completes the proof of the case
$\partial\Sigma\neq\emptyset$, and hence the proposition.

It remains to treat the subcase $\bbb_2=1$. In this case the glued surface still has
non-empty boundary, but the second surface contributes no boundary-to-boundary relative arc
after the gluing. The relative homology basis of $H_1(\Sigma,\partial\Sigma)$ is obtained
by taking the interior cycles $a_{i,j},b_{i,j}$ from both surfaces, together with the arcs
$d_{1,j}$ of $\Sigma_1$ for $j=1,\ldots,\bbb_1-2$. Correspondingly, we choose compactly
supported dual representatives
$$
\eta^{1,c}_1,\ldots,\eta^{1,c}_{2\bbg_1+\bbb_1-2},
\eta^{2,c}_1,\ldots,\eta^{2,c}_{2\bbg_2}.
$$
Thus, for $\blambda^c=(\blambda_1^c,\blambda_2^c)
\in
\Lambda^{2\bbg_1+\bbb_1-2}\times \Lambda^{2\bbg_2},$
we write
$\Omega^c_{\blambda^c}
:=
\Omega^{1,c}_{(\blambda_1^c,0)}
+
\Omega^{2,c}_{\blambda_2^c}.$

The magnetic neutrality conditions are now as follows. The only non-trivial case is
$$
\sum_{j=1}^{n_{\mathfrak m,1}}m_{1,j}
+
\sum_{j=1}^{n_{\mathfrak m,2}}m_{2,j}
+
\sum_{\ell=1}^{\bbb_1-1}\varsigma_{1,\ell}\lambda_{1,\ell}
=0.
$$
Together with the neutrality conditions on the two unglued pieces,
 $\sum_{j=1}^{n_{\mathfrak m,1}}m_{1,j}
+
\sum_{\ell=1}^{\bbb_1}\varsigma_{1,\ell}\lambda_{1,\ell}
=0,$ $\sum_{j=1}^{n_{\mathfrak m,2}}m_{2,j}
+
\varsigma_{2,1}\lambda_{2,1}
=0$,
and the fact that the glued boundary components have opposite signs, this forces $\lambda_{1,\bbb_1}=\lambda_{2,1}=:\lambda.$
All other winding sectors in the $\mu_0$-integration give zero on at least one of the two
sides of the gluing identity. Therefore, as before, only the compatible intermediate
winding $\lambda$ contributes.

With this choice of $\lambda$, the magnetic representatives
$\nu^1_{\z_1,\m_1,(\blambda_1,\lambda)}
\text{ and }
\nu^2_{\z_2,\m_2,\lambda}$ can be chosen so that near the glued boundary components they are respectively equal to
$-\varsigma_{1,\bbb_1}\lambda\,\d\theta$ and
$-\varsigma_{2,1}\lambda\,\d\theta$. Since the boundary parametrizations are glued with
opposite orientations, these local expressions match, and hence they glue to the magnetic form
$\nu_{\z,\m,\blambda}
=
\nu^1_{\z_1,\m_1,(\blambda_1,\lambda)}\mathbf 1_{\Sigma_1}
+
\nu^2_{\z_2,\m_2,\lambda}\mathbf 1_{\Sigma_2}$
on the glued surface $\Sigma$, where now
$\blambda=(\lambda_{11},\ldots,\lambda_{1\bbb_1-1})$
is the remaining boundary winding data. The defect graphs are chosen exactly as before:
one arc on each side ends at the distinguished point of the glued boundary component, and
these two arcs concatenate to give the defect graph on $\Sigma$.

We now repeat the preceding computation. The only difference from the case
$\bbb_2\ge2$ is the treatment of the shift
$\widetilde\varphi\mapsto \widetilde\varphi+2\pi\ell$
on $\Sigma_2$. Since $\Sigma_2$ has only one boundary component, the harmonic
extension of this shift is the constant function $2\pi\ell$. Therefore
$\d P(\widetilde\varphi+2\pi\ell)
=
\d P\widetilde\varphi,$
and
$\mathcal A^0_{\Sigma_2,g_2}
(\widetilde\varphi+2\pi\ell)
=
\mathcal A^0_{\Sigma_2,g_2}
(\widetilde\varphi).$
On $\Sigma_1$, the shift is non-constant, but one can apply the above argument to remedy this issue. After reindexing the lattice sum, we complete the proof of the subcase $\bbb_2=1.$
Together with the case
$\bbb_2\ge2$, this completes the proof of the compactified-boson two-surface gluing
identity when $\partial\Sigma\neq\emptyset$, and hence the proof of the case of gluing two surfaces.

The case of self-gluing follows from a similar argument in \cite[Proposition 8.16]{Guillarmou:2023exh}. We leave the details to the reader.

\end{proof}

Now we are in a position to prove Propositions \ref{prop:CITT-boundary-amplitudes-properties} and \ref{prop:CITT-Segal-gluing}. We begin with the former.
\begin{proof}[Proof of Proposition \ref{prop:CITT-boundary-amplitudes-properties}]

We denote by $\Sigma'$ the surface $\Sigma$ with reversed orientation and observe that:
\begin{lemma}\label{lem:reversed-orientatoin}
    The following holds true:
    \begin{equation*}
        \A^0_{\Sigma,g,\z,\m,\bzeta}(F,\widetilde{\bm\varphi}^{\blambda})=\A^0_{\Sigma',g,\z,-\m,\bzeta}(F,\widetilde{\bm\varphi}^{\blambda})
    \end{equation*}
\end{lemma}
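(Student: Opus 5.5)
The plan is to compare the two sides of the identity term by term in the definition \eqref{eq:free-compactified-boson-amplitude}, once we fix how the data transform under orientation reversal. Reversing the orientation of $\Sigma$ does not change the metric $g$, the Laplacian, the Dirichlet GFF $\x_{g,D}$, the harmonic extension $P\widetilde{\bm\varphi}$, the determinant $Z_{\Sigma,g}$, or the free field amplitude $\mathcal A^0_{\Sigma,g}(\widetilde{\bm\varphi})$. These quantities are purely Riemannian and coincide on both sides. What does change is the following.

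\begin{itemize}
\item[(a)] The local angular form $\d\theta$ near each $z_j$ changes sign, since $\theta$ is measured with respect to the orientation.
\item[(b)] Each boundary component flips between incoming and outgoing, so $\varsigma_\ell\mapsto-\varsigma_\ell$.
\item[(c)] The Hodge star changes sign, $*\mapsto -*$, while the $L^2$ pairing of $1$-forms $\int\omega\wedge*\eta$ is unchanged. The volume form $\d\v_g$ is a density and is also unchanged.
\end{itemize}

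The key step is to show that the same closed form $\nu:=\nu_{\z,\m,\blambda}$, constructed on $\Sigma$ via Proposition \ref{prop:magnetic-background}, is an admissible magnetic background on $\Sigma'$ for the charges $-\m$ with the same winding vector $\blambda$. Near $z_j$ we have $\nu=m_j\d\theta=(-m_j)\d\theta'$, with $\theta'=-\theta$ the angle for the reversed orientation. Along the boundary, $\int_{\partial_\ell\Sigma'}\nu=-\int_{\partial_\ell\Sigma}\nu=-2\pi\varsigma_\ell\lambda_\ell=2\pi\varsigma'_\ell\lambda_\ell$. The neutrality condition $\sum(-m_j)+\sum\varsigma'_\ell\lambda_\ell=-(\sum m_j+\sum\varsigma_\ell\lambda_\ell)$ vanishes exactly when the original one does, so the $\delta_0$ prefactors agree. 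The remaining requirements of Proposition \ref{prop:magnetic-background} are that the interior periods vanish, that $\d^*\nu$ is smooth and compactly supported away from $\z$, and that $\iota_{\partial\Sigma}(i_\nu\nu)=0$. The first two hold unchanged. For the third, $\d^*=\pm*\d*$ changes sign at most, and the inward normal is orientation independent.

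The relative homology basis reversed (or with the same curves) is still a canonical basis, so the compactly supported representatives $\Omega^c_{\blambda^c}$ can be kept, at the cost at most of the reindexing $\blambda^c\mapsto\pm\blambda^c$ of the lattice sum. With this choice, each remaining ingredient is identical on both sides:
\begin{itemize}
\item the regularized norm $\|\nu+\Omega^c_{\blambda^c}\|^2_{g,0}$, which depends only on $|m_j|^2=|-m_j|^2$ and on metric $L^2$ pairings;
\item the linear term $\langle \d\x_{g,D}+\d P\widetilde{\bm\varphi},\nu+\Omega^c\rangle_2$;
\item the primitives $I^{\bxi}_{x_0}(\nu)$ and $I^{\bsigma}_{x_0}(\Omega^c)$, since the same defect graph and cuts can be used and primitives of a fixed closed form do not see orientation;
\item consequently the field $\Phi^0_g$ and the value $F(\Phi^0_g)$.
\end{itemize}
Since Proposition \ref{prop:CITT-boundary-amplitudes-properties} does not yet include independence of representatives, we fix these representatives by hand. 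Summing over $\blambda^c$ then gives the identity.

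I expect the main obstacle to be the defect graph conventions on $\Sigma'$. The ordering condition $m_j\le m_{j'}$ on the edges reverses under $\m\mapsto-\m$, so every arc must be reparametrized backwards, and the tangent condition $\xi'_p(0)=\lambda v_j$ with $\lambda>0$ then fails at the endpoints. Because $Q=0$, no curvature counterterm appears in $\mathcal A^0$. We therefore only need that $I^{\bxi}_{x_0}(\nu)$ depends on $\bxi$ solely through the cut set $\mathcal D_{\bv,\bxi}$, which is unchanged, modulo $2\pi\Lambda$ ambiguities absorbed by the $2\pi\Lambda$-periodicity of $F$. To handle the tangent constraint, I would first reverse the arcs and then apply the independence from tangent directions, which is trivial when $Q=0$ by the argument of Corollary \ref{cor:rotation-covariance-magnetic}.
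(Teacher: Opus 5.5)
Your proof is correct and follows the paper's argument. The paper's proof is the single observation that the magnetic form $\nu_{\z,\m,\blambda}$ built on $\Sigma$ also serves as the magnetic form on $\Sigma'$ for the charges $-\m$ with the same winding data $\blambda$; every other ingredient of \eqref{eq:free-compactified-boson-amplitude} depends only on the metric and so does not change. Your checks of the local forms, boundary periods, neutrality factor, $L^2$ pairings, and defect-graph conventions just spell out what the paper leaves implicit.
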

\begin{proof}
    By Proposition \ref{prop:magnetic-background}, we have $\nu^{\Sigma'}_{\z,\m,\blambda}=\nu^{\Sigma}_{\z,-\m,\blambda},$ and the claim follows directly from the definition \eqref{eq:free-compactified-boson-amplitude}.
\end{proof}
Next, we show that the amplitudes of surfaces with boundary are well-defined. We consider the regularized amplitudes:
\begin{equation}\label{eq:def-boundary-regularized-amplitude}
    \begin{aligned}
        &\mathcal A^\eps_{\Sigma,g,\bx,\bv,\balpha,\m,\bzeta}(F,\widetilde{\bm\varphi}^{\blambda})\\
        &:=\delta_0(\sum_{j=1}^{n_{\m}+\bbb}m_j(\blambda))\sum_{\blambda^c\in\Lambda^{2\bbg+\bbb-1}}e^{-\frac{1}{4\pi}\norm{\nu_{\z,\m,\blambda}+\Omega^c_{\blambda^c}}^2_{g,0}}Z_{\Sigma,g}\A^0_{\Sigma,g}(\widetilde{\bm\varphi})\\
        &\times\E\left[e^{-\frac{\la d\x_{g,D}+\d P\widetilde{\bm\varphi},\nu_{\z,\m,\blambda}+\Omega^c_{\blambda^c}\ra}{2\pi}}F(\Phi_g)\prod^{n_{\m}}_{j=1}V^g_{\alpha_j,\eps}(x_j)e^{-\frac{\bi}{4\pi}\la QK_g,\Phi_g\ra^{\reg}_{g}-\sum^r_{i=1}\mu_iM^g_{\gamma e_i}(\Phi_g,\Sigma)}\right]
    \end{aligned}
\end{equation}

\begin{lemma}\label{lem:boundary-regularized-amplitude-convergence}As $\eps\rightarrow0$ and $\bx\rightarrow\z$ in the direction $\bv=((z_1,v_1),\ldots,(z_n,v_n))\in(T\Sigma)^n$, $$\mathcal A^\eps_{\Sigma,g,\bx,\bv,\balpha,\m,\bzeta}(F,\widetilde{\bm\varphi}^{\blambda})\rightarrow\mathcal A_{\Sigma,g,\bv,\balpha,\m,\bzeta}(F,\widetilde{\bm\varphi}^{\blambda}),\quad(\d c\otimes\P_{\T})^{\bbb}\text{ almost surely},$$ where \begin{equation}\label{eq:boundary-amplitude}
    \begin{aligned}
        &\mathcal A_{\Sigma,g,\bv,\balpha,\m,\bzeta}(F,\widetilde{\bm\varphi}^{\blambda})\\
&=\delta_0(\sum_{j=1}^{n_{\m}+\bbb}m_j(\blambda))Z_{\Sigma,g}\A^0_{\Sigma,g}(\widetilde{\bm\varphi})e^{-\sum_{j<j'}\la\alpha_j\alpha_{j'}\ra G_{g,D}(z_j,z_{j'})-\sum_j\frac{|\alpha_j|^2}{2}W_g(z_j)}\prod_{j=1}^{n_{\m}}e^{\bi\la\alpha_j,P\widetilde{\bm\varphi}(z_j)\ra}\\&\times\sum_{\blambda^c\in\Lambda^{2\bbg+\bbb-1}}e^{-\frac{1}{4\pi}\norm{\nu_{\z,\m,\blambda}+\Omega^c_{\blambda^c}}^2_{g,0}}\prod_{j=1}^{n_{\m}}e^{i\la\alpha_j,I^{\bsigma}_{x_0}(\Omega^c_{\blambda^c})+I^{\bxi}_{x_0}(\nu_{\z,\m,\blambda})\ra(z_j)}\times\\
        &\E\left[e^{-\frac{\la d\x_{g,D}+\d P\widetilde{\bm\varphi}+\d u_{\z},\nu_{\z,\m,\blambda}+\Omega^c_{\blambda^c}\ra}{2\pi}}F(\Phi_g+u_{\z})e^{-\frac{\bi}{4\pi}\la QK_g,\Phi_g+u_{\z}\ra^{\reg}_{g}-\sum^r_{i=1}\mu_iM^g_{\gamma e_i}(\Phi_g+u_{\z},\Sigma)}\right],
    \end{aligned}
\end{equation} with $u_{\z}=\sum_{j=1}^{n_{\m}}i\alpha_jG_{g,D}(\cdot,z_j),$ the Toda field $\Phi_g=\x_{g,D}+P\widetilde{\bm\varphi}+I^{\bxi}_{x_0}(\nu_{\z,\m\blambda})+I^{\bsigma}_{x_0}(\Omega^c_{\blambda^c}),$ the expectation over the Dirichlet GFF $\x_{g,D},$ and the evaluation of $I^{\bxi}_{x_0}(\nu_{\z,\m,\blambda})$ at the points $z_j$ done in the direction $v_j.$
\end{lemma}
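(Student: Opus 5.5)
The plan is to follow the proof of Theorem \ref{thm:mixed operators} and Theorem \ref{thm:electro-magnetic operators} in the Dirichlet setting, working first at fixed boundary field $\widetilde{\bm\varphi}$ and fixed relative sector $\blambda^c$, and only afterwards summing over $\blambda^c$. Fix $\bx$ near $\z$ and $\eps>0$. We would introduce the holomorphic auxiliary function $A(\w)$, $\w\in\C^{n_\m}$, in which the electric factor $\prod_j V^g_{\alpha_j,\eps}(x_j)$ is replaced by $W^{g,\eps}_{\w}(\Phi_g,\bx)$, with the GFF $\x_g$ replaced by $\x_{g,D}$. For real $\w$ we apply Cameron--Martin to the Dirichlet GFF. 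The deterministic part of $\Phi_g$, namely $P\widetilde{\bm\varphi}+I^{\bsigma}_{x_0}(\Omega^c_{\blambda^c})+I^{\bxi}_{x_0}(\nu_{\z,\m,\blambda})$, simply factors out of the exponential as the phase $\prod_j e^{\bi\la\alpha_j,\cdot\ra(x_j)}$. The Gaussian part produces the shift $u_{0,\eps}=\bi\sum_j\alpha_jG^{D}_{\eps,0}(\cdot,x_j)$ together with the variance prefactor. Holomorphic extension to $\w=(\bi,\dots,\bi)$ goes exactly as in Step 1 of Theorem \ref{thm:mixed operators}: mollify the shift to $u_{0,\eps,\delta}$, use Lemma \ref{lem:toda-order-2-distribution} for holomorphy, and use Proposition \ref{prop:exp-moment-shifted} (with $D'=\Sigma$ and the Dirichlet Green function $G_{g,D}$) for locally uniform bounds and the $\delta\to0$ limit.

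Next we pass to the limit $\eps\to0$ and then $\bx\to\z$ along $\bv$. The prefactor converges to $e^{-\sum_{j<j'}\la\alpha_j,\alpha_{j'}\ra G_{g,D}(z_j,z_{j'})-\sum_j\frac{|\alpha_j|^2}{2}W_g(z_j)}$, where $W_g$ is now the Dirichlet Robin mass. The factor $e^{\bi\la\alpha_j,P\widetilde{\bm\varphi}(x_j)\ra}$ converges because $P\widetilde{\bm\varphi}$ is smooth in the interior for every $s<0$ boundary datum. The factor with $I^{\bxi}_{x_0}(\nu_{\z,\m,\blambda})(x_j)$ has a directional limit, since $\nu=m_j\d\theta$ near $z_j$. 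For the expectation, we show that $M^g_{\gamma e_i}(\Phi_g+u_{\bx,\eps},\Sigma)$ converges in $L^2$. The second-moment computation reduces, near each $z_j$, to the integrability of \eqref{eq:local-toda-integrability}, which holds because $\alpha_j-Q\in\mathcal C_+$. Near $\partial\Sigma$ nothing singular happens, since $G_{g,D}$ vanishes there. Proposition \ref{prop:exp-moment-shifted}, applied to the differences $e^{\bi\gamma\la e_i,u_t\ra}-e^{\bi\gamma\la e_i,u_1\ra}$, then upgrades this to convergence of the exponentials through the quantities $V_t,U_t\to0$, as in the proof of Theorem \ref{thm:electro-magnetic operators}. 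The Girsanov cross terms $\la \d u,\nu+\Omega^c\ra_2$ and the regularized curvature term with the shift are continuous in the limit, since $u$ is in $H^{s}$ for $s<1$ and $K_g$ vanishes near $\partial\Sigma$.

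The main obstacle is exchanging the limit with the sum over $\blambda^c\in\Lambda^{2\bbg+\bbb-1}$ while keeping only an almost sure statement in $\widetilde{\bm\varphi}$. The plan is to bound each summand, uniformly in $\eps$ and $\bx$, by
$$C(\widetilde{\bm\varphi})\,e^{-\frac1{4\pi}\|\nu+\Omega^c_{\blambda^c}\|^2_{g,0}}\,\E\big[e^{-\frac1{2\pi}\la \d\x_{g,D}+\d P\widetilde{\bm\varphi},\nu+\Omega^c_{\blambda^c}\ra_2}|F|\big].$$
Here H\"older's inequality and Proposition \ref{prop:exp-moment-shifted} give a $\blambda^c$-independent GMC constant. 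Next, write $\Omega^c_{\blambda^c}=\Omega^{\mathrm h}_{\blambda^c}+\d f_{\blambda^c}$ and apply Girsanov to remove $\d f_{\blambda^c}$. After this, the term $\la\d P\widetilde{\bm\varphi},\Omega^{\mathrm h}_{\blambda^c}\ra_2$ reduces by Stokes to boundary integrals of $\widetilde{\bm\varphi}$ against periods. These are finite for almost every $\widetilde{\bm\varphi}$ and linear in $\blambda^c$. The result is a Gaussian-type series $\sum_{\blambda^c}e^{-q(\blambda^c)+\ell_{\widetilde{\bm\varphi}}(\blambda^c)}$ with $q$ positive definite, which converges. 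Dominated convergence then yields \eqref{eq:boundary-amplitude} for $(\d c\otimes\P_{\T})^{\bbb}$-almost every boundary field.
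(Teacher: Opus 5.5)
Your proposal follows essentially the same route as the paper's (sketched) proof: Girsanov via holomorphic continuation in the electric charges, convergence of the shifted imaginary GMC through Proposition \ref{prop:exp-moment-shifted} and the local integrability coming from $\alpha_j-Q\in\mathcal C_+$, and control of the $\blambda^c$-sum by a Gaussian lattice series whose linear term comes from $\la\d P\widetilde{\bm\varphi},\Pi_1^c\Omega^c_{\blambda^c}\ra_2$; your uniform majorant plus dominated convergence is the same argument as the paper's bound on the difference $\Delta_{\eps,\bx}$. One justification needs fixing: continuity of the cross term $\la\d u,\nu_{\z,\m,\blambda}\ra_2$ as $\bx\to\z$ does not follow from $u\in H^s$, $s<1$, because $\nu_{\z,\m,\blambda}\notin L^2$ and $|\d u_{\z}|\,|\nu_{\z,\m,\blambda}|\sim|x-z_j|^{-2}$ is not integrable. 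Instead, integrate by parts to rewrite the term as $\la u,\d^*\nu_{\z,\m,\blambda}\ra_2$, using three facts: $u|_{\partial\Sigma}=0$; $\nu_{\z,\m,\blambda}=m_j\d\theta$ near $z_j$, so $*\nu_{\z,\m,\blambda}$ pulls back to zero on small circles around $z_j$; and $\d^*\nu_{\z,\m,\blambda}\in C^\infty_{\mathrm c}(\Sigma_\z;\a)$.
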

\begin{proof}
    The proof is similar to that of Theorems \ref{thm:mixed operators} and \ref{thm:electro-magnetic operators}. One starts with the Girsanov transform, where an analogous statement of Lemma \ref{lem:toda-order-2-distribution} is needed (see \cite[Lemma 8.18]{Guillarmou:2023exh}), and the exponential estimate is provided by Proposition \ref{prop:exp-moment-shifted}. Upon rewriting the regularized amplitude \eqref{eq:def-boundary-regularized-amplitude} using the Girsanov transform, one follows again the proof of Theorems \ref{thm:mixed operators} and \ref{thm:electro-magnetic operators} to obtain that the difference $\Delta_{\eps,\bx}$ between regularized amplitudes satisfies the bound $$|\Delta_{\eps,\bx}|\leq\sum_{\blambda\in\Lambda^{2\bbg}}\A^0_{\Sigma,g}(\widetilde{\bm\varphi})e^{-\frac{1}{4\pi}\norm{\prod^c_1\Omega^c_{\blambda^c}}^2_2+C|\blambda|-\frac{1}{2\pi}\la\d P\widetilde{\bm\varphi},\prod^c_{\blambda^c}\Omega^c_{\blambda^c}\ra}C_{\eps,\bx,\z}(F,\widetilde{\bm{\varphi}})$$ for some constant $C_{\eps,\bx,\z}(F,\widetilde{\bm{\varphi}})$ that vanishes $\mu_0^{\otimes\bbb}$ almost surely in $\widetilde{\bm\varphi}$ as $\eps\rightarrow0$ and $\bx\rightarrow\z.$ Moreover, the quantity $\la\d P\widetilde{\bm\varphi},\prod^c_{\blambda^c}\Omega^c_{\blambda^c}\ra$ is at most linear in $|\blambda|$, and this proves the claim.
\end{proof}

Finally, we claim that the amplitudes of surfaces with boundary with fixed functional $F$ belong to the Hilbert space $\H^{\otimes\bbb}$:
\begin{lemma}\label{lem:boundary-amplitude-L2}
    If $\sum_{j=1}^{n_{\m}}m_j=0$ and $\alpha_j-Q\in \mathcal{C}_+,$ then for $F\in\mathcal{E}^{\m}_\Lambda(\Sigma;\a),$ we have $$\A_{\Sigma,g,\bv,\balpha,\m,\bzeta}(F,\cdot)\in\H^{\otimes\bbb}.$$
\end{lemma}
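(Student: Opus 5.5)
The plan is to bound the $\H^{\otimes\bbb}$-norm of $\A:=\A_{\Sigma,g,\bv,\balpha,\m,\bzeta}(F,\cdot)$ by the same quantity computed for the free compactified boson. That free quantity is then controlled through the free gluing identity (Proposition \ref{prop:CITT-free-boson-two-surface-gluing}), exactly as in the rank-one argument of \cite{Guillarmou:2023exh}.

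\textbf{Step 1: pointwise domination.} Start from the representation \eqref{eq:boundary-amplitude} of Lemma \ref{lem:boundary-regularized-amplitude-convergence}. The curvature phase has modulus one, as do all the factors $e^{\bi\la\alpha_j,\cdot\ra}$. Apply H\"older's inequality in the Dirichlet GFF expectation with exponents $p,q$. Proposition \ref{prop:exp-moment-shifted}, combined with the integrability condition $\alpha_j-Q\in\mathcal C_+$ (the local integral \eqref{eq:local-toda-integrability} is finite), bounds the $q$-th moment of $e^{-\sum_i\mu_iM^g_{\gamma e_i}(\Phi_g+u_\z,\Sigma)}$ by a constant. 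This constant is uniform in $\blambda^c$, in $\widetilde{\bm\varphi}$, and in the zero modes, because shifts by real fields only contribute unimodular factors $e^{\bi\gamma\la e_i,Z\ra}$.

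Since $F\in\mathcal E^\m_\Lambda$ is a polynomial in finitely many linear pairings times a bounded $G$, its $p$-th moment is at most polynomial in the pairings $\la P\widetilde{\bm\varphi},g_k\ra$. The result is the bound
$$|\A(\widetilde{\bm\varphi}^{\blambda})|\le C\,(1+\|\widetilde{\bm\varphi}\|_{H^{s}})^N\,\A^0_{\Sigma,g,\z,\m,\bzeta}(\widetilde{\bm\varphi}^{\blambda})^{1/p'},$$
where the right-hand side is a free amplitude. It is built with the weights $e^{-\frac{1}{4\pi}\|\cdot\|_{g,0}^2}$ and the Gaussian factor $e^{-\frac1{2\pi}\la \d P\widetilde{\bm\varphi},\cdot\ra}$ raised to suitable powers, where the Girsanov factor from $u_\z$ has been absorbed.

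\textbf{Step 2: the free boson norm.} It then suffices to show $\int|\A^0_{\Sigma,g,\z,\m,\bzeta}|^2(1+\|\widetilde{\bm\varphi}\|)^{2N}\,\d\mu_0^{\otimes\bbb}<\infty$. For the pure square, Lemma \ref{lem:reversed-orientatoin} identifies $\overline{\A^0_\Sigma}$ with the amplitude of the orientation-reversed surface $\Sigma'$ carrying charges $-\m$. Gluing $\Sigma$ to $\Sigma'$ along all $\bbb$ boundary components, one component at a time, is done via Proposition \ref{prop:CITT-free-boson-two-surface-gluing} followed by repeated self-gluing. This turns $\|\A^0\|^2_{\H^{\otimes\bbb}}$ into a closed free-boson amplitude on the double $\Sigma\#\Sigma'$. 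That closed amplitude is finite because the lattice sum is a Gaussian theta series: $\blambda\mapsto\|\Omega^{\mathrm h}_{\blambda}\|^2$ is positive definite with a linear perturbation, as in Proposition \ref{prop:magnetic-path-integral-well-defined}. Neutrality on the double holds since $\sum m_j=0$.

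The polynomial weight is harmless. It corresponds to inserting a polynomial functional, which after gluing becomes a Gaussian moment of the GFF on the double. Hölder's inequality with a slightly larger exponent also handles the power $1/p'$: the exponent $p'$ can be taken close to $1$ since $q$ can be large in Proposition \ref{prop:exp-moment-shifted}.

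\textbf{Main obstacle.} The delicate point is Step 1, namely obtaining domination that is uniform in the lattice variable $\blambda^c$ and in the boundary data, while keeping the Gaussian weight $e^{-\frac1{2\pi}\la \d P\widetilde{\bm\varphi},\nu+\Omega^c\ra}$ in the form the free amplitude needs. The factor raised to $1/p'$ must still be summable after squaring. I would handle this by completing the square in $\blambda^c$ as in the proof of Lemma \ref{lem:boundary-regularized-amplitude-convergence}, where the cross term with $\d P\widetilde{\bm\varphi}$ is at most linear in $|\blambda^c|$. The resulting exponential in $\widetilde{\bm\varphi}$ is then checked to be integrable against $\A^0_{\Sigma,g}(\widetilde{\bm\varphi})^2\d\P_\T$, using that $\widetilde{\mathbf D}_\Sigma$ is smoothing, so that only a Gaussian in finitely many effective directions is involved.
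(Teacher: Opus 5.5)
Your architecture is the paper's. You dominate $\A_{\Sigma,g,\bv,\balpha,\m,\bzeta}(F,\cdot)$ by a free compactified-boson amplitude, double $\Sigma$ with its orientation reversal via Lemma \ref{lem:reversed-orientatoin} and Proposition \ref{prop:CITT-free-boson-two-surface-gluing}, and prove finiteness on the closed double. But the passage from your Step 1 to your Step 2 does not go through as written. H\"older under $\P$ puts the Gaussian weight $W:=e^{-\frac{1}{2\pi}\la \d\x_{g,D}+\d P\widetilde{\bm\varphi},\Theta\ra}$ (with $\Theta=\nu_{\z,\m,\blambda}+\Omega^c_{\blambda^c}$) to the power $p$ inside an expectation, which is then taken to the power $1/p$. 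That is what produces your $(\A^0)^{1/p'}$. The gluing identity is linear in each amplitude, so from $|\A|\le C(1+\cdot)^N(\A^0)^{1/p'}$ it does not follow that $\int|\A^0|^2(1+\|\widetilde{\bm\varphi}\|)^{2N}\d\mu_0<\infty$ suffices. The suggested fix, H\"older with a larger exponent, does not rescue this by itself: $\mu_0$ has infinite mass (counting measure over the windings), and you cannot take $p'=1$ because $e^{-\sum_i\mu_iM^g_{\gamma e_i}}$ is unbounded. A smaller imprecision: the polynomial factor cannot be pulled out of the $\blambda^c$-sum with a $\blambda^c$-independent constant. The Cameron--Martin shift generated by $\Theta$ enters the pairings $\la f,g_k\ra$ and produces polynomial growth in $|\blambda^c|$; this is harmless against the theta series but must stay inside the sum.

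The repair is to linearize the bound.
\begin{itemize}
\item Apply H\"older with respect to the tilted measure $W\,\d\P$. By Cameron--Martin the tilt is a real shift contributing only unimodular factors, so Proposition \ref{prop:exp-moment-shifted} gives $\E[W|e^{-\sum_i\mu_iM^g_{\gamma e_i}}|^q]\le C\,\E[W]$, uniformly in $\blambda^c$ and $\widetilde{\bm\varphi}$.
\item Young's inequality then gives $\E[W|F|^p]^{1/p}\E[W]^{1/q}\le\E[W(1+|F|^p)]$.
\item Hence $|\A(F,\cdot)|\le C\,\A^0_{\Sigma,g,\z,\m,\bzeta}(1+|F(\cdot+u_\z)|^p,\cdot)$. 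This is a free amplitude with a positive $2\pi\Lambda$-periodic functional, and your Step 2 applies verbatim.
\end{itemize}
The paper avoids the issue differently: it never applies H\"older on $\Sigma$. It puts the modulus of the potential into the positive functional $F_\z(\Phi)=e^{-\frac1{2\pi}\la\d u_\z,\Theta\ra}|F(\Phi+u_\z)|\,|e^{-\sum_i\mu_iM^g_{\gamma e_i}(\Phi+u_\z,\Sigma)}|$. Then \eqref{eq:boundary-amplitude} gives the linear bound $|\A(F,\cdot)|\le C\A^0(F_\z,\cdot)$ at once, and the exponential-moment estimate is done only on the closed double, for $\hat F_\z$, by rerunning the proof of Theorem \ref{thm:electro-magnetic operators}. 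Your repaired route instead needs a moment bound on $\Sigma$ that is uniform in the boundary data. In exchange, only a purely free quantity (polynomial insertions and a Gaussian theta series) must be shown finite on the double, rather than an interacting amplitude carrying the $2\nm$ singular weights from $\hat u_\z$.
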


\begin{proof}
    Let $\Sigma'$ denote a copy of $\Sigma$ with reversed orientation. We glue the $i$-th component of $\Sigma$ to that of $\Sigma'$ to obtain the closed double surface $\Sigma^{\#2}$, with an induced involution $\tau:\Sigma^{\#2}\rightarrow\Sigma^{\#2}$ sending a point $x\in\Sigma$ to the corresponding copy in $\Sigma'$, and the extended metric $g$ on $\Sigma^{\#2}$ symmetric under the involution $\tau.$

    We recall that $u_{\z}=\sum_{j=1}^{n_{\m}}i\alpha_jG_{g,D}(\cdot,z_j)$ and set $$F_{\z}(\Phi):=e^{-\frac{1}{2\pi}\la \d u_\z,\nu_{\z,\m,\blambda}+\Omega^c_{\blambda^c}\ra}|F(\Phi+u_\z)||e^{-\sum^r_{j=1}\mu_jM^g_{\gamma e_j}(\Phi+u_{\z},\Sigma)}|.$$ Note that the representation \eqref{eq:boundary-amplitude} gives that $$\A_{\Sigma,g,\bv,\balpha,\m,\bzeta}(F,\widetilde{\bm\varphi}^{\blambda})\leq C\A^0_{\Sigma,g,\bv,\m,\bzeta}(F_{\z},\widetilde{\bm\varphi}^{\blambda})$$ for some constant $C$ independent of $\widetilde{\bm\varphi}^{\blambda}$. Now Proposition \ref{prop:CITT-free-boson-two-surface-gluing} shows that 
$$C'\int\A^0_{\Sigma,g,\z,\m,\bzeta}(F_{\z},\widetilde{\bm\varphi}^{\blambda})\A^0_{\Sigma',\tau^*g,\z,-\m,\bzeta}(F_{\z},\widetilde{\bm\varphi}^{\blambda})\d\mu_0^{\otimes\bbb}(\widetilde{\bm\varphi}^{\blambda})=\A^0_{\Sigma^{\#2},g,\hat{\z},\hat\m}(\hat F_{\z},\widetilde{\bm\varphi}^{\blambda})$$ for some explicit constant $C'$ obtained from the constants in Proposition \ref{prop:CITT-free-boson-two-surface-gluing}, where $\hat\z=(\z,\tau(\z)),\, \hat\m=(\m,-\m),$ and $$\hat F_{\z}(\Phi):=e^{-\frac{1}{2\pi}\la \d \hat u_\z,\nu_{\z,\m}+\Omega_{\blambda}\ra}|F(\Phi+\hat u_\z|_{\Sigma})||F(\Phi+\hat u_\z|_{\Sigma'})||e^{-\sum^r_{j=1}\mu_jM^g_{\gamma e_j}(\Phi+\hat u_{\z},\Sigma^{\#2})}|,$$ with $\hat u_{\z}:=u_{\z}\bone_{\Sigma}+\tau^*u_{\z}\bone_{\Sigma'}.$ One can follow the argument of the proof of Theorem \ref{thm:electro-magnetic operators} to show that $\A^0_{\Sigma^{\#2},g,\hat{\z},\hat\m}(\hat F_{\z},\widetilde{\bm\varphi}^{\blambda})$ is finite. It then follows from Lemma \ref{lem:reversed-orientatoin} that 
\begin{align*}
    &\int|\A_{\Sigma,g,\bv,\balpha,\m,\bzeta}(F,\widetilde{\bm\varphi}^{\blambda})|^2\d\mu_0^{\otimes\bbb}(\widetilde{\bm\varphi}^{\blambda})\\
    &\leq C^2\int|\A^0_{\Sigma,g,\z,\m,\bzeta}(F_{\z},\widetilde{\bm\varphi}^{\blambda})|^2\d\mu_0^{\otimes\bbb}(\widetilde{\bm\varphi}^{\blambda})\\
&=C^2\int\mathcal{A}^0_{\Sigma,g,\z,\m,\bzeta}(F_\z,\widetilde{\bm\varphi}^{\blambda})\mathcal{A}^0_{\Sigma',\tau^*g,\z,-\m,\bzeta}(F_\z,\widetilde{\bm\varphi}^{\blambda})\d\mu_0^{\otimes\bbb}(\widetilde{\bm\varphi}^{\blambda})\\&=\frac{C^2}{C'}\A^0_{\Sigma^{\#2},g,\hat{\z},\hat\m}(\hat F_{\z},\widetilde{\bm\varphi}^{\blambda})<\infty
\end{align*}
\end{proof}
This proves the first claim of Proposition \ref{prop:CITT-boundary-amplitudes-properties}.

Next we show that the amplitude is invariant under the translations
$$
\widetilde{\bm\varphi}^{\blambda}
\mapsto
\widetilde{\bm\varphi}^{\blambda}
+
2\pi\ell\,\mathbf 1_{\partial_j\Sigma},
\qquad \ell\in\Lambda,\quad j=1,\ldots,\bbb.
$$ We assume for simplicity that the base point $x_0$ lies on the first boundary component
$\partial_1\Sigma$. First consider the simultaneous translation
$$
c_j\mapsto c_j+2\pi\ell,
\qquad j=1,\ldots,\bbb,
\qquad \ell\in\Lambda.
$$
This amounts to replacing the bulk field $\Phi_g$ by $\Phi_g+2\pi\ell$. The Toda
interaction is unchanged since,  the
test functional $F$ is $2\pi\Lambda$-periodic, and the terms depending only on
$\d P\widetilde{\bm\varphi}$ and on
$\mathcal A^0_{\Sigma,g}(\widetilde{\bm\varphi})$ are unchanged. The only term which
requires comment is the curvature phase. Under $\Phi_g\mapsto\Phi_g+2\pi\ell$, it changes by
$$
e^{
-\frac{\bi}{4\pi}
\int_\Sigma 2\pi\langle Q,\ell\rangle K_g\,\d v_g
}=
e^{-\bi\pi\langle Q,\ell\rangle\chi(\Sigma)
}=1.
$$
thanks to the assumption $Q\in\Lambda^*$.
Thus the amplitude is invariant under simultaneous translation of all boundary constants
by the same element $2\pi\ell$.

It remains to prove invariance under a translation of a single boundary component. If
$j=1$, then translating only $c_1$ by $2\pi\ell$ can be reduced to translating all
components by $2\pi\ell$, followed by translating each $c_j$, $j>1$, by
$-2\pi\ell$. Hence it suffices to prove invariance under
$c_j\mapsto c_j+2\pi\ell$ for $j>1$. Let $h$ be an $\a$-valued function on $\partial\Sigma$, locally constant on each
boundary component, and set
$W(h)
:=
\mathcal A_{\Sigma,g,\bv,\balpha,\m,\bzeta}
\left(F,\widetilde{\bm\varphi}^{\blambda}+h\right).$
We shall prove
$$
W(h_j)=W(0),
\qquad
h_j:=2\pi\ell\,\mathbf 1_{\partial_j\Sigma},
\qquad j>1,\quad \ell\in\Lambda.
$$
For notational simplicity, write $\Theta_{\blambda^c}
:=
\nu_{\z,\m,\blambda}+\Omega^c_{\blambda^c}.$
Also absorb into a single symbol $H$ all the factors in the amplitude which are
$2\pi\Lambda$-periodic functions of the field, namely
$$
H(\Phi):=F(\Phi)\prod_{j=1}^{n_{\mathfrak m}}V^g_{\alpha_j,\varepsilon}(x_j(t))
e^{-\frac{\bi}{4\pi}\langle QK_g,\Phi\rangle^{\reg}_g
-\sum_{i=1}^r\mu_iM^g_{\gamma e_i}(\Phi,\Sigma)},$$
and the amplitude with boundary shift $h_j$ can be written as
$$
\begin{aligned}
W(h_j)=&\sum_{\blambda^c\in\Lambda^{2\bbg+\bbb-1}}\mathcal A^0_{\Sigma,g}(\widetilde{\bm\varphi}+h_j)e^{-\frac{1}{4\pi}\|\Theta_{\blambda^c}\|_{g,0}^2}\mathbb E\left[e^{-\frac{1}{2\pi}\langle\d \x_{g,D}+\d P\widetilde{\bm\varphi}+\d P h_j,\Theta_{\blambda^c}\rangle_2}H(\Phi_g+P h_j)
\right],
\end{aligned}
$$
where
$\Phi_g=\x_{g,D}+P\widetilde{\bm\varphi}+I^{\bsigma}_{x_0}(\Omega^c_{\blambda^c})+I^{\bxi}_{x_0}(\nu_{\z,\m,\blambda}).$
Here and below the normalizing factor $Z_{\Sigma,g}$, the electric regularization limits,
and the magnetic neutrality factor are suppressed, since they play no role in this
periodicity argument.

We observe that$$
\mathcal A^0_{\Sigma,g}(\widetilde{\bm\varphi}+h_j)
=e^{-\frac{1}{4\pi}\int_\Sigma |\d P h_j|_g^2\,\d v_g
-\frac{1}{2\pi}\langle\d P h_j,\d P\widetilde{\bm\varphi}\rangle_2}
\mathcal A^0_{\Sigma,g}(\widetilde{\bm\varphi}),$$
and so 
$$
\begin{aligned}
W(h_j)=&\sum_{\blambda^c\in\Lambda^{2\bbg+\bbb-1}}
\mathcal A^0_{\Sigma,g}(\widetilde{\bm\varphi})
e^{-\frac{1}{4\pi}
\|\Theta_{\blambda^c}+\d P h_j\|_{g,0}^2}
\mathbb E\left[
e^{-\frac{1}{2\pi}\langle\d \x_{g,D}+\d P\widetilde{\bm\varphi},
\Theta_{\blambda^c}+\d P h_j\rangle_2}H(\Phi_g+P h_j)\right].
\end{aligned}
$$

We also note that $\d P h_j$. Since $h_j$ belongs to the same relative cohomology class as $\eta^c_{2\bbg+j-1}\otimes \ell.$ Consequently there exists a smooth $\a$-valued function $f_j$, with
$f_j|_{\partial\Sigma}=0$, such that
$\d P h_j
=
\eta^c_{2\bbg+j-1}\otimes \ell+\d f_j .$
After reindexing the lattice summation, we obtain
$$
\begin{aligned}
W(h_j)
=
\sum_{\blambda^c\in\Lambda^{2\bbg+\bbb-1}}
\mathcal A^0_{\Sigma,g}(\widetilde{\bm\varphi})
e^{-\frac{1}{4\pi}
\|\Theta_{\blambda^c}+\d f_j\|_{g,0}^2
}
\mathbb E\left[
e^{-\frac{1}{2\pi}
\left\langle
\d \x_{g,D}
+\d P\widetilde{\bm\varphi},
\Theta_{\blambda^c}+\d f_j
\right\rangle_2
}H(\Phi_g+f_j)
\right].
\end{aligned}
$$
Expanding the regularized norm gives
$$
\|\Theta_{\blambda^c}+\d f_j\|_{g,0}^2
=
\|\Theta_{\blambda^c}\|_{g,0}^2
+
2\langle \Theta_{\blambda^c},\d f_j\rangle_2
+
\|\d f_j\|_2^2.
$$
Since $f_j$ vanishes on $\partial\Sigma$, integration by parts gives
$$
\|\d f_j\|_2^2
=
\langle f_j,\Delta_g f_j\rangle_2,
\qquad
\langle \d \x_{g,D},\d f_j\rangle_2
=
\langle \x_{g,D},\Delta_g f_j\rangle_2,
$$
and thus
$$
\begin{aligned}
W(h_j)
=&
\sum_{\blambda^c\in\Lambda^{2\bbg+\bbb-1}}
\mathcal A^0_{\Sigma,g}(\widetilde{\bm\varphi})
e^{-\frac{1}{4\pi}
\|\Theta_{\blambda^c}\|_{g,0}^2
}\\&\times\mathbb E\left[
e^{-\frac12
\langle \x_{g,D},\Delta_g f_j\rangle
}
e^{-\frac{1}{2\pi}
\left\langle
\d(X_{g,D}+f_j+P\widetilde{\bm\varphi}),
\Theta_{\blambda^c}
\right\rangle_2
}H(\Phi_g+f_j)
\right].
\end{aligned}
$$
Now one applies the Girsanov transform to the Dirichlet GFF $\x_{g,D}$. This is allowed
because $f_j$ vanishes on $\partial\Sigma$. The
Girsanov shift absorbs the replacement $\x_{g,D}\mapsto \x_{g,D}+f_j$, and hence gives $W(h_j)=W(0).$ The same argument can be applied to show that the amplitude does not depend on the relative cohomology representatives.

Changing relative homology bases amounts to reindexing the lattice sum, after which the corresponding relative cohomology representatives have the same relative cohomology class. By the previous paragraph, replacing one representative by the
other does not change the non-curvature part of the summand.
The fact that the curvature term is invariant under this change is given by
Lemma~\ref{lem:boundary-relative-curvature-independence} and the assumption $Q\in\Lambda^*$.

We turn to the independence of the representative of
$\nu_{\z,\m,\blambda}$. This point is slightly more delicate since an arbitrary exact change of
$\nu_{\z,\m,\blambda}$ is not necessarily of Dirichlet type and therefore cannot be
absorbed directly by the Dirichlet GFF.
Let
$\nu_{\z,\m,\blambda}$
and $\nu'_{\z,\m,\blambda}$
be two magnetic background forms satisfying the assumptions of
Proposition~\ref{prop:magnetic-background} and \eqref{nu-relative-period}. Since they represent the same absolute
cohomology class and have the same prescribed singularities and periods, their difference is
exact:
$\nu_{\z,\m,\blambda}-\nu'_{\z,\m,\blambda}=\d f$
for some smooth $\a$-valued function $f$ on $\Sigma_\z$. Moreover, by the local
normalization of the magnetic forms near the marked points and near the boundary
components, the difference actually vanishes in these neighborhoods. Hence $f$ is locally
constant near each boundary component and near each marked point. We normalize $f$ by
adding a constant so that $f(x_0)=0$.

The locally constant boundary jumps of $f$ are encoded by a relative cohomology class.
Thus there exists
$\blambda^c_0\in\Lambda^{2\bbg+\bbb-1}$
such that
$\d h=\Omega^c_{\blambda^c_0}+\d f_{\blambda^c_0}$
for some smooth $\a$-valued function $h$, constant near the boundary components and
marked points with values in $2\pi\Lambda$, and some smooth function
$f_{\blambda^c_0}$ vanishing on $\partial\Sigma$. Equivalently, after decomposing $f=C+h+\bar f,$
with $C$ constant, $h$ locally constant near the boundary components and marked points
with values in $2\pi\Lambda$, and $\bar f|_{\partial\Sigma}=0$, the non-Dirichlet part
of $\d f$ is precisely represented by $\Omega^c_{\blambda^c_0}$. In particular,
$\d f=\Omega^c_{\blambda^c_0}+\d f_{\blambda^c_0}+\d\bar f .$

We now insert $\nu_{\z,\m,\blambda}=\nu'_{\z,\m,\blambda}+\d f$
into the definition of the amplitude. In the summand indexed by
$\blambda^c$, the topological form becomes
$$
\nu_{\z,\m,\blambda}+\Omega^c_{\blambda^c}=\nu'_{\z,\m,\blambda}+
\Omega^c_{\blambda^c+\blambda^c_0}+\d(f_{\blambda^c_0}+\bar f).
$$
After this reindexing $\blambda^c\mapsto \blambda^c-\blambda^c_0$
, the
only remaining difference between the two expressions is the exact Dirichlet-type shift
$\d(f_{\blambda^c_0}+\bar f),\,
(f_{\blambda^c_0}+\bar f)|_{\partial\Sigma}=0,$
which can be absorbed by the Dirichlet GFF, again, via the Girsanov transform with the shift
$\x_{g,D}\mapsto \x_{g,D}+f_{\blambda^c_0}+\bar f$. 
The locally constant part $h$ only changes the field by an element of $2\pi\Lambda$, leaving the amplitude unchanged, as shown above. Moreover, the curvature term is also unchanged thanks to the assumption $Q\in\Lambda^*.$ This proves the independence of the representative of $\nu_{\z,\m,\blambda},$ and
the independence of the defect graph is given by Lemma~\ref{lem:boundary-defect-graph-independence} and again the assumption $Q\in\Lambda^*.$

We proceed to the conformal anomaly. Let $g'=e^\rho g$, with $\rho\in C^\infty(\Sigma)$ and
$\rho|_{\partial\Sigma}=0$. Since the boundary value of $\rho$ vanishes, the
boundary parametrizations, the boundary measure $\mu_0$, and the boundary field
$\widetilde{\bm\varphi}^{\blambda}$ are unchanged. Moreover the Dirichlet GFF is
conformally invariant in law:
$$
\x_{g',D}\overset{\mathrm{law}}{=}\x_{g,D},
$$
and the harmonic extension of the boundary field is also unchanged. Thus the only changes in the amplitude come from the determinant of the Dirichlet
Laplacian, the vertex operators, the Toda interaction term, the curvature coupling, and the
regularized norms of the singular magnetic forms.

We first recall from \eqref{eq:weyl IGMC} that $M^{g'}_{\gamma e_i}(\x_{g',D},\d x)=e^{(1-\gamma^2\la e_i,e_i\ra/4)\rho(x)}M^g_{\gamma e_i}(\x_{g,D},\d x)$ and that the Polyakov--Alvarez formula \cite[Section 1]{OSGOOD1988148} gives
\begin{equation}\label{eq:weyl-det-dirichlet-vector}
Z_{\Sigma,g'}
=
e^{\frac{r}{96\pi}
\int_\Sigma
\left(
|d\rho|_g^2+2K_g\rho
\right)\d \v_g
}Z_{\Sigma,g}.
\end{equation}
Next, the electric vertex renormalization gives, exactly as in the closed case,
\begin{equation}\label{eq:weyl-electric-renormalization}
\prod_{j=1}^{n_{\mathfrak m}}
V^{g'}_{\alpha_j,\varepsilon}(x_j(t))
=
e^{-\sum_{j=1}^{n_{\mathfrak m}}
\frac{|\alpha_j|^2}{4}\rho(z_j)
+o_{\varepsilon,t}(1)
}\prod_{j=1}^{n_{\mathfrak m}}
V^{g}_{\alpha_j,\varepsilon}(x_j(t)),
\end{equation}
as $\varepsilon\to0$ and then $t\to1$. By Lemma \ref{lem:boundary-curvature-conformal-change}, the contribution of the magnetic part is\begin{equation}\label{eq:weyl-magnetic-energy}
e^{
-\frac{1}{4\pi}
\|\nu_{\z,\m,\blambda}+\Omega^c_{\blambda^c}\|_{g',0}^2
}=
e^{-\frac{1}{4}
\sum_{j=1}^{n_{\mathfrak m}}|m_j|^2\rho(z_j)
}e^{-\frac{1}{4\pi}
\|\nu_{\z,\m,\blambda}+\Omega^c_{\blambda^c}\|_{g,0}^2
}.
\end{equation}
It remains to treat the curvature coupling. We use
$K_{g'}\,\d \v_{g'}
=
\left(K_g+\Delta_g\rho\right)\d \v_g .$
to obtain$$
\int_\Sigma
\left\langle QK_{g'},\x_{g,D}+P\widetilde{\bm\varphi}\right\rangle
\d \v_{g'}
=
\int_\Sigma
\left\langle QK_g,\x_{g,D}+P\widetilde{\bm\varphi}\right\rangle
\d \v_g+\int_\Sigma
\left\langle Q\Delta_g\rho,\x_{g,D}\right\rangle
\d \v_g,$$
where we have use the Green identity to get $$\int_\Sigma\Delta_g\rho P\widetilde{\bm\varphi}\d \v_g=\int_{\partial\Sigma}\partial_\nu\rho P\widetilde{\bm\varphi}\d \v_g, $$ which vanishes since $g,\,g'$ are admissible so that $\partial_\nu\rho=0.$ Moreover, Lemma~\ref{lem:boundary-curvature-conformal-change}
gives
$$
\begin{aligned}
&\int_{\Sigma_{\bsigma}}^{\reg}
I^{\bsigma}_{x_0}(\Omega^c_{\blambda^c})K_{g'}\,\d \v_{g'}
+
\int_{\Sigma}^{\reg}
I^{\bxi}_{x_0}(\nu_{\z,\m,\blambda})K_{g'}\,\d \v_{g'}
\\
&\qquad =
\int_{\Sigma_{\bsigma}}^{\reg}
I^{\bsigma}_{x_0}(\Omega^c_{\blambda^c})K_g\,\d \v_g
+
\int_{\Sigma}^{\reg}
I^{\bxi}_{x_0}(\nu_{\z,\m,\blambda})K_g\,\d \v_g
+
\left\langle
\d\rho,\Omega^c_{\blambda^c}+\nu_{\z,\m,\blambda}
\right\rangle_2 .
\end{aligned}
$$
Applying the Girsanov transform with the shift $\x_{g,D}
\mapsto
\x_{g,D}-\frac{\bi}{2}Q\rho $ (since the shift is imaginary, one has to go through an analytic continuation as in Theorem \ref{thm:mixed operators}) and checking the factors thus produced, we obtain the desired formula.

Finally, the diffeomorphism invariance and the spin covariance follow from a similar argument to the proof in Theorem \ref{thm:mixed operators}. This completes the proof.

\end{proof}

We conclude this section by proving Proposition \ref{prop:CITT-Segal-gluing}.
\begin{proof}[Proof of Proposition \ref{prop:CITT-Segal-gluing}]
For the case of gluing two surfaces, we apply Proposition \ref{prop:CITT-free-boson-two-surface-gluing}
with $F_i(\Phi):=G_i(\Phi)\prod^{n_\m^i}_{j=1}V_{\alpha_{ij},\eps}^g(x_j)e^{-\frac{\bi}{4\pi}\langle QK_{g_i},\Phi\rangle^{\reg}_{g_i}-\sum_{i=1}^r\mu_iM^{g_i}_{\gamma e_i}(\Phi,\Sigma_i)},$
to obtain the gluing identity for regularized amplitudes:
\begin{align*}
&\mathcal{A}^\eps_{\Sigma,g,\bx,\bv,\balpha,\m,\bzeta}(G_1\otimes G_2,\widetilde{\bm\varphi}^{\blambda_1}_1,\widetilde{\bm\varphi}^{\blambda_2}_2)\\&=C\int\mathcal{A}^\eps_{\Sigma_1,g_1,\bx_1,\bv_1,\balpha_1,\m_1,\bzeta_1}(G_1,\widetilde{\bm\varphi}^{\blambda_1}_1,\widetilde{\varphi}^\lambda)\mathcal{A}^\eps_{\Sigma_2,g_2,\bx_2,\bv_2,\balpha_2,\m_2,\bzeta_2}(G_2,\widetilde{\bm\varphi}^{\blambda_2}_2,\widetilde{\varphi}^\lambda)\d\mu_0(\widetilde{\varphi}^\lambda),
\end{align*} where $C$ is the constant given in Proposition \ref{prop:CITT-Segal-gluing}. We note that we need Lemma \ref{lem:boundary-curvature-additivity-gluing} to gurantee that the functionals $F_1$ and $F_2$ glue to $$F_1\otimes F_2(\Phi)=G_1\otimes G_2(\Phi)\prod_{i=1,2}(\prod^{n_m^i}_{j=1}V^{g}_{\alpha_{ij},\eps}(x_j))e^{-\frac{\bi}{4\pi}\langle QK_{g},\Phi\rangle^{\reg}_{g}-\sum_{i=1}^r\mu_iM^{g}_{\gamma e_i}(\Phi,\Sigma)}.$$
It remains to justify the passage to the limit as $\eps\rightarrow0$ and $\bx\rightarrow\z$ in the prescribed directions. This follows from 
\begin{lemma}
    The following convergence holds in $\H^{\otimes\bbb}$,
    $$\lim_{\bx\rightarrow\z}\lim_{\eps\rightarrow0}\mathcal{A}^\eps_{\Sigma,g,\bx,\bv,\balpha,\m,\bzeta}(F,\widetilde{\bm\varphi}^{\blambda})=\mathcal{A}_{\Sigma,g,\bv,\balpha,\m,\bzeta}(F,\widetilde{\bm\varphi}^{\blambda}),$$ where $\bx$ approach $\z$ in the direction of $\bv.$ 
\end{lemma}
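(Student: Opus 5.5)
We upgrade the almost-sure convergence of Lemma \ref{lem:boundary-regularized-amplitude-convergence} to convergence in $\H^{\otimes\bbb}$ by the same doubling trick as in Lemma \ref{lem:boundary-amplitude-L2}. The plan is to show that $\mathcal A^\eps_{\Sigma,g,\bx,\bv,\balpha,\m,\bzeta}(F,\cdot)$ is Cauchy in $\H^{\otimes\bbb}$ as $\eps\to0$, and then as $\bx\to\z$; together with the pointwise limit this identifies the $L^2$ limit with $\mathcal A_{\Sigma,g,\bv,\balpha,\m,\bzeta}(F,\cdot)$. Equivalently, and more directly, we can aim for domination plus a.s. convergence. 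The a.s. convergence is already given. A uniform $L^2$ bound (uniform integrability of $|\mathcal A^\eps|^2$) would then conclude by Vitali's theorem.

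\textbf{Step 1.} Write the regularized amplitude via Girsanov, as in the proof of Lemma \ref{lem:boundary-regularized-amplitude-convergence}. The electric insertions become the deterministic Wick prefactor (uniformly bounded in $\eps$ and in $\bx$ near $\z$, after taking the limit in direction $\bv$) times a shifted expectation, with shift $u_{\bx,\eps}=\sum_j \bi\alpha_j G_{\eps,0}(\cdot,x_j)$. Since the curvature phase and the factors $e^{\bi\langle\alpha_j,\cdot\rangle}$ have modulus one, we get
\[
|\mathcal A^\eps_{\Sigma,g,\bx,\bv,\balpha,\m,\bzeta}(F,\widetilde{\bm\varphi}^{\blambda})|\le C\,\A^0_{\Sigma,g,\z,\m,\bzeta}(F^{\eps,\bx},\widetilde{\bm\varphi}^{\blambda}),
\]
where $F^{\eps,\bx}$ is the analogue of $F_\z$ in Lemma \ref{lem:boundary-amplitude-L2}, built with $u_{\bx,\eps}$ in place of $u_\z$. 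The point requiring care is that the term $e^{-\frac1{2\pi}\la \d u_{\bx,\eps},\cdot\ra}$ is handled uniformly; it is linear in $\blambda^c$ and is absorbed in the Gaussian lattice sum as before.

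\textbf{Step 2.} Apply the same doubling as in Lemma \ref{lem:boundary-amplitude-L2} to the difference $\mathcal A^\eps-\mathcal A^{\eps'}$ (or to $\mathcal A^\eps-\mathcal A$). Using Proposition \ref{prop:CITT-free-boson-two-surface-gluing} and Lemma \ref{lem:reversed-orientatoin}, the squared $\H^{\otimes\bbb}$ norm becomes a closed-surface free-boson amplitude on $\Sigma^{\#2}$, evaluated on a functional involving
\[
\Big|e^{-\sum_i\mu_iM^g_{\gamma e_i}(\Phi+u_{\bx,\eps},\Sigma)}-e^{-\sum_i\mu_iM^g_{\gamma e_i}(\Phi+u_{\z},\Sigma)}\Big|
\]
on each copy. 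Here the functionals are complex, so we first expand $|a-b|^2$ into products and glue each product term. Gluing is valid for integrable periodic functionals.

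\textbf{Step 3 (main obstacle).} Show that this closed amplitude tends to $0$. Fix $(\blambda,c)$. Hölder's inequality reduces the claim to $\E[e^{A|M_{\eps,\bx}-M|}-1]\to0$, uniformly in $c$ and in the lattice sectors. This follows from Proposition \ref{prop:exp-moment-shifted}, with the quantities $V,U$ tending to $0$ by the local integrability of \eqref{eq:local-toda-integrability} under $\alpha_j-Q\in\mathcal C_+$, exactly as in the proofs of Theorems \ref{thm:mixed operators} and \ref{thm:electro-magnetic operators}.

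The delicate part is uniformity over the lattice sum, since the sector sum on $\Sigma^{\#2}$ also involves the magnetic charges $(\m,-\m)$. We handle it by dominated convergence in $\blambda$, using the Gaussian weight $e^{-\frac1{4\pi}\|\Omega^{\mathrm h}_{\blambda}\|^2}$ after completing the square, as in Proposition \ref{prop:magnetic-path-integral-well-defined}. The bound from Proposition \ref{prop:exp-moment-shifted} is independent of $\blambda$ and $c$, since the topological primitives enter only as phases.
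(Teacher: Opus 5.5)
Your proposal is correct and follows essentially the route the paper defers to via \cite[Lemma 8.21]{Guillarmou:2023exh}: bound the difference of amplitudes by a free-boson amplitude of a positive functional, compute its $\H^{\otimes\bbb}$-norm by the doubling/gluing argument of Lemma~\ref{lem:boundary-amplitude-L2}, and show the resulting closed amplitude on $\Sigma^{\#2}$ tends to zero using Proposition~\ref{prop:exp-moment-shifted} together with Gaussian domination of the lattice sum. Two side remarks are inaccurate, though neither is needed by your Steps 2--3. First, after the imaginary Girsanov shift the curvature term $-\frac{\bi}{4\pi}\int_\Sigma\la Q,u_{\bx,\eps}\ra K_g\,\d\v_g$ is real rather than a pure phase; it only contributes a bounded deterministic factor. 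Second, the Vitali alternative fails as stated, since a uniform $L^2$ bound is not uniform integrability of $|\mathcal{A}^\eps|^2$, and $\mu_0^{\otimes\bbb}$ is an infinite measure.
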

We omit the proof since it follows from a similar argument to \cite[Lemma 8.21]{Guillarmou:2023exh}. This proves the gluing identity in the case of gluing two surfaces. 

The case of self-gluing is treated in the same way. The only additional point is that self-gluing is expressed as a partial trace. This trace is well-defined by the same Hilbert--Schmidt argument as in \cite[Proposition 8.14]{Guillarmou:2023exh}: after
cutting out small annuli around the two boundary components to be identified, the regularized amplitude is written as a composition of Hilbert--Schmidt operators, since
annular amplitudes are $L^2$-kernels. The convergence lemma above then allows one
to pass to the limits $\eps\to0$ and $\bx\to\z$, while the behavior of the regularized
curvature term under self-gluing is given by Lemma
\ref{lem:boundary-curvature-additivity-gluing}. This proves the self-gluing identity
and completes the proof of Proposition \ref{prop:CITT-Segal-gluing}.
\end{proof}

\appendix

\section{Proof of Proposition \ref{prop:closed formula}}\label{Appendix}
\subsection{The complex twin and meromorphic continuation}
We first introduce a few notations. We denote by $\Lambda_{\C}$ the collection of pairs $(a,a')\in\C^2$ such that $a-a'\in\Z$ and write $z^{a|a'}=z^a\bar z^{a'}$ with the choice of branch such that $(-1)^{a|a'}=(-1)^{a-a'}.$ We then define the Gamma function of the complex field \cite{IMGelfand_2004} by \begin{equation}\label{eq:complex Gamma}\Gamma^\C(a|a'):=\frac{1}{\pi}\int_{\C}z^{a-1|a'-1}e^{2\bi\Re z}\d z=\bi^{a-a'}\frac{\Gamma(a)}{\Gamma(1-a')}=\frac{\bi^{a'-a}}{\pi}\Gamma(a)\Gamma(a')\sin(\pi a').\end{equation} This integral converges conditionally if $0<\frac{1}{2}\Re(a+a')<1$ and admits a meromorphic continuation to the complex lines $a-a'\in\Z$. Finally, for $\z=(z_1,\ldots,z_n)\in\C^n$, we let $D_n(\z)=\prod_{1\leq i<j\leq n}(z_i-z_j)$ and $\d\nu_n(z)=(\pi^n n!)^{-1}\prod_{i=1}^n\d z_i$.  

 We also record the following identity: for $(\lambda,\lambda')\in\Lambda_{\C}$ satisfying $-2<\Re(\lambda+\lambda')<0$ and $a\in\C^\times$, we have
\begin{equation}\label{eq:one-variable-Fourier}
\frac1\pi\int_{\C} z^{\lambda|\lambda'}e^{\,a\bar z-\bar a z}\d z
=
\Gamma_{\C}(1+\lambda\,|\,1+\lambda')\,a^{-1-\lambda'\,|\,-1-\lambda}.
\end{equation} This can be obtained from \eqref{eq:complex Gamma} via the change of variable $w=\bar{a}z$.

To follow the argument in \cite[Appendix A]{fateev2007correlation}, we have to establish the complex twin of \cite[eq. (A.7)]{fateev2007correlation} since the integrand in \eqref{eq:DF-0-1-infty} is no longer the modulus of, say, $1-x^{(i)}_a$ due to the nonzero magnetic charge $m_2.$
\begin{lemma}\label{lem:complex-twin-A7-final}
Let $p,q\in\N_0$, set $M:=p+q+1$, and let $u_1,\dots,u_M\in\C$ be pairwise distinct. Let $(\lambda_1,\lambda_1'),\dots,  (\lambda_M,\lambda_M')\in\Lambda_{\C}.$ Then
\begin{equation}\label{eq:complex-twin-A7-final-statement}
\begin{aligned}
&\int_{\C^p}
D_p(z)^{1|1}
\prod_{\alpha=1}^{p}\prod_{j=1}^{M}(z_\alpha-u_j)^{\lambda_j|\lambda_j'}
\d\nu_p(z)
\\&=
(-1)^{\sum_{j=1}^{M}(j-1)(\lambda_j-\lambda_j')}
\frac{
\prod_{j=1}^{M}\Gamma_{\C}(1+\lambda_j\,|\,1+\lambda_j')
}{
\Gamma_{\C}\!\left(
p+1+\sum_{j=1}^{M}\lambda_j
\ \Big|\
p+1+\sum_{j=1}^{M}\lambda_j'
\right)
}
\\
&\prod_{1\le i<j\le M}
(u_j-u_i)^{1+\lambda_i+\lambda_j\,|\,1+\lambda_i'+\lambda_j'}
\int_{\C^q}
D_q(w)^{1|1}
\prod_{\beta=1}^{q}\prod_{j=1}^{M}(w_\beta-u_j)^{-1-\lambda_j\,|\,-1-\lambda_j'}
\d\nu_q(w),
\end{aligned}
\end{equation} where both integrals converge absolutely if  
\begin{equation}\label{eq:Omega-conv-final}
-2<\Re(\lambda_j+\lambda_j')<0,\quad
-2p-2<\sum_{j=1}^M \Re(\lambda_j+\lambda_j')<-2p.
\end{equation}
\end{lemma}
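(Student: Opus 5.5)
We follow Neretin's proof of \cite[Lemma 2.4]{neretin2024dotsenko}, the case $M=p+q+1$ of a Dixon--Anderson type duality, transplanted to the complex field. The tool is a complex Fourier transform combined with the one-variable formula \eqref{eq:one-variable-Fourier}. First fix the real convergence window \eqref{eq:Omega-conv-final}. Then prove absolute convergence of both sides by power counting. Locally near $u_j$ each factor behaves like $|z-u_j|^{\Re(\lambda_j+\lambda_j')}$, which is integrable since $-2<\Re(\lambda_j+\lambda_j')<0$; on the $w$-side the exponent is $-2-\Re(\lambda_j+\lambda_j')\in(-2,0)$. Coincidences $z_\alpha=z_\beta$ are harmless because $D_p(z)^{1|1}$ vanishes there. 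At infinity the degree in one variable is $2(p-1)+\sum_j\Re(\lambda_j+\lambda'_j)<-2$, and on the $w$-side it is $2(q-1)-2M-\sum_j\Re(\lambda_j+\lambda_j')<-2$ by the second inequality. Multivariable tails, where several variables go to infinity together, are handled by the same counting restricted to subsets.

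The core of the argument is the complex analogue of the Dixon--Anderson identity. We write the Dixon density on the simplex-like space via the change of variables
\[
(z_1,\dots,z_p)\mapsto \bigl(\text{coefficients of } \textstyle\prod_\alpha(t-z_\alpha)\bigr),
\]
whose Jacobian is $D_p(z)^{1|1}$ up to the constant $p!$ that is absorbed in $\d\nu_p$. The integrand then becomes $\prod_j P(u_j)^{\lambda_j|\lambda_j'}$, with $P$ a monic polynomial of degree $p$. We use Lagrange interpolation at the $M=p+q+1$ nodes $u_j$: the values $y_j:=P(u_j)$ range over an affine subspace of codimension $q+1$ in $\C^M$. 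That subspace is cut out by
\[
\sum_j \frac{y_j u_j^k}{\prod_{i\ne j}(u_j-u_i)}=\delta_{k,p}\qquad (k=0,\dots,p+q),
\]
and only the equations with $k\ge p$ are nontrivial. We impose these $q+1$ delta constraints through Fourier integrals $\int e^{\xi\bar y-\bar\xi y}$ over $\C^{q+1}$, exchange integrals (valid in a smaller region where everything converges absolutely), and integrate each $y_j$ separately by \eqref{eq:one-variable-Fourier}. This produces $\prod_j\Gamma_\C(1+\lambda_j|1+\lambda_j')$ times $\prod_j \ell_j(\xi)^{-1-\lambda_j'|-1-\lambda_j}$, where $\ell_j(\xi)=\sum_k \xi_k u_j^k/\prod_{i\neq j}(u_j-u_i)$ is a polynomial of degree $q$ in $u_j$.

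Read in the reverse direction, the same parametrization identifies $\ell_j(\xi)$, up to the leading coefficient $\xi_{q}$, with the values $Q(u_j)$ of a polynomial $Q(t)=\xi_q\prod_\beta(t-w_\beta)$. Changing variables back to the roots $w_\beta$ costs $D_q(w)^{1|1}$ together with a power of $\xi_q$. The remaining one-dimensional integral over $\xi_q$, weighted by the surviving exponential (the $\delta_{k,p}$ term), is again a $\Gamma_\C$-integral by \eqref{eq:complex Gamma}. It yields $1/\Gamma_\C(p+1+\sum\lambda_j|p+1+\sum\lambda'_j)$ via the reflection $\Gamma_\C(a|a')\Gamma_\C(1-a|1-a')=(-1)^{a-a'}$. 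Collecting the Vandermonde factors $\prod_{i\ne j}(u_j-u_i)$ raised to $\lambda_j|\lambda'_j$ and to $-1-\lambda_j|\ldots$ gives $\prod_{i<j}(u_j-u_i)^{1+\lambda_i+\lambda_j|\ldots}$. Reordering $(u_j-u_i)$ versus $(u_i-u_j)$ gives the sign $(-1)^{\sum(j-1)(\lambda_j-\lambda'_j)}$, under the branch convention $(-1)^{a|a'}=(-1)^{a-a'}$.

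The main obstacle is analytic rather than algebraic. The Fourier exchange is only conditionally convergent, as \eqref{eq:complex Gamma} is. We would handle this by first working in a nonempty open subregion of \eqref{eq:Omega-conv-final} and inserting a Gaussian damping $e^{-\epsilon|y|^2}$ or restricting $\Re\lambda$ further so that Fubini applies. We then let $\epsilon\to0$ by dominated convergence, and finally extend to the whole window by analyticity of both sides in $(\lambda_j)$ along the complex lines $\lambda_j-\lambda_j'\in\Z$. Bookkeeping the Jacobians, the constants $\pi^n n!$, and the branch signs is routine but must be done carefully.
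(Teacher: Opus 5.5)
Your proposal follows essentially the same route as the paper. The paper parametrizes the $z$-integral by the rescaled Lagrange values $\omega_j=P_z(u_j)/(-U'(u_j))$, computing the Jacobian with a Cauchy determinant instead of your roots-to-coefficients-to-values factorization; it then imposes the $q+1$ linear constraints by a complex Fourier transform, factorizes into the one-variable integrals \eqref{eq:one-variable-Fourier}, passes to the roots of the dual degree-$q$ polynomial, and finishes with the leading-coefficient $\Gamma_{\C}$-integral.

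One slip to fix: the constraints are $\sum_j y_j u_j^k/\prod_{i\neq j}(u_j-u_i)=\delta_{k,q}$ for $k=0,\dots,q$ only (for $k>q$ these sums are not constant on the affine subspace), which is already the system your degree-$q$ $\ell_j(\xi)$ with leading coefficient $\xi_q$ uses. Also, since the one-variable Fourier integrals are never absolutely convergent for any parameters, no "smaller region" gives absolute convergence, so the damping you propose is genuinely needed rather than optional; the paper skips this and applies Fourier inversion formally.
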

\begin{proof}
Under the condition \eqref{eq:Omega-conv-final}, both integrals converge absolutely: the inequalities $-2<\Re(\lambda_j+\lambda_j')<0$ ensure local integrability at the finite singularities, and the bounds $$ -2p-2<\sum_{j=1}^M \Re(\lambda_j+\lambda_j')<-2p $$ give sufficient decay at infinity.

Set $U(t):=\prod_{j=1}^{M}(u_j-t)$ and $P_z(t):=\prod_{\alpha=1}^{p}(z_\alpha-t).$ For $1\le j\le M$, define
$$
\omega_j:=\frac{P_z(u_j)}{-U'(u_j)}
=
\frac{\prod_{\alpha=1}^{p}(z_\alpha-u_j)}{\prod_{\ell\neq j}(u_\ell-u_j)}.
$$
Then
$$
\sum_{j=1}^{M}\frac{\omega_j}{u_j-t}=\frac{P_z(t)}{U(t)}=(-1)^{q+1}t^{-q-1}+O(t^{-q-2}),
$$
near infinity.
Writing $\frac1{u_j-t}=-\sum_{m\ge0}u_j^m t^{-m-1}$ and comparing coefficients imply that $\omega=(\omega_1,\dots,\omega_M)$ lies in the affine subspace
$$
\mathcal A_{p,q}(u):=
\Bigl\{\omega\in\C^M:\ \sum_{j=1}^{M}\omega_j u_j^m=0\ ,\,0\le m\le q-1,\ \sum_{j=1}^{M}\omega_j u_j^q=(-1)^q\Bigr\},
$$
and that $(\omega_1,\dots,\omega_p)$ form affine coordinates on $\mathcal A_{p,q}(u)$. Differentiating $\omega_j$ with respect to $z_\alpha$ gives
$$
\frac{\partial\omega_j}{\partial z_\alpha}
=
\frac{\omega_j}{z_\alpha-u_j}.
$$
Therefore
$$
\det\!\left[\frac{\partial\omega_j}{\partial z_\alpha}\right]_{1\le j,\alpha\le p}
=
\Bigl(\prod_{j=1}^{p}\omega_j\Bigr)
\det\!\left[\frac{1}{z_\alpha-u_j}\right]_{1\le j,\alpha\le p},
$$
and the Cauchy determinant formula yields
$$
\det\!\left[\frac{1}{z_\alpha-u_j}\right]_{1\le j,\alpha\le p}
=
\frac{\prod_{1\le\alpha<\beta\le p}(z_\beta-z_\alpha)\,
\prod_{1\le i<j\le p}(u_i-u_j)}
{\prod_{\alpha=1}^{p}\prod_{j=1}^{p}(z_\alpha-u_j)}.
$$
Using $\prod_{\alpha=1}^{p}(z_\alpha-u_j)=\omega_j(-U'(u_j)),$
one obtains
\begin{equation}\label{eq:left-to-affine-safe}
\begin{aligned}
\int_{\C^p}
D_p(z)^{1|1}
\prod_{\alpha=1}^{p}\prod_{j=1}^{M}(z_\alpha-u_j)^{\lambda_j|\lambda_j'}
\d\nu_p(z)=&\frac{\prod_{j=1}^p(-U'(u_j))^{1+\lambda_j|1+\lambda_j'}\prod_{j=p+1}^M(-U'(u_j))^{\lambda_j|\lambda_j'}}{\prod_{1\leq i<j\leq p}(u_i-u_j)^{1|1}}
\\&\times
\frac{1}{\pi^p }
\int_{\A_{p,q}(u)}
\prod_{j=1}^{M}\omega_j^{\lambda_j|\lambda_j'}
\prod_{j=1}^{p}\d\omega_j.
\end{aligned}
\end{equation}

For $0\le m\le q$, set $L_m(\omega):=\sum_{j=1}^{M}\omega_j u_j^m.$ By construction of $\mathcal A_{p,q}(u)$, we have $L_m(\omega)=0,\,0\le m\le q-1,$ and $L_q(\omega)=(-1)^q.$ Let $Q_\tau(t):=\sum_{m=0}^{q}\tau_m t^m.$
Then for every $\omega\in\mathcal A_{p,q}(u)$,
$$
\sum_{m=0}^{q}\bigl(\tau_m\overline{L_m(\omega)}-\bar\tau_mL_m(\omega)\bigr)
=(-1)^q
(\tau_q-\bar\tau_q).
$$
On the other hand,
\begin{equation}\label{eq:Q_tau}
    \sum_{m=0}^{q}\bigl(\tau_m\overline{L_m(\omega)}-\bar\tau_mL_m(\omega)\bigr)
=
\sum_{j=1}^{M}\bigl(Q_\tau(\overline{u_j})\overline{\omega_j}-\overline{Q_\tau(\overline{u_j})}\,\omega_j\bigr).
\end{equation}

Therefore, for every $\omega\in\mathcal A_{p,q}(u)$,

\begin{equation}\label{eq:constraint-identity}
e^{(-1)^q(\tau_q-\bar\tau_q)}
=
\prod_{j=1}^{M}
e^{\,Q_\tau(\overline{u_j})\overline{\omega_j}-\overline{Q_\tau(\overline{u_j})}\,\omega_j}.
\end{equation}
If we define $$\Phi(y):=\frac{1}{\pi^p }
\int_{\{L_m(\omega)=y_m,\,0\leq m\leq q\}}
\prod_{j=1}^{M}\omega_j^{\lambda_j|\lambda_j'}
\prod_{j=1}^{p}\d\omega_j,\quad y=(y_0,\ldots,y_q)\in\C^{q+1},$$ then the Fourier transform $\hat{\Phi}$ is given by 
\begin{equation}\label{eq:Phi(0,...,0,)}
\begin{aligned}
    \hat{\Phi}(\tau):=&\int_{\C^{q+1}}\Phi(y)e^{\sum_{m=0}^q(\tau_m\bar{y}_m-\bar{\tau}_my_m)}\prod_{m=0}^q\d y_m\\
    =&\frac{\prod_{p+1\leq i<j\leq M}(u_j-u_i)^{1|1}}{\pi^p }\int_{\C^M}\prod^M_{j=1}\omega_j^{\lambda_j|\lambda_j'}e^{\sum_{m=0}^q(\tau_m\overline{L_m(\omega))}-\bar{\tau}_mL_m(\omega))}\prod^M_{j=1}\d\omega_j\\
    =&\frac{\prod_{p+1\leq i<j\leq M}(u_j-u_i)^{1|1}}{\pi^p }\prod^M_{j=1}\int_\C\omega^{\lambda_j|\lambda_j'}_je^{Q_\tau(\overline{u_j})\overline{\omega_j}-\overline{Q_\tau(\overline{u_j})}\omega_j}\d\omega_j\\
=&\pi^{q+1}\prod_{p+1\leq i<j\leq M}(u_j-u_i)^{1|1}\prod_{j=1}^M\Gamma^\C(1+\lambda_j|1+\lambda_j')Q_\tau(\overline{u_j})^{-1-\lambda_j'|-1-\lambda_j},
\end{aligned}
\end{equation}
where, in the second equality, we have used the change of variables $y_j=L_j(\omega)$ for $j=0,\ldots,q$ with $(\omega_1,\ldots\omega_p)$ fixed, the penultimate equality follows from \eqref{eq:Q_tau}, and the last equality follows from \eqref{eq:one-variable-Fourier}. Now the Fourier inversion formula gives 

\begin{align*}
    \Phi(0,\ldots,0,(-1)^q&)=\frac{1}{\pi^{2(q+1)}}\int_{\C^{q+1}}\hat{\Phi}(\tau)e^{(-1)^{q+1}(\tau_q-\bar{\tau}_q)}\prod^q_{m=0}\d\tau_m\\
=&\frac{\prod_{p+1\leq i< j\leq M}(u_j-u_i)^{1|1}}{\pi^{q+1}}\int_{\C^{q+1}}e^{(-1)^{q+1}(\tau_q-\bar{\tau}_q)}\prod^M_{j=1}Q_\tau(\overline{u_j})^{-1-\lambda_j'|-1-\lambda_j}\prod^q_{m=0}\d\tau_m.\end{align*}
 Writing $Q_\tau(t)=c\prod_{\beta=1}^{q}(t-\overline{w_\beta})$, we have $\tau_q=c$ and \begin{equation}\label{eq:Q_tau(u_j)}
     \begin{aligned}
         Q_\tau(\overline{u_j})^{-1-\lambda'_j|-1-\lambda_j}=&c^{-1-\lambda'_j|-1-\lambda_j}\prod^q_{\beta=1}(u_j-w_\beta)^{-1-\lambda_j|-1-\lambda_j'}\\=&(-1)^{q\sum^M_{j=1}(\lambda_j'-\lambda_j)}c^{-1-\lambda'_j|-1-\lambda_j}\prod^q_{\beta=1}(w_\beta-u_j)^{-1-\lambda_j|-1-\lambda_j'}\end{aligned}\end{equation} The usual coefficient-to-roots Jacobian reads
$$
\prod_{m=0}^{q}\d\tau_m
=\frac{1}{q!}
|c|^{2q}D_q(w)^{1|1}\d c\prod_{\beta=1}^q\d w_\beta.
$$
Combining this, \eqref{eq:left-to-affine-safe}, \eqref{eq:Phi(0,...,0,)}, and \eqref{eq:Q_tau(u_j)}, we obtain that the left-hand side of \eqref{eq:complex-twin-A7-final-statement} is equal to the sign $(-1)^{q\sum^M_{j=1}(\lambda_j'-\lambda_j)}$ times
$$
\begin{aligned}
&\frac{\prod_{j=1}^p(-U'(u_j))^{1+\lambda_j|1+\lambda_j'}\prod_{j=p+1}^M(-U'(u_j))^{\lambda_j|\lambda_j'}}{\prod_{1\leq i<j\leq p}(u_i-u_j)^{1|1}}\prod_{p+1\leq i<j\leq M}(u_j-u_i)^{1|1}
\\&\times\prod_{j=1}^{M}\Gamma^{\C}(1+\lambda_j\,|\,1+\lambda_j')\int_{\C}e^{(-1)^{q+1}(c-\bar c)}
c^{-p-1-\sum_j\lambda_j\,|\,-p-1-\sum_j\lambda_j'}\d\nu_1(c) \times I,\end{aligned}
$$
where $I$ is the integral on the right-hand side of \eqref{lem:complex-twin-A7-final}.
Applying \eqref{eq:one-variable-Fourier} to the $c$-integral gives
$$\begin{aligned}
\int_{\C}
e^{(-1)^{q+1}(c-\bar c)}
c^{-p-1-\sum_j\lambda_j\,|\,-p-1-\sum_j\lambda_j'}\d \nu_1(c)
=&\frac{(-1)^{q\sum^M_{j=1}(\lambda_j-\lambda_j')}}{\Gamma^{\C}\!\left(
p+1+\sum_{j=1}^{M}\lambda_j
\ \Big|\
p+1+\sum_{j=1}^{M}\lambda_j'
\right)},
\end{aligned}$$
and so the sign cancels. 
Finally, one checks that $$\begin{aligned}&\frac{\prod_{j=1}^p(-U'(u_j))^{1+\lambda_j|1+\lambda_j'}\prod_{j=p+1}^M(-U'(u_j))^{\lambda_j|\lambda_j'}}{\prod_{1\leq i<j\leq p}(u_i-u_j)^{1|1}}\prod_{p+1\leq i<j\leq M}(u_j-u_i)^{1|1}\\&=(-1)^{\sum^M_{j=1}(j-1)(\lambda_j-\lambda_j')}\prod_{1\leq i<j\leq M}(u_j-u_i)^{1+\lambda_i+\lambda_j|1+\lambda_i'+\lambda_j'}\end{aligned}$$ This completes the proof.
\end{proof}

For later applications of Lemma~\ref{lem:complex-twin-A7-final}, we will need to evaluate both sides of \eqref{eq:complex-twin-A7-final-statement} for parameters outside the common absolute-convergence region \eqref{eq:Omega-conv-final}. We therefore establish a general meromorphic continuation theorem for complex local zeta integrals and then apply it to the present setting.

The argument follows the method of Atiyah \cite{atiyah1970resolution}, namely to reduce the integral, after resolution of singularities, to finitely many monomial Mellin-type integrals. For the resolution input, we use principalization of ideal sheaves in the complex-analytic category \cite{Hironaka,wlodarczyk2008resolution}. As a consequence, the polynomial case on $\C^n$ follows immediately, and this will imply the meromorphic continuation of both sides of \eqref{eq:complex-twin-A7-final-statement}.

\begin{theorem}[Meromorphic continuation of complex local zeta integrals]\label{thm:local-complex-zeta}
Let $X$ be a complex manifold of complex dimension $n$, let $f_1,\dots,f_m$ be holomorphic functions on $X,$ and let $\omega$ be a compactly supported $C^\infty$ top-degree form on $X$.
For $\blambda=((\lambda_1,\lambda_1'),\dots,(\lambda_m,\lambda_m'))\in \Lambda_{\C}^m,$ define
$$
Z(\blambda):=
\int_X \prod_{j=1}^m f_j(x)^{\lambda_j|\lambda_j'}\,\omega(x),
$$
whenever the integral converges absolutely. For each connected component of $\Lambda_{\C}^m$ on which the absolute-convergence
domain of $Z(\blambda)$ has non-empty interior, the function $Z(\blambda)$ admits a meromorphic
continuation to the whole component.
\end{theorem}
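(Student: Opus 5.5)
We follow Atiyah's strategy: resolve singularities, reduce to monomial integrals, then continue those explicitly.

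\textbf{Step 1: Resolution.} Since $\omega$ has compact support, a partition of unity reduces the problem to $\omega$ supported in a small relatively compact open set $U$. We may assume no $f_j$ vanishes identically on a component of $U$. Otherwise $f_j^{\lambda_j|\lambda_j'}$ is undefined or zero there, and the convergence domain is empty or trivial on the relevant component. Apply principalization (Hironaka, or W\l odarczyk in the analytic category) to the ideal generated by the product $f_1\cdots f_m$ on a neighbourhood of $\overline U$. This yields a proper holomorphic map $\pi:\widetilde U\to U$, an isomorphism off the zero set, such that locally on $\widetilde U$ in coordinates $y$:
\[
f_j\circ\pi(y)=\varepsilon_j(y)\prod_{k=1}^n y_k^{a_{jk}},
\qquad
\pi^*\omega=\psi(y)\prod_{k=1}^n y_k^{b_k|b_k}\,\d y,
\]
with $\varepsilon_j$ nonvanishing holomorphic and $a_{jk},b_k\in\N_0$. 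Here the Jacobian of $\pi$ is a monomial times a unit, and its modulus squared enters, since $\pi^*(\d x\wedge\d\bar x)=|\det D\pi|^2\d y\wedge\d\bar y$. Because $\pi$ is proper, $\pi^*\omega$ is compactly supported, so a further finite partition of unity on $\widetilde U$ leaves finitely many local integrals. Change of variables is legitimate in the absolute convergence region, since $\pi$ is biholomorphic off a measure-zero set.

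\textbf{Step 2: Monomial integrals.} On a unit polydisc chart, choose holomorphic branches of $\log\varepsilon_j$ after shrinking. Then $\varepsilon_j^{\lambda_j|\lambda_j'}$ is smooth, depends entire-analytically on $\blambda$, and is absorbed into the test function $\phi(y,\blambda)=\psi\prod_j\varepsilon_j^{\lambda_j|\lambda_j'}$. With $A_k=\sum_j a_{jk}\lambda_j+b_k$ and $A'_k=\sum_j a_{jk}\lambda'_j+b_k$ (note $A_k-A'_k\in\Z$), the local integral becomes
\[
\int_{\C^n}\prod_k y_k^{A_k|A_k'}\phi(y,\blambda)\,\d y .
\]
This is a tensor product of one-variable distributions $y^{A|A'}$. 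Passing to polar coordinates, with $d=A-A'\in\Z$, the factor reduces to $\int_0^\infty r^{A+A'+1}\hat\phi_{-d}(r)\,\d r$, where $\hat\phi_{-d}$ is the Fourier coefficient of $\phi$ in the angle variable. Taylor-expanding $\hat\phi_{-d}$ at $0$ (only powers $r^{|d|+2\ell}$ occur) gives a meromorphic continuation in $A+A'$ with simple poles at $A+A'+|d|+2\ell+2=0$. The continuation is uniform in parameters, since $\phi$ is entire in $\blambda$ with smooth dependence. Iterating over $k$ shows that each local integral extends meromorphically in $\blambda$ on each component where $d_k$ is fixed, with poles along affine hyperplanes.

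\textbf{Step 3: Gluing.} On the connected component of $\Lambda_\C^m$ (fixed integer differences $\lambda_j-\lambda'_j$), the sum of these finitely many continuations agrees with $Z(\blambda)$ on the open convergence set, which is assumed nonempty. By the identity theorem the sum is the meromorphic continuation, and it is independent of the choices made.

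\textbf{Main obstacle.} The main technical point is Step 1: invoking principalization in the analytic category so that all $f_j$ and the Jacobian are simultaneously normal crossings, with units admitting uniform logarithms on small charts. A secondary point is checking that the holomorphic dependence of the units' powers on $\blambda$ is compatible with the Mellin continuation. I would handle this by differentiating under the integral on compact parameter sets.
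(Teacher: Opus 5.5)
Your proposal is correct and is essentially the paper's proof. Both arguments localize with a partition of unity and apply resolution of singularities. The paper resolves $\bigcup_j\{f_j=0\}$; you principalize $(f_1\cdots f_m)$, which also monomializes each factor by unique factorization in the local rings. Both then absorb the units $u_j^{\lambda_j|\lambda_j'}$ into a smooth test function entire in $\blambda$, and continue the monomial integrals by polar coordinates and Taylor expansion at $r=0$, one variable at a time.

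Your pole set $A+A'+|d|+2\ell+2=0$, coming from the angular Fourier mode, and your explicit identity-theorem gluing of the local continuations with $Z$ are consistent refinements of the paper's version rather than a different route.
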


\begin{proof}
We work on one connected component of $\Lambda_{\C}^m$, so that each difference $\lambda_j-\lambda_j'\in\Z,\,1\le j\le m$ is fixed. Choose finitely many holomorphic coordinate charts ($U_\alpha$) covering the support of $\omega$, together with a smooth partition of unity $(\chi_\alpha)$ subordinate to $(U_\alpha)$. Writing $\omega=\sum_\alpha \chi_\alpha\omega,$ it is enough to prove meromorphic continuation for each local term. Thus we may assume $X\subset \C^n$ is an open set with coordinates $z=(z_1,\dots,z_n)$, and $\omega=\varphi(z)\d^{2n}z,\,\varphi\in C_c^\infty(X).$ We are reduced to
$$
Z(\blambda)=\int_X \varphi(z)\prod_{j=1}^m f_j(z)^{\lambda_j|\lambda_j'}\d^{2n}z.
$$

We follow Atiyah's argument \cite{atiyah1970resolution}. Set $D:=\bigcup_{j=1}^m \{f_j=0\}.$ By embedded resolution of singularities (for example, \cite[Theorem 2.0.3]{wlodarczyk2008resolution}), there exists a proper holomorphic map $\pi:Y\to X$
such that $Y$ is smooth, $\pi$ is biholomorphic over $X\setminus D$, and in local coordinates $y=(y_1,\dots,y_n)$ on $Y$, each pullback $f_j\circ\pi$ has the form
$$
f_j\circ\pi(y)=u_j(y)\,y_1^{a_{1j}}\cdots y_n^{a_{nj}},
$$
where $u_j$ is nowhere vanishing holomorphic and $a_{kj}\in\N_0$. Moreover,
$$
\pi^*(\d^{2n}z)=v(y)\prod_{k=1}^n |y_k|^{2b_k}\d^{2n}y,
$$
where $v$ is smooth and nowhere vanishing, and $b_k\in\N_0$.

Choose a smooth partition of unity on $\pi^{-1}(\mathrm{supp}\,\varphi)$. Since $\pi$ is proper and $\mathrm{supp}\,\varphi$ is compact, only finitely many coordinate charts on $Y$ are needed. Thus $Z(\blambda)$ is a finite sum of local terms of the form
\begin{equation}\label{eq:local-zeta-after-resolution}
\int_{K}
\psi(y)\,
\prod_{j=1}^m \bigl(f_j\circ \pi(y)\bigr)^{\lambda_j|\lambda_j'}
\prod_{k=1}^n |y_k|^{2b_k}\,\d^{2n}y,
\end{equation}
where $K\subset \C^n$ is compact and $\psi\in C_c^\infty(K)$.

Substituting the monomial form of $f_j\circ\pi$, we get
$$
\prod_{j=1}^m \bigl(f_j\circ\pi(y)\bigr)^{\lambda_j|\lambda_j'}
=
\prod_{j=1}^m u_j(y)^{\lambda_j|\lambda_j'}
\prod_{k=1}^n
y_k^{\sum_{j=1}^m a_{kj}\lambda_j \,\big|\, \sum_{j=1}^m a_{kj}\lambda_j'}.
$$
Since each $u_j$ is nowhere vanishing and holomorphic, $u_j(y)^{\lambda_j|\lambda_j'}$ is smooth in $y$ and entire in $(\lambda_j,\lambda_j')$.
Therefore each local term \eqref{eq:local-zeta-after-resolution} is of the form
\begin{equation}\label{eq:local-monomial-integral}
I(\blambda)
=
\int_{K}
\Psi(y,\blambda)\,
\prod_{k=1}^n y_k^{A_k(\blambda)\,|\,A_k'(\blambda)}
\,\d^{2n}y,
\end{equation}
where $K\subset \C^n$ is compact, $\Psi$ is smooth in $y$ with compact support contained in $K$ and holomorphic in $\blambda$, and $A_k(\blambda),\,A'_k(\blambda)$ are affine linear functions of $\blambda,$ with the difference $A_k(\blambda)-A'_k(\blambda)\in\Z$ constant on the chosen connected component of $\Lambda_{\C}^m$.

It remains to prove that \eqref{eq:local-monomial-integral} admits meromorphic continuation in $\blambda$. We do this by induction on $n$.
For $n=1$, writing $y=re^{i\theta}$, we have
$$
I(\blambda)
=
\int_0^\infty r^{A(\blambda)+A'(\blambda)+1}
\left(\int_0^{2\pi}\Psi(re^{i\theta},\blambda)e^{i\theta(A(\blambda)-A'(\blambda))}\,\d\theta\right)\d r.
$$
Let $\Psi^\sharp$ denote the raidal part of the integral, i.e.,
$$
\Psi^\sharp(r,\blambda):=
\int_0^{2\pi}\Psi(re^{\bi\theta},\blambda)e^{\bi\theta(A(\blambda)-A'(\blambda))}\,\d\theta.
$$
Since $A(\blambda)-A'(\blambda)\in\Z$ is fixed and $\Psi$ is $C_c^\infty$ in $y$ and holomorphic in $\blambda$, it follows that the function $\Psi^\sharp(\cdot,\blambda)$ is $C_c^\infty([0,\infty))$ and holomorphic in $\blambda$. Expanding it near $r=0$:
$$
\Psi^\sharp(r,\blambda)=\sum_{\nu=0}^{N-1} c_\nu(\blambda)\,r^\nu+r^N R_N(r,\blambda),
$$
where $c_\nu(\blambda)$ and $R_N(r,\blambda)$ are holomorphic in $\blambda$, and $R_N(\cdot,\blambda)$ is $C_c^\infty$ in $r$, we have
$$
I(\blambda)
=
\sum_{\nu=0}^{N-1}\frac{c_\nu(\blambda)}{A(\blambda)+A'(\blambda)+\nu+2}
+
\int_0^\infty r^{A(\blambda)+A'(\blambda)+N+1}R_N(r,\blambda)\,\d r.
$$
The remainder is holomorphic whenever $\Re\bigl(A(\blambda)+A'(\blambda)\bigr)>-N-2.$ Since $N$ is arbitrary, $I(\blambda)$ admits meromorphic continuation, with poles contained in the hyperplanes $A(\blambda)+A'(\blambda)\in -2-\N_0.$

Now assume the claim proved in dimension $n-1$, and consider
$$
I(\blambda)
=
\int_{K}
\Psi(y,\blambda)\,
\prod_{k=1}^n y_k^{A_k(\blambda)\,|\,A_k'(\blambda)}
\d^{2n}y.
$$
Write $y=(y_1,y') $ and $ y'=(y_2,\dots,y_n).$ In the polar coordinates $y_1=re^{\bi\theta},$ we have
$$
I(\blambda)
=
\int_{\C^{n-1}}
\left[
\int_0^\infty
r^{A_1(\blambda)+A_1'(\blambda)+1}
\Phi(r,y',\blambda)\,dr
\right]
\prod_{k=2}^n y_k^{A_k(\blambda)\,|\,A_k'(\blambda)}
\d^{2n-2}y',
$$
where
$$
\Phi(r,y',\blambda):=
\int_0^{2\pi}
\Psi(re^{\bi\theta},y',\blambda)\,
e^{\bi\theta(A_1(\blambda)-A_1'(\blambda))}\,d\theta.
$$
As in the base case, $\Phi(\cdot,y',\blambda)$ is $C_c^\infty([0,\infty))$ in $r$, compactly supported uniformly in $y'$, and holomorphic in $\blambda$. Expanding at $r=0$,
$$
\Phi(r,y',\blambda)
=
\sum_{\nu=0}^{N-1} c_\nu(y',\blambda)\,r^\nu
+
r^N R_N(r,y',\blambda),
$$
we obtain $I(\blambda)=\sum_{\nu=0}^{N-1} I_\nu(\blambda)+I_N^{\mathrm{rem}}(\blambda),$
where each $I_\nu(\blambda)$ is an integral in the $n-1$ variables $y'$ of the same type as the original one, with coefficient
$$
\frac{c_\nu(y',\blambda)\,R^{A_1(\blambda)+A_1'(\blambda)+\nu+2}}
{A_1(\blambda)+A_1'(\blambda)+\nu+2},
$$
and the remainder $I_N^{\mathrm{rem}}(\blambda)$ takes the form
$$
I_N^{\mathrm{rem}}(\blambda)
=
\int_{\C^{n-1}}
G_N(y',\blambda)\,
\prod_{k=2}^n y_k^{A_k(\blambda)\,|\,A_k'(\blambda)}
\d^{2n-2}y',
$$
with$$
G_N(y',\blambda):=\int_0^R r^{A_1(\blambda)+A_1'(\blambda)+N+1}R_N(r,y',\blambda)\d r,
$$
which is $C_c^\infty$ in $y'$ and holomorphic in $\blambda$ whenever
$\Re\bigl(A_1(\blambda)+A_1'(\blambda)\bigr)>-N-2.$

By the induction hypothesis, each $I_\nu(\blambda)$ admits meromorphic continuation. Since $N$ is arbitrary, the remainders are holomorphic on half-spaces exhausting the whole parameter space. Hence $I(\blambda)$ admits meromorphic continuation to all $\blambda$. This completes the proof.
\end{proof}

\begin{corollary}\label{cor:global-polynomial-zeta}
Let $f_1,\dots,f_m\in\C[x_1,\dots,x_n]$ be polynomials and set
$$
Z(\blambda):=
\int_{\C^n}\prod_{j=1}^m f_j(x)^{\lambda_j|\lambda_j'}\d^{2n}x .
$$
On each connected component of $\Lambda_{\C}^m$ on which the domain of
absolute convergence has non-empty interior, $Z(\blambda)$ admits a meromorphic continuation to the whole component.

\end{corollary}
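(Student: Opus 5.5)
The plan is to reduce Corollary~\ref{cor:global-polynomial-zeta} to Theorem~\ref{thm:local-complex-zeta} by compactifying $\C^n$. The local theorem needs a compactly supported smooth density, but here we integrate over all of $\C^n$ with Lebesgue measure, so the difficulty sits at infinity.

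First, fix a connected component of $\Lambda_{\C}^m$, so that each $\lambda_j-\lambda_j'\in\Z$ is constant. Take a point $\blambda_0$ in the interior of the absolute-convergence domain. Choose a smooth partition of unity $1=\chi_0+\chi_\infty$ on $\C^n$, with $\chi_0\in C_c^\infty(\C^n)$ and $\chi_\infty$ supported in $\{|x|>1\}$. The term with $\chi_0$ is exactly of the form treated in Theorem~\ref{thm:local-complex-zeta}, with $X=\C^n$ and $\omega=\chi_0\,\d^{2n}x$. Its convergence domain contains that of $Z$, since the integrand is dominated, so it has non-empty interior and the term continues meromorphically.

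For the $\chi_\infty$ term, embed $\C^n\subset\mathbb{CP}^n$ and cover the hyperplane at infinity by the standard charts. In the chart with $x_1=1/y_1$ and $x_k=y_k/y_1$ for $k\ge2$, each polynomial becomes
\[
f_j(x)=y_1^{-d_j}\tilde f_j(y),
\]
where $d_j=\deg f_j$ and $\tilde f_j$ is a polynomial. The Lebesgue measure pulls back to
\[
\d^{2n}x=|y_1|^{-2(n+1)}\,\d^{2n}y .
\]
Thus the integrand becomes
\[
y_1^{-\sum_j d_j\lambda_j \,|\, -\sum_j d_j\lambda_j'}\;|y_1|^{-2(n+1)}\prod_j \tilde f_j(y)^{\lambda_j|\lambda_j'}
\]
times a smooth cutoff of compact support in the closed chart. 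Two bookkeeping points need checking: the exponent differences stay integral, and $|y_1|^{-2(n+1)}=y_1^{-(n+1)|-(n+1)}$ can be absorbed into an extra factor $g=y_1$ with fixed exponent pair.

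To apply the theorem, treat this fixed exponent as an affine shift. Simplest is to note that the proof of Theorem~\ref{thm:local-complex-zeta} goes through verbatim when the exponents are affine functions of $\blambda$ with integral differences constant on the component. Indeed, the resolution step only involves the functions $f_j$, and the monomial step already handles affine exponents $A_k(\blambda)$. So each chart term continues meromorphically, provided its convergence domain has non-empty interior. That holds because near $\blambda_0$ the original integral over $\{|x|>1\}$ converges absolutely, and this integral equals the sum of the chart integrals with positive cutoffs.

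Summing the finitely many pieces gives meromorphic continuation of $Z$ on the whole component, since each piece agrees with its continuation on a common open set near $\blambda_0$. The step I expect to be the main obstacle is the convergence bookkeeping at infinity: each chart piece must have a convergence domain with non-empty interior, so that the identity theorem glues the continuations. I would handle this by positivity of the partition functions, dominating each chart piece by $\int|\prod_j f_j^{\lambda_j|\lambda_j'}|$ at real parts near $\blambda_0$.
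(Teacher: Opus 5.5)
Your proposal is correct and follows the same route as the paper. Both arguments compactify $\C^n\subset\P^n$ and use a partition of unity subordinate to the affine charts. At infinity the integrand becomes $\prod_j \tilde f_j^{\lambda_j|\lambda_j'}$ times a power of the chart coordinate with affine exponent $-\sum_j d_j\lambda_j-(n+1)$ (and its primed twin). Theorem~\ref{thm:local-complex-zeta} is then applied to each piece, and your two extra checks make explicit what the paper leaves implicit: that the theorem's proof tolerates affine exponents with fixed integral differences, and that each piece's convergence domain has non-empty interior by domination with the nonnegative cutoffs.
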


\begin{proof}
We prove the statement componentwise in $\Lambda_{\C}^m$, keeping the
integers $\lambda_j-\lambda_j'$ fixed. Let $[X_0:\cdots:X_n]$ be homogeneous coordinates on $\P^n$, and identify
$\C^n$ with the affine chart $\{X_0\neq0\}$ by
$x_i=\frac{X_i}{X_0},\, i=1,\ldots,n.$ For each $j$, let $d_j:=\deg f_j$, and let $F_j$ be the homogenization of
$f_j$, so that
$$
F_j(X_0,\ldots,X_n)
=
X_0^{d_j}
f_j\!\left(\frac{X_1}{X_0},\ldots,\frac{X_n}{X_0}\right).
$$

We now describe what the original integral becomes after compactification. In the affine chart $U_\nu:=\{X_\nu\neq0\}$, $\nu\ge1$, set $y_i:=\frac{X_i}{X_\nu},\, i\neq \nu.$ On
$U_\nu\cap\{y_0\neq0\}$, the original affine coordinates are
$x_\nu=\frac{1}{y_0},
\,
x_i=\frac{y_i}{y_0}\quad (i\neq \nu,\ i\ge1),$ and
the complex Jacobian of this change of variables gives
$$
\d^{2n}x
=
|y_0|^{-2(n+1)}\d^{2n}y
=
y_0^{-(n+1)\,|\,-(n+1)}\d^{2n}y .
$$
Moreover, if $F_{j,\nu}(y)
:=
F_j(y_0,\ldots,y_{\nu-1},1,y_{\nu+1},\ldots,y_n),$
then $f_j(x)=y_0^{-d_j}F_{j,\nu}(y),$ and so
$f_j(x)^{\lambda_j|\lambda_j'}
=
F_{j,\nu}(y)^{\lambda_j|\lambda_j'}
y_0^{-d_j\lambda_j\,|\,-d_j\lambda_j'}.$
Combining these identities, the integrand in $U_\nu$ is
$$
\prod_{j=1}^m f_j(x)^{\lambda_j|\lambda_j'}\d^{2n}x
=
\prod_{j=1}^m
F_{j,\nu}(y)^{\lambda_j|\lambda_j'}
\,
y_0^{\mu(\blambda)\,|\,\mu'(\blambda)}
\d^{2n}y,
$$
where
$$
\mu(\blambda):=-\sum_{j=1}^m d_j\lambda_j-(n+1),
\qquad
\mu'(\blambda):=-\sum_{j=1}^m d_j\lambda_j'-(n+1).
$$
On the affine chart $U_0=\{X_0\neq0\}$, there is no additional infinity
factor: $X_0=1$, and the integrand remains $\prod_{j=1}^m f_j(x)^{\lambda_j|\lambda_j'}\d^{2n}x.$

Let $\Omega$ denote the set of
parameters in the connected component under consideration for which the
original integral over $\C^n$ is absolutely convergent, and assume that
$\Omega$ has non-empty interior. For $\blambda\in\Omega$, the change of
variables above is legitimate and the integral may be computed after
compactifying $\C^n$ to $\P^n$.

Choose a smooth partition of unity $(\chi_\nu)_{\nu=0}^n$ on $\P^n$
subordinate to the affine cover $(U_\nu)_{\nu=0}^n$. For $\blambda\in\Omega$, we have $Z(\blambda)=\sum_{\nu=0}^n Z_\nu(\blambda).$
For $\nu=0$, the localized contribution is
$$
Z_0(\blambda)
=
\int_{U_0}
\chi_0(x)
\prod_{j=1}^m f_j(x)^{\lambda_j|\lambda_j'}
\,\d^{2n}x,
$$and Theorem~\ref{thm:local-complex-zeta} gives a meromorphic continuation of
$Z_0$ to the whole connected component. On the other hand, if $\nu\ge1$, the above computation above gives
$$
Z_\nu(\blambda)
=
\int_{U_\nu}
\chi_\nu(y)
\prod_{j=1}^m
F_{j,\nu}(y)^{\lambda_j|\lambda_j'}
\,
y_0^{\mu(\blambda)|\mu'(\blambda)}
\,\d^{2n}y,
$$
and again Theorem~\ref{thm:local-complex-zeta} gives its meromorphic continuation to the same component. This proves the claim.
\end{proof}

\subsection{Proof of Proposition \ref{prop:closed formula}}
Let $\mathcal{I}_{\bs}(\sigma,\tau,B_1,B_1',\ldots, B_r,B_r')$ denote the integral
$$\begin{aligned}
\int_{\C^N}
\prod_{a=1}^{s_r}(x_a^{(r)})^{\sigma|\sigma}
\prod_{i=1}^{r}\prod_{a=1}^{s_i}(&1-x_a^{(i)})^{B_i|B_i'}
\prod_{i=1}^{r}D_{s_i}(x^{(i)})^{\tau|\tau}
\\&\times\prod_{i=1}^{r-1}\prod_{a=1}^{s_i}\prod_{b=1}^{s_{i+1}}
(x_b^{(i+1)}-x_a^{(i)})^{-\tau/2|-\tau/2}
\prod_{i=1}^{r}\d\nu_{s_i}(x^{(i)}),\end{aligned}
$$ where $$B_i=\frac{\gamma}{2}\la e_i,\alpha_2+m_2\ra,\quad B'_i=\frac{\gamma}{2}\la e_i,\alpha_2-m_2\ra.$$
We recall that the Cartan matrix of the lie algebra $\mathfrak{sl}_{r+1}$ is given by \begin{equation*}
    A_{ij}=\begin{cases}
        2 &\text{if }i=j,\\
        -1 &\text{if } |i-j|=1,\\
        0 &\text{otherwise},
    \end{cases}
\end{equation*}
and so the $A_r$-Dotsenko--Fateev integral \eqref{eq:DF-0-1-infty} becomes 
\begin{equation}\label{eq:to be proved}
\begin{aligned}
 \prod_{i=1}^r(\pi^{s_i} s_i!)\mathcal{I}_{\bs}\left(\frac{\kappa\gamma}{2},\gamma^2, B_1,B'_1,\ldots,B_r,B_r'\right).
\end{aligned}
\end{equation}
It remains to show that the square of the integral in \eqref{eq:to be proved} coincides with the product of the imaginary Fateev--Litivinov formula as showcased in \eqref{eq:FL with magnetic charges}.

We follow the argument in \cite[Appendix]{fateev2007correlation} with Proposition \ref{lem:complex-twin-A7-final} in place of \cite[eq. (A.7)]{fateev2007correlation} to first obtain a recurrent relation \eqref{eq:recurrent relation} and then apply the shift equation \eqref{eq:shift} of the special function $\Upsilon$ to conclude the proof.

The applications of Lemma~\ref{lem:complex-twin-A7-final} below are first
carried out in a non-empty auxiliary open subset of the parameter space
$(\sigma,\tau,B_1,B_1',\ldots,B_r,B_r')\in \C^2\times\Lambda_{\C}^r,$
where $\sigma$ and $\tau$ are kept diagonal, i.e. they correspond to the exponents $\sigma|\sigma$ and $\tau|\tau$. On this subset all
intermediate integrals are absolutely convergent, and the applications of
Lemma~\ref{lem:complex-twin-A7-final} give the recurrence relation
\eqref{eq:recurrent relation}. By
Corollary~\ref{cor:global-polynomial-zeta}, both sides of this recurrence
relation admit componentwise meromorphic continuations to
$\C^2\times\Lambda_{\C}^r$. Hence the recurrence relation, and therefore its iteration, holds meromorphically. Only after the full iterated formula has been obtained do we specialize to the Toda parameters. At exceptional parameter values, the identity is understood by meromorphic continuation.

Although in Steps~1--4 below we set $\sigma=\frac{\kappa\gamma}{2},\,\tau=\gamma^2$, this is only a notational simplification. The argument should be understood as the preceding meromorphic-continuation argument with
$\sigma$ and $\tau$ first treated as independent diagonal complex-field
parameters. The specialization $\sigma=\frac{\kappa\gamma}{2},
\,\tau=\gamma^2$ is imposed only after the recurrence and its iteration have been established as meromorphic identities.

\paragraph{\textbf{Step 1: Removing the row $x^{(1)}$}}
We begin by rewriting the Vandermonde factor in the first row. Lemma \ref{lem:complex-twin-A7-final} with $p=s_1-1,\,q=0,$ and $\lambda_j=\gamma^2/2-1$ implies that 

\begin{equation}\label{eq:first-row-m0}
D_{s_1}(x^{(1)})^{\gamma^2-1|\gamma^2-1}
=
C_0
\int_{\C^{s_1-1}}
D_{s_1-1}(y^{(1)})^{1|1}
\prod_{\beta=1}^{s_1-1}\prod_{a=1}^{s_1}
(y_\beta^{(1)}-x_a^{(1)})^{\frac{\gamma^2}{2}-1|\frac{\gamma^2}{2}-1}
\d\nu_{s_1-1}(y^{(1)}),
\end{equation}
with
\begin{equation}\label{eq:c1}
C_0=
\frac{
\Gamma^{\C}\!\left(\frac{s_1\gamma^2}{2}\,\Big|\,\frac{s_1\gamma^2}{2}\right)
}{
\Gamma^{\C}\!\left(\frac{\gamma^2}{2}\,\Big|\,\frac{\gamma^2}{2}\right)^{s_1}
}=\frac{l\left(\frac{s_1\gamma^2}{2}\right)}{l\left(\frac{\gamma^2}{2}\right)^{s_1}},
\end{equation}
where we have used \eqref{eq:complex Gamma} and $l(x)=\Gamma(x)/\Gamma(1-x)$. We note that the sign in \eqref{eq:complex-twin-A7-final-statement} is positive since there is only one exponent $\lambda_j|\lambda_j'$ that is not of the form $\lambda|\lambda$, and so the exponent $\sum_{j=1}^M(j-1)(\lambda_j-\lambda_j')=0$. This argument applies to all the applications of Lemma \ref{lem:complex-twin-A7-final} below.

Using 
$$D_{s_1}(x^{(1)})^{\gamma^2|\gamma^2}=D_{s_1}(x^{(1)})^{\gamma^2-1|\gamma^2-1}D_{s_1}(x^{(1)})^{1|1}$$ and
substituting \eqref{eq:first-row-m0} into the integral in \eqref{eq:to be proved}, we isolate the remaining $x^{(1)}$-integral:
\begin{equation}\label{eq:J1-definition}
\begin{aligned}
J_1(y^{(1)},x^{(2)})
:=
\int_{\C^{s_1}}
& D_{s_1}(x^{(1)})^{1|1}
\prod_{a=1}^{s_1}(1-x_a^{(1)})^{B_1|B_1'}
\\
&\times
\prod_{\beta=1}^{s_1-1}\prod_{a=1}^{s_1}
(y_\beta^{(1)}-x_a^{(1)})^{\frac{\gamma^2}{2}-1|\frac{\gamma^2}{2}-1}
\prod_{a=1}^{s_1}\prod_{b=1}^{s_2}
(x_b^{(2)}-x_a^{(1)})^{-\frac{\gamma^2}{2}|-\frac{\gamma^2}{2}}
\d\nu_{s_1}(x^{(1)}).
\end{aligned}
\end{equation}

Now we apply Lemma~\ref{lem:complex-twin-A7-final} to $J_1$, with external points $1,\ y_1^{(1)},\dots,y_{s_1-1}^{(1)},\ x_1^{(2)},\dots,x_{s_2}^{(2)},$
and exponents
$$
B_1|B_1',
\qquad
\underbrace{\frac{\gamma^2}{2}-1|\frac{\gamma^2}{2}-1,\dots,\frac{\gamma^2}{2}-1|\frac{\gamma^2}{2}-1}_{s_1-1\text{ times}},
\qquad
\underbrace{-\frac{\gamma^2}{2}|-\frac{\gamma^2}{2},\dots,-\frac{\gamma^2}{2}|-\frac{\gamma^2}{2}}_{s_2\text{ times}}
$$
to obtain

\begin{equation}\label{eq:J1}
\begin{aligned}
J_1=&C_1 
\prod_{\beta=1}^{s_1-1}
(1-y_\beta^{(1)})^{B_1+\frac{\gamma^2}{2}|B_1'+\frac{\gamma^2}{2}}
D_{s_1-1}(y^{(1)})^{\gamma^2-1|\gamma^2-1}
\prod_{a=1}^{s_2}
(1-x_a^{(2)})^{1+B_1-\frac{\gamma^2}{2}|1+B_1'-\frac{\gamma^2}{2}}
\\
&\times D_{s_2}(x^{(2)})^{1-\gamma^2|1-\gamma^2}
\int_{\C^{s_2-1}}
\prod_{\delta=1}^{s_2-1}
(1-y_\delta^{(2)})^{-1-B_1\,|\,-1-B_1'}
D_{s_2-1}(y^{(2)})^{1|1}
\\&\prod_{\beta=1}^{s_1-1}\prod_{\delta=1}^{s_2-1}
(y_\delta^{(2)}-y_\beta^{(1)})^{-\frac{\gamma^2}{2}|-\frac{\gamma^2}{2}}
\times\prod_{\delta=1}^{s_2-1}\prod_{a=1}^{s_2}
(x_a^{(2)}-y_\delta^{(2)})^{-1+\frac{\gamma^2}{2}|-1+\frac{\gamma^2}{2}}
\d\nu_{s_2-1}(y^{(2)}),
\end{aligned}
\end{equation}
with the constant $C_1$ given by
\begin{equation}\label{eq:C1}
\begin{aligned}
    C_1=&
\frac{
\Gamma^{\C}(1+B_1\,|\,1+B_1')\,
\Gamma^{\C}(\frac{\gamma^2}{2}|\frac{\gamma^2}{2})^{s_1-1}\,
\Gamma^{\C}\!\left(1-\frac{\gamma^2}{2}\Big|1-\frac{\gamma^2}{2}\right)^{s_2}
}{
\Gamma^{\C}\!\left(
s_1+1+B_1+(s_1-1)(\frac{\gamma^2}{2}-1)-\frac{\gamma^2}{2}s_2
\Big|\
s_1+1+B_1'+(s_1-1)(\frac{\gamma^2}{2}-1)-\frac{\gamma^2}{2}s_2
\right)
}\\
=&\frac{\Gamma^{\C}(1+B_1\,|\,1+B_1')
l\left(\frac{\gamma^2}{2}\right)^{s_1-1}l\left(1-\frac{\gamma^2}{2}\right)^{s_2}
}{\Gamma^{\C}\!\left(
s_1+1+B_1+(s_1-1)(\frac{\gamma^2}{2}-1)-\frac{\gamma^2}{2}s_2
\Big|\
s_1+1+B_1'+(s_1-1)(\frac{\gamma^2}{2}-1)-\frac{\gamma^2}{2}s_2
\right)
}
\end{aligned}
\end{equation}
We note that the term $D_{s_1-1}(y^{(1)})$ in \eqref{eq:first-row-m0} and \eqref{eq:C1} are combined to have exponent $\gamma^2|\gamma^2.$

\paragraph{\textbf{Step 2: Removing the row $x^{(2)}$}}
We observe that the $x^{(2)}$-dependent part becomes
\begin{equation*}
\begin{aligned}
J_2(y^{(2)},x^{(3)})
:=
\int_{\C^{s_2}}
& D_{s_2}(x^{(2)})^{1|1}
\prod_{a=1}^{s_2}(1-x_a^{(2)})^{Q_2|Q_2'}
\\
&\times
\prod_{\delta=1}^{s_2-1}\prod_{a=1}^{s_2}
(x_a^{(2)}-y_\delta^{(2)})^{-1+\frac{\gamma^2}{2}|-1+\frac{\gamma^2}{2}}
\prod_{a=1}^{s_2}\prod_{b=1}^{s_3}
(x_b^{(3)}-x_a^{(2)})^{-\frac{\gamma^2}{2}|-\frac{\gamma^2}{2}}
\d\nu_{s_2}(x^{(2)}),
\end{aligned}
\end{equation*}
with $$
Q_2:=1+B_1+B_2-\frac{\gamma^2}{2},
\qquad
Q_2':=1+B_1'+B_2'-\frac{\gamma^2}{2}.$$
Integrating out the second row $x^{(2)}$ using Lemma \ref{lem:complex-twin-A7-final} yields
\begin{equation}\label{eq:J2}
    \begin{aligned}
   J_2=& C_2
\prod_{a=1}^{s_2-1}
(1-y_a^{(2)})^{1+B_1+B_2\,|\,1+B_1'+B_2'}
D_{s_2-1}(y^{(2)})^{-1+\gamma^2|-1+\gamma^2}
\\
&\times
\prod_{b=1}^{s_3}
(1-x_b^{(3)})^{2+B_1+B_2-\gamma^2|2+B_1'+B_2'-\gamma^2}
D_{s_3}(x^{(3)})^{1-\gamma^2|1-\gamma^2}
\\
&\times\int_{\C^{s_3-1}}
\prod_{\delta=1}^{s_3-1}
(1-y_\delta^{(3)})^{-2-B_1-B_2+\frac{\gamma^2}{2}|-2-B_1'-B_2'+\frac{\gamma^2}{2}}
D_{s_3-1}(y^{(3)})^{1|1}
\\
&\times
\prod_{\eta=1}^{s_2-1}\prod_{\delta=1}^{s_3-1}
(y_\delta^{(3)}-y_\eta^{(2)})^{-\frac{\gamma^2}{2}|-\frac{\gamma^2}{2}}
\prod_{\delta=1}^{s_3-1}\prod_{a=1}^{s_3}
(x_a^{(3)}-y_\delta^{(3)})^{-1+\frac{\gamma^2}{2}|-1+\frac{\gamma^2}{2}}
\d\nu_{s_3-1}(y^{(3)}),
    \end{aligned}
\end{equation}
with the constant $C_2$ given by
\begin{equation}\label{eq:C2}
C_2=
\frac{
\Gamma^{\C}(1+Q_2\,|\,1+Q_2')l\left(\frac{\gamma^2}{2}\right)^{s_2-1}
l\left(1-\frac{\gamma^2}{2}\right)^{s_3}
}{
\Gamma^{\C}\!\left(
s_2+1+Q_2+(s_2-1)(\frac{\gamma^2}{2}-1)-\frac{\gamma^2}{2}s_3
\ \Big|\
s_2+1+Q_2'+(s_2-1)(\frac{\gamma^2}{2}-1)-\frac{\gamma^2}{2}s_3
\right)
}.
\end{equation}
We note that the terms $(1-y^{(2)}_a)$ and $D_{s_2-1}(y^{(2)})$ in \eqref{eq:J1} and \eqref{eq:J2} can be combined to have exponents $B_2|B_2'$ and $\gamma^2|\gamma^2,$ respectively. 

\paragraph{\textbf{Step 3: Removing the row $x^{k},\,3\leq k\leq r-1$}}
We define recursively
\begin{equation}\label{eq:Qk-recursion}
Q_{k+1}=1+Q_k+B_{k+1}-\frac{\gamma^2}{2},
\qquad
Q_{k+1}'=1+Q_k'+B_{k+1}'-\frac{\gamma^2}{2},
\qquad k\ge 2,
\end{equation} and thus
\begin{equation}\label{eq:Qk-explicit}
Q_k=(k-1)+\sum_{j=1}^k B_j-(k-1)\frac{\gamma^2}{2},
\qquad
Q_k'=(k-1)+\sum_{j=1}^k B_j'-(k-1)\frac{\gamma^2}{2}.
\end{equation}
For $3\le k\le r-1$, the $x^{(k)}$-dependent part is
\begin{equation}\label{eq:Jk}
\begin{aligned}
J_k(y^{(k)},x^{(k+1)})
:=
&\int_{\C^{s_k}}
 D_{s_k}(x^{(k)})^{1|1}
\prod_{a=1}^{s_k}(1-x_a^{(k)})^{Q_k|Q_k'}
\\
&\times
\prod_{\alpha=1}^{s_k-1}\prod_{a=1}^{s_k}
(x_a^{(k)}-y_\alpha^{(k)})^{-1+\frac{\gamma^2}{2}|-1+\frac{\gamma^2}{2}}
\prod_{a=1}^{s_k}\prod_{b=1}^{s_{k+1}}
(x_b^{(k+1)}-x_a^{(k)})^{-\frac{\gamma^2}{2}|-\frac{\gamma^2}{2}}
\d\nu_{s_k}(x^{(k)}).
\end{aligned}
\end{equation}

Applying Lemma~\ref{lem:complex-twin-A7-final} to $J_k$, we obtain a new row $y^{(k+1)}=(y_1^{(k+1)},\dots,y_{s_{k+1}-1}^{(k+1)}),$
and the constant
\begin{equation}\label{eq:Ck}
C_k=
\frac{
\Gamma^{\C}(1+Q_k\,|\,1+Q_k')l\left(\frac{\gamma^2}{2}\right)^{s_k-1}
l\left(1-\frac{\gamma^2}{2}\right)^{s_{k+1}}
}{
\Gamma^{\C}\!\left(
s_k+1+Q_k+(s_k-1)(\frac{\gamma^2}{2}-1)-\frac{\gamma^2}{2}s_{k+1}
\ \Big|\
s_k+1+Q_k'+(s_k-1)(\frac{\gamma^2}{2}-1)-\frac{\gamma^2}{2}s_{k+1}
\right)
}.
\end{equation}

We note that the terms $(1-y^{(k)}_a)$ and $D_{s_k-1}(y^{(k)})$ have exponents $B_k|B_k'$ and $\gamma^2|\gamma^2,$ respectively. 

\paragraph{\textbf{Step 4: Removing the row $x^{(r)}$} }
Finally the $x^{(r)}$-dependent term takes the form
\begin{equation}\label{eq:Jr}
\begin{aligned}
J_r(y^{(r)})
:=
\int_{\C^{s_r}}
 D_{s_r}(x^{(r)})^{1|1}
&\prod_{a=1}^{s_r}(x_a^{(r)})^{\frac{\kappa\gamma}{2}|\frac{\kappa\gamma}{2}}(1-x_a^{(r)})^{Q_r|Q_r'}\\&\times
\prod_{\alpha=1}^{s_r-1}\prod_{a=1}^{s_r}
(x_a^{(r)}-y_\alpha^{(r)})^{-1+\frac{\gamma^2}{2}|-1+\frac{\gamma^2}{2}}
\d\nu_{s_r}(x^{(r)}).
\end{aligned}
\end{equation} We apply again Lemma \ref{lem:complex-twin-A7-final} to obtain that \begin{equation}
    \begin{aligned}
        J_r=C_r\prod_{a=1}^{s_r-1}\left(y_a^{(r)}\right)^{\frac{\kappa\gamma+\gamma^2}{2}|\frac{\kappa\gamma+\gamma^2}{2}}\left(1-y^{(r)}_a\right)^{Q_r+\frac{\gamma^2}{2}|Q_r+\frac{\gamma^2}{2}}D_{s_r-1}(y^{(r)})^{-1+\gamma^2|-1+\gamma^2},
\end{aligned}
\end{equation}
with the constant $C_r$ given by
\begin{equation}\label{eq:Kr-explicit}
C_r=
\frac{
\Gamma^{\C}(1+Q_r\,|\,1+Q_r')l\left(1+\frac{\kappa\gamma}{2}\right)
l\left(\frac{\gamma^2}{2}\right)^{s_r-1}
}{
\Gamma^{\C}\!\left(
s_r+1+\frac{\kappa\gamma}{2}+Q_r+(s_r-1)\!\left(-1+\frac{\gamma^2}{2}\right)
\ \Big|\
s_r+1+\frac{\kappa\gamma}{2}+Q_r'+(s_r-1)\!\left(-1+\frac{\gamma^2}{2}\right)
\right)
}.
\end{equation}

We combine all the above results to conclude that 
\begin{equation}\label{eq:recurrent relation}
\mathcal{I}_{\bs}\left(\frac{\kappa\gamma}{2},\gamma^2,B_1,B_1',\ldots,B_r,B_r'\right)=\prod_{k=0}C_k\mathcal{I}_{\bs-\bone}\left(\frac{\kappa\gamma+\gamma^2}{2},\gamma^2,B_1+\frac{\gamma^2}{2},B_1'+\frac{\gamma^2}{2},\ldots,B_r,B_r'\right),
\end{equation}
with $\bs-\bone=(s_1-1,\ldots,s_r-1).$

\paragraph{\textbf{Step 5: Conclusion}}
By \eqref{eq:complex Gamma}, one has 
$$
        \Gamma^{\C}(a|a')\Gamma^{\C}(a'|a)
        =
        \frac{\Gamma(a)}{\Gamma(1-a)}
        \frac{\Gamma(a')}{\Gamma(1-a')}
        =
        l(a)l(a'), \quad (a,a')\in\Lambda_{\C}.
$$
Let $\mathcal R_{\bs}:=C_0C_1\cdots C_r$
be the recurrence factor in \eqref{eq:recurrent relation}. Then writing $B^+_j=B_j$ and $B^-_j=B'_j$, we have
$\mathcal{R}_{\bs}^2=\mathcal{R}^+_{\bs}\mathcal{R}^-_{\bs}$,
with
$$
\begin{aligned}
\mathcal R_{\bs}^{\pm}
&=
l\left(\frac{s_1\gamma^2}{2}\right)
l\left(1+\frac{\kappa\gamma}{2}\right)
l\left(\frac{\gamma^2}{2}\right)^{-r}
\\
&\quad\times
\prod_{k=1}^{r-1}
\frac{
l\left(
k+\sum_{j=1}^{k}B_j^{\pm}
-(k-1)\frac{\gamma^2}{2}
\right)
}{
l\left(
k+1+\sum_{j=1}^{k}B_j^\pm
+\frac{\gamma^2}{2}(s_k-s_{k+1}-k)
\right)
}
\\
&\quad\times
\frac{
l\left(
r+\sum_{j=1}^{r}B_j^\pm
-(r-1)\frac{\gamma^2}{2}
\right)
}{
l\left(
r+1+\frac{\kappa\gamma}{2}
+\sum_{j=1}^{r}B_j^\pm
+\frac{\gamma^2}{2}(s_r-r)
\right)
},
\end{aligned}
$$
where we have used
$$
        l\left(1-\frac{\gamma^2}{2}\right)
        =
        l\left(\frac{\gamma^2}{2}\right)^{-1}.
$$

We now rewrite the sums $\sum_{j=1}^{k}B^\pm_j$ in terms of the weights of the fundamental
representation of highest weight $\omega_r$. Let
$h_1,\ldots,h_{r+1}$ be such weights, ordered so that $h_{k+1}-h_k=e_k,\,1\le k\le r.$ Then $h_{k+1}-h_1=\sum_{j=1}^{k}e_j,$ and hence
$\sum_{j=1}^{k}B^\pm_j=\left\langle h_{k+1}-h_1,\alpha_2\pm m_2\right\rangle.$
This shows that
\begin{equation}\label{eq:Rbs-pm-h}
\begin{aligned}
\mathcal R_{\bs}^{\pm}
&=
l\left(\frac{s_1\gamma^2}{2}\right)\,
l\left(1+\frac{\kappa\gamma}{2}\right)\,
l\left(\frac{\gamma^2}{2}\right)^{-r}
\\
&\quad\times
\prod_{k=1}^{r-1}
\frac{
l\!\left(
k+\bigl\langle h_{k+1}-h_1,\alpha_2\pm m_2\bigr\rangle-(k-1)\frac{\gamma^2}{2}
\right)
}{
l\!\left(
k+1+\bigl\langle h_{k+1}-h_1,\alpha_2\pm m_2\bigr\rangle+\frac{\gamma^2}{2}(s_k-s_{k+1}-k)
\right)
}
\\
&\quad\times
\frac{
l\!\left(
r+\bigl\langle h_{r+1}-h_1,\alpha_2\pm m_2\bigr\rangle-(r-1)\frac{\gamma^2}{2}
\right)
}{
l\!\left(
r+1+\frac{\kappa\gamma}{2}+\bigl\langle h_{r+1}-h_1,\alpha_2\pm m_2\bigr\rangle+\frac{\gamma^2}{2}(s_r-r)
\right)
}.
\end{aligned}
\end{equation}

We recall the special function $\Upsilon$ defined in \eqref{eq:Upsilon} and the shift relation \eqref{eq:shift}:
\begin{equation}\label{eq:Upsilon-shift-proof-detail}
\Upsilon(z+\chi)
=
l\!\left(\frac{\chi z}{2}\right)
\left(\frac{\chi}{\sqrt2}\right)^{1-\chi z}\Upsilon(z), \quad\chi\in\left\{\gamma\frac{2}{\gamma}\right\}.
\end{equation}
Iterating the shift equation \eqref{eq:Upsilon-shift-proof-detail}, we obtain that, for $z\in\C$ and $M\in\N$, \begin{equation}\label{eq:finite-l-product}
\prod_{j=0}^{M-1}l\!\left(z+j\frac{\gamma^2}{2}\right)
=
\left(\frac{\gamma}{\sqrt2}\right)^{M(2z-1)+\frac{\gamma^2}{2}M(M-1)}
\frac{
\Upsilon\!\left(\frac{2z}{\gamma}+M\gamma\right)
}{
\Upsilon\!\left(\frac{2z}{\gamma}\right)
}
\end{equation}

Iterating \eqref{eq:recurrent relation} first $s_1$ times, then on the truncated chain of length $r-1$, and so on, yields
\begin{equation}\label{eq:iterated-recurrence}
\mathcal I_{\bs}\!\left(\frac{\kappa\gamma}{2},\gamma^2;B_1^+,B_1^-,\ldots,B_r^+,B_r^-\right)
=
\prod_{\ell=1}^{r}\ \prod_{j=0}^{s_{\l}-s_{\l-1}-1}\mathcal R_{\ell,j}\,,
\end{equation}
where $s_0:=0$, and $\mathcal R_{\ell,j}$ is obtained from the recurrence factor by replacing
$$
r\mapsto r-\ell+1,\qquad
\kappa\mapsto \kappa+\gamma(s_{\ell-1}+j),\qquad
s_i\mapsto s_i-s_{\ell-1}-j,
$$
and
$$
B_i^\pm \mapsto \frac{\gamma}{2}\langle e_{\ell+i-1},\alpha_2\pm m_2\rangle.
$$
As in the previous step, we have $\mathcal R_{\ell,j}^2=\mathcal R_{\ell,j}^+\mathcal R_{\ell,j}^-,$
with
\begin{equation}\label{eq:Rljpm}
\begin{aligned}
\mathcal R_{\ell,j}^{\pm}
&=
l\!\left(\frac{(s_\l-s_{\l-1}-j)\gamma^2}{2}\right)\,
l\!\left(1+\frac{\gamma(\kappa+\gamma(s_{\ell-1}+j))}{2}\right)\,
l\!\left(\frac{\gamma^2}{2}\right)^{-(r-\ell+1)}
\\
&\quad\times
\prod_{k=1}^{r-\ell}
\frac{
l\!\left(
k+\frac{\gamma}{2}\langle h_{\ell+k}-h_\ell,\alpha_2\pm m_2\rangle
-(k-1)\frac{\gamma^2}{2}
+\frac{(s_{\ell-1}+j)\gamma^2}{2}
\right)
}{
l\!\left(
k+1+\frac{\gamma}{2}\langle h_{\ell+k}-h_\ell,\alpha_2\pm m_2\rangle
+\frac{\gamma^2}{2}(s_{\ell+k-1}-s_{\ell+k}-k)
\right)
}
\\
&\quad\times
\frac{
l\!\left(
r-\ell+1+\frac{\gamma}{2}\langle h_{r+1}-h_\ell,\alpha_2\pm m_2\rangle
-(r-\ell)\frac{\gamma^2}{2}
+\frac{(s_{\ell-1}+j)\gamma^2}{2}
\right)
}{
l\!\left(
r-\ell+2+\frac{\kappa\gamma}{2}
+\frac{\gamma}{2}\langle h_{r+1}-h_\ell,\alpha_2\pm m_2\rangle
+\frac{\gamma^2}{2}(s_r-r+\ell-1)
\right)
}.
\end{aligned}
\end{equation}
Hence
\begin{equation}\label{eq:factorized-square}
\left(
\mathcal I_{\bs}\!\left(\frac{\kappa\gamma}{2},\gamma^2;B_1^+,B_1^-,\ldots,B_r^+,B_r^-\right)
\right)^2
=
\mathcal R_{\bs}^{+,\mathrm{tot}}\,
\mathcal R_{\bs}^{-,\mathrm{tot}},
\end{equation}
where $\mathcal R_{\bs}^{\pm,\mathrm{tot}}:=\prod_{\ell=1}^{r}\ \prod_{j=0}^{s_\ell-s_{\l-1}-1}\mathcal R_{\ell,j}^{\pm}.$

Applying \eqref{eq:finite-l-product}, using the relation $h_{\ell+k}-h_\ell=e_\ell+\cdots+e_{\ell+k-1}$ and the neutrality $$2Q-\kappa\omega_r-(\alpha_2\pm m_2)-(\alpha_3\pm m_3)=
\gamma\sum_{i=1}^r s_i e_i,$$ one checks that

\begin{equation}\label{eq:Rplus-total-final}
\begin{aligned}
\mathcal R_{\bs}^{\pm,\mathrm{tot}}
&=
\left[
l\!\left(\frac{\gamma^2}{2}\right)
\left(\frac{\gamma}{\sqrt2}\right)^{2-\gamma^2}
\right]^{
\frac{2}{\gamma}\langle 2Q-\kappa\omega_r-(\alpha_2\pm m_2)-(\alpha_3\pm m_3),\rho\rangle}
\Upsilon(\gamma)^r\Upsilon(\kappa)
\\
&\quad\times
\frac{
\displaystyle\prod_{e>0}
\Upsilon\!\bigl(\langle Q-(\alpha_2\pm m_2),e\rangle\bigr)\,
\Upsilon\!\bigl(\langle Q-(\alpha_3\pm m_3),e\rangle\bigr)
}{
\displaystyle\prod_{i,j=1}^{r+1}
\Upsilon\!\left(
\frac{\kappa}{r+1}
+\langle \alpha_2\pm m_2-Q,h_i\rangle
+\langle \alpha_3\pm m_3-Q,h_j\rangle
\right)
}.
\end{aligned}
\end{equation}
Finally, using the conditions $m_1=0$ and $m_1+m_2+m_3=0$, we complete the proof.

\bibliographystyle{alpha}
\bibliography{CITT}
\end{document}